\newcolumntype{C}{>{\centering\arraybackslash}X}
\newcommand{\setword}[2]{%
  \phantomsection
  #1\def\@currentlabel{\unexpanded{#1}}\label{#2}%
}
\newcommand{\declarecolor}[2]{\definecolor{#1}{RGB}{#2}\expandafter\newcommand\csname #1\endcsname[1]{\textcolor{#1}{##1}}}
\newtheorem{definition}{Definition}
\newtheorem{theorem}{Theorem}
\newtheorem*{theorem*}{Theorem}
\newtheorem{lemma}{Lemma}
\newtheorem{corollary}{Corollary}
\newtheorem{example}{Example}
\newtheorem{assumption}{Assumption}
\newtheorem{remark}{Remark}
\DeclareMathOperator*{\argmin}{argmin}
\DeclareMathOperator*{\argmax}{argmax}
\def\+#1{\mathcal{#1}}
\def\-#1{\mathbb{#1}}
\newcommand{\notshow}[1]{{}}
\newcommand{\AutoAdjust}[3]{{\mathchoice{ \left #1 #2  \right #3}{#1 #2 #3}{#1 #2 #3}{#1 #2 #3}}}
\newcommand{\Xcomment}[1]{{}}
\newcommand{\InParentheses}[1]{\AutoAdjust{(}{#1}{)}}
\newcommand{\InAngles}[1]{\AutoAdjust{\langle}{#1}{\rangle}}
\newcommand{\InNorms}[1]{\AutoAdjust{\|}{#1}{\|}}
\renewcommand{\part}[2]{\frac{\partial #1}{\partial #2}}
\newcommand{\R}{\mathbbm{R}}
\newcommand{\demand}{\bbx^{\mathcal{D}}}
\newcommand{\demandnb}{x^{\mathcal{D}}}
\newcommand{\bbp}{\mathbf{p}}
\newcommand{\bbq}{\mathbf{q}}
\newcommand{\bbx}{\mathbf{x}}
\xdef\csname m\x\endcsname{\noexpand\mathbf{\x}}
\xdef\csname m\x\endcsname{\noexpand\mathbf{\x}}
\newcommand{\dual}[1]{\tilde{#1}}
\newcommand{\dualdemand}{\dual{\mx}^{\+D}}
\newcommand\numberthis{\addtocounter{equation}{1}\tag{\theequation}}
\title{Fisher Meets Lindahl: A Unified Duality Framework for\\ Market Equilibrium\thanks{Authors are alphabetically ordered.}}
\date{}
\author{
Yixin Tao\thanks{ITCS, Key Laboratory of Interdisciplinary Research of Computation and Economics, Shanghai University of Finance and Economics. Email: \texttt{taoyixin@mail.shufe.edu.cn}} \\
\and
Weiqiang Zheng\thanks{Yale University. Email: \texttt{weiqiang.zheng@yale.edu}}
}
\begin{document}
\maketitle
\begin{abstract}
The Fisher market equilibrium for private goods markets and the Lindahl equilibrium for public goods markets are classic and fundamental solution concepts for market equilibrium. While the Fisher market equilibrium has been well-studied, the theoretical foundations for the Lindahl equilibrium---including characterizations, computation, and dynamics---remain substantially underdeveloped.

In this work, we propose a unified duality framework for market equilibria in private goods and public goods markets. We show that every Lindahl equilibrium of a public goods market corresponds to a Fisher market equilibrium in a \emph{dual} private goods market with dual utilities, and vice versa. The dual utility is based on the indirect utility, and the correspondence between the two equilibria works by exchanging the roles of allocations and prices. This duality framework enables us to transfer insights and results between the two settings. The framework also extends to markets with \emph{chores}.

Using the duality framework, we address the gaps concerning the computation and dynamics for the Lindahl equilibrium and obtain new insights and developments for the Fisher market equilibrium. First, we leverage this duality to analyze welfare properties of Lindahl equilibria. For concave homogeneous utilities, we prove that a Lindahl equilibrium maximizes Nash Social Welfare (NSW). For concave non-homogeneous utilities, we show that a Lindahl equilibrium achieves $(1/e)^{1/e}$ approximation to the optimal NSW, and the approximation ratio is tight. Second, we apply the duality framework to market dynamics, including proportional response dynamics (PRD) and t\^atonnement. We obtain new market dynamics for the Lindahl equilibria from market dynamics in the dual Fisher market, significantly extending existing results for linear utilities. Moreover, the duality framework also introduces new insights into market dynamics. We show that the recently proposed PRD in gross substitutes Fisher markets is a best-response expenditure procedure in the dual Lindahl setting. Using this observation, we extend PRD to markets with \emph{total complements} utilities, the dual class of gross substitutes utilities. Finally, we apply the duality framework to markets with chores. We propose a program for private chores for general convex homogeneous disutilities that avoids the ``poles'' issue, and every KKT point of the program corresponds to a Fisher market equilibrium. We also initiate the study of the Lindahl equilibrium for public chores using duality to the private chores setting.

\end{abstract}
\thispagestyle{empty}
\clearpage
\tableofcontents
\thispagestyle{empty}
\clearpage
\setcounter{page}{1}

\section{Introduction}

The concept of market equilibrium \cite{arrow1954existence, mckenzie1954equilibrium} is fundamental to economic theory, providing a framework for analyzing the interaction of rational agents and efficient allocation of resources. 
From the early work of~\citet{papadeng02}, the study on characterization, computation complexity, algorithm, and dynamics for market equilibrium has become an important topic in algorithmic game theory. In this paper, we study market equilibrium in both private goods markets and public goods markets, and their extensions to chores.

The Fisher market is a classic and extensively studied model for allocating \emph{private} divisible goods among agents with initial budgets~\citep{eisenberg1959consensus, eisenberg1961aggregation}. A Fisher market equilibrium, also known as a competitive equilibrium, is a price for each good with personalized allocations of goods to each agent, such that each agent receives her utility-maximizing bundle of goods under the price and budget constraint, and the total demand of goods equals the supply. A Fisher market equilibrium is Pareto-optimal with many other desirable properties and applications in resource allocation \cite{JainVaz07, jalota2023fisher}. As a fundamental model, the Fisher market has been the subject of extensive investigation, leading to a deep literature on its equilibrium properties~\citep{megiddo2007continuity, chen2012incentive, bei2019earning, goktas2021consumer}, computational algorithms~\citep{codenotti2004efficient, Orlin10, nesterov2018computation, kroer2019computing, garg2022approximating}, and numerous extensions~\citep{branzei2014fisher, Cole:2016:LMG:2940716.2940720, liaostatistical, conitzer2022pacing, gao2023infinite}, just to name a few.

A related but distinct problem is to allocate a fixed budget across divisible \emph{public} goods, which are characterized by non-excludability and non-rivalry. The Lindahl equilibrium, also known as Lindahl tax~\citep{lindahl1958just,foley1970lindahl,  10.1093/restud/rdaf043}, is a solution concept for public goods markets, where agents with initial budgets pay for public goods. A Lindahl equilibrium consists of a fixed allocation of goods and personalized prices (taxes) of goods for each agent, such that the same fixed allocation of public goods maximizes all agents' utilities under their prices and budget constraints, and the total allocation equals the total budget. The Lindahl equilibrium allocation is efficient, always lies in the weak core, and has many applications in collective decision-making~\citep{garg2021markets, brandl2022funding, brandt2023balanced}.

However, our understanding of computation and dynamics for the Lindahl equilibrium for public goods is significantly behind that of the Fisher market equilibrium for private goods. We illustrate this point using the setting where agents have constant elasticity of substitution (CES) utility functions as an example. The CES utility is a class that covers the linear utility and the Leontief utility as special cases (see \Cref{ex:CES and indirect utility} for formal definitions).
\begin{itemize}
    \item Fisher market equilibria are characterized by the Eisenberg-Gale Nash welfare maximization program for the whole class of CES utilities, and, more generally, for homogeneous concave utilities~\citep{eisenberg1959consensus, eisenberg1961aggregation, jain2005market}. But the analog for Lindahl equilibria is known only for special cases of linear, Leontief, and separable nonsatiating utilities~\citep{fain2016core,brandt2023balanced, 10.1093/restud/rdaf043}. \citet{brandt2023balanced} noted that ``\textit{equilibria in Fisher markets are connected to Nash welfare maximization under fairly general assumptions about individual utilities, whereas this connection appears to be more volatile in our public good markets}''.
    \item The proportional response dynamics (PRD)~\citep{WZ2007} is a natural distributed dynamics that converges to Fisher market equilibria for the whole class of CES utilities~\citep{birnbaum2011distributed, cheung2018dynamics}. But for Lindahl equilibria, the convergence of PRD was only recently obtained for the special case of linear utilities~\citep{kroer2025computing}. 
    \item T\^atonnement is a classic price adjustment procedure that computes a Fisher market equilibrium for CES utilities. However, the formulation and convergence properties of t\^atonnement for the Lindahl equilibrium remained unknown and left as an open question in~\citep{kroer2025computing}.
\end{itemize}
On one hand, there is a significant discrepancy of results between the Fisher market equilibrium and the Lindahl equilibrium, as computation and dynamics for the Lindahl equilibrium remained largely unknown.
On the other hand, at least for special cases like the linear utilities, the Lindahl equilibrium admits very similar results to those of the Fisher market equilibrium. The discrepancy and similarity motivate the following question:
\begin{align*}
    \textit{What is the connection between Fisher market equilibria and the Lindahl equilibria?}
\end{align*}

In this paper, we propose a unified duality framework for Fisher market equilibria and Lindahl equilibria. This framework constructs a \emph{dual} private goods market from a public goods market (and vice versa), such that every Fisher market equilibrium in the dual market corresponds to a Lindahl equilibrium in the original market. This duality framework unifies the previously disjoint research directions and allows us to transfer results and insights between the private goods and public goods settings. This new connection and perspective not only enable us to address all the gaps mentioned above concerning the computation and dynamics for the Lindahl equilibrium, but also bring new insights for the Fisher market equilibrium. Moreover, our duality framework also generalizes to markets with \emph{chores}, items that incur disutilities to agents. 

In what follows, we first illustrate our results in a simple setting with linear utilities, which is also the setting that motivates our initial exploration. Then in \Cref{sec:contributions}, we discuss our contributions in detail.

\subsection{Motivating Example: Lindahl Equilibrium with Linear Utilities}
We begin with a warm-up analysis of markets with linear utilities and illustrate a non-obvious equivalence between the private goods Fisher market equilibrium and the public goods Lindahl equilibrium. Specifically, we consider the setting with agents $N = \{1, 2, \ldots, n\}$ and goods $M = \{1,2, \ldots, m\}$ where each agent $i \in N$ has a linear utility $u_i(\mx) = \InAngles{\ma_i, \mx} = \sum_{j\in M} a_{ij} x_j$ where $a_{ij} \ge 0$ represents the utility of receiving a unit of good $j$. Each agent $i$ has an individual budget $B_i > 0$.

\paragraph{Private Goods and Convex Programs} \footnote{Technically speaking, these programs are maximizing a concave objective; we call them convex programs just for simplicity.} In the private goods setting, A Fisher market equilibrium is pair of price $\mp \in \-R^m_{\ge 0}$ and personalized allocations $\{\mx_i \in \-R^m_{\ge 0}\}_{i \in N}$ such that for each agent $i$, the allocation $\mx_i$ maximizes her utility subject to the budget constraint and the market clears (see \Cref{def::Fisher-market} for a formal definition).\footnote{Throughout the paper, we may use $\{\mx_i\}$ to denote $\{\mx_i\}_{i \in N}$ to simplify the notation when the context is clear.} A Fisher market equilibrium allocation with linear utilities can be computed by the celebrated Eisenberg-Gale (EG) convex program \eqref{EG-Fisher-Linear}~\citep{eisenberg1959consensus, eisenberg1961aggregation}, which maximizes the Nash social welfare $\Pi_i u_i(\mx_i)^{B_i}$. We note that the EG program can be extended to compute a Fisher market equilibrium even in the more general setting of concave and $1$-homogeneous utilities.\footnote{A function $u$ is $d$-homogeneous (homogeneous of degree $d$) if $u(c\cdot\mx) = c^d \cdot u(\mx)$ for any $c > 0$.} 
\begin{figure}[h]
  \centering

  \noindent\rule{\linewidth}{0.4pt}\vspace{0.6em}

  \begin{tabularx}{\linewidth}{@{}C C@{}}
  $\displaystyle
    \begin{aligned}
        \max_{\mx \ge 0} \,& \sum_{i \in N} B_i \log \InAngles{\ma_i, \mx_i}\\
        \text{s.t.}\,& \sum_{i \in N} x_{ij} \le 1, \quad  \text{for all } j \in M.
    \end{aligned}
  $
  &
  $\displaystyle
    \begin{aligned}
        \max_{\mb \ge 0, \mp \ge 0} \,& \sum_{i \in N, j\in M} b_{ij} \log a_{ij}
        - \sum_{j\in M} p_j \log p_j \\
        \text{s.t.}\,& \sum_{j \in M} b_{ij} = B_i, \quad \text{for all } i \in N  \\
        & \sum_{i \in N} b_{ij} = p_j,          \quad  \text{for all } j \in M \\
        &\text{\footnotesize The final allocation is obtained as $x_{ij} = \frac{b_{ij}}{p_j}$.}
    \end{aligned}
  $
  \\
  \small (\setword{EG-Fisher-Linear}{EG-Fisher-Linear}) & \small (\setword{Shmyrev-Fisher-Linear}{Shmyrev-Fisher-Linear})
  \end{tabularx}

  \vspace{0.4em}\noindent\rule{\linewidth}{0.4pt} 
\end{figure}

By convex program duality, we know that the EG program is equivalent to the Shmyrev program \eqref{Shmyrev-Fisher-Linear}~\citep{shmyrev2009algorithm}, which can be obtained by taking the dual of the dual of the EG program with reformulation.
Moreover, the proportional response dynamics~\citep{WZ2007} can be interpreted as mirror descent on the Shmyrev program and thus enjoys an $O(1/T)$ convergence to a Fisher market equilibrium~\citep{birnbaum2011distributed}.

\paragraph{Public Goods and Convex Programs} In the public good setting, a Lindahl equilibrium is a pair of public allocation and personalized prices $\{\mx, \{\mp_i\}\}$ such that each player maximizes their utility subject to the budget constraints and the price satisfies a profit-maximizing condition (see \Cref{def::Lindahl-equilibrium} for a formal definition). Lindahl equilibrium allocations with linear utilities can also be characterized as those that maximize the Nash social welfare $\Pi_i u_i(\mx)^{B_i}$~\citep{fain2016core}, which induces a public-good analog program of the EG program for private goods as shown in \ref{EG-Lindahl-Linear}.
We note that \cite{fain2016core} shows that the convex program can be extended to the more general setting of \emph{``scalar separable non-satiating"} utilities. However, separable utilities do not cover the general concave homogeneous setting as the EG program in the private goods setting, i.e., a constant elasticity of substitution (CES) utility with a negative substitution parameter, $\rho < 0$.

The convergence of proportional response dynamics for linear utilities has been discovered by different research communities~\citep{cover1984algorithm, brandl2022funding} but the rate of convergence remained open until~\cite{zhao2023convergence, kroer2025computing}. In particular, the very recent work of \cite{kroer2025computing} gives a Shmyrev-type program for \ref{EG-Lindahl-Linear} using convex program duality.  They show that the proportional response dynamics for computing a Lindahl equilibrium is equivalent to running mirror descent over the \ref{Shmyrev-Lindahl-Linear} and enjoys $O(1/T)$ convergence.

\begin{figure}[h]
  \centering
  \noindent\rule{\linewidth}{0.4pt}\vspace{0.6em}
  \begin{tabularx}{\linewidth}{@{}C C@{}}
  $\displaystyle
    \begin{aligned}
        \max_{\mx \ge 0}\,&\, \sum_{i \in N} B_i \log \InAngles{\ma_i, \mx}\\
        \text{s.t.}\,&\, \sum_{j \in M} x_{j} \le B.
    \end{aligned}
  $
  &
  $\displaystyle
    \begin{aligned}
        \max_{\mb \ge0,\mx\ge 0}\,&\, \sum_{i \in N, j\in M} b_{ij} \log a_{ij} - \sum_{i \in N, j\in M} b_{ij} \log \frac{b_{ij}}{x_j} \\
        \text{s.t.} \,&\,\sum_{j \in M} b_{ij} = B_i, \quad \text{for all } i \in N  \\
        &\,\sum_{i \in N} b_{ij} = x_j,             \quad  \text{for all } j \in M
    \end{aligned}
  $
  \\
  \small (\setword{EG-Lindahl-Linear}{EG-Lindahl-Linear}) & \small (\setword{Shmyrev-Lindahl-Linear}{Shmyrev-Lindahl-Linear})
  \end{tabularx}
  \vspace{0.4em}\noindent\rule{\linewidth}{0.4pt} 
\end{figure}

\paragraph{Distinctions between Private Goods and Public Goods} As emphasized in \citep{kroer2025computing}, market equilibrium for private goods and public goods are very different. This is evident from the Shmyrev programs for private goods and public goods, which have notable differences in structures and properties. 
\begin{itemize}
    \item[1.] \ref{Shmyrev-Fisher-Linear} has variables corresponding to prices, which do not appear in \ref{Shmyrev-Lindahl-Linear}. Similarly, \ref{Shmyrev-Lindahl-Linear} has variables corresponding to allocations, which do not appear in \ref{Shmyrev-Fisher-Linear}.
    \item[2.] \ref{Shmyrev-Fisher-Linear} always has rational solutions~\citep{devanur2008market, vazirani2012notion}, while \ref{Shmyrev-Lindahl-Linear} may have only irrational solutions~\citep[Example 3]{kroer2025computing}.
    \item[3.] The nonlinear term in the objective is an unusual normalized entropy function in \ref{Shmyrev-Lindahl-Linear}, whereas it is the usual entropy function on the prices in \ref{Shmyrev-Fisher-Linear}.
\end{itemize}
These differences seem to suggest a fundamental difference between the Fisher market equilibrium and the Lindahl equilibrium.

\paragraph{Equivalence by Duality between Linear and Leontief Utilities} However, our key initial observation is that public goods markets with linear utilities are equivalent to private goods markets, not with linear utilities, but with \emph{Leontief utilities}.  While a linear utility represents fully substitute goods, a Leontief utility represents fully complementary goods. Given $\ma_i \ge 0$, a Leontief utility has the form $u_i(\mx) = \min_{j \in M} \{\frac{x_j}{a_{ij}}\}$, which means the agent receives $\ge 1$ utility only when she receives enough amount of every good, i.e., $x_j \ge a_{ij}$ for all $j$.

We first present the Shmyrev program for private goods Fisher markets with Leontief utilities. 
\begin{equation}\tag{Shmyrev-Fisher-Leontief}
    \begin{aligned}
        \max_{\mb \ge0, \mp \ge 0} &\sum_{i \in N, j\in M} b_{ij} \log a_{ij} - \sum_{i \in N, j\in M} b_{ij} \log \frac{b_{ij}}{p_j} \\
        \text{s.t. }& \sum_{j \in M} b_{ij} = B_i, \quad \text{for all } i \in N  \\
        &\sum_{i \in N} b_{ij} = p_j,             \quad  \text{for all } j \in M
    \end{aligned}
    \label{Shmyrev-Fisher-Leontief}
\end{equation}
When we take a closer look at \ref{Shmyrev-Fisher-Leontief} and \ref{Shmyrev-Lindahl-Linear}, we find these two programs are the same program up to a change of variables between $\mx$ and $\mp$. Thus if $(\mb, \mp)$ is an optimal solution of \ref{Shmyrev-Fisher-Leontief}, then let $\mx = \mp$, we must have $(\mb, \mx)$ is an optimal solution of \ref{Shmyrev-Lindahl-Linear}. In other words, the roles of allocations and prices are exchanged: an equilibrium price vector in Fisher markets with Leontief utilities corresponds to a public goods Lindahl equilibrium allocation with linear utilities. We remark that the equivalence between the two programs has several immediate implications.
\begin{itemize}
    \item This gives another explanation for why \ref{Shmyrev-Lindahl-Linear} may have irrational solutions since \ref{Shmyrev-Fisher-Leontief} could have irrational solutions \cite{codenotti2004efficient}.
    \item 
    Using the equivalence between the \ref{Shmyrev-Lindahl-Linear} and \ref{Shmyrev-Fisher-Leontief}, we can recover the recently established $O(1/T)$ convergence of proportional response dynamics for Lindahl equilibrium with linear utilities~\citep{kroer2025computing},  as it has been established that proportional responses dynamics has $O(1/T)$ convergence rate for Fisher market equilibrium with Leontief utilities~\citep{cheung2018dynamics}.
    \item T\^atonnement is a classic price adjustment procedure that computes a market equilibrium. While t\^atonnement for private goods has been extensively studied, t\^atonnement for public goods remains unexplored. \cite{kroer2025computing} discussed a possible path to t\^atonnement for public goods through the new \ref{Shmyrev-Lindahl-Linear} program, but does not conclude any convergence rates and explicitly leaves it as an open question. Using the equivalence to the private goods setting and existing results, we can easily get the update rule and convergence rates of t\^atonnement for public goods. In \Cref{sec::tatonnement}, we present the  convergence of t\^atonnement in a more general setting. 
\end{itemize}

From the above implications, we can see that the new perspective --- viewing a linear public goods market as a Leontief private goods market --- provides a formal bridge for transferring analytical tools and computational results between the two settings. The counterintuitive nature of this specific correspondence (substitutes $\leftrightarrow$ complements) motivates the central question of this paper: \emph{What is the general, underlying connection between Fisher and Lindahl equilibria?}

In the following sections, we develop a duality framework that generalizes these observations. This framework formally unifies these two equilibrium models. We demonstrate that this duality is a powerful tool, not only for extending results from the private to the public goods setting but also for generating new insights for private goods and even the private/public chores setting.

\subsection{Our Contributions}\label{sec:contributions}
The motivating example suggests a deeper connection that is obscured by the superficial differences between the Lindahl and Fisher models. This paper's main contribution is to formalize and generalize this connection by introducing a comprehensive duality framework that unifies these two seemingly disparate equilibrium concepts.

Our framework builds upon a \emph{dual utility} function, $\tilde{u}$, which we derive from the standard indirect utility function $v(\mp, B)$ (specifically, $\tilde{u}(\mx) = 1/v(\mx, B)$).\footnote{Given a utility function $u$, the indirect utility $v(\mp, B)$ is the maximum utility attainable under the price $\mp$ and budget constraint $B$. See \Cref{definition:indirect utility} for a formal definition.} Using this concept, we establish a precise equivalence (\Cref{thm::main-duality}) between the two market types. \emph{A public goods market $\mathcal{L}$ with utilities $\{u_i\}$ admits a Lindahl equilibrium $(\mx, \{\mp_i\})$ if and only if a corresponding dual private goods market $\mathcal{F}$ with dual utilities $\{\tilde{u}_i\}$ admits a Fisher market equilibrium $(\{\tilde{\mx}_i\}, \tilde{\mp})$.} This duality is characterized by a remarkable exchange in the roles of prices and allocations: the Lindahl equilibrium allocation $\mx$ becomes the Fisher market price $\tilde{\mp}$, while the personalized Lindahl prices $\{\mp_i\}$ become the personalized Fisher allocations $\{\tilde{\mx}_i\}$.

This framework is not merely a theoretical curiosity; it serves as a powerful methodological bridge, allowing us to transfer deep insights and powerful algorithms from the well-studied Fisher setting to the less-explored Lindahl domain. 

For example, our framework immediately yields a generalized Shmyrev program (\ref{Shmyrev-Lindahl-CES}) for Lindahl equilibrium with CES utilities. This program encompasses the linear utility case from \cite{kroer2025computing} as a special instance. Furthermore, for the Leontief utility case, \cite{brandt2023balanced} previously analyzed a best-response dynamic, proving its convergence with a potential function. Interestingly, we find this potential function is precisely equivalent to our Shmyrev program when specialized to the Leontief case.

Furthermore, we also demonstrate the power of this duality through several key applications:

\paragraph{Nash Welfare and the Lindahl Equilibrium} We give a characterization of Lindahl equilibria for public goods markets with general concave and $1$-homogeneous utilities using the Nash Social Welfare (NSW) maximization program ($\sum_i B_i \log u_i(\mathbf{x}) \text{ s.t. } \mathbf{x}^\top \mathbf{1} \leq \sum_i B_i$). This is an analog to the goods setting where the Fisher market equilibrium is characterized by the Eisenberg-Gale (EG) NSW maximization program. Our proof naturally follows from our duality framework by considering the dual of the EG convex program for the dual Fisher market. We remark that previous works prove this result only for special cases of utilities, including separable homogeneous~\cite{fain2016core}, linear~\citep{fain2016core,10.1093/restud/rdaf043}, and Leontief utilities~\citep{brandt2023balanced}

For general concave utilities that are not necessarily homogeneous, a Lindahl equilibrium allocation may attain the optimal NSW. Nevertheless, we show that any Lindahl equilibrium allocation achieves a $(1/e)^{1/e}$ approximation of the optimal NSW (\Cref{theorem::non-homogenous}), and we prove this bound is tight (see \Cref{example::etoe}). This result extends previous results for Fisher market equilibria~\citep{garg2025approximating} through our duality framework.

\paragraph{Unifying and Interpreting Market Dynamics} Our framework provides a dual correspondence for analyzing market dynamics. We demonstrate that our duality framework enables us to transfer classic dynamics for market equilibrium, including proportional response dynamics (PRD) and t\^atonnement, from the private goods Fisher markets to the public goods markets. This not only gives new dynamics for the Lindahl equilibrium with convergence guarantees, but also gives new insight into the dynamics for the Fisher market equilibrium. 

We first analyze proportional response dynamics. For the general class of CES utility functions, their dual utility functions are also CES utility functions (\Cref{ex:CES and indirect utility}). We propose PRD for the Lindahl equilibrium and show that they are equivalent to mirror descent on \ref{Shmyrev-Lindahl-CES}. We then obtain similar convergence guarantees (\Cref{theorem:Lindahl-CES-PRD-complement,theorem:Lindahl-CES-PRD-substitute}) to those of the Fisher market equilibrium using duality.

A key insight from our duality concerns the interpretation of a complex expression within the PRD for Fisher markets with general gross substitutes utilities~\citep{cheung2025proportional}. By applying Roy's identity, we demonstrate that this term has a simple and intuitive interpretation in the dual Lindahl equilibrium: it represents the \emph{'best response expenditure'} (for details, see the paragraph 'A Dual Interpretation via Roy's Identity' in \Cref{sec:prd}). This insight, in turn, allows us to derive a novel dynamic (\Cref{theorem:PRD:lindahl:tc}) for Lindahl equilibria with \emph{total complements} utilities (\Cref{def:tc}) - the dual class of gross substitutes. This total complements class notably includes CES utilities with elasticity parameter $\rho \in (-\infty, 0)$.

Furthermore, we establish the convergence of PRD for Lindahl markets with gross substitutes utilities (\Cref{thm:prd-lindahl-gs}), which generalizes the PRD for the linear case \cite{kroer2025computing}. Additionally, we can map this convergent dynamic back to the Fisher setting via our duality, and it generates a new corresponding dynamic for Fisher markets with total complements utilities (\Cref{thm:fisher:tc}), which includes CES utilities with $\rho \in (-\infty, 0)$. This result closes the gap that, while the PRD dynamics with CES utilities with negative $\rho$ was previously derived using mirror descent on the generalized Shmyrev program, its convergence could not be established by generalizing existing results from the Fisher market with gross substitutes case. Our framework demonstrates that the correct generalization path proceeds from the Lindahl equilibrium with gross substitutes utilities through the dual correspondence. Our results for PRD are summarized in \Cref{tab:PRD}.

\begin{table}[t]
\centering
\resizebox{\columnwidth}{!}{%
\begin{tabular}{@{}lcccc@{}}
\toprule
\textbf{Markets / Utilities} &
  \textbf{Linear} &
  \textbf{CES} &
  \textbf{Gross Substitutes} &
  \textbf{Total Complements} \\
\midrule
Private Goods Fisher Equilibrium &
  \citep{WZ2007, birnbaum2011distributed} &
  \citep{cheung2018dynamics} &
  \citep{cheung2025proportional} &
  \Cref{thm:fisher:tc} \\
Public Goods Lindahl Equilibrium &
  \citep{kroer2025computing} &
  \Cref{theorem:Lindahl-CES-PRD-complement,theorem:Lindahl-CES-PRD-substitute} &
  \Cref{thm:prd-lindahl-gs} &
  \Cref{theorem:PRD:lindahl:tc} \\
\bottomrule
\end{tabular}%
}
\caption{Convergence results on proportional response dynamics for market equilibrium in private goods markets and public goods markets with different types of utilities.}
\label{tab:PRD}
\end{table}

Finally, we illustrate the broad applicability of our framework to \emph{t\^atonnement} dynamics. As an example, we show that the t\^atonnement process for nested-CES utilities in a Fisher market can be transformed into a new dynamic in the dual Lindahl equilibrium (\Cref{thm:tat:lindahl}). This new Lindahl dynamic possesses an intuitive economic interpretation, where the allocations of public goods are adjusted based on agents' overpayments.   

\paragraph{Markets with Chores} We show that our duality framework works for markets with chores, items that incur disutilities in \Cref{sec::chores}. 

First, we address an open problem in Fisher markets for private chores. A Fisher market equilibrium for chores, also known as a competitive equilibrium for chores, is a pair of prices and personalized allocations that (1) each agent receives the disutility-minimizing bundle of chores subject to an earning constraint; (2) every chore is fully allocated (see \Cref{definition:Fisher-chores}). This framework is motivated by practical fair chore division problems, such as teachers dividing teaching loads or roommates dividing household chores. For the class of convex and 1-homogeneous disutility functions, unlike the goods setting, the set of competitive equilibra is characterized by the Karush–Kuhn–Tucker (KKT) points of a Nash Welfare \emph{minimization} program~\citep{bogomolnaia2017competitive} where each agent received non-zero disutilities. However, these optimization formulation for chores has ``poles'' --- infeasible points on the boundary of the feasible region that attract iterative methods and drive the objective to negative infinity~\citep{boodaghians2022polynomial,chaudhury2022competitive}.  \citet{chaudhury2024competitive} give a pole-free formulation for the special case of linear disutilities, but the general convex and 1-homogeneous setting remained open.

We resolve this question and propose a new optimization formulation for general convex and $1$-homogeneous. Our main result (\Cref{theorem: fisher-chores-KKT}) establishes that the KKT points of this new, pole-free program correspond exactly to the Fisher market equilibrium for chores. This opens the possibility of applying practical first-order methods for computing competitive equilibria. On the technical side, we introduce the indirect disutility function and the dual disutility function (the inverse of the indirect disutility function) similar to the goods setting. Our program is based on the dual disutility and the connection to the Lindahl equilibrium in the dual market. We establish Roy's Identity-like characterization of the optimal demand for chores, which is crucial when extending from the linear setting to the general convex and $1$-homogeneous setting.

Second, we explore the public chores setting where each agent must receive the same allocation of public chores. Allocations of public chores arise in the practice. Consider a community with several waste-processing facilities and a fixed amount of total waste. Individuals have preferences over how it is allocated --- for example, they prefer nearby facilities to handle less waste. In this case, each facility is a chore and the problem is how to fairly divide the total waste among the facilities. We propose a new solution concept we call Lindahl equilibrium for public chores via duality to the Fisher market for private chores. We show that the Lindahl equilibrium allocation is (weakly) Pareto-optimal. Moreover, we establish a duality (\Cref{theorem:chores-duality}) between the Lindahl equilibrium for public chores and the Fisher market for private chores, analogous to the duality for goods. Leveraging this new duality, we demonstrate that for the class of convex and 1-homogeneous disutility functions, the Lindahl equilibrium also corresponds to the KKT points of a Nash Welfare minimization program.

\paragraph{Roadmap} The remainder of the paper is organized as follows. \Cref{sec:preliminaries} provides preliminaries.  \Cref{sec::dual::goods} introduces the duality framework.   \Cref{sec:homogeneous:nsw} analyzes the relationship between the Lindahl equilibrium and Nash Social Welfare.  \Cref{sec::marketdynamics} explores the market dynamics for both private goods and public goods using the duality framework. \Cref{sec::chores} analyzes markets with chores, considering cases for both Fisher markets and the Lindahl equilibrium.  \Cref{sec::missing} contains all deferred proofs and additional discussions.

\subsection{Related Literature}\label{sec:related works}
\paragraph{Fisher Markets and Nash Social Welfare} Competitive equilibrium stands as a central solution concept in economics, with classical results establishing its existence in the Fisher model and more general exchange market models \cite{arrow1954existence, mckenzie1954equilibrium}.  The Fisher market in particular has been the subject of extensive investigation, including its equilibrium properties \cite{megiddo2007continuity, chen2012incentive, bei2019earning, goktas2021consumer}, computational algorithms \cite{codenotti2004efficient, Orlin10, nesterov2018computation, kroer2019computing, garg2022approximating}, and numerous extensions \cite{branzei2014fisher, Cole:2016:LMG:2940716.2940720, liaostatistical, conitzer2022pacing, gao2023infinite}. While existence is guaranteed under broad conditions, the task of computing an equilibrium is known to be computationally challenging \cite{chen2009spending, CDDT2009, CPY2017, CSVY2006, garg2017settling, VazYan10, bei2019earning, deligkas2024constant}. 

A fundamental connection between Fisher market equilibrium and Nash Social Welfare (NSW) was established by \cite{eisenberg1959consensus, eisenberg1961aggregation}. They demonstrated that maximizing the NSW \cite{nash1950bargaining} yields a competitive equilibrium under general concave utility functions that are 1-homogeneous. Recent work has also focused on the connection between Fisher market equilibrium and NSW for general concave utilities \cite{garg2024satiation, garg2025approximating}. The maximum Nash Social Welfare (NSW) is itself a classical objective for allocating goods and has received significant attention within the social choice and fair division literature, even extending to chore allocation \cite{moulin2004fair, bogomolnaia2017competitive, CDGJMVY2017}. Furthermore, NSW provides remarkable fairness guarantees in settings with indivisible goods, where it is known to produce allocations that satisfy envy-freeness up to one good (EF1) under additive valuations \cite{CMP016}. 

\paragraph{Lindahl Equilibrium} The Lindahl equilibrium, introduced by \cite{foley1970lindahl} based on Lindahl's \cite{lindahl1958just} ideas of personalized taxation, has garnered increasing attention within recent studies on fair division and public goods\cite{conitzer2017fair}. This interest is primarily motivated by its strong proportional fairness properties, particularly its connection to the core \cite{airiau2023portioning, bogomolnaia2005collective, fain2016core, brandl2022funding, brandt2023balanced, 10.1093/restud/rdaf043}. Recent work has also extended the classical Lindahl equilibrium model to accommodate settings with discrete choices \cite{peters2021market, munagala2022auditing,nguyen2025few,}. 

\paragraph{Connection between Fisher market and Lindahl Equilibrium} 
We note that a public goods market can be converted to a private goods market by a simple reduction: we just duplicate $n$ copies of each good such that each of them is only valuable for one agent, and then add additional constraints that the consumption of these goods must be equal among all agents. This reduction explicitly enforces that the allocation of public goods is the same for each agent, and is used in~\citep{foley1970lindahl} to prove the existence of a Lindahl equilibrium by invoking a fixed-point argument~\citep{debreu1962new}. Similar reductions that convert a public goods market to a private goods market by introducing copies of goods are also explored in \cite{garg2021markets}. Our duality-based reduction from the Lindahl equilibrium to the Fisher market equilibrium is quite different, as we do not introduce duplicated goods or add constraints. Instead, the duality-based reduction uses the duality between indirect and direct utility functions, and the correspondence between two markets holds through exchanging the roles of allocations and prices. During the preparation of this paper, we found that \cite{ruys1972existence} uses a similar duality argument to prove the existence of a Lindahl equilibrium in markets with public goods only, although not through the language of indirect utility functions.

\paragraph{Dynamics in the Market}
The study of market dynamics is a fundamental topic in both economics and algorithmic game theory. A foundational dynamic is the t\^atonnement process, introduced by \citet{Walras1874}, which describes a natural market adjustment driven by supply and demand, where prices rise for overdemanded goods and fall for underdemanded ones. Following \citet{samuelson1983foundations}'s continuous model, seminal works by \citet{ArrowHurwicz1958} and \citet{ABH1959} introduced the gross substitutes property as a crucial condition to ensure its convergence. More recently, the intersection of computer science and economics has prompted the exploration of discrete variants of t\^atonnement algorithms \cite{CMV2005, ARY14, goktas2023tatonnement, CF2008, CCR2012, CCD2013, cole2019balancing, goktas2021consumer, dvijotham2022convergence, gao2020first, fleischer2008fast, CheungCole2018async, nan2025convergence}. Another widely studied dynamic is proportional response dynamics (PRD), noted for its simplicity in networked markets \cite{LLSB08}. \citet{WZ2007} were the first to formally identify the power of PRD in driving linear Arrow-Debreu markets toward competitive equilibrium. Since then, a growing body of research has extended PRD's convergence guarantees to Arrow-Debreu markets, Fisher markets,
production economies and attention markets, demonstrating its robustness across a range of settings.~\cite{Zhang2011,birnbaum2011distributed,cheung2018dynamics,BranzeiMehtaNisan2018NeurIPS,CheungHN19,gao2020first,BranzeiDevanurRabani2021,CheungLP2021,ZhuCX2023WWW,kolumbus2023asynchronous,li2024proportional,cheng2024tight}. Other relevant market algorithms include various ascending-price auction mechanisms \cite{GargKap2006, garg2006price, garg2007market, garg2004auction, bei2019ascending, garg2023auction}. We remark that works on dynamics for Lindahl equilibria in public goods market are very much sparser compared to Fisher markets. \citet{brandt2023balanced} study a best-response type dynamics in the setting of Leontief utilities and show asymptotic convergence, while \citet{kroer2025computing} study PRD for linear utilities and prove $O(1/T)$ convergence. Convergence of both t\^atonnement and PRD for the Lindahl equilibrium in more general settings remained unknown before our work.

\paragraph{Chores} The study of competitive equilibrium (CE) for chores, introduced in the foundational work of Bogomolnaia et al. \cite{bogomolnaia2017competitive}, has become an active area of research \cite{liu2024mixed, amanatidis2022fair}. Much of the work in this area has focused on computational aspects. Br\^anzei and Sandomirskiy \cite{branzei2024algorithms} established polynomial-time algorithms for computing an exact CE when the number of agents or chores is constant. This was extended by Garg and McGlaughlin \cite{garg2020computing} to mixed settings involving both goods and chores. For the general case, Boodaghians et al. \cite{boodaghians2022polynomial} and Chaudhury et al. \cite{chaudhury2022competitive} developed a polynomial-time algorithm for computing approximate CE, where their iterations require solving non-linear convex programs. Recently, \citet{chaudhury2024competitive} demonstrated a correspondence between the CE for chores with linear utilities and the KKT points of the Nash Social Welfare minimization problem. Furthermore, the model of chores has been adapted to related domains, such as the public chore setting \cite{elkind2025not}.

\section{Preliminaries}\label{sec:preliminaries}
\paragraph{Basic Notations} We use $\boldsymbol{0}$ and $\boldsymbol{1}$ to denote the all-zero vector and all-one vector, respectively; their dimensionality will be clear from the context.
For two vectors $\mx, \my \in \-R^m$, $\mx \ge \my$ means $x_j \ge y_j$ for all $j \in [m]$, while $\mx \gg \my$ means $x_j > y_j$ for all $j \in [m]$.

\paragraph{(Quasi)-Convexity/(Quasi)-Concavity} 
A function $f: \+S \rightarrow \-R$ defined on a convex set $S \subseteq \-R^m$ is \emph{convex} if for any $x, y \in \+S$ and $\lambda \in [0,1]$, we have $f(\lambda x + (1-\lambda)y)\le \lambda f(x) + (1 - \lambda) f(y)$. A function $f$ is \emph{concave} if $-f$ is convex. A function $f: \+S \rightarrow \-R$ defined on a convex set $S \subseteq \-R^m$ is \emph{quasi-convex} if for every $x, y \in \+S$ and $\lambda \in [0,1]$, we have $f(\lambda x+(1-\lambda) y) \le \max\{f(x), f(y)\}$. An equivalent definition of a quasi-convex function is that every sublevel set $\{x: f(x) \le \alpha\}$ is a convex set. A function $f$ is \emph{quasi-concave} if $-f$ is quasi-convex, or equivalently, every superlevel set $\{x: f(x) \ge \alpha\}$ is convex.

\paragraph{Utility Functions} A utility function $u: \-R^m_{\ge 0} \rightarrow \-R_{\ge 0}$ is \emph{non-decreasing} if for any $\mx, \my \in \-R^m_{+}$, $\mx \ge \my$ implies $u(\mx) \ge u(\my)$. A function $u$ is \emph{increasing} or \emph{monotone} if for any $\mx, \my \in \-R^m_{+}$, $\mx \gg \my$ implies $u(\mx) > u(\my)$. A utility function $u$ is \emph{strictly increasing} if for any $\mx, \my \in \-R^m_{+}$, $\mx \ge \my$ and $\mx \ne \my$ implies $u(\mx) > u(\my)$. A utility function is \emph{locally nonsatiated} if for every $\mx \in \-R^m_{\ge 0}$ and $\varepsilon > 0$, there is $\my  \in \-R^m_{\ge 0}$ such that $\InNorms{\mx - \my} \le \varepsilon$ and $u(\my) > u(\mx)$. The relationship between these properties is as follows.
\begin{align*}
    \text{strictly increasing} \implies \text{increasing = monotone} \implies \begin{cases}
\text{non-decreasing}\\
\text{locally nonsatiated}
\end{cases}
\end{align*}
Most of our results hold under the following assumption on the utility function
\begin{assumption}\label{assumption:c-nd-ln-qc}
    The utility function $u: \-R^m_{\ge 0} \rightarrow \-R_{\ge 0}$ is continuous, non-decreasing, locally nonsatiated, and quasi-concave.
\end{assumption}

\subsection{Markets with Private / Public goods}
\paragraph{Private Goods: Fisher Market Equilibrium}
A Fisher market, $\+F = (M, N, \{u_i\}, \{B_i\})$, consists of $m$ divisible private goods $M=\{1,2,\ldots,m\}$ and $n$ agents $N= \{1,2, \ldots,n\}$.  We assume, without loss of generality, a unit supply of each good. Each agent $i \in \{1, \dots, n\}$ has an initial budget $B_i > 0$ and a utility function $u_i: \mathbb{R}_{\geq 0}^m \to \mathbb{R}_{\geq 0}$. The goal is to find a Fisher market equilibrium, also known as \emph{competitive equilibrium}, which consists of a public price vector $\mp \in \-R^m_{\ge 0}$ with $p_j \ge 0$ being the price for good $j$, and personalized allocations $\mx_i \in \-R^m_{\ge 0}$ for every agent $i$ with $x_{ij}$ being the amount of good $j$ allocated to agent $i$.

\begin{definition}[Fisher market equilibrium]\label{def::Fisher-market} Given $\+F$, let $\{\mx_i\}$ be a personalized allocations and $\mp$ be a price vector. Then $(\{\mx_i\}, \mp)$ is a \emph{Fisher market equilibrium} if
\begin{enumerate}[label=(\roman*)]
    \item $\mx_i$ is affordable: $\InAngles{\mp, \mx_i} \le B_i$ for every agent $i \in N$,
    \item $\mx_i$ is utility-maximizing: $\mx_i\in \argmax_{\my_i \in \-R_{\geq 0}^m: \InAngles{\mp, \my_i} \leq B_i} u_i(\my_i)$ for every agent $i \in N$,
    \item \emph{market clears}: 
    for every $j \in M$, $\sum_i x_{ij} \le 1$; and whenever $p_j > 0$, then $\sum_i x_{ij} = 1$.
\end{enumerate}
\end{definition}
\paragraph{Public Goods: Lindahl Equilibrium} A public good market $\+L = (M, N, \{u_i\}, \{B_i\})$ consists of $m$ public goods $M=\{1,2,\ldots,m\}$ and $n$ agents $N= \{1,2, \ldots,n\}$. Each agent $i \in \{1, \dots, n\}$ has an initial budget $B_i > 0$ and a utility function $u_i: \mathbb{R}_{\geq 0}^m \to \mathbb{R}_{\geq 0}$.  The goal is to find a Lindahl equilibrium~\citep{foley1970lindahl} that consists of a public allocation $\mx \in \-R^m_{\ge 0}$ and personalized price vectors $\{\mp_i\}$ that satisfies the following conditions. Here $x_j \ge 0$ represents the amount of budget spent on good $j$ where $p_{ij}$ is the price of good $j$ for agent $i$.

\begin{definition}[Lindahl equilibrium] \label{def::Lindahl-equilibrium} Given $\+L$, let $\mx$ be an allocation and $\{\mp_i\}$ be nonnegative personalized prices. Then $(\mx, \{\mp_i\}_{i \in N})$ is a \emph{Lindahl equilibrium} if 
\begin{enumerate}[label=(\roman*)]
    \item $\mx$ is \emph{affordable}: $\InAngles{\mp_i, \mx} \leq B_i$ for every agent $i \in N$,
    \item $\mx$ is \emph{utility-maximizing}: $\mx \in \argmax_{\my \in \-R_{\geq 0}^m: \InAngles{\mp_i, \my} \leq B_i} u_i(\my)$ for every agent $i \in N$,
    \item $\mx$ is \emph{profit-maximizing}: for every $j \in M$, $\sum_i p_{ij} \leq 1$; and whenever $x_j >0$ then $\sum_i p_{ij} = 1$.
\end{enumerate}
\end{definition}
The profit-maximizing condition has its name because it is equivalent to the condition that a producer of public goods maximize his profit subject to unit cost of each good, that is, $\mx\in \argmax_{\my \in \-R^m_{\ge 0}} \sum_{i}\sum_j p_{ij} y_j - \sum_j y_j$ (see e.g., \citep[Lemma 1]{kroer2025computing} for a proof). A discussion of the case where the producer has a general cost function can be found in \Cref{sec::general-lindahl-discussion}.

\section{Duality between Fisher Market Equilibrium and Lindahl Equilibrium} \label{sec::dual::goods}
\paragraph{Indirect Utility Function} Our duality framework is built upon the classical concept of the indirect utility function.
\begin{definition}[Indirect Utility Function]\label{definition:indirect utility}
    Given a utility function $u: \-R^m_{\ge 0} \rightarrow \-R_{\ge 0}$, its indirect utility function $v: \-R^m_{\ge 0} \times \-R_{> 0} \rightarrow \-R_{\ge 0}$ is defined as
    \begin{align*}
        v(\mp, B) = \sup_{\mx \in \-R^m_{\ge 0}, \InAngles{\mp, \mx} \le B} u(\mx).
    \end{align*}
    That is, $v(\mp, B)$ is the maximal utility attainable under the price $\mp$ and budget constraint $B$.
\end{definition} 
The indirect utility function gives a dual characterization of the utility function as shown in the following theorem. Different versions of this theorem have been proved under various assumptions on the utility function in the literature. We present a general version that holds under \Cref{assumption:c-nd-ln-qc} and include the proof in \Cref{sec::direct-indirect}.
\begin{theorem}[Duality between direct and indirect utility]\label{theorem:duality-utility-indirec-utility}
    Suppose $u: \-R^m_{\ge 0} \rightarrow \-R_{\ge 0}$ satisfies \Cref{assumption:c-nd-ln-qc}. Then for any $B > 0$ and all $\mx\in \-R^m_{\ge 0} \setminus\{\boldsymbol{0}\}$ it holds that 
    \begin{align*}
        u(\mx) = \inf_{\mp \in \-R^{m}_{\ge 0}: \InAngles{\mx, \mp} \le B} v(\mp, B).
    \end{align*}
\end{theorem}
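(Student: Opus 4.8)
One inequality is immediate: if $\mp \in \-R^m_{\ge 0}$ satisfies $\InAngles{\mx, \mp} \le B$, then $\mx$ is itself feasible for the program defining $v(\mp, B)$, so $v(\mp, B) \ge u(\mx)$; taking the infimum over all such $\mp$ gives $\inf_{\mp} v(\mp, B) \ge u(\mx)$. The substance is the reverse inequality, which I would prove by exhibiting, for every $\varepsilon > 0$, a price $\mp_\varepsilon \in \-R^m_{\ge 0}$ with $\InAngles{\mx, \mp_\varepsilon} \le B$ and $v(\mp_\varepsilon, B) \le u(\mx) + \varepsilon$; this forces $\inf_{\mp} v(\mp, B) \le u(\mx)$, and combining the two inequalities yields the claim.

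\textbf{The separating hyperplane.} Fix $\varepsilon > 0$. If $u(\my) \le u(\mx) + \varepsilon$ for all $\my \in \-R^m_{\ge 0}$, then trivially $v(\mp, B) \le u(\mx) + \varepsilon$ for every $\mp$; so assume $S_\varepsilon := \{\my \in \-R^m_{\ge 0} : u(\my) > u(\mx) + \varepsilon\}$ is nonempty. By quasi-concavity $S_\varepsilon$ is convex, and since $u$ is non-decreasing it is upward closed ($\my \in S_\varepsilon$ and $\my' \ge \my$ imply $\my' \in S_\varepsilon$). By continuity, $\overline{S_\varepsilon} \subseteq \{\my : u(\my) \ge u(\mx) + \varepsilon\}$, so $\mx \notin \overline{S_\varepsilon}$ because $u(\mx) < u(\mx) + \varepsilon$. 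Strictly separating the point $\mx$ from the nonempty closed convex set $\overline{S_\varepsilon}$ produces a nonzero vector $\mp$ and a scalar $\beta$ with $\InAngles{\mp, \mx} < \beta \le \InAngles{\mp, \my}$ for all $\my \in \overline{S_\varepsilon}$; set $c := \inf_{\my \in S_\varepsilon} \InAngles{\mp, \my} \ge \beta > \InAngles{\mp, \mx}$. Upward-closedness forces $\mp \ge \boldsymbol{0}$: if some $p_j < 0$, then adding large positive multiples of $\mathbf{e}_j$ to a point of $S_\varepsilon$ stays in $S_\varepsilon$ while sending $\InAngles{\mp, \cdot} \to -\infty$, so $c = -\infty$, contradicting $c > \InAngles{\mp, \mx} > -\infty$. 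Since $\mp, \mx \ge \boldsymbol{0}$ we have $\InAngles{\mp, \mx} \ge 0$, hence $c > 0$.

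\textbf{Rescaling.} Finally, rescale $\mp$ so that $\mx$ lies inside the budget set while $S_\varepsilon$ is pushed strictly outside it. If $\InAngles{\mp, \mx} > 0$, set $\mp_\varepsilon := \frac{B}{\InAngles{\mp, \mx}} \mp$, so that $\InAngles{\mp_\varepsilon, \mx} = B$ and $\inf_{\my \in S_\varepsilon} \InAngles{\mp_\varepsilon, \my} = \frac{B}{\InAngles{\mp, \mx}} c > B$; if $\InAngles{\mp, \mx} = 0$, set $\mp_\varepsilon := \frac{2B}{c}\mp$, so that $\InAngles{\mp_\varepsilon, \mx} = 0 \le B$ and $\inf_{\my \in S_\varepsilon} \InAngles{\mp_\varepsilon, \my} = 2B > B$. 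In both cases $\mp_\varepsilon \in \-R^m_{\ge 0}$, $\InAngles{\mx, \mp_\varepsilon} \le B$, and $S_\varepsilon \subseteq \{\my : \InAngles{\mp_\varepsilon, \my} > B\}$; hence every $\my$ with $\InAngles{\mp_\varepsilon, \my} \le B$ lies outside $S_\varepsilon$, i.e.\ $u(\my) \le u(\mx) + \varepsilon$, so that $v(\mp_\varepsilon, B) \le u(\mx) + \varepsilon$. Letting $\varepsilon \downarrow 0$ finishes the argument.

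\textbf{Where the difficulty lies.} The naive choice $\{\my : u(\my) > u(\mx)\}$ fails: $\mx$ typically lies on the boundary of this set, so only weak separation is available, and when the separating price satisfies $\InAngles{\mp, \mx} = 0$ it cannot be rescaled to strictly exclude the superlevel set from the budget set — indeed the infimum may genuinely fail to be attained by any single price (e.g.\ a utility that ignores some coordinates evaluated at an $\mx$ supported only on those coordinates). Passing to $S_\varepsilon$ with $\varepsilon > 0$ creates a real gap between $\mx$ and the (closed) superlevel set, which is exactly what makes strict separation, the sign deduction $\mp \ge \boldsymbol{0}$, and the rescaling all succeed at once; the price paid is that equality is recovered only in the limit $\varepsilon \to 0$, which is precisely why the statement is phrased with an infimum.
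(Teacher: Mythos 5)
Your proposal is correct and follows essentially the same route as the paper's proof: the easy inequality via feasibility of $\mx$, and the reverse inequality by separating $\mx$ from the $(u(\mx)+\varepsilon)$-superlevel set, deducing $\mp \ge \boldsymbol{0}$ from monotonicity via upward-closedness, rescaling to the budget $B$, and letting $\varepsilon \to 0$. If anything, you are slightly more careful than the paper on two edge cases (the empty superlevel set, handled directly rather than via local nonsatiation, and the $\InAngles{\mp,\mx}=0$ rescaling), but the argument is the same.
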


Our motivating examples, linear and Leontief utilities, illustrate this concept. For a linear utility $u_i(\mx) = \sum_j a_{ij} x_{ij}$, the indirect utility is $v_i(\mp, B_i) = B_i \max_j\left\{\frac{a_{ij}}{p_j}\right\}$. If we then define a new function $\dual{u}_i(\mx) \triangleq 1/v_i(\mx, B_i)$ - substituting the price vector $\mp$ with the quantity vector $\mx$ in the functional form of $v_i$ - we obtain $\dual{u}_i(\mx) =\frac{1}{B_i} \min_j \left\{\frac{x_j}{a_{ij}}\right\}$, which is precisely a Leontief utility.\footnote{We remark that for any scalar $c > 0$, utility $u$ and utility $c u$ represent the same preference. Moreover, positive scaling of any utility function has no effect on the equilibrium of markets with private or public goods.} Conversely, starting with a Leontief utility $u_i(\mx) = \min_j \left\{ \frac{x_j}{a_{ij}} \right\}$, applying the same transformation $\dual{u}_i(\mx) \coloneqq 1/v_i(\mx, B_i)$ yields $\dual{u}_i(\mx) = \frac{1}{B_i} \sum_j a_{ij} x_{ij}$, a linear utility.

This reciprocal relationship is not limited to these two specific cases. The family of Constant Elasticity of Substitution (CES) utilities provides a broader illustration of this duality and serves as the motivation for our central definition.

\begin{example}[Constant Elasticity of Substitution (CES) Utility and Indirect Utility]\label{ex:CES and indirect utility} A CES utility with elasticity of substitution $\rho \in (-\infty, 1)$ is defined as
$$ u_i(\mx) = \left(\sum_{j=1}^m a_{ij} x_{j}^{\rho_i}\right)^{1/\rho_i}, \quad \text{with } \rho_i \in (-\infty, 1). $$

For a CES utility with $\rho_i$, its indirect utility is 
$$ 
v_i(\mp, B_i) = B_i \left(\sum_{j=1}^m a_{ij}^{1 - \dual{\rho}_i} p_j^{\dual{\rho}_i}\right)^{-1/\dual{\rho}_i}, \quad \text{where } \dual{\rho}_i = \frac{\rho_i}{\rho_i - 1}. 
$$
We note that the indirect utility function itself looks like a CES utility. In fact, if we look at the function $\dual{u}(\mx) = 1/v(\mx, B_i)$ for a fixed $B_i$, we get
$$\dual{u}(\mx) = \frac{1}{v_i(\mx, B_i)} = \frac{1}{B_i} \left(\sum_{j=1}^m a_{ij}^{1 - \dual{\rho}_i} x_j^{\dual{\rho}_i}\right)^{1/\dual{\rho}_i}, \quad \text{where } \dual{\rho}_i = \frac{\rho_i}{\rho_i - 1}. $$
\end{example}

\paragraph{Dual Utility} The correspondence between the linear utility and the Leontief utility motivates the following definition of \emph{dual utility} using the indirect utility function.
\begin{definition}[Dual Utility] \label{def::dual-utility-goods}
    Given a utility function $u$ and budget $B > 0$, its \emph{dual utility} is $\dual{u}$ such that $\dual{u}(\mx) = \frac{1}{v(\mx, B)}$, where $v$ is the indirect utility of $u$.
\end{definition}
Our main result is a duality characterization between the Lindahl equilibrium for public goods and the Fisher market equilibrium for private goods using dual utilities. Specifically, we show that under mild assumptions, every Lindahl equilibrium of a public goods market corresponds to a Fisher market equilibrium of a \emph{dual} private goods market, with the same agents, goods, and budgets, but with the dual utilities. The correspondence works by exchanging the role of allocations and prices between the two markets: for any Lindahl equilibrium $(\mx, \{\mp_i\})$, the Lindahl equilibrium allocation $\mx$ is the Fisher market equilibrium price $\dual{\mp}$, while the Lindahl equilibrium personalized prices $\{\mp_i\}$ are the Fisher market personalized allocations $\{\dual{\mx}_i\}$. 

\begin{theorem}[Duality between Lindahl Equilibrium and Fisher Market Equilibrium] \label{thm::main-duality} Consider a public goods market $\+L = (N, M, \{u_i\}, \{B_i\})$ where $\{u_i\}$ satisfy \Cref{assumption:c-nd-ln-qc}. Define the dual private goods market $\+F = (N, M, \{\dual{u}_i\}, \{B_i\})$ where $\dual{u}_i$ are the dual utilities of $\{u_i\}$. Then $(\mx, \{\mp_i\})$ is a Lindahl equilibrium of $\+L$ if and only if $(\{\dual{\mx}_i\}, \dual{\mp})$ is a Fisher market equilibrium of $\+F$, where $\dual{\mx}_i = \mp_i$ and $\dual{\mp} = \mx$. 
\end{theorem}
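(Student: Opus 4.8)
The plan is to verify the three defining conditions of a Lindahl equilibrium of $\+L$ against those of a Fisher market equilibrium of $\+F$ under the substitution $\dual{\mx}_i=\mp_i$ and $\dual{\mp}=\mx$ (equivalently, $\dual{x}_{ij}=p_{ij}$ and $\dual{p}_j=x_j$), and to observe that two of the three pairs of conditions are literally identical, so the whole statement reduces to matching the utility-maximizing conditions. Indeed, the Lindahl affordability condition $\InAngles{\mp_i,\mx}\le B_i$ is the Fisher affordability condition $\InAngles{\dual{\mp},\dual{\mx}_i}\le B_i$, and the Lindahl profit-maximizing condition ($\sum_i p_{ij}\le 1$ for every $j$, with equality whenever $x_j>0$) is verbatim the Fisher market-clearing condition ($\sum_i \dual{x}_{ij}\le 1$ for every $j$, with equality whenever $\dual{p}_j>0$). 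Hence it suffices to show, for each agent $i$, that [Lindahl affordability + Lindahl utility-maximization] for $i$ is equivalent to [Fisher affordability + Fisher utility-maximization] for $i$.

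For this I would rewrite both sides through the indirect utility $v_i$. On the Lindahl side, that $\mx$ lies in the budget set $\{\my\ge\boldsymbol{0}:\InAngles{\mp_i,\my}\le B_i\}$ and maximizes $u_i$ over it is, by definition of $v_i$, precisely $\InAngles{\mp_i,\mx}\le B_i$ together with $u_i(\mx)=v_i(\mp_i,B_i)$. On the Fisher side, note first that $v_i>0$ on $\-R^m_{\ge 0}\times\-R_{>0}$ (immediate from $B_i>0$ and the local nonsatiation of $u_i$ near $\boldsymbol{0}$), so $\dual{u}_i=1/v_i(\cdot,B_i)$ is well defined, and maximizing $\dual{u}_i$ over a budget set is the same as minimizing $v_i(\cdot,B_i)$ over it; thus the Fisher conditions for $i$ read $\InAngles{\mx,\mp_i}\le B_i$ together with $v_i(\mp_i,B_i)=\inf_{\mq\ge\boldsymbol{0}:\,\InAngles{\mx,\mq}\le B_i}v_i(\mq,B_i)$. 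Now I would invoke \Cref{theorem:duality-utility-indirec-utility}, which for $\mx\neq\boldsymbol{0}$ gives $\inf_{\mq\ge\boldsymbol{0}:\,\InAngles{\mx,\mq}\le B_i}v_i(\mq,B_i)=u_i(\mx)$; substituting, the Fisher conditions become $\InAngles{\mx,\mp_i}\le B_i$ together with $v_i(\mp_i,B_i)=u_i(\mx)$, which is exactly the Lindahl pair. Because $\mp_i$ is feasible in the relevant budget set as soon as affordability holds, the value matching automatically forces the infimum and the supremum to be attained at $\mp_i$ and $\mx$ respectively, so every rewrite above is an equivalence and we obtain an if-and-only-if in both directions.

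The one genuine subtlety — and the main obstacle here — is the hypothesis $\mx\neq\boldsymbol{0}$ required to apply \Cref{theorem:duality-utility-indirec-utility}, which has to be derived separately on each side from the conditions of the given equilibrium. If $\mx=\boldsymbol{0}$ in a Lindahl equilibrium then $\boldsymbol{0}$ would maximize $u_i$ over a budget set that (since $B_i>0$) contains points arbitrarily close to $\boldsymbol{0}$, contradicting local nonsatiation of $u_i$. On the Fisher side one checks that $\dual{u}_i$ itself satisfies \Cref{assumption:c-nd-ln-qc}: continuity and monotonicity are inherited from $v_i$, quasi-concavity of $\dual{u}_i$ is the quasi-convexity of $v_i$ in prices, and local nonsatiation of $\dual{u}_i$ needs a short argument. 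Then the standard fact that the budget constraint binds at a competitive equilibrium gives $\InAngles{\dual{\mp},\dual{\mx}_i}=B_i>0$ and hence $\dual{\mp}=\mx\neq\boldsymbol{0}$. With $\mx\neq\boldsymbol{0}$ secured on both sides, the previous paragraph completes the argument; the analytic content is concentrated entirely in \Cref{theorem:duality-utility-indirec-utility}, and the rest is a structural translation of definitions.
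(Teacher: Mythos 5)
Your proposal is correct and follows essentially the same route as the paper: both reduce the theorem to the equivalence, for each agent, between the utility-maximization condition for $\mx$ and the minimization of the indirect utility $v_i(\cdot,B_i)$ over the dual budget set, with all the analytic content carried by \Cref{theorem:duality-utility-indirec-utility} and the remaining conditions matching verbatim under the exchange of allocations and prices. The only substantive difference is that you explicitly justify the hypothesis $\mx\neq\boldsymbol{0}$ needed to invoke that theorem---a point the paper's \Cref{lemma:equivalence between fisher (ii)} passes over silently---and your treatment of it is sound (the deferred ``short argument'' for local nonsatiation of $\dual{u}_i$ does go through, by perturbing all prices upward and using compactness of the shrunken budget set).
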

\begin{remark}
    The other direction of the statement also holds with a similar proof. Consider a private goods Fisher market $\+F = (N, M, \{u_i\}, \{B_i\})$ where $\{u_i\}$ satisfy \Cref{assumption:c-nd-ln-qc}. Define the dual public goods market $\+F = (N, M, \{\dual{u}_i\}, \{B_i\})$ where $\dual{u}_i$ are the dual utilities of $\{u_i\}$. Then $(\{\mx_i\}, \mp)$ is a Fisher market equilibrium of $\+F$ if and only if $(\dual{\mx}, \{\dual{\mp}_i\})$ is a Lindahl market equilibrium of $\+F$, where $\dual{\mx} = \mp_i$ and $\dual{\mp}_i = \mx_i$. 
\end{remark}
\begin{proof} The core of \Cref{thm::main-duality} is to establish the following equivalent definition of Lindahl equilibrium (\Cref{def::Lindahl-equilibrium}).
\begin{definition}[Equivalent Definition of Lindahl Equilibrium]\label{def:Lindahl-equivalent}
 Given $\+L$, let $\mx$ be an allocation and $\{\mp_i\}$ be nonnegative personalized prices. Then $(\mx, \{\mp_i\})$ is a \emph{Lindahl equilibrium} if 
\begin{enumerate}[label=(\roman*)]
     \item $\mx$ is \emph{affordable}: $\InAngles{\mp_i, \mx} \leq B_i$ for every agent $i \in N$,
    \item $\mp_i$ is \textbf{indirect utility minimizing}: $\mp_i \in \argmin_{\mq_i \in \-R_{\geq 0}^m : \InAngles{\mx, \mq_i} \le B_i} v_i(\mq_i, B_i)$ for every agent $i \in N$,
    \item $\mx$ is \emph{profit-maximizing}: for every $j \in M$, $\sum_i p_{ij} \leq 1$; and whenever $x_j >0$ then $\sum_i p_{ij} = 1$.
\end{enumerate}
\end{definition}

\paragraph{Equivalence between \Cref{def::Lindahl-equilibrium} and \Cref{def:Lindahl-equivalent}} We note that the only difference between \Cref{def:Lindahl-equivalent} and \Cref{def::Lindahl-equilibrium} is condition $(ii)$. Recall that in the original definition, $(ii)$ $\mx$ maximizes the utility $u_i(\cdot)$ under price $\mp_i$ and budget constraint $B_i$ for each agent $i$:
\begin{align*}\label{cond::Lindahl-equilibrium}
\mx \in \argmax_{\my \in \-R_{\geq 0}^m : \InAngles{\my, \mp_i} \le B_i} u_i(\my).\numberthis
\end{align*}
While in the new \Cref{def:Lindahl-equivalent}, $(ii)$ requires that $\mp_i$ minimizes the indirect utility $v_i(\cdot, B_i)$ under ``price" $\mx$ and budget constraint $B_i$ for each agent $i$.  
\begin{align}\label{cond::Lindahl-dual}
    \mp_i \in \argmin_{\mq_i \in \-R_{\geq 0}^m : \InAngles{\mx, \mq_i} \le B_i} v_i(\mq_i, B_i).
\end{align}
Using the fundamental duality between utility and the indirect utility function (\Cref{thm::main-duality}), we show that \eqref{cond::Lindahl-equilibrium} and \eqref{cond::Lindahl-dual} are equivalent. We defer the proof to \Cref{sec::direct-indirect}
\begin{lemma}\label{lemma:equivalence between fisher (ii)}
    Let $u_i: \-R^m_{\ge 0} \rightarrow \-R_{\ge 0}$ satisfies \Cref{assumption:c-nd-ln-qc} and $B_i > 0$. Then a pair $(\mx, \{\mp_i\})$ satisfies \eqref{cond::Lindahl-equilibrium} if and only if it satisfies \eqref{cond::Lindahl-dual}.
\end{lemma}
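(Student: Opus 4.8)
The plan is to unwind both conditions using \Cref{theorem:duality-utility-indirec-utility} (duality between direct and indirect utility) and the definition of the indirect utility $v_i$. I will argue both implications by a short chain of equivalences, the crux being that the two optimization problems share the same ``budget set'' geometry once we swap the roles of $\mx$ and $\mp_i$. Throughout I will assume $\mx \ne \boldsymbol{0}$; the degenerate case $\mx = \boldsymbol{0}$ forces $u_i(\mx) = u_i(\boldsymbol{0})$ which, under local nonsatiation, cannot be utility-maximizing unless the market is trivial, and can be handled separately or excluded by the standing assumptions.

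First I would establish the ``forward'' direction: suppose \eqref{cond::Lindahl-equilibrium} holds, i.e.\ $\mx$ maximizes $u_i$ over $\{\my : \InAngles{\my,\mp_i} \le B_i\}$. By definition of $v_i$, this says $u_i(\mx) = v_i(\mp_i, B_i)$. Affordability gives $\InAngles{\mx, \mp_i} \le B_i$, so $\mp_i$ is feasible for the problem in \eqref{cond::Lindahl-dual}. For any competitor $\mq_i \ge \boldsymbol{0}$ with $\InAngles{\mx, \mq_i} \le B_i$, \Cref{theorem:duality-utility-indirec-utility} applied to $u_i$ at the point $\mx$ yields $u_i(\mx) = \inf\{v_i(\mp, B_i) : \InAngles{\mx, \mp} \le B_i\} \le v_i(\mq_i, B_i)$. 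Combining, $v_i(\mp_i, B_i) = u_i(\mx) \le v_i(\mq_i, B_i)$, so $\mp_i$ is a minimizer, which is exactly \eqref{cond::Lindahl-dual}. The reverse direction is symmetric in spirit: if $\mp_i$ minimizes $v_i(\cdot, B_i)$ over $\{\mq_i : \InAngles{\mx, \mq_i}\le B_i\}$, then again by \Cref{theorem:duality-utility-indirec-utility} the minimum value equals $u_i(\mx)$, i.e.\ $v_i(\mp_i, B_i) = u_i(\mx)$; since $\InAngles{\mx,\mp_i}\le B_i$ we have $\mx$ feasible for the problem defining $v_i(\mp_i,B_i)$, and the equality $v_i(\mp_i,B_i) = u_i(\mx)$ together with $v_i(\mp_i,B_i) = \sup\{u_i(\my) : \InAngles{\my,\mp_i}\le B_i\}$ forces $\mx$ to attain that supremum, i.e.\ \eqref{cond::Lindahl-equilibrium} holds.

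The main obstacle I anticipate is purely bookkeeping around the boundary/degenerate cases rather than the core argument: I need \Cref{theorem:duality-utility-indirec-utility} to apply, which requires $\mx \in \-R^m_{\ge 0}\setminus\{\boldsymbol{0}\}$, and I should double-check that the supremum in the definition of $v_i$ is actually attained (so that ``$u_i(\mx) = v_i(\mp_i,B_i)$'' can be read both as an inequality-achieving statement and as an argmax membership). Attainment follows from continuity of $u_i$ together with compactness of the budget set $\{\my \ge \boldsymbol{0} : \InAngles{\my, \mp_i}\le B_i\}$ when $\mp_i \gg \boldsymbol{0}$; if some coordinate of $\mp_i$ is zero the budget set is unbounded, but non-decreasingness plus local nonsatiation let us restrict to the relevant compact face without changing the supremum, or equivalently one passes to a limit. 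I would state this attainment fact as a one-line observation (or cite it from the preliminaries) and then the equivalence chain above closes cleanly. Finally, I note that the same argument works coordinatewise over agents, so the ``pair $(\mx, \{\mp_i\})$'' formulation reduces to the per-agent statement I have proved.
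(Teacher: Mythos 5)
Your proof is correct and follows essentially the same route as the paper's: both directions hinge on the identity $u_i(\mx) = v_i(\mp_i, B_i)$ combined with \Cref{theorem:duality-utility-indirec-utility}, and your reverse direction is verbatim the paper's argument. The only cosmetic difference is that in the forward direction you invoke the duality theorem where the paper simply notes that $\mx$ is feasible for the supremum defining $v_i(\mq_i, B_i)$ (the easy half of that theorem), and your worries about attainment are unnecessary since $\mx$ itself witnesses the supremum once $u_i(\mx) = v_i(\mp_i,B_i)$ and $\InAngles{\mx,\mp_i}\le B_i$.
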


Let us first see how this equivalence implies \Cref{thm::main-duality}. Recall \Cref{def::Fisher-market} of a Fisher market equilibrium in the dual market $\+F = \{N, M, \{\dual{u}_i\}, \{B_i\}\}$ and consider $\{\dual{\mx}_i := \mp_i\}$ and $\dual{\mp} := \mx$. We note $\{\mx, \{\mp_i\}\}$ satisfies \Cref{def:Lindahl-equivalent} $(i)$ and $(iii)$ if and only if $\{\{\dual{\mx}_i\}, \dual{\mp}\}$ satisfies \Cref{def::Fisher-market} $(i)$ and $(iii)$, since we exchange the role of allocations and prices in two markets. It then remains to show the equivalence of condition $(ii)$ in the two definitions. Recall the definition of a dual utility $\dual{u}_i(\cdot) = \frac{1}{v(\cdot,B_i)}$. Thus the new condition $(ii)$ in \Cref{def:Lindahl-equivalent} says ``$\mp_i$ maximizes the dual utility $\dual{u}(\cdot)$ under price $\mx$ and budget $B_i$ for every agent $i$''. Due to the definition $\{\dual{\mx}_i := \mp_i\}$ and $\dual{\mp} := \mx$, we have $\{\mx, \{\mp_i\}\}$ satisfies \Cref{def:Lindahl-equivalent} $(ii)$ if and only if $\{\{\dual{\mx}_i\}, \dual{\mp}\}$ satisfies \Cref{def::Fisher-market} $(ii)$. Combine the above, we conclude $(\mx, \{\mp_i\})$ is Lindahl equilibrium of $\+L= \{N, M, \{u_i\}, \{B_i\}\}$ if and only if $\{\{\dual{\mx}_i\}, \dual{\mp}\}$ is a Fisher market equilibrium of $\+F = \{N, M, \{\dual{u}_i\}, \{B_i\}\}$. This completes the proof.
\end{proof}

\paragraph{Shmyrev program for Lindahl Equilibrium with CES utilities} We remark that a generalized Shmyrev program for computing the Lindahl equilibrium with CES utilities follows directly from \Cref{thm::main-duality}. As established by \Cref{ex:CES and indirect utility} and \Cref{thm::main-duality}, the Lindahl equilibrium $(\mathbf{x}, \{\mathbf{p}_i\}_{i \in N})$ for agents with the CES utilities: $$u_i(\mx) = \begin{cases} \left( \sum_j a_{ij} x_{j}^{\rho_i} \right)^{1 / \rho_i} & \rho_i \in (- \infty, 1]\\ \min \left\{\frac{x_{ij}}{a_{ij}}\right\} & \rho_i = -\infty \end{cases}.$$ corresponds to a Fisher market equilibrium that uses dual CES utilities. 

Consequently, based on the Shmyrev program for Fisher markets with CES utilities~\cite{shmyrev2009algorithm, cheung2018dynamics}, the Lindahl equilibrium can be characterized by the following generalized Shmyrev program. In this program,  $b_{ij}$ denotes agent $i$'s spending on good $j$ with the constraints that $x_j = \sum_i b_{ij}$ (total spending on good $j$) and $\sum_j b_{ij} = B_i$ (agent $i$'s total budget):
\begin{align*}- \sum_{i: \rho_i \neq \{0, -\infty\}} &\frac{1}{\rho_i} \sum_j b_{ij} \log \frac{ b_{ij}}{a_{ij} x^{\rho_i}_j} - \sum_{i: \rho_i = -\infty} \sum_j b_{ij} \log \frac{a_{ij}}{ x_j} + \sum_{i : \rho_i = 0} \sum_j b_{ij} \log x_j, \\
&\text{ where } x_j = \sum_i b_{ij} \text{ and } \sum_j b_{ij} = B_i. \tag{Shmyrev-Lindahl-CES} \label{Shmyrev-Lindahl-CES}
\end{align*}
This program exhibits key properties based on the agents' utility parameters: (i) concave case ($\rho_i \geq 0$): When $\rho_i \geq 0$ for all agents, if we set $b_{ij} = \frac{a_{ij}}{\sum_{j'} a_{ij'}} B_i$ for any Cobb-Douglas agents, whose $\rho_i = 0$, the program becomes a concave function of the remaining agents' spending. The equilibrium points correspond to the maximum of this concave program. (ii) convex case ($\rho_i \leq 0$): Similarly, when $\rho_i \leq 0$ for all agents (and $b_{ij}$ is set as above for any Cobb-Douglas agents), the program becomes a convex function of the remaining agents' spending. The solution of this convex program corresponds to a Lindahl equilibrium allocation.

Additionally, we observe that \ref{Shmyrev-Lindahl-CES} simplifies to known results in two key scenarios with linear and Leontief utilities. First, for the linear case (where $\rho_i = 1$ for all $i$), the equation reduces to: $$\sum_i \sum_j \log b_{ij} \log \frac{b_{ij}}{a_{ij} x_{j}}.$$ This simplified form matches the potential function for the linear case given in \eqref{Shmyrev-Lindahl-Linear} \cite{kroer2025computing}. Second, for the Leontief case,  \ref{Shmyrev-Lindahl-CES} becomes:$$ - \sum_i \sum_j b_{ij} \log \frac{a_{ij}}{ x_{j}}.$$ Interestingly, we find that the potential function used in \citep{brandt2023balanced} to demonstrate the convergence of their best-response dynamic is precisely the above equation.

Furthermore, the proportional response dynamics correspond to the mirror descent procedure applied to the \eqref{Shmyrev-Lindahl-CES} program and have fast convergence rates. We elaborate on this interpretation and present the convergence results in \Cref{sec:prd:ces}.

\section{Application I: Homogeneous Fisher Market \& Lindahl Equilibrium} \label{sec:homogeneous:nsw}
We first apply our framework to the important class of  $1$-homogeneous utility functions. A utility function $u(\mx)$ is $1$-homogeneous if, for any vector of goods $\mx \geq 0$ and any scalar $\lambda \geq 0$, it satisfies the condition:$ u(\lambda \mx) = \lambda u(\mx)$. Notably, it includes the Constant Elasticity of Substitution (CES) utility function, which itself covers linear utilities and Leontief utilities as special cases.

\subsection{Duality of Homogeneous Utilities} A key property of our framework is that it preserves homogeneity, as formalized in the following lemma. We defer the proof to \Cref{sec:proof duality-homogeneous}.
\begin{lemma} \label{lem::duality-homogeneous}
If a utility function u is continuous, monotone, concave, and $1$-homogeneous, then its dual utility function, $\dual{u}$, is also continuous, monotone, concave, and $1$-homogeneous.
\end{lemma}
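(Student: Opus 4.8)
The plan is to first obtain a clean variational formula for the dual utility $\dual u$ and then read off all four properties from it. Fix a budget $B>0$ and recall $\dual u(\mx)=1/v(\mx,B)$, where $v$ is the indirect utility of $u$ (\Cref{definition:indirect utility}). The crucial observation is that $1$-homogeneity of $u$ turns the budget-constrained supremum defining $v$ into a scale-free supremum of ratios: any feasible bundle $\my$ with $\InAngles{\mx,\my}\in(0,B]$ can be scaled up to the budget frontier, multiplying $u(\my)$ by $B/\InAngles{\mx,\my}\ge 1$, while the ratio $u(\my)/\InAngles{\mx,\my}$ is invariant under scaling $\my$. Hence $v(\mx,B)=B\sup_{\my\ne\boldsymbol 0}\{u(\my)/\InAngles{\mx,\my}\}$ (the ratio being read as $+\infty$ when $\InAngles{\mx,\my}=0<u(\my)$), and therefore
\begin{align*}
    \dual u(\mx)=\frac{1}{v(\mx,B)}=\frac{1}{B}\inf_{\my\in\-R^m_{\ge 0}:\,u(\my)>0}\frac{\InAngles{\mx,\my}}{u(\my)},
\end{align*}
with the conventions $1/0=+\infty$ and $1/\infty=0$. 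I would check separately that this identity persists when $\mx$ lies on the boundary of $\-R^m_{\ge 0}$, via the same manipulation together with these conventions.

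Given the formula, concavity and $1$-homogeneity are essentially free: for each fixed $\my$ the map $\mx\mapsto\InAngles{\mx,\my}/(B\,u(\my))$ is linear, so $\dual u$ is a pointwise infimum of linear functions, hence concave on $\-R^m_{\ge 0}$; it is finite there (bounded above by $\InAngles{\mx,\my_0}/(B\,u(\my_0))$ for any single $\my_0$ with $u(\my_0)>0$) and nonnegative, and it is visibly positively homogeneous of degree $1$. For monotonicity: non-decreasingness is immediate from the formula since $\mx\ge\mx'$ gives $\InAngles{\mx,\my}\ge\InAngles{\mx',\my}$ for all $\my\ge\boldsymbol 0$. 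For the strict statement, when $\mx\gg\mx'$ I would write $\mx=\mx'+\mz$ with $c=\min_j z_j>0$ and bound, for every admissible $\my$, $\InAngles{\mx,\my}/(B\,u(\my))\ge\InAngles{\mx',\my}/(B\,u(\my))+c\,\InAngles{\boldsymbol 1,\my}/(B\,u(\my))\ge\dual u(\mx')+c/(B\,v(\boldsymbol 1,B))$, using the definition of the infimum and the bound $u(\my)/\InAngles{\boldsymbol 1,\my}\le v(\boldsymbol 1,B)/B$ obtained from the formula at $\mx=\boldsymbol 1$ (here $v(\boldsymbol 1,B)\in(0,\infty)$ because $u$ is continuous on the compact budget simplex and not identically zero), and then take the infimum over $\my$.

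The one step requiring genuine care, and the one I expect to be the main obstacle, is continuity up to the boundary of $\-R^m_{\ge 0}$: a finite concave function is automatically continuous on the relative interior but can jump at boundary points, so the variational formula alone does not settle it. I would close this using two facts already available: first, $\dual u$ is an infimum of continuous (linear) functions, hence upper semicontinuous on all of $\-R^m_{\ge 0}$; second, for any boundary point $\mx^0$ and any sequence $\mx_n\to\mx^0$ in $\-R^m_{\ge 0}$, one can pick $\delta_n\to 0$ with $\mx_n\ge(1-\delta_n)\mx^0$ coordinatewise (automatic for the zero coordinates of $\mx^0$, and from $x_{n,j}/x^0_j\to 1$ for the positive ones), so monotonicity and $1$-homogeneity give $\dual u(\mx_n)\ge(1-\delta_n)\dual u(\mx^0)\to\dual u(\mx^0)$, i.e.\ lower semicontinuity at $\mx^0$. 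Upper plus lower semicontinuity yields continuity, completing the argument. I note that this route uses only that $u$ is continuous, monotone, and $1$-homogeneous; concavity of $u$ is not actually needed to conclude that $\dual u$ is concave --- consistent with the fact that, e.g., the dual of the \emph{convex} $1$-homogeneous utility $\max_j x_j$ is the concave Leontief utility.
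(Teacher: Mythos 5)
Your proof is correct, but it takes a genuinely different route from the paper's. The paper works directly with the indirect utility: it shows $v(\cdot,B)$ is decreasing and $(-1)$-homogeneous from the definition, asserts continuity of $\dual{u}$ as ``clear,'' and obtains concavity indirectly, from quasi-concavity of $\dual{u}$ (inherited from the standard quasi-convexity of $v$ in prices) combined with the theorem that a quasi-concave, non-decreasing, $1$-homogeneous function is concave (\Cref{thm: quasi-concave-2-concave-new}). You instead derive the explicit polar-gauge representation $\dual{u}(\mx)=\frac{1}{B}\inf_{\my:\,u(\my)>0}\InAngles{\mx,\my}/u(\my)$, after which concavity and $1$-homogeneity are immediate (an infimum of linear functions), monotonicity follows with a quantitative additive gap, and --- the most valuable difference --- continuity on the boundary of $\-R^m_{\ge 0}$ is actually proved via upper semicontinuity (infimum of continuous functions) plus lower semicontinuity (from monotonicity and homogeneity), rather than asserted. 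Your observation that concavity of $u$ is never used is also correct and consistent with the linear/Leontief example. What each approach buys: the paper's stays inside the standard indirect-utility toolkit and reuses a cited theorem; yours is more self-contained, yields a reusable closed form for $\dual{u}$ that makes the linear--Leontief correspondence transparent, and fills in the continuity step that the paper glosses over. Two trivial points to tidy: the additive constant in your strict-monotonicity bound should be $c/v(\boldsymbol{1},B)$ rather than $c/(B\,v(\boldsymbol{1},B))$ (immaterial, since any positive constant suffices), and the boundary verification of the variational identity that you defer is indeed routine under the stated conventions.
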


\begin{example}[Nested-CES Utility] \label{example:nested-ces}
Nested-CES utility functions \citep{keller1976nested} generalize the standard CES model by grouping goods into "nests." The intuition is hierarchical: utility is calculated in a bottom-up tree structure. In this tree, each leaf node corresponds to exactly one individual good. At each non-leaf node $\mathcal{I}$, a CES function aggregates the utilities of its children $\mathcal{C}(\mathcal{I})$, which can be individual goods (leaves) or other composite nests: 
$$u^{\mathcal{I}}(\mathbf{x}) = \left(\sum_{\mathcal{I}_c \in \mathcal{C}(\mathcal{I})} a^{\mathcal{I}_c} \cdot \left(u^{\mathcal{I}_c}(\mathbf{x})\right)^{\rho(\mathcal{I})} \right)^{1 / \rho(\mathcal{I})}, $$
where $a^{\mathcal{I}_c}$ are preference parameters and $\rho(\mathcal{I}) \in (-\infty, 1]$ is the substitution parameter for that nest.

For instance, $ u(\mathbf{x}) = \left(6\cdot\left[\left(3\cdot x_1^{0.2} + x_2^{0.2}\right)^{1/0.2}\right]^{0.7} + 8\cdot x_3^{0.7} \right)^{1/ 0.7}. $ represents a two-level tree. First, $x_1$ and $x_2$ are aggregated into a composite good, which is then aggregated with $x_3$.
\paragraph{Dual utility function of nested-CES utility}
For nested-CES utilities, a key property is that the dual utility function, $\dual{u}$, retains this nested-CES structure.  The dual function for a  non-leaf node $\mathcal{I}$ is:
$$ \dual{u}^{\mathcal{I}}(\mathbf{x}) = \left(\sum_{\mathcal{I}_c \in \mathcal{C}(\mathcal{I})} \left(a^{\mathcal{I}_c}\right)^{1 - \dual{\rho}(\mathcal{I})} \cdot \left(u^{\mathcal{I}_c}(\mathbf{x})\right)^{\dual{\rho}(\mathcal{I})} \right)^{1 / \dual{\rho}(\mathcal{I})}, $$
where the dual elasticity parameter is $\dual{\rho}(\mathcal{I}) = \frac{\rho(\mathcal{I})}{\rho(\mathcal{I}) - 1}$. The final dual utility $\dual{u}(\mathbf{x})$ is a normalization of the root node's function, $\frac{1}{B_i} \dual{u}^{\text{root}}(\mathbf{x})$.
\end{example}

\subsection{Lindahl Equilibrium and Nash Social Welfare maximization} It is a cornerstone result that for Fisher markets with $1$-homogeneous utilities, the equilibrium allocation is characterized by the solution to the Eisenberg-Gale convex program, which maximizes the weighted Nash Social Welfare (NSW):$$\max_{\{\mx_i \geq 0\}_i} \prod_{i} u_i(\mx_i)^{B_i}\text{ subject to } \sum_{i} \mx_i \leq \mathbf{1},$$ where $B_i$ is the budget of agent $i$, $\mx_i$ is the allocation vector for agent $i$, and $\mathbf{1}$ is an $m$-dimensional vector of ones representing the total supply of each good.

A parallel NSW formulation exists for the public goods setting, reflecting shared consumption constrained by the total societal budget:

\begin{align*}\max_{\mx \geq 0} \prod_{i} u_i(\mx)^{B_i}\text{ subject to } \|\mx\|_1 \leq \sum_{i} B_i.  \numberthis \label{nsw:lindahl-eq}
\end{align*}

Previous research has demonstrated that the solution to \eqref{nsw:lindahl-eq} coincides with the Lindahl equilibrium under certain restrictive settings:
\begin{enumerate}
    \item It has been shown that for agents with separable, homogeneous utilities of positive degree, i.e.,  CES utility functions with $\rho \in [0, 1]$, the Lindahl equilibrium allocation can be computed by \eqref{nsw:lindahl-eq} \cite{fain2016core}.
    \item For Leontief utility functions, the allocation that maximizes the NSW has also been shown to coincide with the Lindahl equilibrium\cite{brandt2023balanced}.
\end{enumerate}

Our work generalizes this connection by showing that NSW maximization solution coincides with the Lindahl equilibrium for general 1-homogeneous, concave utilities.
\begin{theorem}[Lindahl Equilibrium] \label{thm:intro-NSW-Lindahl}
If the utility function $u_i$ for each agent $i$ is continuous, monotone, concave, and $1$-homogeneous, then the Lindahl equilibrium allocation $\mx$ is characterized by the solution to the NSW maximization problem \eqref{nsw:lindahl-eq}
\end{theorem}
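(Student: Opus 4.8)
The plan is to deduce \Cref{thm:intro-NSW-Lindahl} from the duality theorem \Cref{thm::main-duality} together with the classical Eisenberg--Gale characterization of Fisher market equilibria for $1$-homogeneous concave utilities, and to carry out the bookkeeping of what happens to the EG program under the dualization. First I would invoke \Cref{lem::duality-homogeneous}: the dual utilities $\dual u_i$ are again continuous, monotone, concave, and $1$-homogeneous, so the dual private-goods market $\+F = (N,M,\{\dual u_i\},\{B_i\})$ is a bona fide Fisher market to which the Eisenberg--Gale theorem applies. Hence the Fisher market equilibrium allocations $\{\dual\mx_i\}$ of $\+F$ are exactly the maximizers of $\prod_i \dual u_i(\dual\mx_i)^{B_i}$ subject to $\sum_i \dual\mx_i \le \mathbf 1$, and the equilibrium price vector $\dual\mp$ is the vector of optimal Lagrange multipliers on the supply constraints (equivalently, arises from the KKT conditions / dual of the EG program). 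By \Cref{thm::main-duality}, a pair $(\mx,\{\mp_i\})$ is a Lindahl equilibrium of $\+L$ iff $(\{\mp_i\},\mx)$ is a Fisher market equilibrium of $\+F$ — that is, $\dual\mx_i = \mp_i$ and $\dual\mp = \mx$.

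The crux is then to show that the \emph{price side} of the EG characterization for $\+F$ translates into the \emph{allocation side} \eqref{nsw:lindahl-eq} for $\+L$. Concretely, I would write the KKT/duality conditions for the EG program of $\+F$: at an optimum there is a price vector $\dual\mp \ge 0$ with $\langle \dual\mp, \mathbf 1\rangle = \sum_i B_i$ (by Walras' law / homogeneity, the total spending equals the total budget), $\sum_i \dual x_{ij} = 1$ whenever $\dual p_j > 0$, and each $\dual\mx_i$ is $\dual u_i$-maximizing under price $\dual\mp$ and budget $B_i$. Setting $\mx = \dual\mp$, the normalization $\langle \dual\mp,\mathbf 1\rangle = \sum_i B_i$ becomes exactly $\|\mx\|_1 = \sum_i B_i$, the budget constraint of \eqref{nsw:lindahl-eq}. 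The remaining task is to identify the maximizer of \eqref{nsw:lindahl-eq} with this $\mx = \dual\mp$: I expect this to follow from the fact that, for $1$-homogeneous utilities, the indirect utility factors as $v_i(\mp,B_i) = B_i \cdot g_i(\mp)$ for a concave $1$-homogeneous "price index" $g_i$ (this is what \Cref{ex:CES and indirect utility} and \Cref{example:nested-ces} illustrate and what \Cref{lem::duality-homogeneous} encodes via $\dual u_i(\mx) = \tfrac{1}{B_i g_i(\mx)}$), so that the Lindahl utility-maximization condition $\mx \in \argmax_{\langle \mp_i,\mx\rangle \le B_i} u_i(\mx)$ together with $\sum_i p_{ij} \le 1$ precisely encodes a first-order/stationarity condition for $\prod_i u_i(\mx)^{B_i}$ under $\|\mx\|_1 \le \sum_i B_i$. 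An alternative, cleaner route that avoids re-deriving KKT conditions: directly take the convex-programming dual of the dual-of-the-dual, i.e., observe that \eqref{nsw:lindahl-eq} is itself the convex program whose optimality conditions (via concavity and Slater's condition, using local nonsatiation so the constraint binds) are exactly \Cref{def::Lindahl-equilibrium}(i)–(iii) after substituting Roy's identity / the envelope relation between $u_i$ and $v_i$; one then checks the substitution of variables matches the EG program of $\+F$ term by term.

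I would organize the write-up as: (1) apply \Cref{lem::duality-homogeneous} to legitimize $\+F$; (2) quote the Eisenberg--Gale theorem for $\+F$, stating it in the KKT form that exposes both the NSW-maximizing allocations $\{\dual\mx_i\}$ and the equilibrium prices $\dual\mp$; (3) apply \Cref{thm::main-duality} to convert "$(\{\mp_i\},\mx)$ is an equilibrium of $\+F$" into "$(\mx,\{\mp_i\})$ is a Lindahl equilibrium of $\+L$"; (4) show that under the variable exchange $\dual\mp \leftrightarrow \mx$, the Fisher-side constraint $\sum_i \dual x_{ij} \le 1$ (with equality when $\dual p_j>0$) is the Lindahl profit-maximizing condition $\sum_i p_{ij}\le 1$, and the Fisher-side feasibility $\langle\dual\mp,\dual\mx_i\rangle \le B_i$ becomes Lindahl affordability, so that the EG-optimal $\{\dual\mx_i\}$ is simultaneously the \eqref{nsw:lindahl-eq}-optimal $\mx$ — here I must be careful that maximizing $\prod_i \dual u_i(\dual\mx_i)^{B_i}$ over $\sum_i\dual\mx_i\le\mathbf1$ is the \emph{same} optimization as \eqref{nsw:lindahl-eq} only after the role-swap identifies the shared Lindahl allocation $\mx$ with the Fisher price vector, which is the genuinely nontrivial conceptual point; (5) conclude equivalence in both directions. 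The main obstacle I anticipate is step (4): making rigorous that the EG program of the dual Fisher market, under the exchange of allocations and prices, becomes precisely the NSW program \eqref{nsw:lindahl-eq} for the original public-goods market — in particular handling the case of possibly non-strictly-monotone or non-differentiable utilities (where one works with superdifferentials and the "whenever $x_j>0$" clauses rather than clean gradients), and ensuring the normalization constant $\sum_i B_i$ drops out correctly from the $1/B_i$ factors in the dual utilities. A secondary technical point is verifying a constraint qualification (Slater) so that strong duality and the KKT characterization are valid; this uses that budgets are strictly positive and utilities are locally nonsatiated, mirroring the standard Eisenberg--Gale argument.
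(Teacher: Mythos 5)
Your proposal is correct and follows essentially the same route as the paper: invoke \Cref{lem::duality-homogeneous} to get a $1$-homogeneous dual Fisher market, apply the Eisenberg--Gale characterization there, and then pass to the Lagrangian dual of that EG program (your ``cleaner route''), whose price variable $\dual\mp$ becomes the Lindahl allocation $\mx$ and whose objective $\sum_i B_i \log \dual v_i(\mp,B_i)$ collapses to $-\sum_i B_i\log u_i(\mx)$ via $u_i(\mp)=1/\dual v_i(\mp,B_i)$, recovering \eqref{nsw:lindahl-eq} after imposing the normalization $\sum_j p_j=\sum_i B_i$. The paper's own argument is exactly this dual-of-the-EG-program computation, so your primary KKT route is an equivalent but slightly heavier alternative to what you correctly identify as the cleaner path.
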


The proof of Theorem \ref{thm:intro-NSW-Lindahl} relies on the established duality between Lindahl and Fisher markets. A Lindahl equilibrium where agents have $1$-homogeneous utility functions, $u_i$, can be mapped to an equivalent Fisher market equilibrium. In this dual Fisher market, the agents are endowed with the corresponding dual utility functions, $\dual{u}_i$. Per Lemma~\ref{lem::duality-homogeneous}, these dual utility functions are also $1$-homogeneous. With this preservation of homogeneity, the dual Fisher market equilibrium allocation is characterized by the solution to its own weighted NSW maximization problem, and the dual of this Fisher NSW problem is $$\min_{\mp \geq 0} \sum_j p_j + \sum_{i} B_i \log \dual{v}_i(\mp, B_i),$$ where $\dual{v}_i(\mp, B_i)$ is the indirect utility function of $\dual{u}_i$, and the optimal dual variable $\mp$ corresponds to the Fisher market equilibrium price. Additionally, the equilibrium price condition $\sum_j p^*_j = \sum_i B_i$ further simplifies the program to $$\min_{\mp \geq 0}  \sum_{i} B_i \log \dual{v}_i(\mp, B_i) \text{ subject to } \sum_j p_j = \sum_i B_i.$$ The final step is to translate this result back to the original Lindahl context. Using the dual relationship $u_i(\mathbf{p}) = 1 / \dual{v}_i(\mathbf{p}, B_i)$ and reinterpreting the Fisher price vector $\mathbf{p}$ as the public good allocation $\mathbf{x}$, we recover the log-transformed Lindahl NSW maximization problem \eqref{nsw:lindahl-eq}.

\subsection{NSW with General Concave Utilities}

The exact equivalence between equilibrium and the NSW optimum is specific to the $1$-homogeneous case. For general concave utilities, the two concepts diverge, as illustrated by the following example.

\begin{example}[Lindahl equilibrium and NSW divergence]
Consider a Lindahl economy with two agents and two public goods, $x_1$ and $x_2$.  Let both agents have an initial budget of $B_1 = B_2 = 0.5$. Agent 1's utility function is $u_1(\mx) = 2 x_1 + x_2$, and agent 2's utility function is $u_2(\mx) = \log(x_1 + 2 x_2)$.
In this economy, the Lindahl equilibrium allocation is the equal split $\mx^{\+L} = (0.5, 0.5)$.

In contrast, the NSW-optimal allocation, which maximizes the product $u_1(\mx) \cdot u_2(\mx)$, is found to be: $\mx^{\text{NSW}} = \left(2 - \frac{3}{W(3e)}, \frac{3}{W(3e)} - 1\right) \approx (0.14545, 0.85455)$
where $W(\cdot)$ denotes the Lambert W function. This demonstrates that the Lindahl equilibrium and the NSW-optimal allocation diverge when not all agents possess $1$-homogeneous utilities.
\end{example}

Given this divergence, we analyze the approximation quality of the Lindahl equilibrium. For Fisher markets, it is known that any equilibrium achieves an approximation ratio of $(1/e)^{1/e}$ for the optimal NSW \cite{garg2025approximating}. We extend this result to Lindahl equilibria in public goods markets.

\begin{theorem}\label{theorem::non-homogenous}
Given a Lindahl equilibrium instance with concave utility functions such that $u_i(\mathbf{0}) = 0$, consider any Lindahl equilibrium $(\mx, \{\mp_i\})$ and a Nash welfare maximizing allocation $\my$ such that $\langle \my, \mathbf{1} \rangle \leq \sum_i B_i$. Then, the budget-weighted geometric mean of utilities satisfies:
\[
\left(\prod_{i} u_i(\mx)^{B_i}\right)^{\frac{1}{\sum_{i} B_i}} \geq \left(\frac{1}{e}\right)^{\frac{1}{e}} \left(\prod_{i} u_i(\my)^{B_i}\right)^{\frac{1}{\sum_{i} B_i}}.
\]
\end{theorem}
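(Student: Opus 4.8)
The plan is to carry out, in the ``dual direction,'' the argument used for Fisher market equilibria in \cite{garg2025approximating}: the Lindahl profit-maximizing condition $\sum_i p_{ij}\le 1$ plays exactly the role that market clearing plays in the Fisher proof, while the Lindahl utility-maximizing condition, combined with concavity and $u_i(\mathbf 0)=0$, controls each agent's utility under an arbitrary reallocation. It suffices to prove the inequality for \emph{every} feasible $\my$ with $\langle\my,\mathbf 1\rangle\le\sum_i B_i$, the Nash-welfare maximizer being one such allocation. We may also assume $u_i(\mx)>0$ for all $i$, since otherwise the per-agent bound established below forces $u_i(\my)=0$ as well, so both sides of the claimed inequality vanish.

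\emph{Steps 1 and 2 (per-agent and aggregate bounds).} Fix an agent $i$. If $\langle\mp_i,\my\rangle\le B_i$, then $\my$ is affordable at the personalized prices $\mp_i$, so the Lindahl utility-maximizing condition gives $u_i(\my)\le u_i(\mx)$. Otherwise let $t_i:=\langle\mp_i,\my\rangle/B_i>1$; then $\my/t_i$ is affordable, so $u_i(\my/t_i)\le u_i(\mx)$, while concavity together with $u_i(\mathbf 0)=0$ gives $u_i(\my/t_i)=u_i\big(t_i^{-1}\my+(1-t_i^{-1})\mathbf 0\big)\ge t_i^{-1}u_i(\my)$, hence $u_i(\my)\le t_i\,u_i(\mx)$. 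Writing $s_i:=\langle\mp_i,\my\rangle/B_i\ge 0$, in every case
\[
u_i(\my)\ \le\ \max\{1,s_i\}\cdot u_i(\mx).
\]
Summing the cost of $\my$ over agents and invoking the profit-maximizing condition and feasibility of $\my$,
\[
\sum_i \langle\mp_i,\my\rangle \;=\; \sum_{j\in M}\Big(\sum_i p_{ij}\Big)y_j \;\le\; \sum_{j\in M} y_j \;=\; \langle\my,\mathbf 1\rangle \;\le\; \sum_i B_i,
\]
that is, $\sum_i B_i s_i\le\sum_i B_i$.

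\emph{Step 3 (optimization lemma and conclusion).} Multiplying the per-agent bounds gives $\prod_i u_i(\my)^{B_i}\le\big(\prod_i\max\{1,s_i\}^{B_i}\big)\prod_i u_i(\mx)^{B_i}$, so the theorem follows once we show the elementary lemma: if $s_i\ge 0$ and $\sum_i B_i s_i\le\sum_i B_i$, then $\prod_i\max\{1,s_i\}^{B_i}\le\exp\!\big(\tfrac1e\sum_i B_i\big)$. Only indices in $A:=\{i:s_i>1\}$ contribute (the case $A=\varnothing$ being trivial); set $W:=\sum_{i\in A}B_i$ and $T:=\sum_i B_i$. By Jensen's inequality for $\log$ with weights $B_i/W$ on $A$, and then $\sum_{i\in A}B_i s_i\le T$,
\[
\sum_{i\in A}B_i\log s_i \;\le\; W\log\!\Big(\tfrac1W\textstyle\sum_{i\in A}B_i s_i\Big) \;\le\; W\log\tfrac{T}{W} \;=\; T\cdot\big(-\tfrac{W}{T}\log\tfrac{W}{T}\big) \;\le\; \tfrac{T}{e},
\]
using $\max_{r\in[0,1]}(-r\log r)=1/e$. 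Exponentiating proves the lemma; substituting it back gives $\prod_i u_i(\mx)^{B_i}\ge\exp(-T/e)\prod_i u_i(\my)^{B_i}$, which is exactly the claimed inequality after raising both sides to the power $1/T$.

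\emph{Main obstacle.} The argument is short once the Fisher--Lindahl parallel is recognized; the one delicate point is extracting the \emph{tight} constant $(1/e)^{1/e}$ from the optimization lemma (and confirming via \Cref{example::etoe} that it cannot be improved). The trick there is the two-stage relaxation: first discard the under-spending agents ($s_i\le 1$), who can only help, then use concavity of $\log$ to collapse all over-spending agents into a single effective agent carrying total weight $W$ with effective overspend factor $T/W$, reducing the bound to the scalar optimization of $-r\log r$. I remark that routing this through the formal duality of \Cref{thm::main-duality} together with the Fisher result of \cite{garg2025approximating} does \emph{not} work directly: applied to the dual Fisher market it would \emph{upper}-bound the dual Nash welfare, i.e.\ the wrong direction. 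Hence the proof above is best viewed as the ``dual'' of the Fisher proof rather than a black-box consequence of it.
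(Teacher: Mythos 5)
Your proposal is correct and follows essentially the same route as the paper's proof: the per-agent bound $u_i(\my)\le\max\{1,\langle\mp_i,\my\rangle/B_i\}\,u_i(\mx)$ from affordability/concavity/$u_i(\boldsymbol{0})=0$, the aggregate spending bound $\sum_i\langle\mp_i,\my\rangle\le\sum_j y_j\le\sum_i B_i$ via the profit-maximizing condition, and the log-sum/Jensen collapse of the over-spending agents into a single ratio maximized by $\max_{r\in[0,1]}(-r\log r)=1/e$. The only (harmless) additions are your explicit handling of the $u_i(\mx)=0$ edge case and the remark that the bound cannot be obtained as a black-box consequence of the Fisher-side result via \Cref{thm::main-duality}.
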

The proof of the theorem is provided in \Cref{sec:non-homogenous-proof}. Furthermore, we demonstrate that the approximation bound of $(1/e)^{1/e}$ is tight (see \Cref{example::etoe}).

\section{Application II: Market Dynamics} \label{sec::marketdynamics}
Market dynamics has a long history in the general equilibrium theory. In Fisher market, two main dynamic procedures are widely considered: proportional response dynamics and t\^atonnement.

\subsection{Proportional Response Dynamics} \label{sec:prd}

Among various distributed market algorithms, proportional response dynamics (PRD) has garnered significant attention~\citep{WZ2007,LLSB08,Zhang2011,birnbaum2011distributed,cheung2018dynamics,BranzeiMehtaNisan2018NeurIPS,CheungHN19,gao2020first,BranzeiDevanurRabani2021,Branzei2021-Sigecom,CheungLP2021,ZhuCX2023WWW,kolumbus2023asynchronous,li2024proportional,cheng2024tight}, partly due to its simple implementation in networked markets. 
While we have discussed convergence of PRD for CES utilities, a key line of inquiry has focused on the convergence of PRD in Fisher markets where agents have utilities satisfying the gross substitutes and normal goods property. 

To formally define these properties, we first recall the Marshallian demand. Given a utility function $u: \-R_{\geq 0}^m \rightarrow \-R_{\geq 0}$ with budget $B > 0$, the Marshallian demand function under price $\mp \in \-R^m_{\geq 0}$ is $$\demand(\mp, B) := \argmax_{\mx \in \-R_{\geq 0}^m: \InAngles{\mp, \mx} \le B} u(\mx).$$

A utility function satisfies the gross substitutes property if an increase in the price of one good does not decrease the demand for any other good. 
\begin{definition}[Gross Substitutes]
    A utility function $u$ satisfies the Gross Substitutes (GS) property if for any price vectors $\mp$ and $\mp'$ such that $\mp \le \mp'$ and for any good $j$ with $p_j = p'_j$:
    \begin{align*}
        \demandnb_j(\mp, B) \le \demandnb_j(\mp',B).
    \end{align*}
\end{definition}

Additionally, normal good property states that the demand for any good is non-decreasing as the agent's budget increases.
\begin{definition}[Normal Goods]
    A utility function $u$ satisfies the Normal Goods property if for any price vector $\mp$ and any budget $0 < B < B'$:
    \begin{align*}
        \demandnb_j(\mp, B) \le \demandnb_j(\mp,B').
    \end{align*}
\end{definition}

A recent result~\citep{cheung2025proportional} shows that if all agents have utilities satisfying both the gross substitutes and normal goods properties, the PRD converges to the market equilibrium. We define $b_{ij}^t$ as the spending of agent $i$ on good $j$ at round $t$. This spending is the core variable updated by the dynamics. The general form of the PRD update rule in this case is as follows:
\begin{align}b_{ij}^{t+1} = B_i \frac{x^t_{ij} \nabla_j u_i(\mx^t_i)}{\sum_{j'} x^t_{ij'} \nabla_{j'} u_i(\mx^t_i) }  \tag{PRD-Fisher-GS}\label{prd:updating-rule}\end{align}
where $B_i$ is the budget of agent $i$, $x_{ij}^t =b_{ij}^t / p_j^t$ is the allocation of good $j$ at time $t$ (based on spendings $b_{ij}^t$ and prices $p_j^t = \sum_i b_{ij}^t$), and $\nabla_j u_i(\mx^t_i)$ is the marginal utility of good j given the allocation vector $\mx^t_i$. 

While this dynamic is known to converge, the intuition behind \ref{prd:updating-rule} is not immediately apparent, making the underlying agent behavior difficult to interpret.

\paragraph{A Dual Interpretation via Roy's Identity} 
We provide a new interpretation of the updating rule, \ref{prd:updating-rule}, by leveraging our duality framework. This framework establishes a correspondence between a Lindahl equilibrium and a Fisher market equilibrium and also enables the transformation of the Fisher market's PRD to the Lindahl equilibrium.

Consider a primal Lindahl equilibrium with agents having primal utility functions $\{u_i\}$. This maps to a corresponding Fisher market with agents having dual utility functions $\{\dual{u}_i\}$. For convenience, we use $\dual{\mx}$ and $\dual{\mp}$ to denote the Fisher market's allocation and prices, respectively.

Let us apply the PRD update rule \ref{prd:updating-rule} within this dual Fisher market. The rule for updating agent $i$'s spending on good $j$ at iteration $t+1$ is: \begin{align*}b_{ij}^{t+1} = B_i \frac{\dual{x}^t_{ij} \nabla_j \dual{u}_i(\dual{\mx}^t_i)}{\sum_{j'} \dual{x}^t_{ij'} \nabla_{j'} \dual{u}_i(\dual{\mx}^t_i) } = B_i \frac{p^t_{ij} \nabla_{p^t_{ij}} v_i(\mp^t_i, B_i)}{\sum_{j'} p^t_{ij'} \nabla_{p^t_{ij'}} v_i(\mp^t_i, B_i) }. \end{align*}
The second equality holds due to our duality framework, the correspondence between the components of the Fisher market and the Lindahl equilibrium. Specifically, the Fisher allocation $\dual{\mx}^t$ corresponds to the Lindahl price vector $\mathbf{p}^t$. Furthermore, the gradients of the dual utility $\nabla \dual{u}_i$ are directly related to the gradients of the standard indirect utility function, as $\dual{u}_i(\mathbf{p}_i) = \frac{1}{v_i(\mathbf{p}_i, B_i)}$.

The RHS can be simplified further. The indirect utility function $v_i(\mathbf{p}_i, B_i)$ is $0$-homogeneous in prices and budget; that is $v_i(\lambda \mathbf{p}_i, \lambda B_i) = v_i(\mathbf{p}_i, B_i)$ for any $\lambda > 0$. By Euler's theorem for homogeneous functions, the denominator sum is $\sum_{j'} p^t_{ij'} \nabla_{p^t_{ij'}} v_i(\mathbf{p}^t_i, B_i) = -B_i \nabla_{B_i} v_i(\mathbf{p}^t_i, B_i)$.

Substituting this result back into the equation yields:$$b_{ij}^{t+1} = B_i \frac{p^t_{ij} \nabla_{p^t_{ij}} v_i(\mathbf{p}^t_i, B_i)}{-B_i \nabla_{B_i} v_i(\mathbf{p}^t_i, B_i) } = p^t_{ij} \left[ - \frac{\nabla_{p^t_{ij}} v_i(\mathbf{p}^t_i, B_i)}{\nabla_{B_i} v_i(\mathbf{p}^t_i, B_i)} \right].$$ The term in the brackets is precisely Roy's identity, which is equal to the agent's Marshallian demand for good $j$, denoted $\demandnb_{ij}(\mp_i^t,B_i)$. Therefore, the update rule in the dual market, when viewed from the primal perspective, becomes:

$$b_{ij}^{t+1} = p^t_{ij} \cdot \demandnb_{ij}(\mathbf{p}_i^t, B_i)$$
This derivation reveals that the opaque PRD rule \eqref{prd:updating-rule} is, in the dual setting, equivalent to a simple and intuitive expenditure best-response dynamic in the primal Lindahl market.

\paragraph{Total Complements: The Dual of Gross Substitutes and Normal Goods}

Given this duality, our objective is to characterize the conditions on the primal utility functions $\{u_i\}$ that guarantee convergence of this new Lindahl dynamic.

Prior work establishes that convergence of the dual Fisher market dynamic, \eqref{prd:updating-rule}, is guaranteed if the dual utilities $\{\dual{u}_i\}$ satisfy the gross substitutes and normal goods properties.\footnote{\label{fn:regcond}Two technical regularity assumptions on the dual demand $\dualdemand_{i}(\dual{\mathbf{p}}, B_i)$ are also required \citep{cheung2025proportional}: (i) surjectivity (any feasible dual allocation $\dual{\mx}_i \neq 0$ can be generated by some price vector $\dual{\mathbf{p}} > 0$) and (ii) uniqueness (demand is unique for any $\dual{\mathbf{p}} > 0$).} These two properties are jointly captured by the condition presented in \Cref{lem::GS-NG}, whose proof is deferred to \Cref{sec::GS-NG-lemma-proof}.

\begin{lemma}\label{lem::GS-NG}
    If a utility function $u$ satisfies the gross substitutes and normal goods properties, then for any price vectors $\mp > 0$ and $\mp' > 0$, if good $j$ satisfies $\frac{p_j}{p'_j} \geq \max\left\{ \max_k\left\{\frac{p_k}{p'_k}\right\}, 1 \right\}$ then the demand for good $j$ at price $\mp$ will be no greater than its demand at price $\mp'$: $\demandnb_j(\mp, B) \leq \demandnb_j(\mp', B)$.
\end{lemma}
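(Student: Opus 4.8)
The plan is to reduce the two-price comparison to a single application of gross substitutes followed by a single application of normal goods, using the $0$-homogeneity of the Marshallian demand as the bridge. Set $\lambda := \max_k \{p_k/p'_k\}$; by hypothesis the good $j$ attains this maximum and $\lambda \ge 1$. Introduce the auxiliary price vector $\mp'' := \lambda \mp'$, which will serve as a common intermediate point between $\mp$ and $\mp'$. Since the surrounding text assumes the regularity conditions under which demand is single-valued, $\demandnb_j(\cdot,\cdot)$ is well defined as a function throughout.

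First I would check that $\mp \le \mp''$ coordinatewise and that the two vectors agree in coordinate $j$. For every $k$ we have $p''_k = \lambda p'_k \ge (p_k/p'_k)\, p'_k = p_k$ because $\lambda \ge p_k/p'_k$; and $p''_j = \lambda p'_j = p_j$ precisely because $j$ attains the maximum ratio. Hence $\mp \le \mp''$ with equality in coordinate $j$, so the gross substitutes property applies directly and yields $\demandnb_j(\mp, B) \le \demandnb_j(\mp'', B)$.

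Next I would pass from $\mp''$ back to $\mp'$ using homogeneity and normality. Scaling the price vector by $\lambda$ is equivalent to dividing the budget by $\lambda$, so the Marshallian demand satisfies $\demandnb(\mp'', B) = \demandnb(\lambda \mp', B) = \demandnb(\mp', B/\lambda)$. Since $\lambda \ge 1$, we have $0 < B/\lambda \le B$, so the normal goods property gives $\demandnb_j(\mp', B/\lambda) \le \demandnb_j(\mp', B)$. Chaining the three relations gives $\demandnb_j(\mp, B) \le \demandnb_j(\mp'', B) = \demandnb_j(\mp', B/\lambda) \le \demandnb_j(\mp', B)$, which is the claim. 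When $\lambda = 1$ the chain collapses: $\mp'' = \mp'$ and the conclusion follows from gross substitutes alone.

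I do not expect a serious obstacle; the content is genuinely this short once the intermediate vector $\mp''$ is chosen correctly. The only points requiring care are bookkeeping: ensuring that the coordinate in which prices are equated is exactly the one with the largest ratio (so that gross substitutes applies without changing good $j$'s price), and using $\lambda \ge 1$ in the correct direction, namely to guarantee that the budget weakly shrinks — and hence demand weakly decreases by normality — when moving from $\mp''$ to $\mp'$. It is also worth remarking explicitly at the start that $\mp' > 0$ is what makes the ratios and $\lambda$ well defined and positive.
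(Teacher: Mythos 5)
Your proof is correct and follows essentially the same route as the paper's: both use the maximal ratio as a scaling factor, exploit the $0$-homogeneity of Marshallian demand in (prices, budget), and then apply gross substitutes and normal goods. The only cosmetic difference is that you scale $\mp'$ up to $\lambda\mp'$ while the paper scales $(\mp,B)$ down to $(\mp/\alpha, B/\alpha)$; your version decomposes the final step into separate GS and NG applications slightly more explicitly, but the argument is the same.
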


These requirements on the dual utilities (namely, gross substitutes and normal goods, or equivalently, \Cref{lem::GS-NG}) imply a corresponding, ``dual'' structure on the primal utilities $\{u_i\}$, which we define as the \textbf{total complements} property.

The name ``complements'' highlights its inverse relationship to the gross substitutes. Informally, the total complements property relates allocation changes to supporting price changes. It requires that for a good $j$ to experience \emph{the largest proportional decrease} in demand (comparing a new allocation $\mathbf{x}'$ to an old one $\mathbf{x}$), its new supporting price $p'_j$ must have \emph{increased} relative to the old price $p_j$ (or at least not decreased).

\begin{definition}[Total Complements]\label{def:tc}
A utility function $u$ satisfies the \textbf{total complements } property if $\demand(\mathbf{p}, B_i) > 0$ for any price vector $\mathbf{p}$, and for any allocation $\mx = \demand(\mathbf{p}, B_i)$ and $\mx' = \demand(\mathbf{p}', B_i)$, if good $j$ satisfies:
\[ \frac{x_j}{x'_j} \geq \max\left\{\max_k \left\{\frac{x_k}{x'_k}\right\}, 1\right\}, \]
then it must be that $p_j \leq p_j'$.
\end{definition}

\begin{example}
CES utility functions, $u_i(\mx) = \left(\sum_j a_{ij} x_j^{\rho_i}\right)^{1 / \rho_i}$, satisfy the total complements property if the elasticity parameter $\rho_i \in (-\infty,0]$.
\end{example}
This definition implies that for any given demand allocation $\mathbf{x}$, the supporting price vector $\mathbf{p}$ must be unique, mirroring the regularity conditions required in the dual market.\footnote{See footnote~\ref{fn:regcond}.} This technically excludes Leontief utilities, as a Leontief allocation can be supported by a range of prices. However, our convergence results extend to Leontief utilities, as they are dual to linear utilities.

\paragraph{PRD in Lindahl Equilibrium with Total Complements} We now state PRD for the Lindahl equilibrium where agents have total complements utilities. 

The dynamic updates agent $i$'s spending contribution to good $j$ as:$$b_{ij}^{t+1} = p_{ij}^t \cdot \demandnb_{ij}(\mathbf{p}_i^t, B_i)$$where $\demandnb_{ij}(\mathbf{p}_i^t, B_i)$ is agent $i$'s Marshallian demand given their personalized prices $\mathbf{p}_i^t$ and total budget $B_i$. This update rule has a clear economic interpretation as an expenditure best-response. At each iteration $t$, agent $i$ observes their personalized prices $\mathbf{p}_i^t$ and sets their next-period budget $b_{ij}^{t+1}$ to their current optimal expenditure. 

The personalized Lindahl price $p_{ij}^t$ is determined by the agents' contributions. It represents agent $i$'s proportional share of the total spending on good $j$, $x_j^t = \sum_k b_{kj}^t$: $$p_{ij}^t = \frac{b_{ij}^t}{ x_j^t} = \frac{b_{ij}^t}{\sum_k b_{kj}^t}.$$

The following theorem establishes the convergence of this dynamic. The proof is provided in \Cref{sec::proof:theorem:PRD:lindahl:tc}.
\begin{theorem}[PRD for Lindahl Equilibrium with Total Complements]\label{theorem:PRD:lindahl:tc}If the utility function for each agent is strictly increasing, strictly concave, and satisfies the total complements condition, then the PRD converges to the Lindahl equilibrium. Moreover, the empirical convergence rate of $\mx^t$ to the optimal solution is $O(1/T)$. \end{theorem}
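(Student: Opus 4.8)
The plan is to establish convergence by exhibiting the PRD as mirror descent (equivalently, a proximal/Bregman gradient step) on the Shmyrev-type potential \eqref{Shmyrev-Lindahl-CES}, and then invoking convexity together with strict monotonicity/concavity to pin down the limit. Concretely, I would first pass through the duality framework: by \Cref{thm::main-duality}, the Lindahl equilibrium of the primal market with total-complements utilities $\{u_i\}$ corresponds to a Fisher market equilibrium in the dual market with dual utilities $\{\dual u_i\}$, and by \Cref{lem::GS-NG} together with \Cref{def:tc} these dual utilities satisfy the gross substitutes and normal goods properties (plus the regularity conditions of footnote~\ref{fn:regcond}, which are exactly the uniqueness of the supporting price in \Cref{def:tc}). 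Moreover, via the Roy's-identity computation in \Cref{sec:prd}, the primal update $b_{ij}^{t+1} = p_{ij}^t \cdot \demandnb_{ij}(\mathbf p_i^t, B_i)$ is \emph{identically} the update rule \eqref{prd:updating-rule} applied in the dual Fisher market. Hence the Lindahl PRD trajectory $\{b^t\}$ coincides with the dual Fisher PRD trajectory, and it suffices to run the Fisher-side analysis of \citep{cheung2025proportional} on the dual market.

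The second step is to make the ``mirror descent'' identification precise on the dual side. Following \citep{birnbaum2011distributed, cheung2018dynamics, cheung2025proportional}, one shows that \eqref{prd:updating-rule} in the dual Fisher market is a mirror-descent step, with the (generalized) KL/entropy mirror map over the simplex-product $\{b : \sum_j b_{ij} = B_i\}$, on the convex objective obtained from \eqref{Shmyrev-Lindahl-CES} specialized to the dual utility parameters. Under strict concavity and strict monotonicity of each $u_i$, the dual utilities are likewise well-behaved, the Shmyrev objective is convex (the case $\dual\rho_i \ge 0$, corresponding to $\rho_i \le 0$, which is exactly total complements), the feasible set is a product of scaled simplices, and the gradient of the objective is Lipschitz in the appropriate local norm on the relevant region — so the standard $O(1/T)$ mirror-descent guarantee applies to the averaged (empirical) iterate. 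Translating back through duality, $\bx^t$ (with $x_j^t = \sum_i b_{ij}^t$) is the dual price vector, which converges at rate $O(1/T)$ to the Fisher equilibrium price, i.e.\ to the Lindahl equilibrium allocation.

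The final step is uniqueness/identification of the limit: strict concavity plus the total-complements (unique supporting price) condition guarantee that the Lindahl equilibrium allocation is unique, so the whole averaged sequence converges to it rather than merely to the set of optima; this also justifies calling it ``the'' Lindahl equilibrium in the statement. I would also separately note the boundary cases — Leontief primal utilities (dual to linear), which are excluded by the uniqueness clause of \Cref{def:tc} but for which convergence follows directly from the linear Fisher PRD result \citep{WZ2007, birnbaum2011distributed} through duality.

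The main obstacle I anticipate is \emph{not} the convexity or the duality bookkeeping but the Lipschitz/smoothness control needed for the $O(1/T)$ rate: the dual Shmyrev objective has logarithmic terms that blow up near the boundary of the feasible region, so one must argue (as in \citep{cheung2025proportional}) that the PRD iterates stay in a region bounded away from the bad part of the boundary — typically by showing the objective is monotone along the trajectory and using the gross-substitutes/normal-goods structure of the dual utilities to get a uniform lower bound on the relevant $b_{ij}^t$ (or on the induced allocations). Handling the Cobb–Douglas agents ($\rho_i = 0$), whose spending is fixed in closed form and who must be ``projected out'' before the objective becomes convex, is a second, more routine technical point that I would dispatch by the reduction indicated after \eqref{Shmyrev-Lindahl-CES}.
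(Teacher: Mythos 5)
Your proposal takes a genuinely different route from the paper, and it has two gaps that I believe are fatal as written. The paper's proof (\Cref{sec::proof:theorem:PRD:lindahl:tc}) never passes through the dual Fisher market or any convex program: it runs a direct Lyapunov argument on the KL divergence $\sum_{ij} b_{ij}^*\log(b_{ij}^*/b_{ij}^{t})$, decomposes the one-step change using $b_{ij}^{t+1}=p_{ij}^t\,\demandnb_{ij}(\mp_i^t,B_i)$, and reduces everything to the key inequality \eqref{eq::tc::tech::1}, which is proved from \Cref{def:tc} by an explicit monotone ``adjustment procedure'' (repeatedly rescaling the goods attaining the extremal ratio $x_j^*/x_j'$ and using total complements to control the supporting prices $\mq_i(\cdot)$). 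The $O(1/T)$ empirical rate then falls out of telescoping the potential, since the per-step decrease is at least $\sum_j x_j^*\log(x_j^*/x_j^t)\ge 0$.

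The first gap is in your reduction: invoking \citep{cheung2025proportional} on the dual market requires the dual utilities $\dual{u}_i$ to satisfy gross substitutes \emph{and} normal goods (plus the regularity conditions), but \Cref{def:tc} is constructed as the dual of the \emph{conclusion} of \Cref{lem::GS-NG}, which is a strictly weaker combined condition; neither the paper nor your proposal establishes that a primal total-complements utility has a dual satisfying GS and NG separately, so the hypotheses of the Fisher-side theorem are not verified. The second and more serious gap concerns the rate: even granting the reduction, \citep{cheung2025proportional} yields convergence but no rate for general GS utilities --- the paper explicitly leaves the rate for \Cref{thm:prd-lindahl-gs} open --- and your fallback, mirror descent on \eqref{Shmyrev-Lindahl-CES}, only exists for CES utilities. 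For a general strictly concave total-complements utility there is no Shmyrev-type convex program to descend on, so your argument cannot deliver the $O(1/T)$ empirical rate claimed in the theorem. The boundary-smoothness and Cobb--Douglas issues you flag are real for the CES special case but beside the point here; the missing ingredient is a rate argument valid for general total complements utilities, which is precisely what the paper's adjustment-procedure inequality \eqref{eq::tc::tech::1} supplies.
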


Additionally, the PRD dynamic also converges for Leontief utilities, as it is equivalent to the PRD for a Fisher market with linear utilities.

\paragraph{PRD for Lindahl Equilibrium with Gross Substitutes} The total complements framework provides a convergent, interpretable dynamic, but it excludes the important class of utilities satisfying the standard gross substitutes property (e.g., CES functions with $\rho > 0$).

To provide a convergent dynamic for this class, we adapt the \emph{original} PRD spending update rule, \ref{prd:updating-rule}, from the Fisher market setting. The dynamic for agent $i$'s spending contribution to good $j$ is:\begin{align} \tag{PRD-Lindahl-GS}\label{prd:lindahl-gs}b_{ij}^{t+1} = B_i \frac{x^t_{j} \nabla_j u_i(\mx^t)}{\sum_{j'} x^t_{j'} \nabla_{j'} u_i(\mx^t) }\quad \text{where } p^t_{ij} = \frac{b_{ij}^t}{x_j^t} \text{ and } x_j^t = \sum_i b_{ij}^t.\end{align}
We establish the convergence of this dynamic under standard gross substitutes conditions. The proof is provided in \Cref{sec::proof:prd-lindahl-gs}.

\begin{theorem}[PRD Convergence for Lindahl with Gross Substitutes]\label{thm:prd-lindahl-gs}Suppose all agent utility functions $u_i$ are strictly increasing, strictly concave, and satisfy the gross substitutes and normal goods properties. Then the PRD, \ref{prd:lindahl-gs}, converges to the Lindahl equilibrium.\end{theorem}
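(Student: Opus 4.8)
The plan is to prove convergence through a Lyapunov (potential) function, adapting the analysis of proportional response dynamics in Fisher markets with gross substitutes to the public-goods setting, with the duality framework used both to identify the right potential and to put \ref{prd:lindahl-gs} into a transparent form.

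First I would record the basic structure of the dynamic. Strict monotonicity of each $u_i$ gives $\nabla_j u_i(\mx^t) > 0$ when $\mx^t \gg 0$, so $\langle \mx^t, \nabla u_i(\mx^t)\rangle > 0$ and summing \ref{prd:lindahl-gs} over $j$ yields $\sum_j b^{t+1}_{ij} = B_i$; hence the iterates stay in the compact spending polytope $\Delta = \{ b \ge 0 : \sum_j b_{ij} = B_i\ \forall i \}$, and (monotonicity forces some agent to fund each good) $x^t_j = \sum_i b^t_{ij} > 0$, so the dynamic is well-defined. I would then characterize the fixed points: with $x_j = \sum_i b_{ij}$ and $p_{ij} = b_{ij}/x_j$, a point $b \in \Delta$ is stationary iff $\mp_i = \frac{B_i}{\langle \mx, \nabla u_i(\mx)\rangle}\nabla u_i(\mx)$, which is exactly $\langle \mp_i, \mx\rangle = B_i$ together with the KKT condition for $\mx$ to maximize $u_i$ at $(\mp_i,B_i)$ (strict concavity makes it sufficient). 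The remaining Lindahl condition, profit maximization $\sum_i p_{ij}=1$ for $x_j>0$, does not follow from the fixed-point equation alone and will come out of the Lyapunov analysis. It is also useful to note that the allocation-level dynamics is a multiplicative t\^atonnement, $x^{t+1}_j = x^t_j\cdot\sum_i p_{ij}(\mx^t)$ with $p_{ij}(\mx)\propto\nabla_j u_i(\mx)$ the best-response personalized price: supplies of collectively overvalued goods ($\sum_i p_{ij}>1$) grow, undervalued ones shrink.

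Next I would construct the potential. By \Cref{thm::main-duality} and \Cref{theorem:duality-utility-indirec-utility} the Lindahl market with gross-substitutes, normal-goods utilities $\{u_i\}$ is dual to a Fisher market with the total-complements dual utilities $\{\dual{u}_i\}$, and — via Roy's identity and Euler's theorem for the $0$-homogeneous indirect utility, exactly as in the derivation via Roy's identity in \Cref{sec:prd} — \ref{prd:lindahl-gs} becomes the expenditure best-response dynamic ``spending $=$ price $\times$ Marshallian demand'' in that dual Fisher market (with $\dual{p}_j = x_j$, $\dual{x}_{ij} = p_{ij}$). Guided by this, I would take a Bregman/KL-type potential $\Phi$ on $\Delta$ — morally $\sum_{i,j} b_{ij}\log b_{ij}$ corrected by terms built from the (indirect) utilities — mirroring the Lyapunov function used in \citep{cheung2025proportional} for \ref{prd:updating-rule}, and prove the one-step decrease $\Phi(b^{t+1}) \le \Phi(b^t)$, with equality only at fixed points. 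This is where the gross-substitutes and normal-goods hypotheses enter through \Cref{lem::GS-NG}: they bound how the shared allocation $x^t_j=\sum_i b^t_{ij}$ (the ``price'' on the dual side) can move under the update, which is exactly what controls the change in $\Phi$, while strict concavity makes the inequality strict off the fixed-point set. I expect this potential-decrease estimate to be the main obstacle: in the public-goods setting the object playing the role of a price is the aggregate spending $\sum_i b_{ij}$ rather than the per-agent quantity $b_{ij}/p_j$, so the estimates of \citep{cheung2025proportional} cannot be quoted but must be re-derived with the shared allocation, which is precisely what \Cref{lem::GS-NG} is designed to enable.

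Finally I would assemble the convergence argument. Since $\Phi$ is bounded below on the compact set $\Delta$ and non-increasing, $\Phi(b^t)$ converges and $\Phi(b^t)-\Phi(b^{t+1})\to 0$; by the equality case every limit point of $\{b^t\}$ is a fixed point, hence induces a Lindahl equilibrium by the first step. To pass from ``all limit points are equilibria'' to convergence of the whole sequence, I would use uniqueness of the Lindahl equilibrium allocation (which follows from strict concavity and gross substitutes, equivalently from uniqueness of the Fisher equilibrium price in the dual market) together with the standard fact that a bounded sequence with a single limit point converges; convergence of $\mx^t=(\sum_i b^t_{ij})_j$ and of the personalized prices $\mp^t_i=(b^t_{ij}/x^t_j)_j$ then follows. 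I do not expect this route to deliver an explicit rate — consistent with the statement, which claims only convergence, in contrast to the $O(1/T)$ guarantees obtained in \Cref{theorem:PRD:lindahl:tc} and the CES cases — since the general gross-substitutes hypothesis does not supply the quantitative strong-convexity-type strengthening that the CES/Shmyrev analysis uses.
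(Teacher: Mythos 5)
Your high-level skeleton (a KL-type Lyapunov function whose one-step decrease is controlled by the gross-substitutes and normal-goods properties, followed by a limit-point argument) matches the paper's, but the two steps you leave as ``expected to work'' are where the actual content lies, and as described they would not go through. First, the potential. The paper does not use an objective-style potential (``$\sum_{i,j}b_{ij}\log b_{ij}$ corrected by indirect-utility terms''); no Shmyrev-type program is available for general gross-substitutes utilities, which is exactly why this case is harder than CES. Instead it fixes a Lindahl equilibrium $(\mx^*,\{\mp_i^*\})$ and uses the KL divergence to the equilibrium \emph{aggregate allocation}, $\sum_j x_j^*\log(x_j^*/x_j^t)$. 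The one-step decrease is obtained by writing $x_j^{t+1}=x_j^t\sum_i q_{ij}(\mx^t)$ with $q_{ij}(\mx)=B_i\nabla_j u_i(\mx)/\sum_{j'}x_{j'}\nabla_{j'}u_i(\mx)$, applying the log-sum inequality to reduce the aggregate term $\sum_j x_j^*\log\frac{\sum_i p_{ij}^*}{\sum_i q_{ij}(\mx^t)}$ to per-agent terms, and then invoking the per-agent demand inequality of \citep{cheung2025proportional} (\Cref{pec:lem:main-technique}) --- not \Cref{lem::GS-NG}, which plays no role in this proof. The log-sum step is precisely how the ``shared allocation'' difficulty you correctly identify is resolved; your proposal does not supply a substitute for it.

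Second, and more seriously, the passage from ``the per-step decrease tends to zero'' to ``every limit point is an equilibrium'' is not the routine compactness argument you describe. It requires the decrease to be uniformly negative near any non-equilibrium limit point, and this fails on the boundary of the simplex: if a subsequence $\mx^{t_k}$ converges to a point $\tilde{\mx}$ with some zero coordinates, the price map $\mq_i(\cdot)$ is not continuous there and the demands can degenerate. The paper spends the bulk of its proof (Case 2) on exactly this situation, introducing prices projected onto $[\epsilon\mp_i^*,\frac{1}{\epsilon}\mp_i^*]$ and using \Cref{pec:cor:main-technique}, \Cref{pec:lem:tech-2}, and \Cref{pec:lem:converge-allocation-1} to rule out boundary limit points other than $\mx^*$; your proposal does not address it. Relatedly, your appeal to uniqueness of the Lindahl equilibrium allocation is both unproven and unnecessary: since the KL divergence to the \emph{fixed} equilibrium $\mx^*$ is monotone, showing that every subsequential limit equals $\mx^*$ already yields convergence of the whole sequence. (A small additional point: $\sum_i p_{ij}^t=1$ holds identically from $p_{ij}^t=b_{ij}^t/\sum_k b_{kj}^t$, so the profit-maximization condition does not need to ``come out of the Lyapunov analysis.'')
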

We note that for CES utilities with $\rho > 0$, this dynamics converges at a rate of at least $O(1/T)$, as it can be interpreted as mirror descent on the generalized Shmyrev program,   \eqref{Shmyrev-Lindahl-CES}. Further details on this are provided in \Cref{sec:prd:ces}.

We leave the convergence rate for the general case as an open question.

\paragraph{PRD for Fisher Market with Total Complements}

To complete our dual analysis, we map the PRD for the Lindahl equilibrium with gross substitutes utilities, \ref{prd:lindahl-gs}, back to its corresponding Fisher market. This procedure yields a PRD for a Fisher market populated by agents with total complements utilities:
\begin{align}\tag{PRD-Fisher-TC} \label{prd:fisher-tc}b_{ij}^{t+1} = p_{j}^t \cdot \demandnb_{ij}(\mathbf{p}^t, B_i) \quad \text{where } p^t_{j} = \sum_i b_{ij}^t.\end{align}

This update rule, \ref{prd:fisher-tc}, coincides with the known PRD for Fisher markets with CES utilities where $\rho_i \in (-\infty, 0]$ \citep{cheung2018dynamics}, which aligns perfectly with our definition of the total complements class.

\begin{theorem} \label{thm:fisher:tc}If the utility function for each agent is strictly increasing, strictly concave, and satisfies the total complements property, then PRD converges to the Fisher market equilibrium.\end{theorem}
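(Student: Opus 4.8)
The plan is to obtain \Cref{thm:fisher:tc} as the dual counterpart of \Cref{thm:prd-lindahl-gs}, transferred through the duality framework of \Cref{thm::main-duality} in exactly the way \ref{prd:fisher-tc} was read off from \ref{prd:lindahl-gs}. Given a Fisher market $\+F = (N, M, \{u_i\}, \{B_i\})$ in which every $u_i$ is strictly increasing, strictly concave and total complements, I would (i) pass to the dual public-goods market $\+L = (N, M, \{\dual{u}_i\}, \{B_i\})$ of \Cref{thm::main-duality}; (ii) verify that each dual utility $\dual{u}_i$ is strictly increasing, satisfies the gross substitutes and normal goods properties and the regularity conditions of \cite{cheung2025proportional}, and meets the remaining hypotheses of \Cref{thm:prd-lindahl-gs}; (iii) invoke \Cref{thm:prd-lindahl-gs} so that \ref{prd:lindahl-gs} run on $\+L$ converges to a Lindahl equilibrium of $\+L$; and (iv) check that this Lindahl dynamic, read through the duality, is exactly \ref{prd:fisher-tc}, and that its limit corresponds, by \Cref{thm::main-duality}, to a Fisher market equilibrium of $\+F$.

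The crux is step (ii). The bridge is an \emph{inverse-demand} identity for the dual utility: since $\dualdemand(\mathbf{s}, B_i) = \argmin_{\InAngles{\mathbf{s}, \mathbf{x}} \le B_i} v_i(\mathbf{x}, B_i)$, the first-order conditions for minimizing the quasi-convex function $v_i(\cdot, B_i)$ over the budget set, together with Roy's identity, force the minimizer $\mathbf{c}$ to satisfy $\demand(\mathbf{c}, B_i) \propto \mathbf{s}$, and the active budget constraint $\InAngles{\mathbf{s}, \mathbf{c}} = B_i$ pins the scale so that $\demand(\mathbf{c}, B_i) = \mathbf{s}$; that is, $\dualdemand(\mathbf{s}, B_i)$ is the unique supporting price of the $u_i$-market at the bundle $\mathbf{s}$. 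Under this identification, the criterion of \Cref{lem::GS-NG} for $\dual{u}_i$ at two prices $\mathbf{s}, \mathbf{s}'$ becomes, term for term, the hypothesis of \Cref{def:tc} for $u_i$ at the bundles $\mathbf{s} = \demand(\mathbf{c}, B_i)$ and $\mathbf{s}' = \demand(\mathbf{c}', B_i)$, where $\mathbf{c} = \dualdemand(\mathbf{s}, B_i)$, $\mathbf{c}' = \dualdemand(\mathbf{s}', B_i)$, and the conclusion $c_j \le c'_j$ is precisely the total-complements conclusion $p_j \le p'_j$; since the criterion of \Cref{lem::GS-NG} is in fact equivalent to gross substitutes plus normal goods (conversely: restricting to price vectors that differ in a single coordinate gives gross substitutes, and to a common rescaling of all prices gives normal goods), this establishes those two properties for $\dual{u}_i$. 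Strict monotonicity of $\dual{u}_i$ follows from $v_i$ being strictly decreasing in prices, the requirement $\demand(\cdot, B_i) > 0$ gives the surjectivity regularity condition, and injectivity of $\demand(\cdot, B_i)$ — hence uniqueness of the dual demand — follows from strict concavity of $u_i$, since two prices supporting the same interior bundle are both proportional to $\nabla u_i$ there and therefore equal after normalization.

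Step (iv) is a short computation. Writing \ref{prd:lindahl-gs} for $\+L$ with the utilities $\dual{u}_i$ and substituting $\dual{u}_i(\cdot) = 1/v_i(\cdot, B_i)$, the common factor $-1/v_i(\mathbf{x}^t, B_i)^2$ cancels between numerator and denominator; Euler's theorem for the $0$-homogeneous $v_i$ turns the denominator into $-B_i \nabla_{B_i} v_i(\mathbf{x}^t, B_i)$; and Roy's identity then collapses the update to $b_{ij}^{t+1} = x_j^t \cdot \demandnb_{ij}(\mathbf{x}^t, B_i)$, with $x_j^t = \sum_i b_{ij}^t$ and $\demandnb_i$ the Marshallian demand of the \emph{original} utility $u_i$ of $\+F$ — this is the ``Dual Interpretation via Roy's Identity'' derivation of \Cref{sec:prd} run in reverse. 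Since \Cref{thm::main-duality} identifies the Lindahl allocation $\mathbf{x}^t$ of $\+L$ with the Fisher price $\mathbf{p}^t = \sum_i \mathbf{b}^t_i$ of $\+F$, this coincides with \ref{prd:fisher-tc}: both dynamics generate the same spending sequence $\{b^t_{ij}\}$ from the same start. By \Cref{thm:prd-lindahl-gs} that sequence converges to a Lindahl equilibrium of $\+L$, and by \Cref{thm::main-duality} the corresponding pair — prices and allocations exchanged — is a Fisher market equilibrium of $\+F$, proving the claim.

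I expect step (ii) to be the main obstacle: justifying the inverse-demand identity under only the available regularity (differentiability of $u_i$ and $v_i$, uniqueness and interiority of demand), and confirming that $\dual{u}_i$ satisfies exactly the hypotheses actually used in the proof of \Cref{thm:prd-lindahl-gs}. In particular, dualization preserves concavity only in the homogeneous case of \Cref{lem::duality-homogeneous}, so one should check that the argument only needs strict quasi-concavity / unique demand of $\dual{u}_i$ — which does transfer from strict concavity of $u_i$ — rather than strict concavity itself.
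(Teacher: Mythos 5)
Your proposal is essentially the paper's intended argument: the paper literally omits the proof, stating only that it ``is identical to that of \Cref{thm:prd-lindahl-gs} with price and allocation exchanged,'' and your duality-based reduction --- pass to the dual Lindahl market, check that the dual utilities satisfy gross substitutes and normal goods via the inverse-demand identity $\dualdemand(\cdot, B_i) = \demand(\cdot,B_i)^{-1}$ (which is exactly \Cref{lemma:equivalence between fisher (ii)}, so you do not need the first-order-condition derivation), invoke \Cref{thm:prd-lindahl-gs}, and translate the dynamic back through Roy's identity --- is the natural formalization of that one line. Your verification that the \Cref{lem::GS-NG} criterion is equivalent to GS plus normal goods, and that under the inverse-demand identification it becomes term-for-term the total complements condition, is correct and is more explicit than anything in the paper.

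The one substantive caveat is the point you flag at the end, and it deserves emphasis because as literally written your reduction does not close: \Cref{thm:prd-lindahl-gs} (and the appendix lemmas from \cite{cheung2025proportional} it relies on) assumes the Lindahl-market utilities are \emph{strictly concave}, and in your reduction those utilities are the duals $\dual{u}_i$, whose concavity is not guaranteed outside the homogeneous case of \Cref{lem::duality-homogeneous}. So you cannot simply cite \Cref{thm:prd-lindahl-gs} for the market $\+L$; you must either re-examine its proof to confirm that only uniqueness of demand and the \Cref{lem::GS-NG}-type monotonicity (both of which you correctly note transfer from strict concavity plus total complements of $u_i$) are actually used, or do what the paper's phrasing implicitly intends --- rerun the KL-potential argument directly in the primal variables, where the hypotheses on $u_i$ are available as stated. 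Since the paper itself glosses over this with ``identical with price and allocation exchanged,'' your proposal is no less rigorous than the source, but the gap you identify is the genuine one and should not be left as a remark if this were written out in full.
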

The proof is omitted, as it is identical to that of \Cref{thm:prd-lindahl-gs} with price and allocation exchanged.

We note that this dynamic can also be interpreted as a t\^atonnement process. Since $p_j^{t+1} = \sum_i b_{ij}^{t+1}$, the price update rule becomes:$$p_j^{t+1} = \sum_i p_j^t \cdot \demandnb_{ij}(\mathbf{p}^t, B_i) = p_j^t \sum_i \demandnb_{ij}(\mathbf{p}^t, B_i) = p_j^t (1 + z_j^t),$$where $z_j^t$ is the excess demand for good $j$. 

We omit the proof here, as the proof is identical to the proof of \Cref{thm:prd-lindahl-gs} by exchanging the price and the allocation.

\subsection{T\^atonnement Dynamics} Finally, we apply our dual framework to t\^atonnement dynamics. \label{sec::tatonnement}

\paragraph{T\^atonnement in Fisher markets} T\^atonnement, a concept introduced in \cite{walras1896etudes}, describes an iterative process for finding equilibrium in a private good market. 

In a Fisher market, for markets where consumers have nested-CES utility functions (see Example~\ref{example:nested-ces}) excluding the case of having a linear component, the following price-adjustment process is guaranteed \cite{cheung2014analyzing} to converge to the equilibrium when the step size $\Gamma_j$ is big enough:
\begin{align*}p_j^{t+1} = p_j^t \cdot \text{exp}\left(\frac{\min\{z_j^t, 1\}}{\Gamma_j}\right) \text{ for all good $j$}. \numberthis \label{dynamics:tat:fisher}
\end{align*}
where $\Gamma_j$ is the step size and $\mz$ is the excess demand vecotr: $\mz^t \triangleq \sum_i \demand_{i}(\mp^t) - 1$.

To state the precise convergence condition, we define indices based on the hierarchical structure of the nested-CES utility $u$. Let $\+P$ be the set of all root-to-leaf paths in the utility tree. \begin{enumerate}
    \item[-] \emph{Substitutes Index} ($\+S\+I$): This index is the minimum of the summed $\min\{\rho(\+I)/(\rho(\+I) - 1), 0\}$ values along any root-to-leaf path:$$  \+S\+I(u) = \min_{P \in \+P} \sum_{\+I \in P} \min\{\rho(\+I)/(\rho(\+I) - 1), 0\}.$$
    \item[-] \emph{Complements Index} ($\+C\+I$): Similarly, the Complements Index is defined using the primal elasticity parameters $\rho$:$$  \+C\+I(u) = \min_{P \in \+P} \sum_{\+I \in P} \min\{\rho(\+I), 0\}.$$A known convergence guarantee for the Fisher market dynamic relies on the Substitutes Index.
\end{enumerate}
\begin{theorem}[\cite{cheung2014analyzing}]
    Consider a Fisher market where agents' utility functions are nested CES. The procedure \eqref{dynamics:tat:fisher} is guaranteed to converge to the equilibrium if the step size $\Gamma_j$ satisfies $\Gamma_j \geq \left[8 - \frac{252}{25} \min_i \+S\+I(u_i)\right]$.
\end{theorem}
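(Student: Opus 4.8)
The plan is to view the truncated multiplicative t\^atonnement \eqref{dynamics:tat:fisher} as an entropic mirror-descent (exponentiated-gradient) step on a convex potential, and then to show that for nested-CES utilities this potential is relatively smooth with respect to the entropy mirror map, with smoothness constant at most $8 - \tfrac{252}{25}\min_i\+S\+I(u_i)$; the step-size condition is then exactly the condition under which the update is a genuine descent step. Concretely, let $\phi(\mp) = \sum_j p_j - \sum_i B_i\log\dual{u}_i(\mp)$ (up to an additive constant), i.e.\ the convex dual of the Eisenberg--Gale program of the Fisher market, which via our duality framework also equals $\sum_j p_j + \sum_i B_i\log v_i(\mp,B_i)$ up to a constant, with $v_i$ the indirect utility of $u_i$ and $\dual{u}_i$ the (again nested-CES) dual utility of $u_i$ from \Cref{example:nested-ces}. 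Classical arguments give that $\phi$ is convex, bounded below, coercive on $\{\sum_j p_j=\sum_i B_i\}$, and minimized exactly at the Fisher equilibrium price; moreover Roy's identity yields $\nabla_j\phi(\mp) = 1 - \sum_i\demandnb_{ij}(\mp) = -z_j(\mp)$, so the equilibrium prices are precisely the stationary points of $\phi$.

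With the negative-entropy mirror map $\psi(\mp) = \sum_j p_j\log p_j$ (whose Bregman divergence is a generalized KL divergence), the exponentiated-gradient step with step size $1/\Gamma$ reads $p_j^{t+1} = p_j^t\exp(-\nabla_j\phi(\mp^t)/\Gamma) = p_j^t\exp(z_j^t/\Gamma)$, which is exactly \eqref{dynamics:tat:fisher} \emph{before} the truncation $\min\{z_j^t,1\}$. The truncation is then handled by an amortized/Lyapunov argument: when $\max_j z_j^t\le 1$ the iteration is an honest mirror step and the standard descent lemma applies, whereas when some $z_j^t>1$ (which forces the corresponding prices to be far below equilibrium) one argues directly that $\phi$ still decreases by a fixed amount, using that $p_j$ and $z_j(\mp)$ move in opposite directions and that the iterates stay in a bounded box bounded away from $\boldzero$. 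Combining the two cases shows $\phi(\mp^t)$ is nonincreasing, and the usual mirror-descent telescoping then forces $\nabla\phi(\mp^t) = -\mz^t\to\boldzero$, hence $\mp^t$ converges to the (normalized) equilibrium price.

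The crux, and the step I expect to be the main obstacle, is the quantitative relative-smoothness bound $D_\phi(\mp',\mp)\le\Gamma\,D_\psi(\mp',\mp)$ with $\Gamma = 8-\tfrac{252}{25}\min_i\+S\+I(u_i)$, equivalently the Hessian estimate $\nabla^2\phi(\mp)\preceq\Gamma\,\mathrm{diag}(1/p_j)$ on the relevant region. Since the $\sum_j p_j$ term is linear, $\nabla^2\phi = -\sum_i B_i\nabla^2\log\dual{u}_i$, and one must bound, per agent, the price-weighted quadratic form $\sum_{j,k}p_jp_k\,\partial^2_{jk}\log\dual{u}_i(\mp)\,\xi_j\xi_k$ against $\sum_j p_j\xi_j^2$. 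For a single CES nest with parameter $\rho$ this form is a scaled copy of $\mathrm{diag}(\ms)-\ms\ms^\top$ on the spending shares $\ms$, with scale essentially $\dual{\rho}-1 = 1/(\rho-1)$; the ``substitute'' nests ($\rho\in(0,1)$, so $\dual{\rho}<0$) are the ones whose contribution can be large and is governed by $\min\{\rho/(\rho-1),0\}$, while ``complement'' nests ($\rho\le 0$) and the linear term contribute only a bounded amount — this is the origin of the absolute constants $8$ and $252/25$. For a full nested utility one unfolds $\log\dual{u}_i$ recursively down the tree; the second-order contributions compound multiplicatively along each root-to-leaf path, so summing the per-nest bounds along the worst path produces exactly the $\min_{P\in\+P}\sum_{\+I\in P}\min\{\rho(\+I)/(\rho(\+I)-1),0\} = \+S\+I(u_i)$ dependence, and taking the worst agent gives the claimed $\Gamma$. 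The recursive Hessian bookkeeping and the extraction of the precise constants is the technical heart of \cite{cheung2014analyzing}.

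Finally, the exclusion of a linear component is essential: a CES nest with $\rho=1$ has dual elasticity $\dual{\rho}=\infty$, so the per-nest scaling $1/(\rho-1)$ — and hence the relative-smoothness constant — is unbounded and no finite step size $\Gamma_j$ can guarantee descent.
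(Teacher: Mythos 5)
The paper does not actually prove this statement: it is imported verbatim from \cite{cheung2014analyzing} as a black-box ingredient for \Cref{thm:tat:lindahl}, so there is no in-paper proof to compare against. Judged against the proof in the cited work, your high-level plan is the right one --- t\^atonnement as entropic mirror descent on the convex dual of the Eisenberg--Gale program, with $\nabla_j\phi(\mp)=-z_j(\mp)$ via Roy's identity, convergence via a relative-smoothness bound against a KL-type Bregman divergence, and the step-size constant accumulating the per-nest contributions $\min\{\rho(\+I)/(\rho(\+I)-1),0\}$ along the worst root-to-leaf path. Your observations that only substitute nests ($\rho\in(0,1)$) contribute to the index and that a linear component ($\rho=1$) makes the smoothness constant blow up are also correct.

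However, as a proof the proposal has a genuine gap that you yourself flag: the entire quantitative content --- the bound $D_\phi(\mp',\mp)\le\Gamma\,D_\psi(\mp',\mp)$ with the specific constant $8-\tfrac{252}{25}\min_i\+S\+I(u_i)$, and the recursive Hessian bookkeeping down the nest tree that produces it --- is asserted rather than carried out, and this is precisely where all the work lies. Two further points would need care even to execute your plan. First, for general nested CES the potential $\phi$ is not twice continuously differentiable everywhere (demands can be non-smooth at boundary prices), so a pointwise Hessian estimate $\nabla^2\phi\preceq\Gamma\,\mathrm{diag}(1/p_j)$ is not directly available; the cited analysis works with a function-value (Bregman-like) inequality valid only for price pairs within a bounded multiplicative ratio, which is why the truncation $\min\{z_j^t,1\}$ appears in the update in the first place. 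Second, your treatment of the truncated case (``some $z_j^t>1$'') is a one-sentence appeal to monotonicity and boundedness; making the amortized Lyapunov argument rigorous there --- in particular showing the iterates stay in a region where the local smoothness estimate applies --- is a nontrivial part of the original proof and cannot be waved through.
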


\paragraph{A Dual T\^atonnemnt for Lindahl Equilibrium} By applying our dual framework, we can directly translate the t\^atonnement dynamics from the Fisher market into a dual process for a corresponding Lindahl equilibrium, with nested-CES utilities that do not contain Leontief components: 
\begin{align*}b_{ij}^{t+1} = B_i \frac{x^t_{j} \nabla_j u_i(\mx^t)}{\sum_{j'} x^t_{j'} \nabla_j u_i(\mx^t) } \text{ and }x_j^{t+1} = x_j^t \cdot \text{exp}\left(\frac{\min\{o_j^t, 1\}}{\Gamma_j}\right)\text{ for all good $j$}. \numberthis \label{dynamics:tat:Lindahl}
\end{align*}
where $\Gamma_j$ is the step size and $\mo$ is the overpayment: $\mo^t \triangleq  \frac{\sum_i b_{ij}^t}{x^t_j} - 1$.

For nested-CES utilities, the spending update rule simplifies into a multiplicative form. The update rule for the spending $b_{ij}^{t+1}$ is determined by the agent's total budget $B_i$ multiplied by a product of allocation ratios. This product is computed over all nodes $\+I$ along the path from the root of the utility tree to the leaf corresponding to good $j$. The specific form is
$$b_{ij}^{t+1} = B_i \frac{x^t_{j} \nabla_j u_i(\mx^t)}{\sum_{j'} x^t_{j'} \nabla_j u_i(\mx^t) } = B_i \prod_{\+I: \text{root} \rightarrow x_j} \frac{a^{\+I_j} \left(u^{\+I_j}(\mx)\right)^{\rho(\+I)}}{\sum_{\+I_c} a^{\+I_c} \left(u^{\+I_c}(\mx)\right)^{\rho(\+I)}}$$ where the product $\prod_{\+I: \text{root} \rightarrow x_j}$ is taken over each node $\+I$ on the path from the root to the leaf $x_j$, $\+I_j$ denotes the child node of $\+I$ that lies on this path, and $\+I_c$ iterates over all children of the node $\+I$.

\begin{theorem}\label{thm:tat:lindahl}
    Consider a Lindahl equilibrium where agents' utility functions are nested CES. The procedure \eqref{dynamics:tat:Lindahl} is guaranteed to converge to the equilibrium if the step size $\Gamma_j$ satisfies $\Gamma_j \geq \left[8 - \frac{252}{25} \min_i \+C\+I(u_i)\right]$.
\end{theorem}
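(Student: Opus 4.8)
The plan is to obtain \Cref{thm:tat:lindahl} by pulling the Fisher-market t\^atonnement guarantee of \cite{cheung2014analyzing} back through the duality of \Cref{thm::main-duality}. In outline: (i) identify the Lindahl market with nested-CES utilities $\{u_i\}$ as the dual of a Fisher market whose agents have the dual nested-CES utilities $\{\dual u_i\}$ of \Cref{example:nested-ces}; (ii) check that the dynamics \eqref{dynamics:tat:Lindahl} is exactly the image of the Fisher dynamics \eqref{dynamics:tat:fisher} under the price/allocation exchange of the duality; (iii) invoke \cite{cheung2014analyzing} on the dual Fisher market; and (iv) verify that the step-size threshold it produces, $8-\tfrac{252}{25}\min_i\+S\+I(\dual u_i)$, equals the claimed threshold $8-\tfrac{252}{25}\min_i\+C\+I(u_i)$.

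For step (ii), the exponential block transports immediately: $x_j^{t+1}=x_j^t\exp(\min\{o_j^t,1\}/\Gamma_j)$ is the dual of $p_j^{t+1}=p_j^t\exp(\min\{z_j^t,1\}/\Gamma_j)$, since under (Fisher price $\leftrightarrow$ Lindahl allocation) and the identification $b_{ij}=x_j p_{ij}=\dual p_j \dual x_{ij}$ (agent $i$'s spending on good $j$ in the dual Fisher market) the overpayment $o_j^t=\tfrac{\sum_i b_{ij}^t}{x_j^t}-1=\sum_i p_{ij}^t-1$ is precisely the dual excess demand $\dual z_j^t=\sum_i\dual x_{ij}^t-1$. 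For the spending block we need the dual-utility Marshallian demand at price $\mx^t$: maximizing $\dual u_i=1/v_i(\cdot,B_i)$ over the budget set is the same as minimizing $v_i(\cdot,B_i)$ there, whose value equals $u_i(\mx^t)$ by \Cref{theorem:duality-utility-indirec-utility}; applying the envelope theorem (Roy's identity) to this minimization and using Euler's theorem for the $1$-homogeneous $u_i$ (so $\sum_{j'}x^t_{j'}\nabla_{j'}u_i(\mx^t)=u_i(\mx^t)$) gives $\dualdemand_{ij}(\mx^t,B_i)=B_i\,\nabla_j u_i(\mx^t)/\sum_{j'}x^t_{j'}\nabla_{j'}u_i(\mx^t)$, hence $b_{ij}^{t+1}=x_j^t\cdot\dualdemand_{ij}(\mx^t,B_i)$ is exactly the spending rule in \eqref{dynamics:tat:Lindahl}; its root-to-leaf product form is just the chain rule for $\nabla_j u_i/u_i$ in the nest tree. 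I would also record the edge-case correspondence: under the duality a Leontief nest of $u_i$ ($\rho=-\infty$) becomes a linear nest of $\dual u_i$ ($\dual\rho=1$) and conversely, so the standing hypothesis that no $u_i$ has a Leontief component is exactly what makes \cite{cheung2014analyzing} applicable to $\{\dual u_i\}$.

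For step (iv), recall from \Cref{example:nested-ces} that the dual nest elasticities are $\dual\rho(\+I)=\rho(\+I)/(\rho(\+I)-1)$. A one-line computation gives $\dual\rho(\+I)-1=1/(\rho(\+I)-1)$ and hence $\dual\rho(\+I)/(\dual\rho(\+I)-1)=\rho(\+I)$, i.e. the elasticity duality is an involution. Therefore $\min\{\dual\rho(\+I)/(\dual\rho(\+I)-1),0\}=\min\{\rho(\+I),0\}$ for every nest, and summing along any root-to-leaf path and minimizing over paths yields $\+S\+I(\dual u_i)=\+C\+I(u_i)$ for every agent $i$. Combining this with step (iii), the dual Fisher t\^atonnement — and therefore \eqref{dynamics:tat:Lindahl} via \Cref{thm::main-duality} — converges to the equilibrium whenever $\Gamma_j\ge 8-\tfrac{252}{25}\min_i\+C\+I(u_i)$, which is the claim.

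The main obstacle is step (ii): making the variable dictionary precise and, in particular, proving that the proportional-response-looking spending rule of \eqref{dynamics:tat:Lindahl} genuinely equals the dual market's Marshallian demand evaluated at the current allocation. This is where \Cref{theorem:duality-utility-indirec-utility}, the envelope theorem, and $1$-homogeneity (Euler) combine; everything afterward — the index identity and the appeal to \cite{cheung2014analyzing} — is routine once the dictionary is fixed. A minor additional point is time-index bookkeeping, since \eqref{dynamics:tat:Lindahl} forms $b^{t+1}$ from $\mx^t$ while forming $x^{t+1}$ from $(b^t,\mx^t)$; one checks this matches the update order of \eqref{dynamics:tat:fisher} after relabeling.
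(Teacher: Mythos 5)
Your proposal is correct and follows essentially the same route the paper intends: the theorem is obtained by transporting the Fisher t\^atonnement guarantee of \cite{cheung2014analyzing} through the duality of \Cref{thm::main-duality}, with the spending rule identified as the dual market's Marshallian demand via Roy's identity and the index identity $\+S\+I(\dual u_i)=\+C\+I(u_i)$ following from the involution $\dual\rho/(\dual\rho-1)=\rho$. Your verification that the overpayment $o_j^t$ equals the dual excess demand, and your remark that excluding Leontief components in the primal corresponds exactly to excluding linear components in the dual (the hypothesis of the cited Fisher result), fill in the details the paper leaves implicit.
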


\section{Application III: Markets with Chores}
\label{sec::chores}
This section analyzes the allocation of chores, which are items that incur disutility to agents. We examine two distinct market frameworks: the Fisher market for private chores and the Lindahl equilibrium for public chores.

\subsection{Fisher Market with Private Chores}

A Fisher market instance for private chores, $\mathcal{F} = (N, M, \{d_i\}, \{B_i\})$, consists of a set of $m$ divisible chores, $M$ and a set of $n$ agents, $N$. We assume a unit supply of each chore. Each agent $i \in N$ has a disutility function $d_i: \mathbb{R}_{\geq 0}^m \to \mathbb{R}_{\geq 0}$, which is assumed to be non-decreasing and not identically zero.

In this market, each agent $i$ is compensated with a payment for the chores they perform and is subject to an earning constraint $B_i > 0$. The goal is to find a Fisher market equilibrium for chores, also known as a \emph{competitive equilibrium (CE) for chores}. This equilibrium is a pair of chore prices and allocations, $(\mathbf{p}, \{\mathbf{x}_i\})$, such that: (i)  Every agent receives their disutility-minimizing bundle of chores, subject to achieving their exact earning constraint; and (ii) Every chore is fully allocated, i.e., the market clears.

\begin{definition}[Fisher Market Equilibrium for Chores] \label{definition:Fisher-chores}
Given a Fisher market instance $\mathcal{F}=(N, M, \{d_i\}, \{B_i\})$ for chores, the allocations and prices $((\mathbf{x}_i), \mathbf{p})$ form a \emph{Fisher market equilibrium / competitive equilibrium} if the following hold:
\begin{enumerate}
    \item Disutility minimizing subject to earning constraint: $\mathbf{x}_i \in \argmin_{\my_i \in \mathbb{R}_{\geq 0}^m: \langle \mp, \my_i \rangle = B_i}d_i(\my_i)$ for every agent $i \in N$;
    \item Market Clearing: For every chore $j \in M$, $\sum_{i \in N} x_{ij} = 1$.
\end{enumerate}
\end{definition}

This framework is motivated by practical fair chore division problems, such as teachers dividing teaching loads or roommates dividing household chores. A prominent special case is the \emph{competitive equilibrium for equal income (CEEI)}, where $B_i = 1$ for all agents. A CEEI allocation is known to satisfy several desirable efficiency and fairness properties~\citep{bogomolnaia2017competitive}: (i) \emph{Pareto-Optimal}: No other feasible allocation exists that makes at least one agent better off (less disutility) without making any other agent worse off;
 (ii) \emph{Envy-Free}: No agent prefers another agent's bundle of chores, i.e., $d_i(\mathbf{x}_i) \le d_i(\mathbf{x}_j)$ for all $i, j \in N$; (iii) \emph{Guarantees Fair Share}: Each agent's disutility is no more than their disutility from performing an equal split of all chores, i.e., $d_i(\mathbf{x}_i) \le d_i(\frac{1}{n} \cdot \mathbf{1})$; (iv) \emph{In the Weak Core}: It is stable against "blocking coalitions", i.e, there is no subset of agents $\+A \subseteq N$ and an allocation $\{\my_i\}_{i \in \+A}$ such that $\sum_{i \in \+A} \my_i \ge \frac{|\+A|}{n} \cdot \mathbf{1}$ and every agent $i \in \+A$ strictly prefers $\my_i$ over $\mx_i$.

The seminal work by \cite{bogomolnaia2017competitive} characterizes the set of CE when the disutility functions are convex and 1-homogeneous. They propose the EG-type program that minimizes the product of agents' disutilities and show that every KKT point of the program corresponds to a competitive equilibrium, with the exception of the KKT points that assign zero disutility to some agent. To avoid the zero-disutility issue, one can optimize the logarithm of the product of disutilities. While this avoids zero points, this new non-convex program introduces poles—infeasible points on the boundary of the feasible region that attract iterative methods and drive the objective to negative infinity~\citep{boodaghians2022polynomial,chaudhury2022competitive}. While specialized iterative methods have been developed to find non-zero KKT points \citep{boodaghians2022polynomial,chaudhury2022competitive}, these methods are sophisticated as they require solving non-linear programs at each iteration. These methods are not practical for large-scale problems as shown in \citep{chaudhury2024competitive}, who ask the following question:
\begin{equation*}
    \textit{Does there exist an optimization formulation of the chores problem that avoids the poles issue?}
\end{equation*}
\citet{chaudhury2024competitive} resolved the question for the specific case of linear disutilities by proposing a new program with no poles, which opens up the CE for chores problem to more standard iterative methods. However, constructing a program that avoids the poles issue for general convex and 1-homogeneous disutilities remained open. 

We address the gap and propose a program \ref{Fisher-Chores} that avoids the poles issue for general convex and $1$-homogeneous disutilities, whose KKT points correspond to CE (\Cref{theorem: fisher-chores-KKT}). Our program is constructed by leveraging the concept of \emph{indirect disutility functions}. Specifically, we give a Roy's identity-like characterization of optimal allocation under a certain price using the indirect disutility function. 

In the following, we first review the linear case and discuss the challenges in the general setting. Then we formally introduce the concept of the indirect disutility function and its properties. In the end, we give our program. 

\paragraph{Technical Overview} We briefly review \citep{chaudhury2024competitive}'s program for linear disutilities and discuss why it is not obvious to extend it to the more general setting. Here we assume each agent $i$' has linear disutility $d_i(\mx_i) = \sum_j d_{ij} x_{ij}$ where $d_{ij} \ge 0$. The EG-style program for CE is \ref{EG-Chores-Linear}, which looks similar to the EG program in the goods setting. However, unlike the goods setting, \ref{EG-Chores-Linear} is a non-convex program, and the objective function can tend to negative infinity (poles) within the feasible region.

\begin{figure}[h]
  \centering

  \noindent\rule{\linewidth}{0.4pt}\vspace{0.6em}

  \begin{tabularx}{\linewidth}{@{}C C@{}}
  $\displaystyle
    \begin{aligned}
        \inf_{\mx \ge 0} \,& \sum_{i \in N} B_i \log \InParentheses{\sum_{j \in M} d_{ij} x_{ij}}\\
        \text{s.t.}\,& \sum_{i \in N} x_{ij} = 1, \quad  \text{for all } j \in M.
    \end{aligned}
  $
  &
  $\displaystyle
    \begin{aligned}
        \sup_{{\bf \beta} \ge 0, \mp \ge 0} \,& \sum_{j\in M} p_j
        - \sum_{i \in N} B_i \log\beta_i \\
        \text{s.t.}\,& p_j\le \beta_id_{ij}, \text{ for all } i\in N, j \in M \\
        \,& \sum_{j \in M} p_j =\sum_{i \in N} B_i 
    \end{aligned}
  $
  \\
  \small (\setword{EG-Chores-Linear}{EG-Chores-Linear}) & \small (\setword{EG-Chores-Linear-Dual}{EG-Chores-Linear-Dual})
  \end{tabularx}

  \vspace{0.4em}\noindent\rule{\linewidth}{0.4pt} 
\end{figure}
Although duality does not hold for non-convex programs, \citet{chaudhury2024competitive} guess a ``dual" of \ref{EG-Chores-Linear} according to the dual program in the goods setting and propose \ref{EG-Chores-Linear-Dual}. This program avoids the poles issue since the constraints induce upper bounds for the prices and lower bounds for $\{\beta_i\}$ and thus gives an upper bound on the objective. They then show that any KKT point $(\mp, \beta)$ of this program corresponds to a CE. Specifically, given the price vector $\mp$, the corresponding allocations $\{\mx_i\}$ are obtained as follows: the allocation $x_{ij}$ is the dual variable (KKT multiplier) of the constraint $p_j \le \beta_i d_{ij}$. Thus these $N \times M$ constraints gives the allocations $\{x_{ij}\}_{i\in N,j\in M}$.

Analogy to the goods setting, we generalize the program to general utilities using the concept of indirect disutility functions and dual utilities in \ref{Fisher-Chores}. We note that, however, the fact that KKT multipliers correspond to the allocations is very special to the linear setting and does not hold in general. For the new program with general disutilities, when we get a price vector $\mp$ from a KKT point, it is not obvious which allocations $\{\mx_i\}$ make $(\mp, \{\mx_i\})$ a CE.  In fact, we only have $N$ constraints on the indirect disutility functions, one for each agent $i$, whose dual variables clearly do not correspond to allocations. We address this by establishing a Roy's Identity-type characterization of the optimal allocation given prices using subgradients of the indirect disutility function, which relies on the duality between direct and indirect disutility functions similar in the goods setting.

\paragraph{Indirect Disutility Function} 
We first introduce indirect disutility functions. Given a disutility function $d: \mathbb{R}_{\geq 0}^m \to \mathbb{R}_{\geq 0}$ that is non-decreasing and not identically $0$, its \emph{indirect disutility function} $h: \-R^m_{\ge 0} \times \-R_{>0} \rightarrow \-R_{\ge 0}$ is defined as follows: $h(\mp, B)$ is the minimum disutility required to achieve a total earning of $B > 0$ given a price vector $\mathbf{p} \ge \boldsymbol{0}$ such that $\mp \ne \boldsymbol{0}$:
$$h(\mathbf{p}, B) \triangleq \min_{\mathbf{x} \in \mathbb{R}_{\geq 0}^m :\, \langle \mathbf{p}, \mathbf{x} \rangle = B} d(\mathbf{x}).$$
For $\mp = \boldsymbol{0}$, we define $h(\boldsymbol{0}, B) = \sup_{\mx \in \-R^m_{\ge 0}} d(\mx)$. When the disutility $d$ is unbounded, we have $h(\boldsymbol{0}, B) = +\infty$. When $\mp \ne 0$, we also define the optimal demand set subject to earning constraint as $\demand(\mp)= \argmin_{\mx\in \-R^m_{\ge 0}: \InAngles{\mp, \mx} = B} d(\mx)$.

The indirect disutility function is analogous to the indirect utility function in the goods setting. We present several useful properties of the indirect disutility functions in the following lemmas. We first show that the indirect disutility function $h(\cdot, B)$ is non-increasing and quasi-concave.
\begin{lemma}
 \label{lem::chores-indirect-ni-qc}
 Suppose $d(\cdot)$ is non-decreasing. Then, $h(\cdot, B)$ is non-increasing and quasi-concave.
\end{lemma}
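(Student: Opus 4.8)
The plan is to verify the two claims directly from the definition $h(\mp, B) = \min_{\mx \ge \boldsymbol{0}:\, \InAngles{\mp, \mx} = B} d(\mx)$, using only that $d$ is non-decreasing (and the convention for $\mp = \boldsymbol{0}$). For \emph{monotonicity} (non-increasing in $\mp$): fix $B > 0$ and take $\mp \le \mp'$. I would argue that raising prices can only make it easier to reach earning level $B$ with a smaller chore bundle. Concretely, given any feasible $\mx'$ for $\mp'$, i.e.\ $\InAngles{\mp', \mx'} = B$, since $\InAngles{\mp, \mx'} \le \InAngles{\mp', \mx'} = B$, one can scale $\mx'$ \emph{down} to some $\mx = \lambda \mx'$ with $\lambda \in [0,1]$ so that $\InAngles{\mp, \mx} = B$ (using continuity of the inner product in $\lambda$ and the fact that at $\lambda = 0$ the value is $0 \le B$; if $\mp$ is zero on the support of $\mx'$ one needs the $\mp=\boldsymbol{0}$-type convention, but then $h(\mp,B)$ is a supremum which dominates trivially). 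Then $d(\mx) = d(\lambda \mx') \le d(\mx')$ since $d$ is non-decreasing and $\lambda \mx' \le \mx'$. Taking the infimum over feasible $\mx'$ gives $h(\mp, B) \le h(\mp', B)$. A small amount of care is needed for the degenerate cases $\mp = \boldsymbol{0}$ or $\mp' = \boldsymbol{0}$, handled separately by the supremum convention.

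For \emph{quasi-concavity} in $\mp$: fix $B$, take prices $\mp, \mp'$ and $\lambda \in [0,1]$, and set $\mq = \lambda \mp + (1-\lambda)\mp'$; I must show $h(\mq, B) \ge \min\{h(\mp, B), h(\mp', B)\}$, i.e.\ every bundle $\mx$ that is feasible at $\mq$ (namely $\InAngles{\mq, \mx} = B$) has $d(\mx) \ge \min\{h(\mp, B), h(\mp', B)\}$. The key observation is that if $\InAngles{\mq, \mx} = B$ then $\lambda\InAngles{\mp, \mx} + (1-\lambda)\InAngles{\mp', \mx} = B$, so at least one of $\InAngles{\mp, \mx} \le B$ or $\InAngles{\mp', \mx} \le B$ holds; say $\InAngles{\mp,\mx} \le B$. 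Then, as above, scale $\mx$ \emph{up} by $\mu \ge 1$ so that $\InAngles{\mp, \mu\mx} = B$ (again using continuity; in the edge case where $\mp$ vanishes on $\supp(\mx)$, defer to the $\mp = \boldsymbol{0}$ convention, under which $h(\mp,B)$ is the global supremum of $d$ and the bound is immediate). By monotonicity of $d$, $d(\mx) \le d(\mu \mx)$, hence $d(\mx) \le d(\mu\mx)$ — wait, I want a lower bound on $d(\mx)$, so I instead note $d(\mu \mx) \ge h(\mp, B)$ by definition of $h$ as a minimum over feasible bundles, but that bounds $d(\mu\mx)$, not $d(\mx)$. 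The fix: scale \emph{down} is the wrong direction here; the correct move is that since $\InAngles{\mp,\mx}\le B$, there is $\mu \ge 1$ with $\mu \mx$ feasible at $\mp$, and $d$ being non-decreasing gives $d(\mx) \le d(\mu\mx)$, which is the wrong inequality. The genuinely correct argument is to use that $h(\mp, B)$ equals $\inf\{d(\my): \InAngles{\mp,\my}\ge B\}$ when $d$ is non-decreasing (since one can always discard earnings above $B$ by scaling down without increasing $d$); then $\InAngles{\mp,\mx}\le B$ is not enough and one should pick the coordinate where $\InAngles{\cdot,\mx}\ge B$. Re-examining: from $\lambda a + (1-\lambda) a' = B$ with $a = \InAngles{\mp,\mx}$, $a' = \InAngles{\mp',\mx}$, at least one of $a \ge B$, $a' \ge B$ holds; suppose $a \ge B$. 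Using $h(\mp,B) = \inf\{d(\my): \InAngles{\mp,\my}\ge B\}$, we get $d(\mx) \ge h(\mp,B) \ge \min\{h(\mp,B), h(\mp',B)\}$. Taking the infimum over feasible $\mx$ at $\mq$ yields $h(\mq,B) \ge \min\{h(\mp,B), h(\mp',B)\}$.

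So the one auxiliary fact to establish first is the reformulation $h(\mp, B) = \inf\{ d(\mx): \mx \ge \boldsymbol{0},\ \InAngles{\mp, \mx} \ge B \}$ for non-decreasing $d$ and $\mp \ne \boldsymbol{0}$: the ``$\ge$'' direction is immediate since the constraint set is larger, and ``$\le$'' follows because any $\mx$ with $\InAngles{\mp,\mx} \ge B$ can be scaled down to $\lambda\mx$ with $\InAngles{\mp,\lambda\mx} = B$ and $d(\lambda\mx) \le d(\mx)$. With this reformulation in hand, both the monotonicity and the quasi-concavity arguments become clean, since the ``$\ge B$'' form of the feasible set behaves well under both coordinate-wise price comparison and convex combination. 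I expect the \textbf{main obstacle} to be purely bookkeeping: correctly handling the boundary cases where $\mp$ (or a convex combination) is zero, or zero on the support of the relevant bundle, so that the scaling arguments do not divide by zero — these are dispatched by the definition $h(\boldsymbol{0}, B) = \sup_{\mx} d(\mx)$, which dominates every other value of $h$ and thus makes all the required inequalities trivially true in those cases.
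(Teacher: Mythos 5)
Your quasi-concavity argument is correct and, once you settle on the reformulation $h(\mp,B)=\inf\{d(\mx):\mx\ge\boldsymbol{0},\,\InAngles{\mp,\mx}\ge B\}$, it is essentially the paper's argument in cleaner packaging: the paper argues by contradiction, picking the price $\mp^{(k)}$ for which $\InAngles{\mp^{(k)},\mx^{(3)}}\ge B$ and rescaling $\mx^{(3)}$ down to a bundle feasible at $\mp^{(k)}$, which is exactly the content of the ``$\le$'' direction of your reformulation. The auxiliary reformulation itself is proved correctly.

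The monotonicity half, however, is wrong as written, and not merely by a typo. With $\mp\le\mp'$ you start from a bundle $\mx'$ satisfying $\InAngles{\mp',\mx'}=B$, note $\InAngles{\mp,\mx'}\le B$, and claim you can scale $\mx'$ \emph{down} to hit $\InAngles{\mp,\lambda\mx'}=B$. But $\lambda\mapsto\InAngles{\mp,\lambda\mx'}$ ranges over $[0,\InAngles{\mp,\mx'}]\subseteq[0,B]$ for $\lambda\in[0,1]$, so it never reaches $B$ unless $\InAngles{\mp,\mx'}=B$ already; you would have to scale \emph{up}, which increases $d$ and destroys the inequality. Worse, the conclusion you reach, $h(\mp,B)\le h(\mp',B)$ for $\mp\le\mp'$, is the statement that $h(\cdot,B)$ is non-\emph{de}creasing --- the opposite of the lemma, and of your own opening intuition that higher prices let you earn $B$ with a smaller (hence lower-disutility) bundle. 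The correct orientation is the paper's: take the minimizer $\mx$ at the \emph{lower} price $\mq$, observe that $\InAngles{\mp,\mx}\ge\InAngles{\mq,\mx}=B$ at the higher price $\mp$, scale it \emph{down} to $\my=\frac{B}{\InAngles{\mp,\mx}}\mx\le\mx$, and conclude $h(\mp,B)\le d(\my)\le d(\mx)=h(\mq,B)$. Alternatively --- and this would make your write-up self-contained --- your own reformulation gives monotonicity in one line: $\mp\le\mp'$ implies $\{\mx:\InAngles{\mp,\mx}\ge B\}\subseteq\{\mx:\InAngles{\mp',\mx}\ge B\}$, so the infimum defining $h(\mp',B)$ is taken over a larger set and hence $h(\mp',B)\le h(\mp,B)$.
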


\begin{proof} We first show that $h( \cdot, B)$ is non-increasing. Note that $h(\boldsymbol{0},B) \ge h(\mp, B)$ for any $\mp \ne \boldsymbol{0}$. Now consider any $\mp \ge \mq \ge \boldsymbol{0}$ and $\mq \ne 0$. We let $\mx \in \demand(\mq, B)$ so that $h(\mq, B) = d(\mx)$ and $B = \InAngles{\mq, \mx} \le \InAngles{\mp, \mx}$. Let $\my = \frac{B}{\InAngles{\mp, \mx}} \mx \le \mx$ and thus $\InAngles{\mp, \my} = B$. Thus we have $h(\mp, B) \le d(\my) \le d(\mx) = h(\mq, B)$. This proves $h( \cdot, B)$ is non-increasing.

We now show $h(\cdot, B)$ is quasi-concave. Consider any price $\mp \ne \boldsymbol{0}$ and $\alpha \in (0,1)$, we have $h(\alpha\mp,B)\ge h(\mp, B) \ge \min\{h(\mp, B), h(\boldsymbol{0}, B) \}$ since $h(\cdot, B)$ is non-increasing. For any two price vector $\mp^{(1)}, \mp^{(2)} \ne \boldsymbol{0}$, and their demand $\mx^{(1)} \in \demand(\mp^{(1)}, B)$ and $\mx^{(2)} \in \demand(\mp^{(2)}, B)$. We have $h(\mp^{(1)}, B) = d(\mx^{(1)})$ and $h(\mp^{(2)}, B) = d(\mx^{(2)})$.
Fix any $\alpha \in (0,1)$, we let $\mp^{(3)} = \alpha \mp^{(1)} + (1 - \alpha) \mp^{(2)}$ and $\mx^{(3)} \in \demand(\mp^{(3)}, B)$. 
It suffices to prove that $h(\mp^{(3)}, B) = d(\mx^{(3)}) \ge \min\{d(\mx^{(1)}), d(\my^{(2)})\}$. Suppose not and we have $d(\mx^{(3)}) < \min\{d(\mx^{(1)}), d(\mx^{(2)})\}$. Since $\InAngles{\mp^{(3)}, \mx^{(3)}} = \alpha \langle \mathbf{p}^{(1)}, \mathbf{x}^{(3)} \rangle + (1 - \alpha) \langle \mathbf{p}^{(2)}, \mathbf{x}^{(3)} \rangle = B$, we have at least one of $\langle \mathbf{p}^{(1)}, \mathbf{x}^{(3)} \rangle \ge B$ or $\langle \mathbf{p}^{(2)}, \mx^{(3)} \rangle \ge B$ holds. Assume, w.l.o.g., that $\langle \mathbf{p}^{(1)}, \mathbf{x}^{(3)} \rangle \ge B$. Let $\mathbf{y} = \frac{B}{\langle \mathbf{p}^{(1)}, \mathbf{x}^{(3)} \rangle} \mathbf{x}^{(3)} \le \mx^{(3)}$. Then $\langle \mathbf{p}^{(1)}, \mathbf{y} \rangle = B$ and thus $d(\mx^{(3)}) \ge  d(\my) \ge h(\mp^{(1)}, B) = d(\mx^{(1)})$, which contradicts our previous assumption that $d(\mx^{(3)}) < d(\mx^{(1)})$. The claim follows by contradiction. 
\end{proof}

Similar to the goods setting, we establish a duality between direct and indirect disutility functions.
\begin{lemma}[Duality between Direct and Indirect Disutility Functions]\label{lem:chores:indirect-duality}
If $d$ is convex and non-decreasing, then, for any $\mx \in \-R^m_{\ge 0} \setminus \{\boldsymbol{0}\}$ and $B > 0$,
    $$d(\mx) = \max_{\mp \in \-R_{\geq 0}^m: \InAngles{\mp, \mx} = B} h(\mp, B).$$
\end{lemma}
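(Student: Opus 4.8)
The plan is to prove the duality between $d$ and $h$ by mirroring the proof of \Cref{theorem:duality-utility-indirec-utility} (duality between direct and indirect utility), but adapted to the disutility setting where we minimize over allocations and maximize over prices. The two inequalities are of very different character: one is an immediate consequence of the definition of $h$, while the other requires a separating-hyperplane argument exploiting convexity and monotonicity of $d$.

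First I would establish the easy direction, $d(\mx) \ge \max_{\mp \ge 0:\, \InAngles{\mp,\mx} = B} h(\mp, B)$. Fix any $\mx \ne \boldsymbol{0}$ and any feasible price $\mp \ge \boldsymbol{0}$ with $\InAngles{\mp, \mx} = B$. Then $\mx$ is itself a feasible point in the minimization defining $h(\mp, B) = \min_{\my \ge 0:\, \InAngles{\mp,\my} = B} d(\my)$, so $h(\mp, B) \le d(\mx)$. Taking the max over all such $\mp$ gives the inequality. (One should also note the feasible set of prices is nonempty: since $\mx \ne \boldsymbol{0}$, pick any coordinate $j$ with $x_j > 0$ and set $\mp = (B/x_j)\mathbf{e}_j$, which satisfies $\InAngles{\mp,\mx} = B$.)

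The substantive direction is $d(\mx) \le \max_{\mp \ge 0:\, \InAngles{\mp,\mx} = B} h(\mp, B)$, i.e., to exhibit a price $\mp^\star \ge \boldsymbol{0}$ with $\InAngles{\mp^\star, \mx} = B$ and $h(\mp^\star, B) \ge d(\mx)$; the latter means every $\my \ge \boldsymbol{0}$ with $\InAngles{\mp^\star, \my} = B$ satisfies $d(\my) \ge d(\mx)$. Here is the approach. Fix $\mx \ne \boldsymbol{0}$ and write $c = d(\mx)$. Consider the sublevel set $S = \{\my \in \-R^m_{\ge 0} : d(\my) < c\}$, which is convex (since $d$ is convex; strictly speaking $\{d \le c'\}$ for $c' < c$, then take a union or use an open sublevel set) and, by definition of $c$, does not contain $\mx$. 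By the separating hyperplane theorem there is a nonzero vector $\mq \in \-R^m$ and a scalar $\gamma$ with $\InAngles{\mq, \my} \le \gamma \le \InAngles{\mq, \mx}$ for all $\my \in S$. The key structural fact is that $\mq$ can be taken nonnegative: because $d$ is non-decreasing, $S$ is downward-closed within $\-R^m_{\ge 0}$ (if $\my \in S$ and $\boldsymbol{0} \le \my' \le \my$ then $d(\my') \le d(\my) < c$), which forces the separating direction to have $\mq \ge \boldsymbol{0}$ — otherwise one could decrease a coordinate of a point in $S$ to drive $\InAngles{\mq,\cdot}$ above $\gamma$ while staying in $S$. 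Also $\InAngles{\mq,\mx} > 0$ (else $\gamma \le 0$ and $S$ would be empty, contradicting that $\boldsymbol{0} \in S$ when $c = d(\mx) > 0$; the boundary case $d(\mx) = 0$ needs a separate, easy treatment since then $\mx$ already minimizes $d$). Now rescale: set $\mp^\star = \frac{B}{\InAngles{\mq,\mx}}\mq \ge \boldsymbol{0}$, so $\InAngles{\mp^\star, \mx} = B$. For any $\my \ge \boldsymbol{0}$ with $\InAngles{\mp^\star, \my} = B = \InAngles{\mp^\star, \mx}$, we have $\InAngles{\mq, \my} = \InAngles{\mq, \mx} \ge \gamma$, hence $\my \notin S$ (the separating inequality would be violated), i.e., $d(\my) \ge c = d(\mx)$. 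Therefore $h(\mp^\star, B) \ge d(\mx)$, completing this direction.

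The main obstacle is handling the boundary cases cleanly: the distinction between open and closed sublevel sets in the separation step, the degenerate case $d(\mx) = 0$ (where $\mx$ is a global minimizer and any feasible $\mp$ works since $h(\mp,B) \ge 0 = d(\mx)$), and verifying that the separating direction is genuinely nonnegative and that $\InAngles{\mq,\mx}>0$ so the rescaling is legitimate — this is where convexity and monotonicity of $d$ are both essential and must be invoked with care. A secondary technical point is continuity/closedness: if $d$ is only lower semicontinuous one may need to separate $\mx$ from the open set $\{d < c\}$ and then pass to the limit, but under the standing assumption that $d$ is convex (hence continuous on the interior of its domain) this is routine. I would also cross-check consistency with \Cref{lem::chores-indirect-ni-qc}: the $\mp^\star$ constructed is exactly a maximizer, consistent with $h(\cdot,B)$ being quasi-concave and non-increasing.
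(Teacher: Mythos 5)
Your first inequality is exactly the paper's, but for the substantive direction you take a genuinely different route. The paper never separates sublevel sets: after disposing of the trivial case $d(\mx)=d(\boldsymbol{0})$, it picks a subgradient $\mg\in\partial d(\mx)$, gets $\mg\ge\boldsymbol{0}$ from monotonicity and $\InAngles{\mg,\mx}\ge d(\mx)-d(\boldsymbol{0})>0$ from the convexity inequality at $\boldsymbol{0}$, rescales $\mg$ to a price $\mq$ with $\InAngles{\mq,\mx}=B$, and reads off $d(\my)\ge d(\mx)$ on the budget hyperplane directly from the subgradient inequality. Your separating-hyperplane argument instead transplants the paper's proof of \Cref{theorem:duality-utility-indirec-utility} from the goods side.

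That transplant has a genuine gap at its central step, the nonnegativity of the separator. In the goods-side proof, $p_j\ge 0$ is forced because the superlevel set is closed under adding $\beta\me_j$ for \emph{arbitrarily large} $\beta$, so a negative component would push $\InAngles{\mp,\cdot}$ past the threshold. Your sublevel set $S$ is only downward-closed \emph{within} $\-R^m_{\ge 0}$: you can decrease a coordinate of $\my\in S$ only down to $0$, so the gain in $\InAngles{\mq,\cdot}$ is at most $|q_j|\,y_j$, which is bounded and need not exceed $\gamma$. Concretely, for $d(\my)=y_1$ on $\-R^2_{\ge 0}$ and $\mx=(1,0)$, the vector $\mq=(1,-\epsilon)$ with $\gamma=1$ validly separates $\mx$ from $S=\{\my\ge 0: y_1<1\}$ yet is not nonnegative, and zeroing $y_2$ never lifts $\InAngles{\mq,\my}$ above $\gamma$. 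So an arbitrary separator need not be nonnegative and your argument does not rule this out; you must either separate $\mx$ from the free-disposal set $S-\-R^m_{\ge 0}$ (unbounded downward, so the goods-side argument applies verbatim) or replace $\mq$ by $\max\{\mq,\boldsymbol{0}\}$ and verify, via downward-closedness, that it still separates and is nonzero. A second, smaller soft spot: your justification of $\InAngles{\mq,\mx}>0$ (``else $\gamma\le 0$ and $S$ would be empty'') does not follow --- $\boldsymbol{0}\in S$ only yields $\gamma\ge 0$, and $\gamma=\InAngles{\mq,\mx}=0$ is not immediately contradictory; the clean case split is $d(\mx)=d(\boldsymbol{0})$ versus $d(\mx)>d(\boldsymbol{0})$ (not $d(\mx)=0$, since $d(\boldsymbol{0})$ need not vanish), which is what the paper uses. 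With these repairs your route goes through, but as written the two key properties of the separator are asserted rather than proved.
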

\begin{proof}
  First, for any $\mp \in \-R^m_{\ge 0}$ such that $\InAngles{\mp, \mx} = B$, we have $h(\mp, B) = \min_{\my\in \-R_{\geq 0}^m: \InAngles{\mp,\my} = B} d(\my) \le d(\mx)$. This gives $d(\mx) \ge \max_{\mp \in \-R_{\geq 0}^m: \InAngles{\mp, \mx} = B} h(\mp, B)$. 
    
    Now we prove the other direction. We first note that if $d(\mx) = d(\boldsymbol{0})$, then the claimed inequality holds since for any $\mp \in \-R^m_{\ge 0}$, $h(\mp, B) \ge d(\boldsymbol{0}) = d(\mx)$. Now we consider $\mx$ such that  $d(\mx) > d(\boldsymbol{0}) \ge 0$. Let $\mg \in \partial d(\mx)$ be a subgradient. Since $d$ is non-decreasing, we know $\mg \ge \boldsymbol{0}$. By convexity, $d(\boldsymbol{0}) \ge d(\mx) + \InAngles{\mg, \boldsymbol{0} - \mx}$ which implies $\InAngles{\mg, \mx} \ge d(\mx) - d(\boldsymbol{0}) > 0$ and thus $\mg \ge \boldsymbol{0}$. Now we define a price vector $\mq:= B \frac{\mg}{ \InAngles{\mg, \mx}}$, which guarantees $\InAngles{\mq, \mx} = B$. Moreover, for any $\my \in \-R^m_{\ge 0}$ such that $\InAngles{\mq, \my} = B$, we have $d(\my) \ge d(\mx) + \InAngles{\mg, \my - \mx} = d(\mx)$ by convexity of $d$ and $\InAngles{\mg,\my - \mx} = \frac{\InAngles{\mg, \mx}}{B} \InAngles{\mq, \my - \mx} =0$. Thus we have  $h(\mq, B) = \min_{\my \in \-R_{\geq 0}^m: \InAngles{\mq, \my} = B} d(\my) = d(\mx)$. This implies $\max_{\mp \in \-R_{\geq 0}^m: \InAngles{\mp, \mx} = B} h(\mp, B) \ge h(\mq,  B) = d(\mx)$. 

    Combining the above two inequalities proves the claimed equality.
\end{proof}

\paragraph{Properties of 1-Homogeneous Disutilities}
We now restrict our attention to disutility functions that are convex, non-decreasing, and 1-homogeneous. We observe that for the problem of computing CE with 1-homogeneous and convex disutility functions, \textbf{it is without loss of generality to assume that every non-zero allocation of chores leads to non-zero disutility to every agent.}
\begin{assumption}\label{assumption:chores}
    For any $i \in N$, the disutility functions satisfy $d_i(\mx) > 0$ for any $\mx \ne \boldsymbol{0}$.
\end{assumption}
We include a detailed justification of why the assumption is without loss of generality in \Cref{sec:assumption-chores}. The intuition is that if there exists $i \in N$ and $\my \ne 0$ with $\+Y = \{j \in M: y_j > 0\}$ such that $d_i(\my) = 0$, we can allocate all the chores $j \in \+Y$ to agent $i$, remove the chores in $\+Y$ from the problem and work on the smaller instance with chores $M \setminus \+Y$. This procedure does not affect the disutility of agent $i$. We can continue the procedure until every non-zero allocation of chores leads to non-zero disutility to every agent. In the following, we assume without loss of generality that \Cref{assumption:chores} holds.

\paragraph{Dual Disutility} Similar to the goods setting, we introduce the notion of a dual disutility function. For disutility $d$ and $B > 0$, we define its dual disutility $\dual{d}(\cdot) = \frac{1}{h(\cdot, B)}$ using the indirect disutility function. We remark that since $d(\cdot)$ is non-decreasing and 1-homogeneous (behaving as a norm), its indirect disutility has an interesting connection to the \textbf{dual norm} $d^*(\mathbf{p})$, which is defined as:
\begin{align*}d^*(\mathbf{p}) \triangleq \max_{\mathbf{x} \in \mathbb{R}_{\geq 0}^m :\, d(\mathbf{x}) \leq 1} \langle \mathbf{p}, \mathbf{x} \rangle.\numberthis \label{eq::dual-norm}\end{align*}
The relationship is given by $d^*(\mathbf{p}) = \frac{B}{h(\mathbf{p}, B)}$ for any $B > 0$. This follows because $d$ is $1$-homogeneous and both optimization problems are equivalent to $\max_{\mx \ge \boldsymbol{0}} \frac{\langle \mathbf{p}, \mathbf{x} \rangle}{d(\mathbf{x})}$.

\paragraph{A New Program for Chores}
We now present our program for computing CE for chores with general non-decreasing, convex, and $1$-homogeneous disutilities. The high-level idea of obtaining our program is to first construct a dual public chores market and its Nash social welfare maximization program, and then convert it back to a program for the Fisher market by exchanging the role of allocations and prices. 

Given the fisher market instance $\+F = \{N, M, \{d_i\}_i, \{B_i\}_i\}$, we consider the dual instance of Lindahl equilibrium $\+L = \{N, M, \{\dual{d}_i\}_i, \{B_i\}_i\}$. Recall that the dual disutility is $\dual{d}_i(\cdot) = \frac{1}{h(\cdot, B)}$. Then the Nash social welfare program for the dual market is 
\begin{align*}
\min_{\mx \in \mathbb{R}_{\geq 0}^m: \langle \mx, \mathbf{1} \rangle = \sum_i B_i}\,&\, \sum_i B_i \log \beta_i \\
\text{s.t.} \,&\, \beta_i \geq 1 / h_i(\mx, B_i) , \text{ for all } i \in N 
\end{align*}
In the program, we minimize the product of agents' disutilities subject to the allocation constraints that $\sum_jx_j\ge \sum_i B_i$. Then, by exchanging the role of allocations and prices, we get a program for the original Fisher market. To handle potential $p_j = 0$ at the boundary, we define an extension of $h$ from $\mathbb{R}_{\geq 0}^m \times \mathbb{R}_{>0}$ to $\mathbb{R}^m \times \mathbb{R}_{>0}$ by letting:$$\hat{h}(\mathbf{p}, B) = h(\max\{\mathbf{0}, \mathbf{p}\}, B)$$where $\max\{\mathbf{0}, \mathbf{p}\}$ is the component-wise maximum. We note that $\hat{h}(\cdot, B)$ and $h(\cdot,B)$ is the same function over the whole domain of $\-R^m_{\ge 0}$. We introduce the extension solely for technical reasons.

The final optimization program for finding a Fisher market equilibrium for chores with non-decreasing, convex, and $1$-homogeneous disutilities:
\begin{align*}\min_{\mathbf{p} \in \mathbb{R}_{\geq 0}^m: \langle \mathbf{p}, \mathbf{1} \rangle = \sum_i B_i}\,&\, \sum_i B_i \log \beta_i \\
\text{s.t.} \,&\, \beta_i \geq 1 / \hat{h}_i(\mathbf{p}, B_i), \text{ for all } i \in N \tag{Fisher-Chores}\label{Fisher-Chores}
\end{align*}
The \ref{Fisher-Chores} program has no poles and opens up the possibility of applying fast first-order methods in the general setting. Moreover, the KKT points of \ref{Fisher-Chores} correspond directly to the Fisher market equilibrium.

\begin{theorem} \label{theorem: fisher-chores-KKT}
    For a Fisher market for chores $\{N, M, \{d_i\}_i, \{B_i\}_i\}$ with non-decreasing, convex, and $1$-homogenous disutilities, any Karush-Kuhn-Tucker (KKT) point of \ref{Fisher-Chores}  corresponds to a Fisher market equilibrium with chores.
\end{theorem}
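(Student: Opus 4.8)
The plan is to write the KKT system of \ref{Fisher-Chores} directly in the variables $(\mathbf{p},\boldsymbol{\beta})$ and read off a competitive equilibrium from it. For agent $i$ set $\phi_i(\mathbf{p}) := 1/\hat h_i(\mathbf{p},B_i)$; on the feasible domain $\mathbf{p}\ge\mathbf{0}$ this equals $d_i^*(\mathbf{p})/B_i$ by \eqref{eq::dual-norm} and $1$-homogeneity of $d_i$, so the constraint $\beta_i\ge\phi_i(\mathbf{p})$ is genuinely convex, and the only role of the extension $\hat h$ is to make $\phi_i$ finite and convex on a full neighbourhood of each $\mathbf{p}\ge\mathbf{0}$, which is what lets us write a subgradient stationarity condition. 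The Lagrangian is $\sum_i B_i\log\beta_i+\sum_i\lambda_i\bigl(\phi_i(\mathbf{p})-\beta_i\bigr)-\sum_j\mu_j p_j+\nu\bigl(\langle\mathbf{p},\mathbf{1}\rangle-\sum_iB_i\bigr)$ with $\lambda_i,\mu_j\ge 0$. Stationarity in $\beta_i$ gives $\lambda_i=B_i/\beta_i$; since $\mathbf{p}\neq\mathbf{0}$ (because $\langle\mathbf{p},\mathbf{1}\rangle=\sum_iB_i>0$), we have $\hat h_i(\mathbf{p},B_i)=B_i/d_i^*(\mathbf{p})\in(0,\infty)$, so $\beta_i>0$ and hence $\lambda_i>0$ for every $i$. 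Complementary slackness then forces every constraint to be tight: $\beta_i=d_i^*(\mathbf{p})/B_i$.

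The heart of the argument is a Roy's-identity-type description of $\partial\phi_i(\mathbf{p})$, the lemma the paper advertises. By Danskin's theorem applied to $d_i^*(\mathbf{p})=\max\{\langle\mathbf{p},\mathbf{z}\rangle:\mathbf{z}\ge\mathbf{0},\ d_i(\mathbf{z})\le 1\}$ — whose feasible set is compact because $d_i$ is $1$-homogeneous and, under \Cref{assumption:chores}, positive off the origin — $\partial d_i^*(\mathbf{p})$ is the convex hull of the maximizers $\mathbf{z}^*$, and $\mathbf{z}^*\mapsto \tfrac{B_i}{d_i^*(\mathbf{p})}\mathbf{z}^*$ is a bijection onto $\demand_i(\mathbf{p})$. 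Since $\demand_i(\mathbf{p})$ is convex (minimizers of a convex function over a convex set), this yields $\partial\phi_i(\mathbf{p})=\tfrac1{B_i}\partial d_i^*(\mathbf{p})=\tfrac{d_i^*(\mathbf{p})}{B_i^2}\,\demand_i(\mathbf{p})$, and therefore, using $\lambda_i=B_i^2/d_i^*(\mathbf{p})$,
\[
\lambda_i\,\partial\phi_i(\mathbf{p})=\demand_i(\mathbf{p}).
\]
Stationarity in $\mathbf{p}$, namely $\mathbf{0}\in\sum_i\lambda_i\partial\phi_i(\mathbf{p})-\boldsymbol{\mu}+\nu\mathbf{1}$, thus produces allocations $\mathbf{x}_i\in\demand_i(\mathbf{p})$ with $\sum_i\mathbf{x}_i=\boldsymbol{\mu}-\nu\mathbf{1}$; this already gives condition~(i) of \Cref{definition:Fisher-chores} (each $\mathbf{x}_i$ minimizes $d_i$ subject to $\langle\mathbf{p},\mathbf{x}_i\rangle=B_i$), and it remains to get market clearing.

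Next I would pin down $\nu$. Each $\mathbf{x}_i\in\demand_i(\mathbf{p})$ satisfies $\langle\mathbf{p},\mathbf{x}_i\rangle=B_i$ by definition, so summing and using the primal feasibility $\langle\mathbf{p},\mathbf{1}\rangle=\sum_iB_i$ together with the complementary slackness $\sum_j\mu_jp_j=0$ gives $\sum_iB_i=\langle\mathbf{p},\textstyle\sum_i\mathbf{x}_i\rangle=\langle\mathbf{p},\boldsymbol{\mu}-\nu\mathbf{1}\rangle=-\nu\sum_iB_i$, hence $\nu=-1$. Consequently $\sum_i x_{ij}=1+\mu_j\ge 1$ for every chore $j$, with equality whenever $\mu_j=0$ and in particular whenever $p_j>0$. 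The coordinates with $p_j=0$ are the only loose end: there $\sum_i x_{ij}=1+\mu_j$ may strictly exceed $1$, but since $p_j=0$ one may freely reduce the $x_{ij}$'s by a total of $\mu_j$, keeping them nonnegative (possible as the current sum is $1+\mu_j\ge\mu_j$); this leaves $\langle\mathbf{p},\mathbf{x}_i\rangle$ unchanged, and since $d_i$ is non-decreasing while $\mathbf{x}_i$ was already disutility-minimizing on the earning hyperplane, the reduced bundle still lies in $\demand_i(\mathbf{p})$. After this adjustment, performed independently on each zero-price coordinate, $(\mathbf{p},\{\mathbf{x}_i\})$ satisfies both conditions of \Cref{definition:Fisher-chores}.

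I expect the main obstacle to be exactly this Roy's-identity step together with the boundary behaviour at zero prices. For linear disutilities the analogue is trivial, since the allocations literally appear as KKT multipliers of the constraints $p_j\le\beta_id_{ij}$; for general convex $1$-homogeneous $d_i$ one must instead (a) identify $d_i^*$ as the dual norm and justify Danskin/compactness via \Cref{assumption:chores}, (b) argue that the convex hull of the $d_i^*$-maximizers is precisely the scaled demand set, which is where convexity of $\demand_i(\mathbf{p})$ enters, and (c) control the subgradient of the extended $\hat h_i$ at a point where some $p_j=0$, where the $p_j$-component of $\partial\phi_i$ is a full interval $[0,c_{ij}]$; this interval is both why a KKT point can have $p_j=0$ at all and the slack, combined with $\mu_j\ge 0$, that makes the final redistribution to exact market clearing possible. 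A useful consistency check is that when every $d_i$ is strictly increasing one has $c_{ij}=0$, which forces $\mathbf{p}\gg\mathbf{0}$ at any KKT point and makes the redistribution step vacuous.
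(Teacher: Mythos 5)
Your proposal is correct and follows essentially the same route as the paper: extract the KKT multipliers, show $\lambda_i = B_i/\beta_i > 0$ forces the disutility constraints to be tight, read off the allocations $\mx_i = \lambda_i \mg_i$ from the stationarity condition via a Roy's-identity characterization of $\partial\bigl(1/\hat h_i(\cdot,B_i)\bigr)$, deduce $\nu=-1$ (the paper's $-\mu=1$) from $\InAngles{\mp,\mx_i}=B_i$ and $\InAngles{\mp,\boldsymbol 1}=\sum_i B_i$, and repair market clearing on zero-price chores by reducing allocations. The only real difference is that you derive the key subgradient-to-demand correspondence from Danskin's theorem applied to the dual norm $d_i^*(\mp)=\max\{\InAngles{\mp,\mz}:\mz\ge\boldsymbol 0,\ d_i(\mz)\le 1\}$ (with compactness from \Cref{assumption:chores}), whereas the paper proves it self-containedly as \Cref{lemma:Roy identity Chores} from the direct--indirect disutility duality of \Cref{lem:chores:indirect-duality}; these are equivalent in substance, and you correctly flag the one genuinely delicate point, namely controlling $\partial\phi_i$ at boundary prices where some $p_j=0$.
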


\paragraph{Roy's Identity for Chores} As we have discussed in the technical overview, a key step in proving \Cref{theorem: fisher-chores-KKT} is to find the allocation $\{\mx_i\}$ given a price vector $\mp$ from a KKT point of \ref{Fisher-Chores}. In the linear case, \citep{chaudhury2024competitive} observed that the allocations $\{\mx_i\}$ exactly correspond to the dual variables of the \ref{EG-Chores-Linear-Dual}. However, this observation no longer holds for general convex and $1$-homogeneous disutilities as we only have $N$ constraints on the disutilities. In the following, we use the duality between indirect and direct disutility (\Cref{lem:chores:indirect-duality}) to establish a Roy's identity-like characterization of the demand (\Cref{lemma:Roy identity Chores}).

We first present some properties of $h$ and its extension $\hat{h}$.
\begin{lemma}\label{lem::convex::1h}
Let $d(\cdot)$ be 1-homogeneous, convex, non-decreasing.
\begin{enumerate}
    \item The function $h(\mp, B)$ is (-1)-homogeneous in $\mp \geq 0$.
    \item $1 / \hat{h}(\mp, B)$ is convex and non-decreasing in $\mp \in \-R^{m}$.
\end{enumerate} 
\end{lemma}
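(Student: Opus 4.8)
The statement has two essentially independent parts, and I would prove them in that order.

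For Part 1, fix $\mp \ge \boldsymbol{0}$ with $\mp \neq \boldsymbol{0}$ and $\lambda > 0$. The constraint $\InAngles{\lambda\mp, \mx} = B$ is the same as $\InAngles{\mp, \lambda\mx} = B$, so the map $\mx \mapsto \lambda\mx$ is a bijection between the feasible sets defining $h(\lambda\mp, B)$ and $h(\mp, B)$; combined with $d(\mx) = \lambda^{-1} d(\lambda\mx)$ from $1$-homogeneity, this yields $h(\lambda\mp, B) = \lambda^{-1} h(\mp, B)$, i.e.\ $h(\cdot, B)$ is $(-1)$-homogeneous. (The case $\mp = \boldsymbol{0}$ is vacuous since a non-trivial $1$-homogeneous $d$ is unbounded, so $h(\boldsymbol{0}, B) = +\infty$.) Alternatively, this is immediate from the identity $h(\mp, B) = B/d^*(\mp)$ recorded in \eqref{eq::dual-norm}, as $d^*$, being a supremum of linear functionals, is positively $1$-homogeneous.

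For Part 2, I would start from that same identity: for $\mp \ge \boldsymbol{0}$ we have $1/h(\mp, B) = d^*(\mp)/B$, so by the definition $\hat{h}(\mp, B) = h(\max\{\boldsymbol{0}, \mp\}, B)$ and using $x_j \ge 0$,
\[
\frac{1}{\hat{h}(\mp, B)} \;=\; \frac{1}{B}\, d^*\!\big(\max\{\boldsymbol{0},\mp\}\big) \;=\; \frac{1}{B}\, \sup_{\mx \ge \boldsymbol{0}:\, d(\mx) \le 1}\ \sum_{j \in M} x_j \max\{0, p_j\}
\]
for every $\mp \in \-R^m$. For each fixed feasible $\mx$, the map $\mp \mapsto \sum_{j} x_j \max\{0, p_j\}$ is a nonnegative combination of the convex, non-decreasing functions $p_j \mapsto \max\{0, p_j\}$, hence convex and non-decreasing on $\-R^m$; and a pointwise supremum of convex (resp.\ non-decreasing) functions is convex (resp.\ non-decreasing). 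Therefore $1/\hat{h}(\cdot, B)$ is convex and non-decreasing on $\-R^m$. For monotonicity one can also simply combine that $h(\cdot, B)$ is non-increasing, by \Cref{lem::chores-indirect-ni-qc}, with the fact that $\mp \mapsto \max\{\boldsymbol{0}, \mp\}$ is non-decreasing.

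The only point needing care --- and where I would write the details most carefully --- is the well-definedness of $1/\hat{h}(\cdot, B)$ and the validity of the displayed identity at boundary prices (coordinates that are zero or negative). This is where \Cref{assumption:chores} enters: since $d$ is continuous, $1$-homogeneous, and strictly positive off $\boldsymbol{0}$, its minimum over the unit simplex is some $\delta > 0$, so $\{\mx \ge \boldsymbol{0} : d(\mx) \le 1\}$ is compact, $d^*$ is finite everywhere, and $\hat{h}(\mp, B) > 0$ whenever $\max\{\boldsymbol{0}, \mp\} \neq \boldsymbol{0}$; and when $\mp \le \boldsymbol{0}$ we have $\hat{h}(\mp, B) = +\infty$, read as $1/\hat{h}(\mp, B) = 0$, which is exactly the value of the right-hand side. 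With these checks the displayed identity holds for all $\mp \in \-R^m$ and the convexity/monotonicity argument above goes through verbatim.
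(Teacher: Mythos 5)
Your proof is correct. Part 1 is essentially the paper's own argument (the change of variables $\my = c\mx$ under the equality constraint), so there is nothing to add there. Part 2, however, takes a genuinely different route. The paper first proves that $1/h(\cdot,B)$ is convex on $\-R^m_{\ge 0}$ by combining the quasi-concavity of $h(\cdot,B)$ from \Cref{lem::chores-indirect-ni-qc} with the $(-1)$-homogeneity from Part 1, invoking the fact that a quasi-convex, non-decreasing, $1$-homogeneous function is convex; it then extends convexity from $\-R^m_{\ge 0}$ to all of $\-R^m$ in a separate step using the inequality $\max\{\ma,\boldsymbol{0}\}+\max\{\mb,\boldsymbol{0}\}\ge\max\{\ma+\mb,\boldsymbol{0}\}$ together with monotonicity of $1/h(\cdot,B)$. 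You instead represent $1/\hat{h}(\mp,B)$ directly as the pointwise supremum, over the fixed set $\{\mx\ge\boldsymbol{0}: d(\mx)\le 1\}$, of the convex non-decreasing functions $\mp\mapsto \frac{1}{B}\sum_j x_j\max\{0,p_j\}$, via the dual-norm identity \eqref{eq::dual-norm}. This yields convexity and monotonicity on all of $\-R^m$ in one stroke, without needing the quasi-concavity lemma or the homogeneity-implies-convexity theorem; the price is that you lean on the identity $d^*(\mp)=B/h(\mp,B)$ (which the paper records but does not use in its own proof of this lemma) and on \Cref{assumption:chores} to make the sublevel set compact and $d^*$ finite --- boundary checks you rightly flag and handle. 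Both arguments are sound; yours is arguably the cleaner of the two for Part 2.
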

\begin{proof}
     \textbf{(1)}  We first show $h(\cdot, B)$ is (-1)-homogeneous. For any $\mp \ne 0$, we have $h(\mp, B) = \min_{\mx \in \-R^m_{\ge 0}, \InAngles{\mp, \mx} = B} d(\mx) > 0$. Consider any $c > 0$, we have $h(c \mp, B) = \min_{\mx \in \-R^m_{\ge 0}, c\InAngles{\mp, \mx} = B} d(\mx) = \min_{\my \in \-R^m_{\ge 0}, \InAngles{\mp, \my} = B} d(\frac{\my}{c}) = \frac{1}{c} \cdot  \min_{\my \in \-R^m_{\ge 0}, \InAngles{\mp, \my} = B} d(\my) = \frac{1}{c} h(\mp, B)$. 

    \textbf{(2)} Since $h(\cdot, B)$ is quasi-concave (Lemma~\ref{lem::chores-indirect-ni-qc}) and (-1)-homogeneous, $1 / h(\cdot, B)$ is a quasi-convex and $1$-homogeneous, which further implies it is convex over $\-R^m_{\ge 0}$. Now we show $\hat{h}(\cdot, B)$ is convex on $\-R^m$: for any $\mp, \mq \in \-R^m$ and $\alpha \in (0,1)$, we have
    \begin{align*}
        \alpha \cdot \frac{1}{\hat{h}(\mp, B)} + (1 - \alpha)  \cdot \frac{1}{\hat{h}(\mq, B)} 
        & = \alpha \cdot \frac{1}{h(\max\{\mp, \mathbf{0}\}, B)} + (1 - \alpha)  \cdot \frac{1}{h(\max\{\mq, \mathbf{0}\}, B)} \\
        & \geq  \frac{1}{h(\alpha  \max\{\mp, \mathbf{0}\} + (1 - \alpha) \max\{\mq, \mathbf{0}\}, B)}  \\
        & \geq  \frac{1}{h(\max\{\alpha \mp+ (1 - \alpha) \mq, \mathbf{0}\}, B)} \\
        &=  \frac{1}{\hat{h}(\alpha \mp + (1 - \alpha) \mq, B)},
    \end{align*}
    where the first inequality holds by convexity of $\frac{1}{h(\cdot, B)}$, the second inequality holds because $\max\{\ma, \boldsymbol{0}\} + \max\{\mb, \boldsymbol{0}\} \ge \max\{\ma + \mb, \boldsymbol{0}\}$ and that $\frac{1}{h(\cdot, B)}$ is non-decreasing.
\end{proof}
The main technical lemma of this section is the following Roy's identity-like characterization of the demand given a price $\mp$. Specifically, we show that for any subgradient $\mg \in \partial_{\mp} (\frac{1}{h(\mp, B)})$, $\mx:= B\frac{\mg}{\InAngles{\mg, \mp}} = Bh(\mp, B)\mg$ is an optimal demand under $\mp$, that is, $\mx \in \demand(\mp, B) = \argmin_{\my \in \-R^m_{\ge 0}, \InAngles{\mp, \my} = B} d(\my)$. Here $\InAngles{\mg, \mp} = \frac{1}{h(\mp, B)}$ since $\frac{1}{h(\cdot,B)}$ is $1$-homogeneous and Euler's homogeneous function theorem.
\begin{lemma}[Roy's Identity for Chores]\label{lemma:Roy identity Chores}
    Let $d(\cdot)$ be 1-homogeneous, convex, non-decreasing. 
    \begin{enumerate}
    \item For any price $\mp \in \-R^m_{\ge 0} \setminus \{\boldsymbol{0}\}$ and a subgradient $\mg$ of $1 /  \hat{h}(\mp, B)$ with regard to $\mp$, then $\mx = B \frac{\mg}{\InAngles{\mg, \mp}} = B h(\mp, B) \mg$ is an optimal demand under price $\mp$. In other words, $\mx \in \demand(\mp, B)$.
    \item For any price $\mp \in \-R^m_{\ge 0} \setminus \{\boldsymbol{0}\}$ an $\mx \in \demand(\mp, B)$, we have $\mg = \mx / (B \cdot \hat{h}(\mp, B))$ is a subgradient of $1 / \hat{h}(\mp, B)$ with regard to $\mp$.
\end{enumerate} 
\end{lemma}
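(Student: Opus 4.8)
The plan is to read both directions off the subgradient inequality for the function $\phi(\cdot) := 1/\hat{h}(\cdot, B)$, combined with the direct/indirect disutility duality (\Cref{lem:chores:indirect-duality}). By \Cref{lem::convex::1h}(2) the function $\phi$ is convex and non-decreasing on all of $\-R^m$, and it is positively homogeneous of degree $1$: for $c>0$ one has $\max\{c\mp,\boldsymbol 0\}=c\max\{\mp,\boldsymbol 0\}$, so $\hat h(c\mp,B)=h(c\max\{\mp,\boldsymbol 0\},B)=\tfrac1c\hat h(\mp,B)$ using the $(-1)$-homogeneity of $h$ from \Cref{lem::convex::1h}(1), whence $\phi(c\mp)=c\,\phi(\mp)$. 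Two preliminary facts will be used repeatedly. First, for any $\mg\in\partial\phi(\mp)$ and any $t>0$, the subgradient inequality at $t\mp$ together with $\phi(t\mp)=t\phi(\mp)$ gives $(t-1)\bigl(\phi(\mp)-\InAngles{\mg,\mp}\bigr)\ge 0$; letting $t$ range over $(0,1)$ and $(1,\infty)$ forces $\InAngles{\mg,\mp}=\phi(\mp)=1/h(\mp,B)$ (here $h(\mp,B)<\infty$ and, by \Cref{assumption:chores}, $h(\mp,B)>0$ since $\mp\ne\boldsymbol 0$, so this quantity is a positive real). Second, since $\phi$ is non-decreasing, every subgradient satisfies $\mg\ge\boldsymbol 0$; combined with $\InAngles{\mg,\mp}=\phi(\mp)>0$ this also yields $\mg\ne\boldsymbol 0$.

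\textbf{Part (1).} Given $\mg\in\partial\phi(\mp)$, set $\mx:=B\,h(\mp,B)\,\mg = B\mg/\InAngles{\mg,\mp}$. Then $\mx\ge\boldsymbol 0$, $\mx\ne\boldsymbol 0$, and $\InAngles{\mp,\mx}=B\,h(\mp,B)\InAngles{\mp,\mg}=B$, so $\mx$ is feasible for the program defining $h(\mp,B)$ and hence $d(\mx)\ge h(\mp,B)$. For the reverse inequality, rewrite the subgradient inequality: for every $\mq\ge\boldsymbol 0$, using $\InAngles{\mg,\mp}=\phi(\mp)$ and $\hat h(\mq,B)=h(\mq,B)$,
$$\frac{1}{h(\mq,B)}=\phi(\mq)\;\ge\;\phi(\mp)+\InAngles{\mg,\mq-\mp}=\InAngles{\mg,\mq}=\frac{\InAngles{\mx,\mq}}{B\,h(\mp,B)},$$
that is, $\InAngles{\mx,\mq}\,h(\mq,B)\le B\,h(\mp,B)$ for all $\mq\ge\boldsymbol 0$. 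Now apply \Cref{lem:chores:indirect-duality} to the nonzero vector $\mx$: there exists $\mq^{*}\ge\boldsymbol 0$ with $\InAngles{\mq^{*},\mx}=B$ and $h(\mq^{*},B)=d(\mx)$. Substituting $\mq=\mq^{*}$ gives $B\,d(\mx)\le B\,h(\mp,B)$, so $d(\mx)\le h(\mp,B)$. Therefore $d(\mx)=h(\mp,B)$, i.e. $\mx\in\demand(\mp,B)$.

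\textbf{Part (2).} Given $\mx\in\demand(\mp,B)$ we have $d(\mx)=h(\mp,B)$, $\InAngles{\mp,\mx}=B$, and $\mx\ne\boldsymbol 0$ (hence $d(\mx)>0$ by \Cref{assumption:chores}). Put $\mg:=\mx/\bigl(B\,\hat h(\mp,B)\bigr)=\mx/\bigl(B\,h(\mp,B)\bigr)\ge\boldsymbol 0$. Then $\InAngles{\mg,\mp}=\InAngles{\mx,\mp}/\bigl(B\,h(\mp,B)\bigr)=1/h(\mp,B)=\phi(\mp)$, so the subgradient inequality $\phi(\mq)\ge\phi(\mp)+\InAngles{\mg,\mq-\mp}$ reduces to showing $\phi(\mq)\ge\InAngles{\mg,\mq}$ for every $\mq\in\-R^m$. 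Writing $\mq^{+}:=\max\{\mq,\boldsymbol 0\}$ and using $\mx\ge\boldsymbol 0$, we have $\InAngles{\mg,\mq}\le\InAngles{\mg,\mq^{+}}=\InAngles{\mx,\mq^{+}}/\bigl(B\,h(\mp,B)\bigr)$. If $\mq^{+}=\boldsymbol 0$ then $\InAngles{\mx,\mq^{+}}=0=\phi(\mq)$ and we are done; otherwise, the dual-norm identity $d^{*}(\mq^{+})=B/h(\mq^{+},B)$ together with $d^{*}(\mq^{+})=\max_{\my\ge\boldsymbol 0,\,\my\ne\boldsymbol 0}\InAngles{\mq^{+},\my}/d(\my)\ge\InAngles{\mq^{+},\mx}/d(\mx)$ gives $\InAngles{\mx,\mq^{+}}\le d(\mx)\,d^{*}(\mq^{+})=B\,h(\mp,B)/h(\mq^{+},B)$, hence $\InAngles{\mg,\mq}\le 1/h(\mq^{+},B)=1/\hat h(\mq,B)=\phi(\mq)$. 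This proves $\mg\in\partial_{\mp}\bigl(1/\hat h(\mp,B)\bigr)$.

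\textbf{Main obstacle.} Most of the argument is bookkeeping: the homogeneity identity $\InAngles{\mg,\mp}=\phi(\mp)$, the nonnegativity of subgradients, and the $\hat h$-versus-$h$ positive-part reduction are all routine. The one genuinely load-bearing step is in Part (1): converting the \emph{family} of inequalities $\InAngles{\mx,\mq}\,h(\mq,B)\le B\,h(\mp,B)$ (valid for all $\mq\ge\boldsymbol 0$) into the single statement $d(\mx)\le h(\mp,B)$ by evaluating at the duality-maximizing price $\mq^{*}$ supplied by \Cref{lem:chores:indirect-duality}. Everything else is a direct unwinding of the definitions of $h$, $\hat h$, and the dual norm.
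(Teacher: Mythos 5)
Your proof is correct and takes essentially the same route as the paper's: both parts rest on the convexity and monotonicity of $1/\hat{h}(\cdot,B)$ from \Cref{lem::convex::1h}, the homogeneity identity $\InAngles{\mg,\mp}=1/h(\mp,B)$, the direct/indirect duality of \Cref{lem:chores:indirect-duality}, and the positive-part reduction for $\mq\notin\-R^m_{\ge 0}$. The only cosmetic differences are that in Part (1) you evaluate the subgradient inequality at the duality-maximizing price $\mq^{*}$ rather than bounding $h(\mq,B)$ over the whole hyperplane $\InAngles{\mq,\mx}=B$, in Part (2) you route through the dual-norm identity $d^{*}(\mq^{+})=B/h(\mq^{+},B)$ instead of the explicit rescaling $\hat{\mq}=\frac{B}{\InAngles{\mq,\mx}}\mq$, and you prove the Euler identity for subgradients explicitly where the paper merely asserts it.
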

\begin{proof}
    \textbf{(1)} Fix any $\mp \in \-R^m_{\ge 0} \setminus \{\boldsymbol{0}\}$. Let $\mg$ be a subgradient of $1 / \hat{h}(\mp, B)$. Since $\frac{1}{\hat{h}(\cdot, B)}$ is convex by \Cref{lem::convex::1h}, we have for any $\mq \ge \boldsymbol{0}$,  
    \begin{align}
      \frac{1}{\hat{h}(\mq, B)} \ge  \frac{1}{\hat{h}(\mp, B)} + \InAngles{\mg, \mq - \mp}  \Rightarrow \frac{1}{h(\mq, B)} \ge \frac{1}{h(\mp, B)} + \InAngles{\mg, \mq - \mp}  \label{eq:convex:h}
    \end{align}
    where the first inequality implies the second by the definition of $\hat{h}$ and that both $\mp, \mq \ge \boldsymbol{0}$. 
    
    We remark that $\mg$ satisfies the following two properties: \begin{enumerate} 
    \item $\mg \geq 0$. This is because $\frac{1}{\hat{h}(\cdot, B)}$ is non-decreasing by \Cref{lem::convex::1h};
    \item $\InAngles{\mg, \mp} > 0$. This is because if $\InAngles{\mg, \mp} = 0$, then by \eqref{eq:convex:h}, $1 / h(\mq, B) \geq 1 / h(\mp, B)$ for any $\mq = \alpha \mp$ for any $\alpha$,  and this leads to a contradiction as $1 / h(\mp, B)$ is a $1$-homogeneous (\Cref{lem::convex::1h}) and non-zero function over $\-R^m_{\ge 0} \setminus \{\boldsymbol{0}\}$ (\Cref{assumption:chores}).
    \end{enumerate}
    
    Now we show that the allocation $\mx := B \frac{\mg}{\InAngles{\mg, \mp}} = B h(\mp, B) \mg$ is an optimal demand under price $\mp$.  We have $\InAngles{\mp, \mx} = B$ by definition. Moreover, for any other $\mq \in \-R^m_{\ge 0}$ such that $\InAngles{\mq, \mx} = B$, we have $\InAngles{\mg, \mq - \mp} = \frac{\InAngles{\mg, \mp}}{B} \InAngles{\mx, \mq - \mp}  = 0$, which implies 
    $h(\mq, B) \leq h(\mp, B)$ from \eqref{eq:convex:h}. By Lemma~\ref{lem:chores:indirect-duality}, $d(\mx) = h(\mp, B)$ and $\InAngles{\mp, \mx} = B$, which implies $\mx \in \demand(\mp, B)$ is an optimal demand.

    \textbf{(2)} Fix any $\mp \in \-R^m_{\ge 0} \setminus \{\boldsymbol{0}\}$ and an optimal demand $\mx = \demand(\mp, B)$, we claim that $\mg := \frac{\mx}{B \cdot \hat{h}(\mp, B)} = \frac{\mx}{B \cdot h(\mp, B)} \ge \boldsymbol{0}$ is a subgradient of $\frac{1}{\hat{h}(\mp, B)}$. That is, we need to show that 
    \begin{align}\label{eq:allocation->subgradient}
        \frac{1}{\hat{h}(\mq, B)} \ge \frac{1}{\hat{h}(\mp, B)} + \InAngles{\mg, \mq - \mp} , \forall \mq \in \-R^m.
    \end{align}

    We first show that \eqref{eq:allocation->subgradient} holds for any $\mq \geq 0$. By definitions of $\hat{h}(\cdot, B)$ and $\mg = \frac{\mx}{B \cdot h(\mp, B)}$, we know that \eqref{eq:allocation->subgradient} is equivalent to 
    \begin{align*}
        \frac{1}{h(\mq, B)} \ge \frac{\InAngles{\mx, \mq}}{B \cdot h(\mp, B)} 
    \end{align*}
    We consider two cases. In the case where $\InAngles{\mx, \mq} = 0$,  the above inequality thus \eqref{eq:allocation->subgradient} holds immediately. In the remaining case where $\InAngles{\mx, \mq} > 0$, we define $\hat{\mq} = \frac{B}{\InAngles{\mq, \mx}} \mq$. Since $\InAngles{\hat{\mq}, \mx} = B$, we have $h(\mp, B) \geq h(\hat{\mq}, B)$ by Lemma~\ref{lem:chores:indirect-duality}. Since $h$ is (-1)-homogeneous (\Cref{lem::convex::1h}), this further implies $h(\mp, B) \geq \frac{\InAngles{\mq, \mx}}{B} \cdot  h(\mq, B)$ and thus the equivalent condition \eqref{eq:allocation->subgradient}. This completes the proof of \eqref{eq:allocation->subgradient} for $\mq \ge 0$. 
    
    Now we show that \eqref{eq:allocation->subgradient} holds for $\mq \ne \-R^m_{\ge 0}$ by reducing it to the previous case. Fix any $\mq \ne \-R^m_{\ge 0}$,  we have
    \begin{align*}
        \frac{1}{\hat{h}(\mq, B)} = \frac{1}{h(\max\{\mq, \boldsymbol{0}\}, B)} \ge \frac{1}{h(\mp, B)} + \InAngles{\mg, \max\{\mq, \boldsymbol{0}\}- \mp} \ge \frac{1}{\hat{h}(\mp, B)} + \InAngles{\mg, \mq - \mp},
    \end{align*}
    where the first inequality holds since $\max\{\mq, \boldsymbol{0}\} \ge \boldsymbol{0}$ and \eqref{eq:allocation->subgradient} holds for such a vector $\max\{\mq, \boldsymbol{0}\}$; the second inequality holds by definition of $\hat{h}$ and the facts that $\mg \ge 0$ and $\max\{\mq, \boldsymbol{0}\} \ge \mq$.
\end{proof}

\begin{proof}[Proof of \Cref{theorem: fisher-chores-KKT}]
    Let $\{\mp, \{\beta_i\}_{i \in [N]}\}$ be a KKT point of \ref{Fisher-Chores}. Let $\{\lambda_i\}_{i \in N}$ be the multipliers for the inequality constraints $\{\frac{1}{\hat{h_i}(\mp, B)} - \beta_i \le 0 \}$, $\{\gamma_j\}_{j \in M}$ be the multipliers for the inequality constraints $\{-p_j \le 0\}$, $\{\alpha_i\}_{i \in N}$ be the multipliers for the inequality constraints $\{- \beta_i \le 0\}$, and $\mu$ be the multiplier for the equality constraint $\sum_{j} p_j = \sum_i B_i$. By KKT conditions, there exist subgradients $\{\mg_i \in \partial_\mp (\frac{1}{\hat{h}_i(\mp,B)})\}$ such that
    \begin{itemize}
        \item[1.] $p_j \ge 0$ for $j \in M$, $\beta_i > 0$ for  $i \in N$;
        \item[2.] $\sum_j p_j = \sum_i B_i$ and $\beta_i \ge \frac{1}{\hat{h}_i(\mp, B)}$ for all $i \in N$;
        \item[3.] $\lambda_j \ge 0$ and  $\gamma_j \ge 0$ for $j \in M$; $\alpha_i \ge 0$ for $i \in N$;
        \item[4.] $\gamma_j p_j =0$ for  $j \in M$; $\lambda_i (\frac{1}{\hat{h}_i(\mp, B)} - \beta_i)=0$ and $\alpha_i \beta_i = 0$ for $i \in N$;
        \item[5.] $\sum_i \lambda_i \mg_{ij} - \gamma_j + \mu  = 0$ for $j \in M$;
        \item[6.] $\frac{B_i}{\beta_i} - \lambda_i - \alpha_i = 0$ for $i \in N$.
    \end{itemize}
    We remark that (1) and (2) contain primal feasibility conditions; (3) contains dual feasibility conditions; (4) contains the complementary slackness conditions; (5) and (6) contain the stationarity conditions. By conditions (1), (3), (4), we have $\alpha_i = 0$ for all $i \in N$.  By conditions (1) and (6), we have $\lambda_i = \frac{B_i}{\beta_i} > 0$, which further implies $\beta_i = \frac{1}{\hat{h}_i(\mp, B)}$ by conditon (4) and thus $\lambda_i = B_i\cdot\hat{h}_i(\mp,B)$. 

    Now we define $\mx_i = \lambda_i \mg_i =  B_i \hat{h}(\mp, B) \mg_i$. We claim that $\{\mp, \{\mx_i\}\}$ is a CE. We note that by \Cref{lemma:Roy identity Chores}, $\mx_i \in \argmin_{\my_i \in \-R^m_{\ge 0}: \InAngles{\mp, \my_i} = B_i} d_i(\my_i)$ satisfies the earning constraint and the disutility-minimizing condition for all $i \in N$. It remains to show the market-clearing condition that $\sum_j x_{ij} = 1$ for all $j \in M$. We note that if $p_j > 0$, then by (4) we have $\gamma_j = 0$, and then by (5), we have 
    $\sum_ix_{ij} = -\mu$; if $p_j = 0$, then by (3) we have $\lambda_j \ge 0$ and then by (4) we have $\sum_j x_{ij} = -\mu + \lambda_j \ge -\mu$. With this in mind,  we have
    \begin{align*}
        \sum_i B_i = \sum_i \sum_j p_j x_{ij} = \sum_j p_j \cdot \InParentheses{\sum_i x_{ij}} = \InParentheses{\sum_{j: p_j >0} p_j} \cdot (-\mu) = - \mu \cdot \sum_j p_j
    \end{align*}
    Since by (2) we have $\sum_j p_j = \sum_i B_i$, we have $-\mu = 1$. Thus for any $j$ such that $p_j > 0$, we have $\sum_j x_{ij} = -\mu = 1$. The only remaining issues are $\{j: p_j =0\}$. For these chores, we have $\sum_i x_{ij}=-\mu+\lambda_j \ge 1$. But since $p_j = 0$ and $\{d_i\}$ are non-decreasing, we can always reduce the allocation of these chores to $1$ without affecting the earning condition or the disutility-minimizing constraint for any agent\footnote{This is also intuitive since even before the modification, $\mx_i$ already minimizes disutility subject to the earning constraint for each agent $i$}. After the modification, $\{\mp,\{\mx_i\}\}$ becomes a CE.

    Conversely, given a CE $(\mp, \{\mx_i\})$, we can set the subgradient $\mg_i = \frac{\mx_i}{B \cdot \hat{h}_i(\mp, B)}$ (\Cref{lemma:Roy identity Chores}), $\beta_i = 1 / \hat{h}_i(\mp, B_i)$,  $\lambda_i = \frac{B_i}{\beta_i}$, $\gamma_j = 0$, $\mu = -1$, and $\alpha_i = 0$. All KKT conditions are satisfied.
\end{proof}

\subsection{Lindahl Equilibrium with Public Chores}\label{sec:public chores}

In contrast to private chores, literature on the fair division of \emph{public chores} (where a single allocation level is applied to all agents) is very sparse. A primary reason may be the lack of meaningful fairness notions; for example, since every agent "consumes" the same bundle of public chores, the concept of envy-freeness is vacuous.

Drawing on the duality between private and public goods, we propose the notion of a Lindahl equilibrium for public chores. In this framework, $\mathcal{L} = (N, M, \{d_i\}, \{B_i\})$, each agent $i \in N$ has an earning constraint $B_i > 0$ and a non-decreasing disutility function $d_i: \mathbb{R}_{\geq 0}^m \to \mathbb{R}_{\geq 0}$. The equilibrium consists of a single public chore allocation $\mathbf{x}$ and a set of personalized prices $\{\mathbf{p}_i\}$ for each agent.

\begin{definition} [Lindahl Equilibrium for Chores]
Given a Lindahl equilibrium instance $\mathcal{L}$, the allocation and personalized prices $(\mathbf{x}, \{\mathbf{p}_i\})$ form a Lindahl equilibrium if the following hold:
\begin{enumerate}
    \item \emph{Disutility Minimization}: The allocation $\mathbf{x}$ is affordable and disutility-minimizing for every agent at their personalized prices:
    $\mathbf{x} \in \argmin_{\mathbf{x}' \in \mathbb{R}_{\geq 0}^m: \langle \mathbf{p}_i, \mathbf{x}' \rangle = B_i} d_i(\mathbf{x}')$ for every agent $i \in N$.
    \item \emph{Price Feasibility}: The personalized prices sum to a "market" price (normalized to 1 for each chore):
    For every chore $j \in M$, $\sum_{i \in N} p_{ij} = 1$.
\end{enumerate}
\end{definition}
We show that a Lindahl equilibrium allocation is always weakly Pareto-optimal for non-decreasing disutilities and is Pareto-optimal for increasing disutilities.
\begin{definition}[Pareto-Optimality] An allocation $\mx$ is \emph{Pareto-optimal} if there is no other feasible allocation $\my$ such that $\sum_j y_j = \sum_i B_i$, and $d_i(\my) \le d_i(\mx)$ for every agent $i \in N$ with at least one strict inequality.  An allocation $\mx$ is \emph{weakly Pareto-optimal} if there is no other feasible allocation $\my$ such that  $d_i(\my) < d_i(\mx)$ for every agent $i \in N$.
\end{definition}

\begin{theorem}
    Consider a public chores instance $\+L = \{N, M, \{d_i\}, \{B_i\}\}$ with non-decreasing disutilities. If $(\mx, \{\mp_i\})$ is a Lindahl equilibrium, then $\mx$ is weakly Pareto-optimal. If additionally each $d_i$ is increasing, then $\mx$ is Pareto-optimal.
\end{theorem}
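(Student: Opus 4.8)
The plan is to run the first welfare theorem argument, using the personalized Lindahl prices $\{\mp_i\}$ as supporting prices. Two preliminary observations come first. Since $\mx \in \argmin_{\InAngles{\mp_i,\cdot} = B_i} d_i$, the allocation $\mx$ itself lies in the constraint set, so $\InAngles{\mp_i,\mx} = B_i$ for every $i$; summing over $i$ and using price feasibility $\sum_i p_{ij} = 1$ gives $\sum_j x_j = \sum_i \InAngles{\mp_i,\mx} = \sum_i B_i$, so $\mx$ is feasible. Second, I would prove a \emph{monotone scaling lemma}: $\mx$ in fact minimizes $d_i$ over the larger set $\{\mz \ge \boldsymbol{0} : \InAngles{\mp_i,\mz} \ge B_i\}$. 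Indeed, the case $\InAngles{\mp_i,\mz} = B_i$ is immediate, and if $\InAngles{\mp_i,\mz} > B_i$ then $\mz' := \tfrac{B_i}{\InAngles{\mp_i,\mz}}\mz \le \mz$ satisfies $\InAngles{\mp_i,\mz'} = B_i$, hence $d_i(\mx) \le d_i(\mz') \le d_i(\mz)$ by monotonicity of $d_i$. Equivalently, $d_i(\mz) < d_i(\mx)$ forces $\InAngles{\mp_i,\mz} < B_i$.

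For \emph{weak} Pareto-optimality, suppose toward a contradiction that a feasible $\my$ (so $\sum_j y_j = \sum_i B_i$) has $d_i(\my) < d_i(\mx)$ for all $i$. The scaling lemma gives $\InAngles{\mp_i,\my} < B_i$ for every $i$, and summing with price feasibility yields $\sum_i B_i = \sum_j y_j = \sum_i \InAngles{\mp_i,\my} < \sum_i B_i$, a contradiction. This step uses only that the $d_i$ are non-decreasing, which matches the first half of the statement.

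For Pareto-optimality, assume additionally that each $d_i$ is increasing, and suppose $\my$ is feasible with $d_i(\my) \le d_i(\mx)$ for all $i$ and $d_{i_0}(\my) < d_{i_0}(\mx)$ for some $i_0$. As above, $\InAngles{\mp_{i_0},\my} < B_{i_0}$. The crucial claim is that $\InAngles{\mp_i,\my} \le B_i$ for every remaining agent $i$; granting it, $\sum_i B_i = \sum_j y_j = \sum_i \InAngles{\mp_i,\my} < \sum_i B_i$ (strict because of $i_0$), a contradiction, so $\mx$ is Pareto-optimal. To prove the claim, suppose $\InAngles{\mp_i,\my} > B_i$; the scaling lemma forces $d_i(\my) \ge d_i(\mx)$, hence equality, and then I would perturb $\my$ slightly downward by lowering \emph{every} coordinate, which keeps $\InAngles{\mp_i,\cdot} \ge B_i$ since that constraint is strictly slack, reaching a point $\my'$ with $d_i(\my') < d_i(\my) = d_i(\mx)$ by strict monotonicity ("increasing"), contradicting $d_i(\my') \ge d_i(\mx)$ from the scaling lemma. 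This is exactly the chores analogue of the role played by local nonsatiation in the classical first welfare theorem.

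The main obstacle is this last perturbation when $\my$ has zero coordinates: "increasing" ($\ma \gg \mb \Rightarrow d(\ma) > d(\mb)$) only guarantees a strict drop in $d_i$ when every coordinate can be strictly decreased, so one must first argue that, after replacing $\my$ by an equivalent feasible allocation if necessary, the coordinates relevant to the over-earning agent may be taken strictly positive — or otherwise perform a careful support analysis to localize the argument to the support of $\my$. Everything else reduces to routine rescalings and the bookkeeping with $\sum_i p_{ij} = 1$.
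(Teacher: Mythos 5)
Your proof is correct and follows essentially the same first-welfare-theorem argument as the paper: establish the scaling lemma ($d_i(\my) < d_i(\mx) \Rightarrow \InAngles{\mp_i, \my} < B_i$, and its weak-inequality analogue under increasingness), then sum the budget inequalities against the price-feasibility condition $\sum_i p_{ij} = 1$ to reach $\sum_i B_i > \sum_i B_i$. The zero-coordinate obstacle you flag in the Pareto-optimality step is real but is equally present in the paper's own proof, which scales $\my$ down onto the budget line and asserts the strict inequality $d_i(\my') < d_i(\my)$ --- a step that, under the paper's definition of ``increasing'' ($\ma \gg \mb \Rightarrow d(\ma) > d(\mb)$), likewise requires the relevant coordinates of $\my$ to be strictly positive --- so your write-up is, if anything, more candid about this point than the paper is.
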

\begin{proof}
    Suppose $\mx$ is not weakly Pareto-optimal and there exists $\my \in \-R^m$ such that $\sum_j y_j = \sum_i B_i$, and $d_i(\my) < d_i(\mx)$ for every agent $i \in N$. Then for every agent $i \in N$, we must have $\InAngles{\mp_i, \my} < B_i$ by the disutility-minimizing constraint of the Lindahl equilibrium. This implies $\sum_{i} \InAngles{\mp_i, \my} < \sum_i B_i$. However, this leads to the following contradiction:
    \begin{align*}
        \sum_i B_i > \sum_{i} \InAngles{\mp_i, \my} = \sum_j \sum_i p_{ij} y_j = \sum_{j} y_j = \sum_i B_i,
    \end{align*}
    where we use the price feasibility condition of the Lindahl equilibrium: $\sum_i p_{ij} = 1$ for each $j \in M$. 

    Now let us consider the case where each $d_i$ is increasing. Suppose that $\mx$ is no Pareto-optimal and there exists $\my$ such that $\sum_j y_j = \sum_i B_i$, and $d_i(\my) \le d_i(\mx)$ for every agent $i \in N$ with at least one strict inequality. We claim that $\InAngles{\mp_i, \my} \le B_i$ for each $i \in N$ with at least one strict inequality. This is because if $\InAngles{\mp_i, \my} > B_i$, the allocation $\my' = \frac{\InAngles{\mp_i, \my}}{B_i} \my$ satisfies $\InAngles{\mp_i, \my'} = B_i$ and $d_i(\my') < d_i(\my) \le d_i(\mx)$ violating the disutility-minimization property of $\mx$. Thus we get $\sum_{i} \InAngles{\mp_i, \my} < \sum_i B_i$, which leads to the contraction that $\sum_i B_i > \sum_i B_i$ by steps in the first case.
\end{proof}

Similar to the case of goods, we introduce the concept of dual disutility.
\begin{definition}
    Given a disutility function $d$ and budget $B > 0$, its dual disutility is $\dual{d}$ such that $\dual{d}(\mx) = \frac{1}{h(\mx, B)}$, where $h$ is the indirect disutility associated with $d$.
\end{definition}
We note that when $d(\cdot)$ is homogeneous and convex, the dual disutility is closely related to the dual norm \eqref{eq::dual-norm}.

Building on Lemma~\ref{lem:chores:indirect-duality}, we establish a direct equivalence between the Lindal equilibrium for public chores and a Fisher market equilibrium with chores. 
\begin{theorem}\label{theorem:chores-duality}
    Consider a public chores market $\+L = \{N, M, \{d_i\}, \{B_i\}\}$ where $\{d_i\}$ is convex and non-decreasing. Define the dual private chores market $\mathcal{F} = (N, M, \{\dual{d}_i\}, \{B_i\})$ where $\dual{d}_i$ are the dual utilities of $d_i$. Then $(\mx, \{\mp_i\})$ is a Lindahl equilibrium of $\+L$ if and only if $(\{\dual{\mx}_i\}, \dual{\mp})$ is a Fisher market equilibrium of $\+F$, where $\dual{\mx}_i = \mp_i$ and $\dual{\mp} = \mx$.
\end{theorem}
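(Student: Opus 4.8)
The plan is to mirror the proof of \Cref{thm::main-duality}: I first record an equivalent reformulation of the Lindahl equilibrium for public chores in which the disutility-minimization condition on $\mx$ is replaced by an \emph{indirect-disutility-maximization} condition on $\mp_i$, and then observe that, after exchanging the roles of allocations and prices and passing to the dual disutilities $\dual{d}_i(\cdot)=1/h_i(\cdot,B_i)$, this reformulation is literally the definition of a Fisher market equilibrium for chores in $\+F$.

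The crux is the chores analogue of \Cref{lemma:equivalence between fisher (ii)}: for $d_i$ convex, non-decreasing, $B_i>0$, and $\mx,\mp_i\in\-R^m_{\ge 0}$ with $\mx\neq\boldsymbol{0}$,
\[
\mx \in \argmin_{\mx' \in \-R^m_{\ge 0}:\, \InAngles{\mp_i, \mx'} = B_i} d_i(\mx')
\quad\Longleftrightarrow\quad
\mp_i \in \argmax_{\mq \in \-R^m_{\ge 0}:\, \InAngles{\mq, \mx} = B_i} h_i(\mq, B_i).
\]
For the forward direction, membership in the left-hand set forces $\InAngles{\mp_i,\mx}=B_i$ and, by definition of the indirect disutility, $h_i(\mp_i,B_i)=d_i(\mx)$; since \Cref{lem:chores:indirect-duality} gives $d_i(\mx)=\max_{\mq:\,\InAngles{\mq,\mx}=B_i} h_i(\mq,B_i)$, the price $\mp_i$ attains that maximum. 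For the converse, a maximizer $\mp_i$ satisfies $\InAngles{\mp_i,\mx}=B_i$ and, again by \Cref{lem:chores:indirect-duality}, $h_i(\mp_i,B_i)=d_i(\mx)$; since by definition $h_i(\mp_i,B_i)=\min_{\mx':\,\InAngles{\mp_i,\mx'}=B_i} d_i(\mx')$ and $\mx$ is feasible for this minimization, $\mx$ attains the minimum.

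To assemble the theorem: in any Lindahl equilibrium for public chores, summing the earning constraints $\InAngles{\mp_i,\mx}=B_i$ against price feasibility $\sum_i p_{ij}=1$ yields $\sum_j x_j=\sum_i B_i>0$, so $\mx\neq\boldsymbol{0}$ and the lemma applies. Because $t\mapsto 1/t$ is strictly decreasing on $(0,\infty)$ and $h_i(\mq,B_i)>0$ on the relevant feasible set, the condition $\mp_i\in\argmax_{\mq:\,\InAngles{\mq,\mx}=B_i} h_i(\mq,B_i)$ is equivalent to $\mp_i\in\argmin_{\mq:\,\InAngles{\mq,\mx}=B_i}\dual{d}_i(\mq)$. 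Substituting $\dual{\mx}_i=\mp_i$ and $\dual{\mp}=\mx$, this is exactly the disutility-minimization condition~(i) of \Cref{definition:Fisher-chores} for agent $i$ in $\+F$, while price feasibility $\sum_i p_{ij}=1$ becomes exactly market clearing $\sum_i \dual{x}_{ij}=1$. Running the same chain of equivalences in reverse (with the roles of $\mp$ and $\mx$ swapped) gives the other implication, so $(\mx,\{\mp_i\})$ is a Lindahl equilibrium of $\+L$ iff $(\{\dual{\mx}_i\},\dual{\mp})$ is a Fisher market equilibrium of $\+F$.

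The main obstacle is the degenerate behavior hidden inside $1/h$: one must handle prices on the boundary of $\-R^m_{\ge 0}$, the possibility that $d_i$ is unbounded (so $h_i$ can be $+\infty$ at $\mp=\boldsymbol{0}$, which is why the reformulated condition restricts to $\mq$ with $\InAngles{\mq,\mx}=B_i>0$, forcing $\mq\neq\boldsymbol{0}$), and verifying that $\dual{d}_i=1/h_i(\cdot,B_i)$ is a legitimate disutility function so that $\+F$ is a well-posed chores market---this uses that $h_i(\cdot,B_i)$ is non-increasing and finite at every $\mp>\boldsymbol{0}$ (\Cref{lem::chores-indirect-ni-qc}), so $\dual{d}_i$ is non-decreasing and not identically zero. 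A mild but pleasant simplification relative to the goods setting is that the chores definitions use the \emph{equality} earning/price constraints $\InAngles{\mp_i,\mx}=B_i$ and $\sum_i p_{ij}=1$, so there are no ``whenever positive'' clauses to track and the exchange of allocations and prices is completely direct.
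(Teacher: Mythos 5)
Your proposal is correct and follows exactly the route the paper intends: the paper gives no explicit proof of \Cref{theorem:chores-duality}, stating only that it is established ``building on'' \Cref{lem:chores:indirect-duality} in analogy with \Cref{thm::main-duality}, and your argument---reformulating the disutility-minimization condition as indirect-disutility maximization via \Cref{lem:chores:indirect-duality}, checking $\mx\neq\boldsymbol{0}$ from the equality constraints, and then exchanging the roles of allocations and prices---is precisely that intended argument, spelled out. No gaps.
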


\paragraph{A Program for Lindahl Equilibrium of Public Chores}
We now examine the specific case where the disutility functions $d_i$ are homogeneous, convex, and non-decreasing.

To find Lindahl equilibrium of $\+L = \{N, M, \{d_i\}_i, \{B_i\}_i\}$, we consider the dual Fisher market for private chores $\+F = \{N, M, \{\dual{d}_i\}_i, \{B_i\}_i\}$. Then we can apply \ref{Fisher-Chores} on the dual Fisher market whose KKT points correspond to Fisher market equilibria (\Cref{theorem: fisher-chores-KKT}). In the dual market, the indirect disutility of the dual disutility $\dual{d}$, $\dual{h}_i$, simplifies to the reciprocal of the original disutility function $d_i$:
\begin{align*}
    \dual{h}_i(\dual{\mp}, B_i) = \min_{\dual{\mx} \in \-R_{\geq 0}^m: \InAngles{\dual{\mp}, \dual{\mx}}} \dual{d}(\dual{\mx}) = \min_{\mp \in \-R_{\geq 0}^m: \InAngles{\dual{\mp}, \mp}} \frac{1}{h_i(\mp, B_i)} = \frac{1}{\max_{\mp \in \-R_{\geq 0}^m: \InAngles{\dual{\mp}, \mp}} h_i(\mp, B_i)} = \frac{1}{d_i(\dual{\mp})}.
\end{align*}
This allows us to characterize the Lindahl equilibrium using the KKT points of the following program: note that we just exchange the role of allocations and prices in \ref{Fisher-Chores}
\begin{align*}\min_{\mathbf{x} \in \mathbb{R}_{\geq 0}^m: \langle \mathbf{x}, \mathbf{1} \rangle = \sum_i B_i}\,&\, \sum_i B_i \log \beta_i \\
\text{s.t.} \,&\, \beta_i \geq \hat{d}_i(\mx). \tag{Lindahl-Chores}\label{Lindahl-Chores}
\end{align*}
where $\hat{d}_i$ is the extension of $d_i$ by letting: $\hat{d}(\mathbf{x}) = d(\max\{\mathbf{0}, \mathbf{x}\}, B)$ where $\max\{\mathbf{0}, \mathbf{x}\}$ is the component-wise maximum. By \Cref{theorem: fisher-chores-KKT} and \Cref{theorem:chores-duality}, the KKT points of \ref{Lindahl-Chores} correspond to the Lindahl equilibria of $\+L$.
\section{Missing Proofs and More Discussions}\label{sec::missing}

\subsection{Proofs of \Cref{theorem:duality-utility-indirec-utility} and \Cref{lemma:equivalence between fisher (ii)}}\label{sec::direct-indirect}
\begin{proof}[Proof of \Cref{theorem:duality-utility-indirec-utility}] 
    Fix any $\mx \in \-R^m_{+} \setminus \{\boldsymbol{0}\}$ and $B > 0$. For any $\mp \in \-R^m_{+}$ with $\InAngles{\mx, \mp} \le B$, we have $v(\mp, B) = \max_{\my \in \-R^m_{+}, \InAngles{\my, \mp} \le B} u(\my) \ge u(\mx)$. This proves one direction $\min_{\mp \in \-R^m_{+}, \InAngles{\mp, \mx}\le B} v(\mp, B) \ge u(\mx)$.

    Now we prove the other direction. Consider any small enough $\varepsilon >0$ such that there exists $\my \in \-R^m_{+}$ with $u(\my) = u(\mx) + \varepsilon$ (the existence of $\varepsilon > 0$ is guaranteed since $u$ is locally non-satiated). Define the superlevel set  $U_\alpha := \{\my\in \R^m_{+}: u(\my) \ge \alpha\}$ where $\alpha = u(\mx) + \varepsilon$. Since $u$ is locally nonsatiated, continuous, and quasi-concave, we know $U_\alpha$ is non-empty, closed, and convex.  Since $\mx \notin U_\alpha$, we can find a separating hyperplane $\mp_\varepsilon \in \-R^m$ that separates $\mx$ from $U_\alpha$. That is, $\InAngles{\mp_\varepsilon, \my} > t \ge \InAngles{\mp_\varepsilon, \mx}$ for any $\my \in U_\alpha$. Clearly $\mp_\varepsilon \ne \boldsymbol{0}$. Moreover, for any $j \in [m]$, if $\my \in U_\alpha$, then $\my + \beta \me_j \in U_\alpha$ for any $\beta \ge 0$ since $u$ is non-decreasing. This implies $\InAngles{ \mp_\varepsilon, \my + \beta \me_j} = \InAngles{\mp_\varepsilon, \my} + \beta \mp_\varepsilon[j] > t$ holds for all $\beta \ge 0$ and thus we have $\mp_\varepsilon[j] \ge 0$ for all $j \in [m]$. Now given $\mp_\varepsilon \ge \boldsymbol{0}$ and $\mp_\varepsilon \ne \boldsymbol{0}$, we conclude that we can find $t > 0$ such that $\InAngles{\mp_\varepsilon, \my} > t \ge \InAngles{\mp_\varepsilon, \mx}$ for any $\my \in U_\alpha$. We can then properly scale $\mp_\varepsilon$ so that $\InAngles{\mp_\varepsilon, \my} > B \ge \InAngles{\mp_\varepsilon, \mx}$ for all $\my \in U_\alpha$. This implies $v(\mp_\varepsilon, B)  = \sup_{\my \in \-R^m_{+}, \InAngles{\my, \mp_\varepsilon}\le B} u(\my) \le \alpha = u(\mx) + \varepsilon$ and thus  $\min_{\mp \in \-R^m_{+}, \InAngles{\mp,\mx} \le B} v(\mp, B) \le v(\mp_\varepsilon, B) \le u(\mx) + \varepsilon$. Taking $\varepsilon \rightarrow 0$, we get $\min_{\mp \in \-R^m_{+}, \InAngles{\mp, \mx} \le B} v(\mp, B) \le u(\mx)$.

    Combining the above two inequalities gives the equality $u(\mx) = \min_{\mp \in \-R^m_{+}, \InAngles{\mx, \mp}\le B} v(\mp, B)$.
\end{proof}

\begin{proof}[Proof of \Cref{lemma:equivalence between fisher (ii)}]
    "$\Rightarrow$": If \eqref{cond::Lindahl-equilibrium} holds, then $u_i(\mx) = \max_{\my\in \-R^m_{\ge 0}: \InAngles{\my, \mp_i} \le B_i} u_i(\my) = v_i(\mp_i, B_i)$. For any $\mq_i \in \-R^m_{\ge 0}$ such that $\InAngles{\mx, \mq_i} \le B_i$, we have $v_i(\mq_i, B_i) = \max_{\my \in \-R^m_{\ge 0}: \InAngles{\my, \mq_i} \le B_i} u_i(\my) \ge u_i(\mx) = v_i(\mp_i, B_i)$. This implies \eqref{cond::Lindahl-dual}.

    "$\Leftarrow$": If \eqref{cond::Lindahl-dual} holds, using \Cref{theorem:duality-utility-indirec-utility}, we have $v_i(\mp_i, B_i) = \min_{\mq_i \in \-R^m_{\ge 0}: \InAngles{\mx, \mq_i}\le B_i} v_i(\mq_i, B_i) = u_i(\mx)$. By definition of the indirect utility $v_i(\mp_i, B_i)$, it implies $\mx \in \argmax_{\my \in\-R^m_{\ge 0}: \InAngles{\my, \mp_i}\le B_i} u_i(\my)$ thus \eqref{cond::Lindahl-equilibrium} holds.
\end{proof}

\subsection{Proof of \Cref{lem::duality-homogeneous}}\label{sec:proof duality-homogeneous}
\begin{proof}[Proof of \Cref{lem::duality-homogeneous}]
    Fix any $B > 0$. We note that $\dual{u}(\boldsymbol{0}) = \frac{1}{v(\boldsymbol{0}, B)} = \frac{1}{\infty} = 0$. 

    We prove that $v(\cdot, B)$ is decreasing. Let $\mp > \mq \ge \boldsymbol{0}$. Note that $v(\mp, B) \le u(\frac{B}{\min_j p_j}\boldsymbol{1}) = \frac{B}{\min_j p_j} u(\boldsymbol{1}) < +\infty$. If $v(\mq, B) = +\infty$, we are done. We consider the case where $v(\mq, B) < +\infty$. Let $v(\mp, B) = u(\mx)$. Then we have $\InAngles{\mp, \mx} \le B$ and the demand $\mx' = \frac{\InAngles{\mp, \mx}}{\InAngles{\mq,\mx}}\mx$ satisfies $\InAngles{\mq, \mx'} \le B$ and we have $v(\mq, B) \ge u(\mx') = \frac{\InAngles{\mp, \mx}}{\InAngles{\mq,\mx}} \cdot u(\mx) > u(\mx)$. This proves that $v(\cdot, B)$ is decreasing and thus $\dual{u}(\cdot)$ is increasing.
    
    For any $\mp \ne 0$. If $v(\mp, B) = +\infty$, then $v(t\cdot \mp, B) = +\infty$ for any $t > 0$, which implies $\dual{u}(\mp) = t \cdot \dual{u}(\cdot \mp) = 0$ for all $t > 0$.  If $v(\mp, B) =u(\mx) \ne +\infty$, since $u$ is $1$-homogeneous, we have $v(t\cdot \mp, B) = u(\frac{1}{t} \cdot \mx) = \frac{1}{t} u(\mx) = \frac{1}{t} v(\mp, B)$ for any $t > 0$, which imlies $\dual{u}(t\cdot\mp) = t \cdot \dual{u}(\mp)$ for any $t > 0$. Thus $\dual{u}$ is $1$-homogeneous.

    Since $\dual{u}$ is $1$-homogeneous and increasing, it is clear that $\dual{u}$ is continuous. 
    
    Since $\dual{u}$ is quasi-concave, increasing, and $1$-homogeneous, $\dual{u}$ is also concave by \Cref{thm: quasi-concave-2-concave-new}. 
\end{proof}

\subsection{Proof of \Cref{theorem::non-homogenous} and \Cref{example::etoe}} \label{sec:non-homogenous-proof}
\begin{proof}[Proof of \Cref{theorem::non-homogenous}]
Let $\-B \triangleq \sum_{i} B_i$.
    Note that since $(\mx, \{\mp_i\})$ is a Lindahl equilibrium, then $u_i(\mx) \geq \min \left\{1, \frac{B_i}{\InAngles{\mp_i,\my}} \right\}u_i(\my)$, as agent $i$ can use their budget, $B_i$, to buy at least $\frac{B_i}{\InAngles{\mp_i,\my}}$ fraction of $\my$. The lower bound holds because if $B_i\leq \InAngles{\mp_i,\my}$, then $u_i\left(\frac{B_i}{\InAngles{\mp_i,\my}} \my\right)\ge \frac{B_i}{\InAngles{\mp_i,\my}}u_i(\my)$  using concavity and $u_i(\mathbf{0})=0$. 
Thus,
    \begin{align*}
        \sum_{i} \frac{B_i}{\-B} \log \frac{u_i(\my)}{u_i(\mx)} \leq  \sum_{i} \frac{B_i}{\-B} \log \max \left\{1, \frac{\InAngles{\mp_i,\my}}{B_i} \right\}.
    \end{align*}
    Let $B'_i \triangleq {\InAngles{\mp_i,\my}}$, and let $\mathcal{A}'$ be the set of agents such that $B'_i \geq B_i$, $B \triangleq \sum_{i \in \mathcal{A}'} B_i$, and $B' \triangleq \sum_{i \in \mathcal{A}'} B'_i$. Therefore, using the log sum inequality,
    \begin{align*}
           \sum_{i} \frac{B_i}{\-B} \log \frac{u_i(\my)}{u_i(\mx)} \leq \sum_{i\in \+A'} \frac{B_i}{\-B} \log \frac{B'_i}{B_i} \leq \frac{B}{\-B} \log \frac{B'}{B} \, .
    \end{align*}
   By definition, $B \leq \-B$. We also note that $B'\le \sum_{i} B_i'=\sum_{i} \InAngles{\mp_i,\my} \le \sum_{j} y_j  \leq \sum_i B_i = \-B$. 
    \begin{align*}
         \sum_{i} \frac{B_i}{\-B} \log \frac{u_i(\my)}{u_i(\mx)} \leq \max_{\frac{B}{\-B} \leq 1} \frac{B}{\-B} \log \frac{\-B}{B} \leq \frac{1}{e}\, .
    \end{align*}
    The statement follows.
\end{proof}

\begin{example}\label{example::etoe}To demonstrate the tightness of the bound in \Cref{theorem::non-homogenous}, consider a Lindahl economy with two public goods, $\mathbf{x} = (x_1, x_2)$, and two agents. Agent $1$ has a linear utility function $u_1(\mathbf{x}) = x_1$ and a budget $B_1 = 1$. Agent $2$ has a concave utility function and a budget $B_2 = e-1$. The utility function is given by:$$u_2(\mathbf{x}) = \min\left\{\frac{x_1}{e} + \frac{x_2}{e-1}, \epsilon x_1 + \frac{\epsilon e}{e-1} x_2 + (1 - \epsilon e), 1 + \frac{\epsilon}{e-1} x_2, \epsilon x_1 + \frac{\epsilon}{e-1} x_2 + (1 - \epsilon)\right\}.$$

We analyze two different Lindahl equilibria: (i) Equilibrium 1: The personalized prices are $\mathbf{p}_1 = (1, 0)$ and $\mathbf{p}_2 = (0, 1)$, and the allocation is $\mathbf{x} = (1, e-1)$. At this allocation, the utilities are $u_1(1, e-1) = 1$ and $u_2(1, e-1) = 1 + \epsilon$. The resulting NSW is $\left(1^1 \cdot (1+\epsilon)^{e-1}\right)^{1/e} = \left((1+\epsilon)^{e-1}\right)^{1/e}$. (ii) Equilibrium 2: Prices: The personalized prices are $\mathbf{p}_1 = (1/e, 0)$ and $\mathbf{p}_2 = (1 - 1/e, 1)$, and the allocation is $\mathbf{x} = (e, 0)$. At this allocation, the utilities are $u_1(e, 0) = e$ and $u_2(e, 0) = 1$. The resulting NSW is $\left(e^1 \cdot 1^{e-1}\right)^{1/e} = e^{1/e}$. Comparing the NSW values from these two equilibria demonstrates that the bound in \Cref{theorem::non-homogenous} is tight.\end{example}

\subsection{Proof of \Cref{lem::GS-NG}} \label{sec::GS-NG-lemma-proof}
\begin{proof}[Proof of \Cref{lem::GS-NG}]
    Let the scaling factor $\alpha \triangleq \max_k\left\{\frac{p_k}{p'_k}\right\}$ and $\alpha \geq 1$. We start with price vector $\mp$ and budget $B$. Scaling prices and budget by $\alpha$ leaves demand unchanged:
    $$\demandnb_j(\mp, B) = \demandnb_j(\mp / \alpha, B / \alpha).$$
    Then, we compare $(\mp / \alpha, B / \alpha)$ to $(\mp', B)$. We note $(\mp / \alpha, B / \alpha) \leq (\mp', B)$, and $\mp_j / \alpha = \mp'_j$. Therefore, we can increase the price vector and budget from $(\mp / \alpha, B / \alpha)$ to $(\mp', B)$. By gross substitutes and normal goods properties, this only increases the demand for good $j$.
\end{proof}

\subsection{Proof of \Cref{theorem:PRD:lindahl:tc}} \label{sec::proof:theorem:PRD:lindahl:tc}
We restate PRD for the Lindahl equilibrium where agents have total complements utilities:

The dynamic updates agent $i$'s spending contribution to good $j$ as:$$b_{ij}^{t+1} = p_{ij}^t \cdot \demandnb_{ij}(\mathbf{p}_i^t, B_i)$$where $\demandnb_{ij}(\mathbf{p}_i^t, B_i)$ is agent $i$'s Marshallian demand given their personalized prices $\mathbf{p}_i^t$ and total budget $B_i$. This update rule has a clear economic interpretation as an expenditure best-response. At each iteration $t$, agent $i$ observes their personalized prices $\mathbf{p}_i^t$ and sets their next-period budget $b_{ij}^{t+1}$ to their current optimal expenditure. 

The personalized Lindahl price $p_{ij}^t$ is determined by the agents' contributions. It represents agent $i$'s proportional share of the total spending on good $j$, $x_j^t = \sum_k b_{kj}^t$: $p_{ij}^t = b_{ij}^t / x_j^t = b_{ij}^t / (\sum_k b_{kj}^t)$.

\begin{proof}[Proof of \Cref{theorem:PRD:lindahl:tc}]
Let $(\mx^*, \{\mp^*_i\})$ denote a Lindahl equilibrium. Let $b_{ij}^* \triangleq x_j^* p_{ij}^*$ be the equilibrium spending of agent $i$ on good $j$. By the total complements condition, $\mx^* > 0$. Additionally, $\mx^t > 0$ by the updating rule, and the utility function is strictly increasing. 

The convergence analysis focuses on the KL divergence between the equilibrium spending and the spending at time $t+1$:
\begin{align*}
\sum_{ij} b_{ij}^* \log \frac{b_{ij}^*} {b_{ij}^{t+1}}  &= \sum_{ij} b_{ij}^* \log \frac{b_{ij}^*}{p_{ij}^t \demandnb_{ij}(\mp_i^t, B_i)} \\
&= \sum_{ij} b_{ij}^* \log \frac{b_{ij}^*}{b_{ij}^t} - \sum_{ij} b_{ij}^* \log \frac{x_j^*}{x_j^t} + \sum_{ij} b_{ij}^* \log \frac{\demandnb_{ij}(\mp_i^*, B_i)}{\demandnb_{ij}(\mp_i^t, B_i)}.
\end{align*}

To prove the convergence, we show that this potential function is non-increasing. The core of the argument lies in establishing the following inequality for any $\{\mp_i\}$ such that $\sum_i p_{ij} = 1$.
\begin{align*}
    \sum_{j} b_{ij}^* \log \frac{\demandnb_{ij}(\mp_i^*, B_i)}{\demandnb_{ij}(\mp_i, B_i)} \leq \sum_j x^*_{j} \cdot (p_{ij} - p^*_{ij}) \numberthis \label{eq::tc::tech::1}
\end{align*}

For convenience, let $\mx \triangleq \demand_{i}(\mp_i, B_i)$ and $\mx^* = \demand_{i}(\mp_i^*, B_i)$. Given any nonnegative allocation $\mx$, let $\mq_i(\mx)$ denote the price vector such that $\mx$ is the demand. Since the utility function is strictly concave, strictly increasing, and satisfies the total complements property, this price vector always exists and is unique.

We prove \eqref{eq::tc::tech::1} by considering an adjustment procedure that transforms an initial allocation $\mx$ into the final allocation $\mx^*$ through a series of intermediate steps. In each step, the current allocation $\mx'$ is updated to a new allocation vector $\mx''$ by applying one of the following two operations:
\begin{enumerate}
    \item Operation $1$: Let $\+S$ be the set of the goods with the maximal ratio $\frac{x^*_j}{x'_j} > 1$, if any. \emph{Increase} the $x'_j$ to $x''_j$ proportionally for all $j \in \+S$ until the ratio $\frac{x^*_j}{x''_j}$ matches the second maximal ratio or $1$.
     \item Operation $2$: Let $\+S$ be the set of the goods with the minimal ratio $\frac{x^*_j}{x'_j} < 1$, if any. \emph{Decrease} the $x'_j$ to $x''_j$ proportionally for all $j \in \+S$ until the ratio $\frac{x^*_j}{x''_j}$ matches the second minimal ratio or $1$.
\end{enumerate}
Our goal is to show that for both operations, the following inequality holds \begin{align*}
    \sum_{j} b_{ij}^* \log \frac{x''_j}{x'_j} \leq \sum_j x^*_{j}\cdot [q_{ij}(\mx') - q_{ij}(\mx'')] \numberthis \label{eq::tcd::tech::2}
\end{align*}
If this holds for every step, then \eqref{eq::tc::tech::1} follows by summing the terms over all steps and performing a telescoping sum.

Let's focus on Operation $1$ (Operation $2$ is analogous).

Recall $\+S$ denotes the set of goods with the largest ratio between $\mx^*$ and $\mx'$, let $r'$ ($r' > 1$) denote this ratio, $r' \triangleq \max_j \frac{x^*_j}{x'_j}$ and let $r''$ denote the largest ratio for goods not in $\+S$ (or $1$ if all the goods are in $\+S$). In this step:
\begin{enumerate} \item For $j \in \mathcal{S}$, $x''_j = x'_j \cdot (r'/r'')$, so $\log(x''_j / x'_j) = \log(r'/r'')$. \item For $j \notin \mathcal{S}$, $x''_j = x'_j$, so $\log(x''_j / x'_j) = 0$. \item Crucially, for $j \in \mathcal{S}$, $x^*_j = r' x'_j = r'' x''_j$. For $j \notin \mathcal{S}$, $x_j^* \le r'' x'_j = r'' x''_j$.
\end{enumerate}

First, by total complements condition, $\sum_{j \in \+S} b_{ij}^* = \sum_{j \in \+S} x_j^* p_{ij}^* \leq \sum_{j \in \+S} (r'' x''_j)\cdot q_{ij}(\mx'')$. Therefore, \eqref{eq::tcd::tech::2} is implied by the following inequality:
$$r''\sum_{j \in \+S}  x''_j q_{ij}(\mx'') \log (r' / r'') \leq \sum_j x^*_{j}\cdot [q_{ij}(\mx') - q_{ij}(\mx'')].$$
Additionally, the RHS can be lower bounded by $\sum_j r'' x''_j \cdot [q_{ij}(\mx') - q_{ij}(\mx'')]$ as $x^*_{j} = r'' x''_j$ for $j \in \+S$ and, for $j \notin \+S$, $x^*_j \leq r'' x''_j$ and $q_{ij}(\mx') \leq q_{ij}(\mx'')$ by total complements condition. Additionally, $\sum_j r'' x''_j \cdot [q_{ij}(\mx') - q_{ij}(\mx'')] = \sum_j r'' [x_j' q_{ij}(\mx') - x_j'' q_{ij}(x'')] + \sum_{j \in \+S} (r' - r'') x'_j q_{ij}(\mx')= (r' - r'') \sum_{j \in \+S} x'_j q_{ij}(\mx')$.
Thus, to prove \eqref{eq::tcd::tech::2}, we only need to show 
$$r''\sum_{j \in \+S}  x''_j q_{ij}(\mx'') \log (r' / r'')  \leq (r' - r'') \sum_{j \in \+S} x'_j q_{ij}(\mx').$$
The inequality follows as $r'' \log (r' / r'') \leq r' - r''$ and  $\sum_{j \in \+S}  x''_j q_{ij}(\mx'') \leq \sum_{j \in \+S} x'_j q_{ij}(\mx')$ as $\sum_{j \notin \+S}  x''_j q_{ij}(\mx'') \geq \sum_{j \notin \+S} x'_j q_{ij}(\mx')$ by the total complements condition.

This confirms that the KL divergence is non-increasing and $\sum_t \sum_{ij} b_{ij}^* \log \frac{x_j^*}{x_j^t} = \sum_t \sum_{j} x_j^* \log \frac{x_j^*}{x_j^t}$ is upper bounded, which implies $\mx^t \rightarrow \mx^*$ and the empirical convergence of $x_j^t$ is $O(1 / T)$.
\end{proof}

\subsection{Proof of \Cref{thm:prd-lindahl-gs}} \label{sec::proof:prd-lindahl-gs}
We restate PRD for the Lindahl equilibrium where agents have utilities satisfying gross substitutes and normal goods properties: 

The dynamic for agent $i$'s spending contribution to good $j$ is:\begin{align*} b_{ij}^{t+1} = B_i \frac{x^t_{j} \nabla_j u_i(\mx^t)}{\sum_{j'} x^t_{j'} \nabla_{j'} u_i(\mx^t) }\quad \text{where } p^t_{ij} = \frac{b_{ij}^t}{x_j^t} \text{ and } x_j^t = \sum_i b_{ij}^t.\end{align*}

The following proof relies on several lemmas, which are stated in Appendix~\ref{sec:properties-GS}. The proofs for these lemmas can be found in \cite{cheung2025proportional}.
\begin{proof}[Proof of \Cref{thm:prd-lindahl-gs}]
We first define an agent-specific price vector $\mq_i(\mx)$ for agent $i$ with allocation $\mx > 0$ such that $q_{ij}(\mx) = B_i \frac{\nabla_j u_i(\mx)}{\sum_{j'} x_{j'} \nabla_j u_i(\mx) }$. By construction, this price vector $\mq_i(\mx)$ is such that the allocation $\mx$ is optimal for agent $i$ under this price vector: $\mx = \argmax_{\my \in \-R_{\geq 0}^m: \InAngles{\mq_i(\mx), \my} \leq B_i} u_i(\my)$.

Let $(\mx^*, \{\mp^*_i\})$ be the Lindahl equilibrium. According to the definition of the Lindahl equilibrium (\Cref{def::Lindahl-equilibrium}), $\sum_j x_j^* = \sum_j x_j^* \sum_i p_{ij}^* = \sum_i B_i$ and, according to our updating rule, $\sum_j x_j^t = \sum_{ij} b_{ij}^t = \sum_i B_i$. Additionally, as the utility functions $u_i$ are strictly increasing, all equilibrium prices must be strictly poisitive, $\mp^*_i > 0$ for any agent $i$, to prevent agents from demanding an infinite amount of goods.

We now analyze the change in the Kullback-Leibler (KL) divergence between the equilibrium $\mx^*$ and the allocation $\mx^{t+1}$:
\begin{align*}\sum_j x_j^* \log \frac{x_j^*}{x_j^{t+1}} &= \sum_j x_j^* \log \frac{x_j^*}{\sum_i x_j^{t} q_{ij}(\mx^t)} \\ &=\sum_j x_j^* \log \frac{x_j^*}{x_j^t} + \sum_j x_j^* \log \frac{\sum_i p_{ij}^*}{\sum_i q_{ij}(\mx^t)}.\end{align*}

We now show $\sum_j x_j^* \log \frac{\sum_i p_{ij}^*}{\sum_i q_{ij}(\mx^t)} \leq 0$. 

According to the definition of the Lindahl equilibrium (\Cref{def::Lindahl-equilibrium}), $x_j^* = x_j^* (\sum_i p_{ij}^*)$, and the log-sum inequality 
\begin{align*}
\sum_j x_j^* \log \frac{\sum_i p_{ij}^*}{\sum_i q_{ij}(\mx^t)} = \sum_j x_j^* \left(\sum_i p_{ij}^*\right)  \log \frac{\sum_i p_{ij}^*}{\sum_i q_{ij}(\mx^t)} \leq \sum_{ij} x_j^*  p_{ij}^*  \log \frac{p_{ij}^*}{q_{ij}(\mx^t)}.
\end{align*}

By \Cref{pec:lem:main-technique}, we have the following inequality for each agent $i$:  
\begin{align*}
    \sum_{j} x_j^*  p_{ij}^*  \log \frac{p_{ij}^*}{q_{ij}(\mx^t)}\leq \sum_j p_{ij}^* \left( x_j^t - x_j^*\right). \numberthis \label{eq::strict-negative-potential}
\end{align*}

Summing these inequalities over all agents $i$ gives 
\begin{align*}\sum_{ij} x_j^*  p_{ij}^*  \log \frac{p_{ij}^*}{q_{ij}(\mx^t)} \leq \sum_{ij} p_{ij}^* \left( x_j^t - x_j^*\right) = \sum_j \left( x_j^t - x_j^*\right) \sum_i p_{ij}^*  \leq 0.\end{align*}

This confirms that the KL divergence between $\mx^*$ and $\mx^{t}$ is non-negative and monotonically decreasing. 

To show $\mx^t$ converges to $\mx^*$, we argue by contradiction. Assume  $\bbx^t$ does not converge to $\bbx^*$.   Since $\bbx^t \in [0, \sum_i B_i]^m$, there must exists a subsequence $\bbx^{t_k}$ that converges to some $\tilde{\bbx} \neq \bbx^*$. We analyze two cases for the limit point $\tilde{\bbx}$.

\paragraph{Case 1: The limit point $\tilde{\bbx}$ is stictly positive ($\tilde{\bbx} > 0$)} If $\tilde{\mx} > 0$, the price function $\mq_i(\mx)$ is continuous at $\tilde{\mx}$. Since $\tilde{\mx} \neq \mx^*$, \Cref{pec:lem:tech-1} and \Cref{pec:lem:main-technique}, there exists a $\delta > 0$ such that 
\begin{align*}
    \sum_{j} p^*_{ij} x_{j}^* \log \frac{p^*_{ij}}{q_{ij}(\tilde{\bbx})}  - \sum_{j} p^*_{ij}(\tilde{x}_{j} - x_j^*) \leq -\delta. \numberthis \label{pec:eq:main-2-1}
\end{align*}
 Thus, for a sufficiently large $t_k$,  this implies: 
\begin{align*}\sum_{ij} x_{j}^* p_{ij}^*  \log \frac{p_{ij}^*}{q_{ij}(\mx^{t_k})}  &\leq \sum_{ij} p^*_{ij} x_j^* \log \frac{p^*_{ij}}{q_{ij}(\bbx^{t_k})}  - \sum_{ij} p^*_{ij}(x^{t_k}_{j} - x_j^*) \\
&\rightarrow \sum_{ij} p^*_{ij} x_j^* \log \frac{p^*_{ij}}{q_{ij}(\tilde{\bbx})}  - \sum_{ij} p^*_{ij}(\tilde{x}_{j} - x_j^*) \\
&\overset{(a)}{\leq} -\delta.
\end{align*} 
Here, (a) is from \eqref{pec:eq:main-2-1} and the limit holds due to \Cref{pec:lem:converge-allocation-1}.  This means the KL divergence decreases by at least $\delta / 2$ infinitely often. This leads to a contradiction as the KL divergence is non-negative.

\paragraph{Case 2: the limit point $\tilde{\bbx}$ has zero components} In this case, we introduce a projected price vector $\tilde{\mq}_i^t$ for each agent, projecting  $\bbq_i(\bbx^{t_k})$ onto the bounded domain $[\epsilon \bbp_i^*, \frac{1}{\epsilon} \bbp_i^*]$ for some small $\epsilon > 0$:
\begin{align*}
\tilde{q}_{ij}^t = \begin{cases}
    \epsilon p^*_{ij} &\text{ if } q_{ij}(\bbx_i^t) <  \epsilon p^*_{ij}\\
    \frac{1}{\epsilon}  p^*_{ij} &\text{ if } q_{ij}(\bbx_i^t) >  \frac{1}{\epsilon} p^*_{ij}\\
    q_{ij}(\bbx_i^t) &\text{ o.w.}
\end{cases},
\end{align*} where $\epsilon$ will be specified later. 
    
    By \Cref{pec:cor:main-technique}, 
    \begin{align*}&\sum_j p^*_{ij} x_j^* \log \frac{p^*_{ij}}{q_{ij}(\bbx^t)}  - \sum_j p^*_{ij}\left[x^t_j - x^*_j\right] \leq \sum_j p^*_{ij} x_j^* \log \frac{p^*_{ij}}{\tilde{q}^t_{ij}}  - \sum_j p^*_{ij}\left[\demandnb_{ij}(\tilde{\bbq}^t_i, B_i) - x_j^*\right].  \numberthis \label{eq::gs-converg-1}
    \end{align*}
    Since  $[\epsilon \bbp^*_i, \frac{1}{\epsilon} \bbp^*_i]$ is bounded, there exists a further subsequence of subsequence $\{\bbx^{t_k}\} \rightarrow \tilde{\bbx}$ such that $\tilde{\bbq}^{t_k}_i \rightarrow \tilde{\bbq}^+_i$. Note that
    \begin{align*}\sum_{ij} x_{j}^* p_{ij}^*  \log \frac{p_{ij}^*}{q_{ij}(\mx^{t_k})}  &\leq \sum_{ij} p^*_{ij} x_j^* \log \frac{p^*_{ij}}{q_{ij}(\bbx^{t_k})}  - \sum_{ij} p^*_{ij}(x^{t_k}_{j} - x_j^*) \\
&\leq \sum_{j} p^*_{ij} x_j^* \log \frac{p^*_{ij}}{q_{ij}(\bbx^{t_k})}  - \sum_{j} p^*_{ij}(x^{t_k}_{j} - x_j^*) \\
    & \leq \sum_j p^*_{ij} x_j^* \log \frac{p^*_{ij}}{\tilde{q}^{t_k}_{ij}}  - \sum_j p^*_{ij}(\demandnb_{ij}(\tilde{\bbq}^{t_k}_i, B_i) - x_j^*)\quad \text{(by \eqref{eq::gs-converg-1})} \\
    &\rightarrow \sum_j p^*_{ij} x_j^* \log \frac{p^*_{ij}}{\tilde{q}^{+}_{ij}}  - \sum_j p^*_{ij}(\demandnb_{ij}(\tilde{\bbq}^+_i, B_i) - x_j^*).
    \end{align*}
    The second inequality holds as, for any agent $i$, $\sum_{j} p^*_{ij} x_j^* \log \frac{p^*_{ij}}{q_{ij}(\bbx^{t_k})}  - \sum_{j} p^*_{ij}(x^{t_k}_{j} - x_j^*) \leq 0$ according to \Cref{pec:lem:main-technique}. Therefore, we can only focus on one particular agent $i$. 
    
    Note that if one can show $\demand_i(\tilde{\bbq}^+_i, B_i) \neq \bbx^*$, the RHS is no larger than $-\delta$ for some positive $\delta$ (by \Cref{pec:lem:main-technique}), which makes this case invalid similar to Case $1$.

    Assume for a contradiction that $\demand_i(\tilde{\bbq}^+_i, B_i) = \bbx^*$ but $\tilde{\bbx} \neq \bbx^*$ for small enough $\epsilon$. We choose $\epsilon$ small enough such that $\tilde{\bbq}^+_i > 2 \epsilon \bbp_i^*$ (guaranteed by \Cref{pec:lem:tech-2}). 
    In this case, for sufficiently large $t_k$, 
    \begin{enumerate}
        \item For all good $j$ with $x_{j}^* > 0$, as $\demand_i(\tilde{\bbq}^+_i, B_i) = \bbx^*$, by \Cref{pec:lem:tech-1}, $\tilde{q}^{t_k}_{ij} \rightarrow \tilde{q}^+_{ij} = p^*_{ij}$. For sufficiently large $t_k$, this implies $\tilde{q}^{t_k}_{ij} \in (\epsilon \bbp_i^*, \frac{1}{\epsilon} \bbp_i^*)$. Therefore, $q_{ij}(\bbx^{t_k}) = \tilde{q}^{t_k}_{ij} \rightarrow \tilde{q}^+_{ij} = p^*_{ij}$;
        \item For all good $j$, $q_{ij}(\bbx^{t_k}) \geq \tilde{q}^{t_k}_{ij}  \rightarrow \tilde{q}^+_{ij}$  (as $\tilde{\bbq}^+_i > 2 \epsilon \bbp_i^*$ by \Cref{pec:lem:tech-2}).
    \end{enumerate}
    
    Thus, for sufficiently large $t_k$, for any $j$ with $x_{j}^* > 0$, 
    $$ x_{j}^{t_k} = \demandnb_{ij}(\bbq_i(\bbx^{t_k}), B_i) \geq \demandnb_{ij}(\tilde{\bbq}^{t_k}_i, B_i) \rightarrow \demandnb_{ij}(\tilde{\bbq}^{+}_i, B_i) = x_{j}^*.$$
    The inequality holds by gross substitutes property as $q_{ij}(\bbx^{t_k}) = \tilde{q}^{t_k}_{ij}$ for $j$ with $x_{j}^* > 0$ and $q_{ij}(\bbx^{t_k}) \geq \tilde{q}^{t_k}_{ij}$ for all $j$. 
    
    Therefore,   \begin{align*}B_i &\geq \sum_{j: x_{j}^* > 0} q_{ij}(\bbx^{t_k}) \demandnb_{ij}(\bbq_i(\bbx^{t_k}), B_i) \geq  \sum_{j: x_{j}^* > 0} q_{ij}(\bbx^{t_k}) \demandnb_{ij}(\tilde{\bbq}^{t_k}_i, B_i) \\
    & \quad \quad \quad \quad= \sum_{j: x_{j}^* > 0} \tilde{q}^{t_k}_{ij} \demandnb_{ij}(\tilde{\bbq}^{t_k}_i, B_i) \overset{(a)}{\rightarrow} \sum_{j: x_{j}^* > 0} p_{ij}^* x_{j}^* = B_i. \numberthis \label{ineq::9}\end{align*}
    Here, (a) is from \Cref{pec:lem:converge-allocation-1} and $\demand_i(\tilde{\bbq}^+_i, B_i) = \bbx^*$. This implies $\sum_{j: x_{j}^* = 0} q_{ij}(\bbx^{t_k}) \demandnb_{ij}(\bbq_i(\bbx^{t_k}), B_i) \rightarrow 0$, which yields 
    \begin{enumerate} 
    \item $x_j^{t_k} = \demandnb_{ij}(\bbq_i(\bbx^{t_k}), B_i) \rightarrow 0$ when $x_{j}^* = 0$ as $q_{ij}(\bbx^{t_k}) \geq \tilde{q}_{ij}^{t_k} \rightarrow \tilde{q}_{ij}^+ > \epsilon p_{ij}^*$ for sufficiently large $t_k$ and $p_{ij}^* > 0$;
    \item and, in \eqref{ineq::9}, as $q_{ij}(\bbx^{t_k}) \rightarrow p_{ij}^*$ for $x_{j}^* > 0$, $x_{j}^{t_k} = \demandnb_{ij}(\bbq_i(\bbx^{t_k}), B_i) \rightarrow x_{j}^*$.
    \end{enumerate}
    This contradicts the assumption that $\tilde{\bbx} \neq \bbx^*$, which implies that the allocation $\mx^t$ converges to $\mx^*$.
\end{proof}
\subsection{Discussions on profit-maximizing condition in Lindahl equilibrium} \label{sec::general-lindahl-discussion}
We begin by recalling the definition of a Lindahl equilibrium.
\begin{definition}[Lindahl equilibrium]Given $\+L$, let $\mx$ be an allocation and $\{\mp_i\}_{i \in N}$ be nonnegative personalized prices. Then $(\mx, \{\mp_i\}_{i \in N})$ is a \emph{Lindahl equilibrium} if 
\begin{enumerate}[label=(\roman*)]
    \item $\mx$ is \emph{affordable}: $\InAngles{\mp_i, \mx} \leq B_i$ for every agent $i \in N$,
    \item $\mx$ is \emph{utility-maximizing}: $\mx \in \argmax_{\my \in \-R_{\geq 0}^m: \InAngles{\mp_i, \my} \leq B_i} u_i(\my)$ for every agent $i \in N$,
    \item $\mx$ is \emph{profit-maximizing}: for every $j \in M$, $\sum_i p_{ij} \leq 1$; and whenever $x_j >0$ then $\sum_i p_{ij} = 1$.
\end{enumerate}
\end{definition}
We draw particular attention to the profit maximization condition (iii). This condition, utilized in much of the literature  \cite{kroer2025computing, fain2016core, 10.1093/restud/rdaf043} is equivalent to the case there is a producer of public goods maximizes the profit subject to unit cost of each good. Specifically,  condition (iii) is equivalent to $$\mx \in \argmax_{\my \in \-R_{\geq 0}^m} \sum_j \left(\sum_i p_{ij}\right) y_j - \sum_j y_j.$$

This model can be generalized to accommodate a producer with a more complex, non-linear cost:

$$\mx \in \argmax_{\my \in \-R_{\geq 0}^m} \sum_j \left(\sum_i p_{ij}\right) y_j - c(\my),$$ where $c(\my)$ is a convex function represents the cost of producing $\my$.

\begin{definition}[Lindahl equilibrium with general cost function] \label{def::Lindahl-equilibrium-producer} Given $\+L$, let $\mx$ be an allocation and $\{\mp_i\}_{i \in N}$ be nonnegative personalized prices. Then $(\mx, \{\mp_i\}_{i \in N})$ is a \emph{Lindahl equilibrium} if 
\begin{enumerate}[label=(\roman*)]
    \item $\mx$ is \emph{affordable}: $\InAngles{\mp_i, \mx} \leq B_i$ for every agent $i \in N$,
    \item $\mx$ is \emph{utility-maximizing}: $\mx \in \argmax_{\my \in \-R_{\geq 0}^m: \InAngles{\mp_i, \my} \leq B_i} u_i(\my)$ for every agent $i \in N$,
    \item $\mx$ is \emph{profit-maximizing}: $\mx \in \argmax_{\my \in \-R_{\geq 0}^m} \sum_j \left(\sum_i p_{ij}\right) y_j - c(\my)$.
\end{enumerate}
\end{definition}

This generalization of the Lindahl equilibrium corresponds to a generalized dual economy. We find that the dual counterpart is no longer a standard Fisher market, but rather a \emph{Fisher market with production}. In this dual market, the seller is not endowed with a fixed supply of goods. Instead, the seller produces goods $\my$ according to a convex cost function $s(\cdot)$, maximizing profit given market prices $\mp$. The market-clearing condition then requires that the total demand equals the seller's profit-maximizing supply.

\begin{definition}[Production Fisher market equilibrium]\label{def::Fisher-market-production} Given $\+F$, let $\{\mx_i\}_{i \in N}$ be a personalized allocations and $\mp$ be a price vector. Then $(\{\mx_i\}_{i \in N}, \mp)$ is a \emph{Fisher market equilibrium} if
\begin{enumerate}[label=(\roman*)]
    \item $\mx_i$ is affordable: $\InAngles{\mp, \mx_i} \le B_i$ for every agent $i \in N$,
    \item $\mx_i$ is utility-maximizing: $\mx_i\in \argmax_{\my_i \in \-R_{\geq 0}^m: \InAngles{\mp, \my_i} \leq B_i} u_i(\my_i)$ for every agent $i \in N$,
    \item \emph{Seller profit-maximization and market clearing}: 
     $\my \in \argmax_{\my' \in \-R_{\geq 0}^m} \left( \sum_j y'_j p_j - s(\my') \right)$, where $\my = \sum_i \mx_i$.
\end{enumerate}
\end{definition}

The following theorem formalizes this new duality. It demonstrates that a Lindahl equilibrium with a general producer cost $c$ and agents' utilities $u_i$ is dual to a production Fisher market with a seller cost $s$ and agents; utilities $\dual{u}_i$. Crucially, the relationship between the two cost functions is defined by the Fenchel conjugate.

\begin{theorem}
    A Lindahl equilibrium with a general cost function, $(\mx, \{\mp_i\}_{i \in N})$, with agents' utility functions $\{u_i\}_{i \in N}$ and producer's cost function $c$, corresponds to a production Fisher market equilibrium, $(\{\dual{\mx}_i\}_{i \in N}, \dual{\mp})$, with agents' utility functions $\{\dual{u}_i\}_{i \in N}$ and seller's cost function $s$. The correspondence is given by $\mx = \dual{\mp}$, $\mp_i = \dual{\mx}_i$, $\dual{u}_i$ is the dual utility of $u_i$, and $s$ is the Fenchel conjugate of $c$:$$s(\my) = \sup_{\mp \in \-R_{\geq 0}^m} \{\mp^\top \my - c(\mp)\}.$$
\end{theorem}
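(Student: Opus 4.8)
The plan is to mirror the proof of \Cref{thm::main-duality}: verify the three conditions of \Cref{def::Fisher-market-production} under the role-exchange $\mx=\dual{\mp}$, $\mp_i=\dual{\mx}_i$, the only genuinely new ingredient being that the producer condition must now be dualized through Fenchel conjugacy rather than by the direct role-exchange available in the unit-cost case.

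Conditions (i) and (ii) carry over verbatim from \Cref{thm::main-duality}, since neither involves the cost function. Affordability $\InAngles{\mp_i,\mx}\le B_i$ is symmetric in its two arguments, so it coincides with $\InAngles{\dual{\mp},\dual{\mx}_i}\le B_i$. For the optimization condition, \Cref{lemma:equivalence between fisher (ii)} gives ``$\mx$ maximizes $u_i(\cdot)$ under price $\mp_i$ and budget $B_i$'' $\Leftrightarrow$ ``$\mp_i$ minimizes $v_i(\cdot,B_i)$ under price $\mx$ and budget $B_i$'', which by $\dual{u}_i(\cdot)=1/v_i(\cdot,B_i)$ is exactly ``$\dual{\mx}_i$ maximizes $\dual{u}_i(\cdot)$ under price $\dual{\mp}$ and budget $B_i$'', i.e.\ Fisher condition (ii).

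The heart of the proof is condition (iii). Write $\mP:=\sum_i\mp_i$ for the aggregate personalized price vector, which under the duality equals the aggregate Fisher allocation $\my=\sum_i\dual{\mx}_i$. Extend $c$ to $+\infty$ off $\-R^m_{\ge 0}$, and assume $c$ is closed, convex, and (as is natural for a cost) non-decreasing; then the formula $s(\my')=\sup_{\mp\ge 0}\{\InAngles{\mp,\my'}-c(\mp)\}$ is exactly the Fenchel conjugate $c^*$, again closed and convex. The producer condition in $\+L$ reads $\mx\in\argmax_{\my\ge 0}\{\InAngles{\mP,\my}-c(\my)\}$; since the argmax is non-empty (the equilibrium exists), the supremum is attained and equals $s(\mP)$, so this condition is equivalent to the Fenchel--Young equality $\InAngles{\mP,\mx}=c(\mx)+s(\mP)$. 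By the symmetric form of Fenchel--Young for closed convex $c$, this is equivalent to $\mP\in\argmax_{\my'\ge 0}\{\InAngles{\mx,\my'}-s(\my')\}$, the monotonicity of $c$ being precisely what guarantees that the conjugate of the orthant-restricted $s$ recovers $c$ on $\-R^m_{\ge 0}$, so that the argmax may legitimately be taken over $\-R^m_{\ge 0}$ rather than all of $\-R^m$. Since $\mx=\dual{\mp}$ and $\mP=\my$, this last display is precisely the seller-profit-maximization-and-market-clearing condition of \Cref{def::Fisher-market-production}. The converse direction is identical after interchanging the roles of $\mx$ and $\mP$.

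I expect the main obstacle to be the bookkeeping of the two non-negativity constraints --- on allocations in the Lindahl market and on the seller's supply in the production Fisher market --- since a textbook Fenchel--Young statement is naturally phrased over all of $\-R^m$. Reconciling them requires that $c^{**}$ restricted to $\-R^m_{\ge 0}$ recover $c$ and that the orthant constraint on the seller's problem be inactive in the relevant directions, which is exactly where closedness and monotonicity of $c$ enter; and, as in the original definition where ``$\sum_i p_{ij}\le 1$ with equality when $x_j>0$'', boundary goods ($x_j=0$ or $y_j=0$) turn the stationarity condition into a complementary-slackness/normal-cone statement, handled uniformly by the subdifferential form of Fenchel--Young. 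A consistency check: taking $c(\mp)=\sum_j p_j$ gives $s=c^*=\iota_{[0,\mathbf{1}]^m}$, and \Cref{def::Fisher-market-production} collapses back to the standard Fisher market \Cref{def::Fisher-market} with unit supplies.
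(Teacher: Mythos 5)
Your proposal is correct and follows essentially the same route as the paper: conditions (i) and (ii) are inherited from \Cref{thm::main-duality}, and for condition (iii) the paper likewise observes that the Lindahl producer condition means $\mx$ attains the supremum defining $s\bigl(\sum_i \mp_i\bigr)$, from which the seller's profit-maximization follows directly from the definition of $s$ as a supremum --- exactly your Fenchel--Young equality, just written out as two inequalities. Your additional care about the converse direction (biconjugacy, closedness, monotonicity, and the orthant restriction) goes beyond the paper, whose proof only establishes the Lindahl-to-Fisher implication and does not address these points.
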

\begin{proof}
Assume $(\mx, \{\mp_i\}_{i \in N})$ is a Lindahl equilibrium with cost $c$ (per Definition \ref{def::Lindahl-equilibrium-producer}). Let the dual variables be defined as $\dual{\mp} = \mx$, $\dual{\mx}_i = \mp_i$, $\dual{u}_i$ be the dual utility of $u_i$, and $s = c^*$ (the Fenchel conjugate of $c$).

As established by our main duality result, Theorem \ref{thm::main-duality}, the affordability (i) and utility-maximization (ii) conditions of the production Fisher market (Definition \ref{def::Fisher-market-production}) are satisfied by $(\{\dual{\mx}_i\}_{i \in N}, \dual{\mp})$. We only need to verify the seller profit-maximization and market-clearing condition (iii).

Let $\dual{\my}$ be the vector of total demand in the Fisher market, i.e., $\dual{y}_j = \sum_i \dual{x}_{ij}$ for all good $j \in M$. Using the duality mapping, $\dual{x}_{ij} = p_{ij}$, which implies $\dual{y}_j = \sum_i p_{ij}$.

 According to the \Cref{def::Lindahl-equilibrium-producer}, 
    $$\mx \in \argmax_{\my \in \-R_{\geq 0}^m} \sum_j \left(\sum_i p_{ij}\right) y_j - c(\my).$$

    This implies, for any $\dual{\mq} \in \-R_{\geq 0}^m$ 
    $$ \sum_j \left(\sum_i \dual{x}_{ij}\right) \dual{p}_j - c(\dual{\mp}) \geq \sum_j \left(\sum_i \dual{x}_{ij}\right) \dual{q}_j - c(\dual{\mq}).$$ 
    
    Recall that  $s(\my) = \max_{\mp \in \-R_{\geq 0}^m} \{\mp^\top \my - c(\mp)\}$. This implies
    $$s(\dual{\my}) = \sum_j \dual{y}_j \dual{p}_j - c(\dual{\mp}).$$

    Therefore, 
    $$\sum_j \dual{y}_j \dual{p}_j - s(\dual{\my}) = c(\dual{\mp}) \geq \sum_j \dual{y}'_j \dual{p}_j - s(\dual{\my}').$$
    The inequality holds as $s(\dual{\my}') = \max_{\mp \in \-R_{\geq 0}^m} \{\mp^\top \dual{\my}' - c(\mp)\} \geq \sum_j \dual{y}'_j \dual{p}_j - c(\dual{\mp})$. The result follows.
    
\end{proof}

\subsection{On \Cref{assumption:chores}}\label{sec:assumption-chores}
We show that it is without loss of generality to assume \Cref{assumption:chores} that every non-zero allocation of chores leads to non-zero disutility to every agent. Suppose there exists an agent $i$ and an allocation $\mathbf{y} \neq \mathbf{0}$ such that $d_i(\mathbf{y}) = 0$. Let $\mathcal{Y} = \{j \mid y_j > 0\}$ be the set of chores in the support of $\mathbf{y}$.

First, we show that $d_i(\mathbf{z}) = 0$ for any allocation $\mathbf{z}$ supported only on $\mathcal{Y}$ (i.e., $z_j = 0$ for all $j \notin \mathcal{Y}$). For any such $\mathbf{z}$, there exists a scalar $c \ge 0$ such that $\mathbf{z} \le c\mathbf{y}$ (component-wise). Since $d_i(\mathbf{0}) = 0$, by monotonicity and non-negativity, $0 \le d_i(\mathbf{z}) \le d_i(c\mathbf{y})$. By 1-homogeneity, $d_i(c\mathbf{y}) = c \, d_i(\mathbf{y}) = c \cdot 0 = 0$. Thus, $d_i(\mathbf{z}) = 0$.

Second, let $\mathbf{x}$ be any allocation supported \emph{outside} $\mathcal{Y}$ (i.e., $x_j = 0$ for $j \in \mathcal{Y}$), and let $\mathbf{z}$ be any allocation supported \emph{on} $\mathcal{Y}$, as above. Define $\mathbf{x}' = \mathbf{x} + \mathbf{z}$. We know $d_i(\mathbf{x}' - \mathbf{x}) = d_i(\mathbf{z}) = 0$. Using convexity, 1-homogeneity, and the fact that $d_i(\mathbf{x}' - \mathbf{x}) = 0$, we have:
$$ d_i(\mathbf{x}) = d_i(\mathbf{x}) + d_i(\mathbf{x}' - \mathbf{x}) \ge 2 d_i\left(\frac{\mathbf{x} + (\mathbf{x}' - \mathbf{x})}{2}\right) = 2 d_i\left(\frac{1}{2} \mathbf{x}'\right) = d_i(\mathbf{x}') $$
This gives $d_i(\mathbf{x}) \ge d_i(\mathbf{x}')$. Additionally, since $d_i$ is also non-decreasing, we have $d_i(\mathbf{x}') = d_i(\mathbf{x} + \mathbf{z}) \ge d_i(\mathbf{x})$. Combining these two inequalities, we must have $d_i(\mathbf{x}') = d_i(\mathbf{x})$.

This demonstrates that agent $i$ is indifferent to receiving any bundle of chores from the set $\mathcal{Y}$, as it does not change their (dis)utility. Therefore, we can assign all chores $j \in \mathcal{Y}$ to agent $i$ and remove these chores from the allocation problem. The WLOG assumption holds for the remaining problem instance.

\subsection{Proportional Response Dynamics for Lindahl Equilibrium with CES Utilities} \label{sec:prd:ces}
We analyze the proportional response dynamics for a Lindahl equilibrium by examining a generalized Shmyrev program, $\Phi(\mb)$, which is derived from the program \cite{shmyrev2009algorithm, cheung2018dynamics} for dual Fisher markets with dual CES utilities. This potential function is defined as:
\begin{align*}\Phi(\mb) = - \sum_{i: \rho_i \neq \{0, -\infty\}} &\frac{1}{\rho_i} \sum_j b_{ij} \log \frac{ b_{ij}}{a_{ij} x^{\rho_i}_j} - \sum_{i: \rho_i = -\infty} \sum_j b_{ij} \log \frac{a_{ij}}{ x_j} + \sum_{i : \rho_i = 0} \sum_j b_{ij} \log x_j, \\
&\text{ where } x_j = \sum_i b_{ij} \text{ and } \sum_j b_{ij} = B_i. 
\end{align*}
 The function satisfies the following inequality, which establishes its strong convexity and smoothness parameters relative to the KL divergence:
\begin{align*}
   &-\sum_{i: \rho_i \neq \{0, -\infty\}} \frac{1}{\rho_i} \texttt{KL}(\mb_i || \mb_i')  \\ 
   &~~~~~~~~~~~~~~~\leq \Phi(\mb) - \Phi(\mb') - \InAngles{\nabla \Phi(\mb'), \mb - \mb'} \\
   &~~~~~~~~~~~~~~~~~~~~~~~~~~~~~~\leq \sum_{i: \rho_i \neq \{0, -\infty\}} \frac{\rho_i - 1}{\rho_i} \texttt{KL}(\mb_i || \mb_i') + \sum_{i : \rho_i = -\infty} \texttt{KL}(\mb_i || \mb_i') .
\end{align*}
\paragraph{Complementary domain} When all agents have $\rho_i \leq 0$, then the function $\Phi(\cdot)$ is convex. We can therefore apply the standard mirror descent algorithm to find the equilibrium:
$$b_{ij}^{t+1} = \argmin_{\mb_i : \sum_j b_{ij} = B_i} \left\{ \InAngles{\nabla_{\mb_i} \Phi(\mb^t), \mb_i - \mb_i^t} + \frac{\rho_i - 1}{\rho_i} \texttt{KL}(\mb_i || \mb_i') \right\}$$
for $\rho_i \in (-\infty, 0)$, which yields
\begin{align*}b_{ij}^{t+1} = B_i \frac{\left(a_{ij} \cdot \left(x_{ij}^t / b_{ij}^t\right)^{\rho_i} \right)^{1 / (1 - \rho_i)}}{\sum_{j'}\left(a_{ij'} \cdot \left(x_{ij'}^t / b_{ij'}^t\right)^{\rho_i} \right)^{1 / (1 - \rho_i)}};  \numberthis\label{eq:prd:ces:lindahl:neg}  \end{align*}
 and, for $\rho_i  = -\infty$, apply the standard mirror descent:
$$b_{ij}^{t+1} = \argmin_{\mb_i : \sum_j b_{ij} = B_i} \left\{ \InAngles{\nabla_{\mb_i} \Phi(\mb^t), \mb_i - \mb_i^t} + \texttt{KL}(\mb_i || \mb_i') \right\}, $$
 which yields
\begin{align*}b_{ij}^{t+1} = B_i \frac{a_{ij} \cdot \left(b_{ij}^t / x_{ij}^t\right) }{\sum_{j'}a_{ij'} \cdot \left(b_{ij'}^t / x_{ij'}^t\right) }.\numberthis\label{eq:prd:ces:lindahl:leon}
\end{align*}
This leads to the following convergence theorem.
\begin{theorem}\label{theorem:Lindahl-CES-PRD-complement}
    Consider the Lindahl equilibrium with agents having CES utilities such that the elasticity parameter $\rho_i \leq 0$ for all $i$. The dynamics converge to the Lindahl equilibrium if the agents update their spendings according to \eqref{eq:prd:ces:lindahl:neg} if $\rho_i \in (-\infty, 0)$; and \eqref{eq:prd:ces:lindahl:leon} if $\rho_i = -\infty$; and $b_{ij}^{t} = B_i \frac{a_{ij}}{\sum_{j'} a_{ij'}}$ if $\rho_i = 0$; and $x_j^t  = \sum_i b_{ij}^t$ for all $j$. The dynamics converge to the Lindahl equilibrium at a rate of $O(1 / T)$. Additionally, if no agent has $\rho_i = -\infty$, the convergence is $O\left((1 - 1 / (1 - \max_i \{\rho_i\}))^T\right)$.
\end{theorem}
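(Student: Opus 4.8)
The plan is to recognize the stated dynamics as a mirror-descent run on the convex potential $\Phi$, read the relative smoothness and relative strong-convexity moduli off the inequality recorded just above, and then carry the conclusion back to the Lindahl equilibrium through \Cref{thm::main-duality} and \Cref{ex:CES and indirect utility}. First I would dispose of the Cobb--Douglas agents ($\rho_i=0$): freezing $b_{ij}=B_i\,a_{ij}/\sum_{j'}a_{ij'}$, which is exactly agent $i$'s equilibrium spending, makes their contribution $\sum_j b_{ij}\log x_j$ affine in the remaining blocks, so these agents may be removed from the optimization without changing its minimizer. After this reduction every surviving agent has $\rho_i\in(-\infty,0)$ or $\rho_i=-\infty$, the lower bound in the stated inequality is nonnegative (since $-1/\rho_i\ge 0$, with the convention $-1/(-\infty)=0$), hence $\Phi$ is convex and attains a minimum over the product of budget simplices $\{\mb_i:\sum_j b_{ij}=B_i,\ \mb_i\ge 0\}$; and as recorded in the discussion following \eqref{Shmyrev-Lindahl-CES}, this minimizer --- equivalently, by the duality framework, a Fisher market equilibrium of the dual CES market with allocations and prices exchanged --- is a Lindahl equilibrium $(\mx,\{\mp_i\})$ of the original market, with $x_j=\sum_i b_{ij}$. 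It therefore suffices to drive $\mb^t$ (equivalently $\mx^t=\sum_i\mb_i^t$) to this minimizer.

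The second step is to verify that the updates \eqref{eq:prd:ces:lindahl:neg} and \eqref{eq:prd:ces:lindahl:leon} are precisely the mirror-descent steps
\[
\mb^{t+1}_i\in\argmin_{\mb_i:\ \sum_j b_{ij}=B_i}\ \Bigl\{\,\InAngles{\nabla_{\mb_i}\Phi(\mb^t),\,\mb_i-\mb^t_i}+\eta_i^{-1}\,\texttt{KL}(\mb_i\,\|\,\mb^t_i)\,\Bigr\},
\]
with the per-agent weights $\eta_i^{-1}=\tfrac{\rho_i-1}{\rho_i}$ for $\rho_i\in(-\infty,0)$ and $\eta_i^{-1}=1$ for $\rho_i=-\infty$. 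This means computing $\partial\Phi/\partial b_{ij}$, noting that $b_{ij}$ enters both block $i$'s own summand and, through $x_j=\sum_k b_{kj}$, every other block's summand, and then solving the entropy-regularized linear program over the slice $\{\mb_i:\sum_j b_{ij}=B_i\}$ with one Lagrange multiplier for the budget constraint; the minimizer has the usual normalized exponential-weights form, which one matches term by term with \eqref{eq:prd:ces:lindahl:neg} and \eqref{eq:prd:ces:lindahl:leon}. I expect this calculation to be the main obstacle: the cross terms produced by $x_j$ must reassemble exactly into the ratios $a_{ij}\,(x^t_{ij}/b^t_{ij})^{\rho_i}$ raised to the power $1/(1-\rho_i)$, and a single sign error in the gradient destroys the identification.

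With the identification in place, the last step is standard mirror-descent accounting. Against the block-weighted Bregman divergence $D(\mb,\mb')\triangleq\sum_{i:\rho_i\ne-\infty}\tfrac{\rho_i-1}{\rho_i}\,\texttt{KL}(\mb_i\|\mb'_i)+\sum_{i:\rho_i=-\infty}\texttt{KL}(\mb_i\|\mb'_i)$, the stated inequality says exactly that $\Phi$ is $1$-smooth relative to $D$ and relatively strongly convex, block $i$ contributing strong-convexity modulus $1/(1-\rho_i)$ (a $-\infty$ block contributing $0$). The descent-lemma/three-point argument for mirror descent under relative smoothness then gives $\Phi(\overline{\mb}^T)-\Phi^{*}=O(1/T)$ for the averaged iterate $\overline{\mb}^T$, hence the claimed $O(1/T)$ rate of $\overline{\mx}^T=\sum_i\overline{\mb}_i^T$ toward the equilibrium allocation; when no agent has $\rho_i=-\infty$, $\Phi$ is relatively strongly convex and the same argument upgrades this to the stated linear rate. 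Finally, the relative strong-convexity lower bound together with Pinsker's inequality on each scaled simplex converts objective-value convergence into $\|\mb^t-\mb^{*}\|_1\to 0$, so that $\mx^t\to\mx^{*}$ at the corresponding rate --- taking the initial spending strictly positive so that all iterates stay in the relative interior of the domain. Throughout, the only input beyond routine computation is the relative smoothness and strong-convexity inequality, which we assume.
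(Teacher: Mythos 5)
Your proposal follows essentially the same route as the paper: recognize \eqref{eq:prd:ces:lindahl:neg} and \eqref{eq:prd:ces:lindahl:leon} as mirror-descent steps on the generalized Shmyrev potential $\Phi$ with per-block Bregman weights $\tfrac{\rho_i-1}{\rho_i}$ (resp.\ $1$), read the rates off the stated relative smoothness/strong-convexity inequality, and carry the minimizer back to a Lindahl equilibrium via the duality framework, which is exactly the argument in \Cref{sec:prd:ces}. The only quibble is your claim that freezing the Cobb--Douglas spendings makes their contribution $\sum_j b_{ij}\log x_j$ \emph{affine} in the remaining blocks --- it is concave in $x_j=\sum_k b_{kj}$, not affine --- but this does not affect the argument since the paper's claim is only that the overall program remains convex after this substitution.
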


\paragraph{Substitute domain} Similarly, when all agents have $\rho_i \geq 0$, the function $\Phi(\cdot)$ is concave. We can apply the standard mirror ascent:
$$b_{ij}^{t+1} = \argmax_{\mb_i : \sum_j b_{ij} = B_i} \left\{ \InAngles{\nabla_{\mb_i} \Phi(\mb^t), \mb_i - \mb_i^t} - \frac{1}{\rho_i} \texttt{KL}(\mb_i || \mb_i') \right\}$$
for $\rho_i \in (0, 1]$, which yields
\begin{align*}b_{ij}^{t+1} = B_i \frac{a_{ij} \left(x_{ij}^t\right)^{\rho_i}}{\sum_{j'} a_{ij'} \left(x_{ij'}^t\right)^{\rho_i}}.  \numberthis\label{eq:prd:ces:lindahl:pos}  \end{align*}
 This implies the following theorem.
\begin{theorem}\label{theorem:Lindahl-CES-PRD-substitute}
    Consider the Lindahl equilibrium with agents having CES utilities such that the elasticity parameter $\rho_i \geq 0$ for all $i$. The dynamics converge to the Lindahl equilibrium if agents update their spending according to \eqref{eq:prd:ces:lindahl:pos} if $\rho_i \in (0, 1]$ and $b_{ij}^{t} = B_i \frac{a_{ij}}{\sum_{j'} a_{ij'}}$ if $\rho_i = 0$; and $x_j^t  = \sum_i b_{ij}^t$ for all $j$. Furthermore, the dynamics converge to the Lindahl equilibrium at a rate of $O(1 / T)$. Additionally, if there is no agent with $\rho_i = 1$, then the convergence is $O((\max_i \{\rho_i\})^T)$.
\end{theorem}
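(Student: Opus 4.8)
The plan is to recognize the stated dynamics as mirror ascent on the concave Shmyrev potential $\Phi(\mb)$ of \eqref{Shmyrev-Lindahl-CES} and then read off the two convergence rates from the relative-smoothness / relative-strong-convexity inequality already displayed in this subsection, combined with \Cref{thm::main-duality} to identify the fixed points.

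First I would freeze every Cobb--Douglas agent ($\rho_i=0$) at $b_{ij}=B_i a_{ij}/\sum_{j'}a_{ij'}$, exactly as the theorem prescribes; as noted after \eqref{Shmyrev-Lindahl-CES}, with those coordinates fixed $\Phi$ becomes a function of the remaining spending vectors $\{\mb_i:\rho_i\in(0,1]\}$, and the displayed second-order upper bound (whose coefficients $\tfrac{\rho_i-1}{\rho_i}$ are all $\le 0$ here, with no $\rho_i=-\infty$ terms) shows it is concave on the product of simplices $\{\mb_i\ge\boldsymbol{0}:\langle\mb_i,\boldsymbol{1}\rangle=B_i\}$. Next I would carry out the (routine) computation of $\nabla_{\mb_i}\Phi(\mb^t)$, noting that the cross-agent contributions through $x_j=\sum_k b_{kj}$ telescope to a constant $+1$ per coordinate, so that $\partial_{b_{ij}}\Phi=-\tfrac{1}{\rho_i}\log\tfrac{b_{ij}}{a_{ij}}+\log x_j-\tfrac{1}{\rho_i}+1$. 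Substituting this into the mirror-ascent step with proximal term $\tfrac{1}{\rho_i}\texttt{KL}(\mb_i \| \mb_i^t)$ and a Lagrange multiplier for $\langle\mb_i,\boldsymbol{1}\rangle=B_i$ gives $b_{ij}^{t+1}\propto b_{ij}^t\exp(\rho_i\,\partial_{b_{ij}}\Phi)$; the $b_{ij}^t$ factors cancel, leaving $b_{ij}^{t+1}\propto a_{ij}(x_j^t)^{\rho_i}$, i.e. exactly \eqref{eq:prd:ces:lindahl:pos} after normalization. This identifies PRD with mirror ascent on $\Phi$ at unit step with mirror weight $1/\rho_i$ on block $i$.

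I would then identify the maximizers of $\Phi$ with Lindahl equilibria. This is immediate from \Cref{thm::main-duality}: $\Phi$ is the Shmyrev program of the dual Fisher market $\+F=(N,M,\{\dual{u}_i\},\{B_i\})$ whose dual CES exponents $\dual{\rho}_i=\rho_i/(\rho_i-1)\le 0$ fall in the regime of \cite{cheung2018dynamics}, so its maximizers are the Fisher equilibria of $\+F$, and these are in bijection with the Lindahl equilibria of $\+L$ via the exchange $\dual{\mp}=\mx$, $\dual{\mx}_i=\mp_i$ (alternatively one verifies this directly by writing the KKT conditions of $\max\Phi$ and recovering \Cref{def:Lindahl-equivalent} with $x_j=\sum_i b_{ij}$, $p_{ij}=b_{ij}/x_j$). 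Hence $\mb^t$ approaching $\argmax\Phi$ is the same as the generated $(\mx^t,\{\mp_i^t\})$ approaching a Lindahl equilibrium, and in particular $\mx^t$ is the linear image $x_j^t=\sum_i b_{ij}^t$ of $\mb^t$.

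Finally I would invoke the convergence theory for mirror descent on relatively smooth objectives. The displayed inequality states that $-\Phi$ is $L_i$-smooth relative to $\texttt{KL}(\mb_i \| \cdot)$ with $L_i=1/\rho_i$, so with reference function $D(\mb,\mb')=\sum_i\tfrac{1}{\rho_i}\texttt{KL}(\mb_i \| \mb_i')$ the map $-\Phi$ is $1$-smooth relative to $D$ and the update above is precisely step-$1$ mirror descent with reference $D$; the standard analysis for relatively smooth functions (cf. \cite{birnbaum2011distributed}) then gives $\Phi^\star-\Phi(\overline{\mb}^{T})=O(1/T)$ for the averaged iterate, which yields the claimed $O(1/T)$ empirical rate for $\mx^t=\overline{\mx}^{T}$. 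If no agent has $\rho_i=1$, the same inequality also says $-\Phi$ is $\mu_i$-strongly convex relative to $\texttt{KL}(\mb_i \| \cdot)$ with $\mu_i=(1-\rho_i)/\rho_i$, hence $\min_i(1-\rho_i)$-strongly convex relative to $D$; relative smoothness plus relative strong convexity gives linear convergence of $D(\mb^*,\mb^t)$ at rate $(1-\min_i(1-\rho_i))^T=(\max_i\rho_i)^T$, which transfers to $\mx^t$ through Pinsker's inequality and the boundedness of the linear map $\mb\mapsto\mx$. The main obstacle I anticipate is bookkeeping rather than conceptual: matching the heterogeneous constants $L_i,\mu_i$ so that a single reference function and step size serve all agents simultaneously, controlling the gradient of $\Phi$ when some $x_j^t$ is momentarily small without losing relative smoothness, and --- in the $\rho_i=1$ case, where $\Phi$ is not strongly convex --- passing from the function-value rate to a statement about the allocation vector, which I would handle using compactness of the feasible region, continuity of $\mb\mapsto\mx$, and the fact that $\argmax\Phi$ maps onto the set of Lindahl allocations.
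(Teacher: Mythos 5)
Your proposal is correct and follows essentially the same route as the paper: \Cref{sec:prd:ces} establishes exactly this argument by exhibiting \eqref{eq:prd:ces:lindahl:pos} as unit-step mirror ascent on the concave \ref{Shmyrev-Lindahl-CES} potential with block-wise reference $\tfrac{1}{\rho_i}\texttt{KL}(\cdot\|\cdot)$, and reads the $O(1/T)$ and $O((\max_i\rho_i)^T)$ rates off the displayed relative-smoothness/relative-strong-convexity inequality, with equilibria identified via \Cref{thm::main-duality}. Your write-up is in fact somewhat more explicit than the paper's (the gradient computation, the matching of heterogeneous constants, and the passage from Bregman-divergence decay to convergence of $\mx^t$), but there is no substantive difference in approach.
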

\section{Discussions and Future Directions}\label{sec:conclusion-discussion}
In this paper, we propose a unified duality framework for market equilibria in both private goods and public goods markets, with extension to markets with chores. Our framework establishes a fundamental connection between the Fisher market equilibrium and the Lindahl equilibrium: they are equivalent by exchanging the role of allocations and prices in dual markets. This duality connection unifies different research directions and gives many new results on the characterizations, computation, and market dynamics for both the Fisher market equilibrium and the Lindahl equilibrium. 

We believe our framework is general, crystallizes the connection between private goods and public goods markets, and is useful for future works on market equilibrium. We discuss several interesting future directions below.

\paragraph{Capped Setting} In public goods with caps, each public good may have a cap on how much funding (payment) it can receive. The capped setting has applications in stable committee selection~\citep{cheng2020group, munagala2022approximate,gao2025computation, nguyen2025few}. In the capped setting, the Nash welfare maximization solution is not equivalent to the Lindahl equilibrium even with linear utilities. The recent work by \citet{kroer2025computing} show that in the linear case, directly adding the cap constraints to the Shmyrev program yields a Lindahl equilibrium. The efficient computation of a Lindahl equilibrium in the capped setting with more general utilities remains largely open. Is there a general reduction that gives efficient algorithms for the capped setting from algorithms for the uncapped setting?
\paragraph{Markets with Both Public and Private Goods}
In this paper, we analyze markets that consist solely of private goods (Fisher markets) or solely of public goods (Lindahl equilibria). A more realistic model of economics consists of mixed goods, where private and public goods are allocated simultaneously. For instance,a household must allocate its budget to purchase private goods like food and housing, while also funding public goods (parks, schools, etc.) via taxation. Whether such mixed economies can be reduced to an equivalent private-goods-only (Fisher) market is an interesting question. A positive answer would unify the two frameworks and allow algorithmic and structural results for Fisher markets to carry over to mixed-goods settings.

\paragraph{Public Chores} As we have discussed in \Cref{sec:public chores}, the literature on the fair division of public chores is very sparse compared to private chores. We also refer the readers to a recent work~\citep{elkind2025not} for a discussion. We propose the Lindahl equilibrium for public chores as one solution concept and show that it is (weakly) Pareto-optimal. We believe it is an interesting future direction to explore the properties and applications of the Lindahl equilibrium for chores and the public chores setting in general. 

\printbibliography

@article{cheng2020group,
  title={Group fairness in committee selection},
  author={Cheng, Yu and Jiang, Zhihao and Munagala, Kamesh and Wang, Kangning},
  journal={ACM Transactions on Economics and Computation (TEAC)},
  volume={8},
  number={4},
  pages={1--18},
  year={2020},
  publisher={ACM New York, NY, USA}
}

@inproceedings{munagala2022approximate,
  title={Approximate core for committee selection via multilinear extension and market clearing},
  author={Munagala, Kamesh and Shen, Yiheng and Wang, Kangning and Wang, Zhiyi},
  booktitle={Proceedings of the 2022 Annual ACM-SIAM Symposium on Discrete Algorithms (SODA)},
  pages={2229--2252},
  year={2022},
  organization={SIAM}
}

@article{gao2025computation,
  title={Computation of Approximately Stable Committees in Approval-based Elections},
  author={Gao, Drew and Sun, Yihang and Vondr{\'a}k, Jan},
  journal={arXiv preprint arXiv:2508.00130},
  year={2025}
}

@article{debreu1962new,
  title={New concepts and techniques for equilibrium analysis},
  author={Debreu, Gerard},
  journal={International Economic Review},
  volume={3},
  number={3},
  pages={257--273},
  year={1962},
  publisher={JSTOR}
}

@inproceedings{nan2025convergence,
  title={On the convergence of t{\^a}tonnement for linear fisher markets},
  author={Nan, Tianlong and Gao, Yuan and Kroer, Christian},
  booktitle={Proceedings of the AAAI Conference on Artificial Intelligence},
  volume={39},
  number={13},
  pages={14027--14035},
  year={2025}
}

@article{nesterov2018computation,
  title={Computation of Fisher--Gale equilibrium by auction},
  author={Nesterov, Yurii and Shikhman, Vladimir},
  journal={Journal of the Operations Research Society of China},
  volume={6},
  number={3},
  pages={349--389},
  year={2018},
  publisher={Springer}
}

@inproceedings{megiddo2007continuity,
  title={Continuity properties of equilibrium prices and allocations in linear Fisher markets},
  author={Megiddo, Nimrod and Vazirani, Vijay V},
  booktitle={International Workshop on Web and Internet Economics},
  pages={362--367},
  year={2007},
  organization={Springer}
}

@inproceedings{garg2022approximating,
  title={Approximating Equilibrium under Constrained Piecewise Linear Concave Utilities with Applications to Matching Markets},
  author={Garg, Jugal and Tao, Yixin and V{\'e}gh, L{\'a}szl{\'o} A},
  booktitle={Proceedings of the 2022 Annual ACM-SIAM Symposium on Discrete Algorithms (SODA)},
  pages={2269--2284},
  year={2022},
  organization={SIAM}
}

@article{gao2023infinite,
  title={Infinite-dimensional fisher markets and tractable fair division},
  author={Gao, Yuan and Kroer, Christian},
  journal={Operations Research},
  volume={71},
  number={2},
  pages={688--707},
  year={2023},
  publisher={INFORMS}
}

@inproceedings{kroer2019computing,
  title={Computing large market equilibria using abstractions},
  author={Kroer, Christian and Peysakhovich, Alexander and Sodomka, Eric and Stier-Moses, Nicolas E},
  booktitle={Proceedings of the 2019 ACM Conference on Economics and Computation},
  pages={745--746},
  year={2019}
}

@inproceedings{chen2012incentive,
  title={Incentive ratios of fisher markets},
  author={Chen, Ning and Deng, Xiaotie and Zhang, Hongyang and Zhang, Jie},
  booktitle={International Colloquium on Automata, Languages, and Programming},
  pages={464--475},
  year={2012},
  organization={Springer}
}

@article{conitzer2022pacing,
  title={Pacing equilibrium in first price auction markets},
  author={Conitzer, Vincent and Kroer, Christian and Panigrahi, Debmalya and Schrijvers, Okke and Stier-Moses, Nicolas E and Sodomka, Eric and Wilkens, Christopher A},
  journal={Management Science},
  volume={68},
  number={12},
  pages={8515--8535},
  year={2022},
  publisher={INFORMS}
}

@inproceedings{liaostatistical,
  title={Statistical Inference for Fisher Market Equilibrium},
  author={Liao, Luofeng and Gao, Yuan and Kroer, Christian},
  booktitle={The Eleventh International Conference on Learning Representations}
}

@inproceedings{branzei2014fisher,
  title={The fisher market game: Equilibrium and welfare},
  author={Br{\^a}nzei, Simina and Chen, Yiling and Deng, Xiaotie and Filos-Ratsikas, Aris and Frederiksen, S{\o}ren and Zhang, Jie},
  booktitle={Proceedings of the AAAI Conference on Artificial Intelligence},
  volume={28},
  number={1},
  year={2014}
}

@inproceedings{deligkas2024constant,
  title={Constant inapproximability for Fisher markets},
  author={Deligkas, Argyrios and Fearnley, John and Hollender, Alexandros and Melissourgos, Themistoklis},
  booktitle={Proceedings of the 25th ACM Conference on Economics and Computation},
  pages={13--39},
  year={2024}
}

@article{garg2024satiation,
  title={Satiation in Fisher markets and approximation of Nash social welfare},
  author={Garg, Jugal and Hoefer, Martin and Mehlhorn, Kurt},
  journal={Mathematics of Operations Research},
  volume={49},
  number={2},
  pages={1109--1139},
  year={2024},
  publisher={INFORMS}
}

@inproceedings{goktas2021consumer,
  title={A consumer-theoretic characterization of fisher market equilibria},
  author={Goktas, Denizalp and Viqueira, Enrique Areyan and Greenwald, Amy},
  booktitle={International Conference on Web and Internet Economics},
  pages={334--351},
  year={2021},
  organization={Springer}
}

@article{jalota2023fisher,
  title={Fisher markets with linear constraints: Equilibrium properties and efficient distributed algorithms},
  author={Jalota, Devansh and Pavone, Marco and Qi, Qi and Ye, Yinyu},
  journal={Games and Economic Behavior},
  volume={141},
  pages={223--260},
  year={2023},
  publisher={Elsevier}
}

@article{garg2021markets,
  title={Markets for public decision-making},
  author={Garg, Nikhil and Goel, Ashish and Plaut, Benjamin},
  journal={Social Choice and Welfare},
  volume={56},
  number={4},
  pages={755--801},
  year={2021},
  publisher={Springer}
}

@article{ruys1972existence,
  title={On the existence of an equilibrium for an economy with public goods only},
  author={Ruys, Pieter HM},
  journal={Zeitschrift f{\"u}r National{\"o}konomie/Journal of Economics},
  number={H. 2/3},
  pages={189--202},
  year={1972},
  publisher={JSTOR}
}

@inproceedings{amanatidis2022fair,
  title={Fair Division of Indivisible Goods: A Survey},
  author={Amanatidis, Georgios and Birmpas, Georgios and Filos-Ratsikas, Aris and Voudouris, Alexandros A},
  booktitle={The 31st International Joint Conference on Artificial Intelligence and the 25th European Conference on Artificial Intelligence},
  pages={5385--5393},
  year={2022},
  organization={International Joint Conferences on Artificial Intelligence Organization}
}

@inproceedings{conitzer2017fair,
  title={Fair public decision making},
  author={Conitzer, Vincent and Freeman, Rupert and Shah, Nisarg},
  booktitle={Proceedings of the 2017 ACM Conference on Economics and Computation},
  pages={629--646},
  year={2017}
}

@article{liu2024mixed,
  title={Mixed fair division: A survey},
  author={Liu, Shengxin and Lu, Xinhang and Suzuki, Mashbat and Walsh, Toby},
  journal={Journal of Artificial Intelligence Research},
  volume={80},
  pages={1373--1406},
  year={2024}
}

@inproceedings{chaudhury2022competitive,
  title={Competitive equilibrium with chores: Combinatorial algorithm and hardness},
  author={Chaudhury, Bhaskar Ray and Garg, Jugal and McGlaughlin, Peter and Mehta, Ruta},
  booktitle={Proceedings of the 23rd ACM Conference on Economics and Computation},
  pages={1106--1107},
  year={2022}
}

@inproceedings{boodaghians2022polynomial,
  title={Polynomial Time Algorithms to Find an Approximate Competitive Equilibrium for Chores},
  author={Boodaghians, Shant and Chaudhury, Bhaskar Ray and Mehta, Ruta},
  booktitle={Proceedings of the 2022 Annual ACM-SIAM Symposium on Discrete Algorithms (SODA)},
  pages={2285--2302},
  year={2022},
  organization={SIAM}
}

@inproceedings{garg2020computing,
  title={Computing competitive equilibria with mixed manna},
  author={Garg, Jugal and McGlaughlin, Peter},
  booktitle={AAMAS Conference proceedings},
  year={2020}
}

@article{branzei2024algorithms,
  title={Algorithms for competitive division of chores},
  author={Br{\^a}nzei, Simina and Sandomirskiy, Fedor},
  journal={Mathematics of Operations Research},
  volume={49},
  number={1},
  pages={398--429},
  year={2024},
  publisher={INFORMS}
}

@article{elkind2025not,
  title={Not in my backyard! Temporal voting over public chores},
  author={Elkind, Edith and Neoh, Tzeh Yuan and Teh, Nicholas},
  journal={arXiv preprint arXiv:2508.08810},
  year={2025}
}

@article{bogomolnaia2017competitive,
  title={Competitive division of a mixed manna},
  author={Bogomolnaia, Anna and Moulin, Herv{\'e} and Sandomirskiy, Fedor and Yanovskaya, Elena},
  journal={Econometrica},
  volume={85},
  number={6},
  pages={1847--1871},
  year={2017},
  publisher={Wiley Online Library}
}

@book{moulin2004fair,
  title={Fair division and collective welfare},
  author={Moulin, Herv{\'e}},
  year={2004},
  publisher={MIT press}
}

@article{nash1950bargaining,
  title={The bargaining problem},
  author={Nash, John F and others},
  journal={Econometrica},
  volume={18},
  number={2},
  pages={155--162},
  year={1950}
}

@article{mckenzie1954equilibrium,
  title={On equilibrium in Graham's model of world trade and other competitive systems},
  author={McKenzie, Lionel},
  journal={Econometrica: Journal of the Econometric Society},
  pages={147--161},
  year={1954},
  publisher={JSTOR}
}

@incollection{lindahl1958just,
  title={Just taxation—a positive solution},
  author={Lindahl, Erik},
  booktitle={Classics in the theory of public finance},
  pages={168--176},
  year={1958},
  publisher={Springer}
}

@inproceedings{peters2021market,
  title={Market-based explanations of collective decisions},
  author={Peters, Dominik and Pierczy{\'n}ski, Grzegorz and Shah, Nisarg and Skowron, Piotr},
  booktitle={Proceedings of the AAAI Conference on Artificial Intelligence},
  volume={35},
  number={6},
  pages={5656--5663},
  year={2021}
}

@inproceedings{munagala2022auditing,
  title={Auditing for core stability in participatory budgeting},
  author={Munagala, Kamesh and Shen, Yiheng and Wang, Kangning},
  booktitle={International Conference on Web and Internet Economics},
  pages={292--310},
  year={2022},
  organization={Springer}
}

@article{nguyen2025few,
  title={A few good choices},
  author={Nguyen, Thanh and Song, Haoyu and Lin, Young-San},
  journal={arXiv preprint arXiv:2506.22133},
  year={2025}
}

@article{airiau2023portioning,
  title={Portioning using ordinal preferences: Fairness and efficiency},
  author={Airiau, St{\'e}phane and Aziz, Haris and Caragiannis, Ioannis and Kruger, Justin and Lang, J{\'e}r{\^o}me and Peters, Dominik},
  journal={Artificial Intelligence},
  volume={314},
  pages={103809},
  year={2023},
  publisher={Elsevier}
}

@article{bogomolnaia2005collective,
  title={Collective choice under dichotomous preferences},
  author={Bogomolnaia, Anna and Moulin, Herv{\'e} and Stong, Richard},
  journal={Journal of Economic Theory},
  volume={122},
  number={2},
  pages={165--184},
  year={2005},
  publisher={Elsevier}
}

@article{10.1093/restud/rdaf043,
    author = {Gul, Faruk and Pesendorfer, Wolfgang},
    title = {Pseudo Lindahl Equilibrium as a Collective Choice Rule},
    journal = {The Review of Economic Studies},
    pages = {rdaf043},
    year = {2025},
    month = {06},
    abstract = {A collective choice problem specifies a finite set of alternatives from which a group of expected utility maximizers must choose. We associate a collective pseudo market with every collective choice problem and establish the existence and efficiency of pseudo Lindahl equilibrium (PLE) allocations. We also associate a cooperative bargaining problem with every collective choice problem and define a set-valued solution concept, the ω-weighted Nash bargaining set where ω is a vector of welfare weights. We provide axioms that characterize the ω-weighted Nash bargaining set. Our main result shows that ω-weighted Nash bargaining set payoffs are also the PLE payoffs of the corresponding collective pseudo market with the same utility functions and incomes ω. We define a pseudo core for collective pseudo markets and show that pseudo Lindahl equilibria are in the pseudo core. We characterize the set of PLE outcomes of discrete allocation problems and show that they contain the set of pseudo Walrasian equilibrium outcomes.},
    issn = {0034-6527},
    doi = {10.1093/restud/rdaf043},
    url = {https://doi.org/10.1093/restud/rdaf043},
    eprint = {https://academic.oup.com/restud/advance-article-pdf/doi/10.1093/restud/rdaf043/63504780/rdaf043.pdf},
}

@phdthesis{cheung2014analyzing,
  title={Analyzing tatonnement dynamics in economic markets},
  author={Cheung, Yun Kuen},
  year={2014},
  school={New York University}
}

@inproceedings{cheung2025proportional,
  title={Proportional Response Dynamics in Gross Substitutes Markets},
  author={Cheung, Yun Kuen and Cole, Richard and Tao, Yixin},
  booktitle={Proceedings of the 26th ACM Conference on Economics and Computation},
  pages={389--389},
  year={2025}
}

@inproceedings{garg2025approximating,
  title={Approximating Competitive Equilibrium by Nash Welfare},
  author={Garg, Jugal and Tao, Yixin and V{\'e}gh, L{\'a}szl{\'o} A},
  booktitle={Proceedings of the 2025 Annual ACM-SIAM Symposium on Discrete Algorithms (SODA)},
  pages={2538--2559},
  year={2025},
  organization={SIAM}
}

@article{brandt2023balanced,
title = {Coordinating charitable donations with Leontief preferences},
journal = {Journal of Economic Theory},
volume = {230},
pages = {106096},
year = {2025},
issn = {0022-0531},
doi = {https://doi.org/10.1016/j.jet.2025.106096},
url = {https://www.sciencedirect.com/science/article/pii/S0022053125001425},
author = {Felix Brandt and Matthias Greger and Erel Segal-Halevi and Warut Suksompong},
keywords = {Mechanism design, Collective decision making, Leontief preferences, Public good markets, Spending dynamics},
abstract = {We consider the problem of funding public goods that are complementary in nature. Examples include charities handling different needs (e.g., protecting animals vs. providing healthcare), charitable donations to different individuals, or municipal units handling different issues (e.g., security vs. transportation). We model these complementarities by assuming Leontief preferences; that is, each donor seeks to maximize an individually weighted minimum of all contributions across the charities. Decentralized funding may be inefficient due to a lack of coordination among the donors; centralized funding may be undesirable as it ignores the preferences of individual donors. We present a mechanism that combines the advantages of both methods. The mechanism efficiently distributes each donor's contribution so that no subset of donors has an incentive to redistribute their donations. Moreover, it is group-strategyproof, satisfies desirable monotonicity properties, maximizes Nash welfare, returns a unique Lindahl equilibrium, and can be implemented via natural best-response spending dynamics.}
}

@inproceedings{fain2016core,
  title={The core of the participatory budgeting problem},
  author={Fain, Brandon and Goel, Ashish and Munagala, Kamesh},
  booktitle={International Conference on Web and Internet Economics},
  pages={384--399},
  year={2016},
  organization={Springer}
}

@inproceedings{codenotti2004efficient,
  title={Efficient computation of equilibrium prices for markets with Leontief utilities},
  author={Codenotti, Bruno and Varadarajan, Kasturi},
  booktitle={International Colloquium on Automata, Languages, and Programming},
  pages={371--382},
  year={2004},
  organization={Springer}
}

@article{vazirani2012notion,
  title={The notion of a rational convex program, and an algorithm for the Arrow-Debreu Nash bargaining game},
  author={Vazirani, Vijay V},
  journal={Journal of the ACM (JACM)},
  volume={59},
  number={2},
  pages={1--36},
  year={2012},
  publisher={ACM New York, NY, USA}
}

@article{devanur2008market,
  title={Market equilibrium via a primal--dual algorithm for a convex program},
  author={Devanur, Nikhil R and Papadimitriou, Christos H and Saberi, Amin and Vazirani, Vijay V},
  journal={Journal of the ACM (JACM)},
  volume={55},
  number={5},
  pages={1--18},
  year={2008},
  publisher={ACM New York, NY, USA}
}

@article{zhao2023convergence,
  title={Convergence rate analysis of the multiplicative gradient method for PET-type problems},
  author={Zhao, Renbo},
  journal={Operations Research Letters},
  volume={51},
  number={1},
  pages={26--32},
  year={2023},
  publisher={Elsevier}
}

@article{brandl2022funding,
  title={Funding public projects: A case for the Nash product rule},
  author={Brandl, Florian and Brandt, Felix and Greger, Matthias and Peters, Dominik and Stricker, Christian and Suksompong, Warut},
  journal={Journal of Mathematical Economics},
  volume={99},
  pages={102585},
  year={2022},
  publisher={Elsevier}
}

@ARTICLE{cover1984algorithm,
  author={Cover, T.},
  journal={IEEE Transactions on Information Theory}, 
  title={An algorithm for maximizing expected log investment return}, 
  year={1984},
  volume={30},
  number={2},
  pages={369-373},
  keywords={Portfolios;Vectors;Convergence;Investment;Random variables;Stock markets;Inspection;Faces;Writing;Upper bound},
  doi={10.1109/TIT.1984.1056869}}

@article{foley1970lindahl,
  title={Lindahl's Solution and the Core of an Economy with Public Goods},
  author={Foley, Duncan K},
  journal={Econometrica: Journal of the Econometric Society},
  pages={66--72},
  year={1970},
  publisher={JSTOR}
}

@article{ArrowHurwicz1958,
  author    = {Kenneth J. Arrow and Leonid Hurwicz},
  title     = {On the Stability of Competitive Equilibrium, I},
  journal   = {Econometrica},
  year      = {1958},
  volume    = {26},
  number    = {4},
  pages     = {522--552},
  doi       = {10.2307/1907515},
  url       = {https://www.jstor.org/stable/1907515}
}

@article{ABH1959,
  author    = {Kenneth J. Arrow and H. D. Block and Leonid Hurwicz},
  title     = {On the Stability of Competitive Equilibrium, II},
  journal   = {Econometrica},
  year      = {1959},
  volume    = {27},
  number    = {1},
  pages     = {82--109},
  doi       = {10.2307/1907779},
  url       = {https://www.jstor.org/stable/1907779}
}

@article{ARY14,
  author    = {Noa Avigdor{-}Elgrabli and Yuval Rabani and Gala Yadgar},
  title     = {Convergence of T{\^{a}}tonnement in Fisher Markets},
  journal   = {CoRR},
  volume    = {abs/1401.6637},
  year      = {2014},
  url       = {http://arxiv.org/abs/1401.6637},
  archivePrefix = {arXiv},
  eprint    = {1401.6637},
  timestamp = {Wed, 07 Jun 2017 14:40:17 +0200},
  biburl    = {http://dblp.org/rec/bib/journals/corr/Avigdor-ElgrabliR14},
  bibsource = {dblp computer science bibliography, http://dblp.org}
}

@inproceedings{CMP016,
 author = {Caragiannis, Ioannis and Kurokawa, David and Moulin, Herv{\'e} and Procaccia, Ariel D. and Shah, Nisarg and Wang, Junxing},
 title = {The Unreasonable Fairness of Maximum Nash Welfare},
 booktitle = {Proceedings of the 17th ACM Conference on Economics and Computation},
 year = {2016},
 isbn = {978-1-4503-3936-0},
 pages = {305--322},
 numpages = {18},
 doi = {10.1145/2940716.2940726},
 acmid = {2940726},
 publisher = {ACM},
}

@inproceedings{CDDT2009,
author={Chen, Xi and Dai, Decheng and Du, Ye and Teng, Shang-Hua}, 
booktitle={Proceedings of the 50th Annual IEEE Symposium on Foundations of Computer Science (FOCS)}, 
title={Settling the Complexity of Arrow-Debreu Equilibria in Markets with Additively Separable Utilities}, 
year={2009}, 
volume={}, 
number={}, 
pages={273-282}, 
keywords={computational complexity;marketing;Arrow-Debreu equilibria;PPAD-complete;additively separable utilities;market-equilibrium problem;polynomial time;Computational complexity;Computer science;Nash equilibrium;Piecewise linear approximation;Piecewise linear techniques;Polynomials;Pricing;Programmable control;USA Councils;Arrow-Debreu markets;Computational complexity;PPAD-completeness}, 
doi={10.1109/FOCS.2009.29}, 
ISSN={0272-5428}, 
}

@inproceedings{CCD2013,
author = {Cheung, Yun Kuen and Cole, Richard and Devanur, Nikhil},
 title = {Tatonnement Beyond Gross Substitutes?: Gradient Descent to the Rescue},
 booktitle = {Proceedings of the 45th Annual ACM Symposium on Theory of Computing (STOC)},
 year = {2013},
 isbn = {978-1-4503-2029-0},
 pages = {191--200},
 numpages = {10},
 doi = {10.1145/2488608.2488633},
 acmid = {2488633},
 publisher = {ACM},
}

@inproceedings{CCR2012,
 author = {Cheung, Yun Kuen and Cole, Richard and Rastogi, Ashish},
 title = {Tatonnement in Ongoing Markets of Complementary Goods},
 booktitle = {Proceedings of the 13th ACM Conference on Electronic Commerce (EC)},
 year = {2012},
 isbn = {978-1-4503-1415-2},
 pages = {337--354},
 numpages = {18},
 doi = {10.1145/2229012.2229039},
 acmid = {2229039},
 publisher = {ACM},
 keywords = {amortized analysis, ongoing fisher market, tatonnement},
}

@inproceedings{CSVY2006,
author = {Codenotti, Bruno and Saberi, Amin and Varadarajan, Kasturi and Ye, Yinyu},
 title = {Leontief Economies Encode Nonzero Sum Two-player Games},
 booktitle = {Proceedings of the 17th Annual ACM-SIAM Symposium on Discrete Algorithms (SODA)},
 year = {2006},
 isbn = {0-89871-605-5},
 pages = {659--667},
 numpages = {9},
 acmid = {1109629},
 publisher = {Society for Industrial and Applied Mathematics},
}

@inproceedings{CMV2005,
 author = {Codenotti, Bruno and McCune, Benton and Varadarajan, Kasturi},
 title = {Market Equilibrium via the Excess Demand Function},
 booktitle = {Proceedings of the 37th Annual ACM Symposium on Theory of Computing (STOC)},
 year = {2005},
 isbn = {1-58113-960-8},
 pages = {74--83},
 numpages = {10},
 url = {http://doi.acm.org/10.1145/1060590.1060601},
 acmid = {1060601},
 publisher = {ACM},
}

@inproceedings{CDGJMVY2017,
 author = {Cole, Richard and Devanur, Nikhil and Gkatzelis, Vasilis and Jain, Kamal and Mai, Tung and Vazirani, Vijay V. and Yazdanbod, Sadra},
 title = {Convex Program Duality, Fisher Markets, and Nash Social Welfare},
 booktitle = {Proceedings of the 18th ACM Conference on Economics and Computation (EC)},
 year = {2017},
 isbn = {978-1-4503-4527-9},
 pages = {459--460},
 numpages = {2},
 doi = {10.1145/3033274.3085109},
 acmid = {3085109},
 publisher = {ACM},
}

@inproceedings{CF2008,
author = {Cole, Richard and Fleischer, Lisa},
 title = {Fast-converging Tatonnement Algorithms for One-time and Ongoing Market Problems},
 booktitle = {Proceedings of the 40th Annual ACM Symposium on Theory of Computing (STOC)},
 year = {2008},
 isbn = {978-1-60558-047-0},
 pages = {315--324},
 numpages = {10},
 doi = {10.1145/1374376.1374422},
 acmid = {1374422},
 publisher = {ACM},
}

@inproceedings{Cole:2016:LMG:2940716.2940720,
 author = {Cole, Richard and Tao, Yixin},
 title = {Large Market Games with Near Optimal Efficiency},
 booktitle = {Proceedings of the 17th ACM Conference on Economics and Computation (EC)},
 year = {2016},
 isbn = {978-1-4503-3936-0},
 location = {Maastricht, The Netherlands},
 pages = {791--808},
 numpages = {18},
 acmid = {2940720},
 publisher = {ACM},
 keywords = {fisher markets, large market games, price of anarchy, walrasian auctions},
}

@article{papadeng02,
	Author = {Xiaotie Deng and Christos Papadimitriou and Shmuel Safra},
	Journal = {Journal of Computer System Sciences (JCSS)},
	Pages = {311--324},
	Publisher = {Elsiever},
	Title = {On the complexity of price equilibria},
	Volume = {67},
	Year = {2003}}

@article{GargKap2006,
 author = {Garg, Rahul and Kapoor, Sanjiv},
 title = {Auction Algorithms for Market Equilibrium},
 journal = {Mathematics of Operations Research},
 issue_date = {November 2006},
 volume = {31},
 number = {4},
 month = nov,
 year = {2006},
 issn = {0364-765X},
 pages = {714--729},
 numpages = {16},
acmid = {1235277},
 publisher = {INFORMS},
 address = {Institute for Operations Research and the Management Sciences (INFORMS), Linthicum, Maryland, USA},
 keywords = {Walrasian model, auction algorithm, market equilibrium},
}

@inproceedings{JainVaz07,
 author = {Jain, Kamal and Vazirani, Vijay V.},
 title = {Eisenberg-Gale Markets: Algorithms and Structural Properties},
 booktitle = {Proceedings of the 39th Annual ACM Symposium on Theory of Computing (STOC)},
 year = {2007},
 isbn = {978-1-59593-631-8},
 pages = {364--373},
 numpages = {10},
 doi = {10.1145/1250790.1250845},
 acmid = {1250845},
 publisher = {ACM},}

@article{LLSB08,
author = {Levin, Dave and LaCurts, Katrina and Spring, Neil and Bhattacharjee, Bobby},
 title = {Bittorrent is an Auction: Analyzing and Improving Bittorrent's Incentives},
 journal = {ACM Sigcomm Computer Communication Review},
 issue_date = {October 2008},
 volume = {38},
 number = {4},
 month = aug,
 year = {2008},
 issn = {0146-4833},
 pages = {243--254},
 numpages = {12},
 doi = {10.1145/1402946.1402987},
 acmid = {1402987},
 publisher = {ACM},
 keywords = {auctions, bittorrent, incentive systems, proportional share, tit-for-tat},
}

@inproceedings{Orlin10,
  author = {Orlin, James B.},
  title = {Improved algorithms for computing {F}isher's market clearing prices},
  booktitle = {Proceedings of the 42nd Annual ACM Symposium on Theory of Computing (STOC)},
  year = {2010},
  isbn = {978-1-4503-0050-6},
  location = {Cambridge, Massachusetts, USA},
  pages = {291--300},
  numpages = {10},
  acmid = {1806731},
  keywords = {market equilibrium, strongly polynomial}
}

@inproceedings{VazYan10,
      author = {Vijay Vazirani and Mihalis Yannakakis},
      title = {Market Equilibrium under Separable, Piecewise-Linear, Concave Utilities},
      booktitle = {Journal of the ACM},
      year = {2011}
}

@book{Walras1874,
	Author = {L\'{e}on Walras},
	Note = {(Translated as: \emph{Elements of Pure Economics.} Homewood, IL: Irwin, 1954.)},
	Publisher = {Corbaz},
	Title = {El{\'{e}}ments d'{\,{E}}conomie Politique Pure},
	Year = {1874},
}

@inproceedings{WZ2007,
 author = {Wu, Fang and Zhang, Li},
 title = {Proportional Response Dynamics Leads to Market Equilibrium},
 booktitle = {Proceedings of the 39th Annual ACM Symposium on Theory of Computing (STOC)},
 year = {2007},
 isbn = {978-1-59593-631-8},
 pages = {354--363},
 numpages = {10},
 doi = {10.1145/1250790.1250844},
 acmid = {1250844},
 publisher = {ACM},
}

@article{Zhang2011,
	author    = {Li Zhang},
	title     = {Proportional response dynamics in the Fisher market},
	journal   = {Theoretical Computer Science},
	volume    = {412},
	number    = {24},
	pages     = {2691--2698},
	year      = {2011},
}

@article{CPY2017,
	author    = {Xi Chen and
	Dimitris Paparas and
	Mihalis Yannakakis},
	title     = {The Complexity of Non-Monotone Markets},
	journal   = {Journal of the ACM},
	volume    = {64},
	number    = {3},
	pages     = {20:1--20:56},
	year      = {2017},
	doi       = {10.1145/3064810},
	timestamp = {Wed, 12 Jul 2017 16:40:48 +0200},
	biburl    = {http://dblp.org/rec/bib/journals/jacm/ChenPY17},
	bibsource = {dblp computer science bibliography, http://dblp.org}
}

@article{Branzei2021-Sigecom,
  author       = {Simina Br{\^{a}}nzei},
  title        = {Exchange markets: proportional response dynamics and beyond},
  journal      = {SIGecom Exchanges},
  volume       = {19},
  number       = {2},
  pages        = {37--45},
  year         = {2021},
  url          = {https://doi.org/10.1145/3505156.3505161},
  doi          = {10.1145/3505156.3505161},
  timestamp    = {Tue, 14 Dec 2021 16:31:49 +0100},
  biburl       = {https://dblp.org/rec/journals/sigecom/Branzei21.bib},
  bibsource    = {dblp computer science bibliography, https://dblp.org}
}

@inproceedings{BranzeiMehtaNisan2018NeurIPS,
  author       = {Simina Br{\^{a}}nzei and
                  Ruta Mehta and
                  Noam Nisan},
  title        = {Universal Growth in Production Economies},
  booktitle    = {Proceedings of the 32nd International Conference on Neural Information Processing Systems (NeurIPS)},
  pages        = {1975},
  year         = {2018},
  url          = {https://proceedings.neurips.cc/paper/2018/hash/692f93be8c7a41525c0baf2076aecfb4-Abstract.html},
}

@inproceedings{CheungHN19,
  author    = {Yun Kuen Cheung and
               Martin Hoefer and
               Paresh Nakhe},
  title     = {Tracing Equilibrium in Dynamic Markets via Distributed Adaptation},
  booktitle = {Proceedings of the 18th International Conference on Autonomous Agents
               and MultiAgent Systems (AAMAS)},
  pages     = {1225--1233},
  year      = {2019},
  url       = {http://dl.acm.org/citation.cfm?id=3331825},
}

@inproceedings{CheungLP2021,
  author    = {Yun Kuen Cheung and
               Stefanos Leonardos and
               Georgios Piliouras},
  title     = {Learning in Markets: Greed Leads to Chaos but Following the Price
               is Right},
  booktitle = {Proceedings of the 30th International Joint Conference on Artificial
               Intelligence (IJCAI)},
  pages     = {111--117},
  year      = {2021},
  url       = {https://doi.org/10.24963/ijcai.2021/16},
  doi       = {10.24963/ijcai.2021/16},
}

@inproceedings{BranzeiDevanurRabani2021,
  author       = {Simina Br{\^{a}}nzei and
                  Nikhil R. Devanur and
                  Yuval Rabani},
  editor       = {P{\'{e}}ter Bir{\'{o}} and
                  Shuchi Chawla and
                  Federico Echenique},
  title        = {Proportional Dynamics in Exchange Economies},
  booktitle    = {Proceedings of the 22th ACM Conference on Economics and Computation (EC)},
  pages        = {180--201},
  publisher    = {{ACM}},
  year         = {2021},
  url          = {https://doi.org/10.1145/3465456.3467644},
  doi          = {10.1145/3465456.3467644},
}

@inproceedings{ZhuCX2023WWW,
  author       = {Haiqing Zhu and
                  Yun Kuen Cheung and
                  Lexing Xie},
  title        = {Stability and Efficiency of Personalised Cultural Markets},
  booktitle    = {Proceedings of the {ACM} Web Conference 2023 (WWW)},
  pages        = {3447--3455},
  publisher    = {{ACM}},
  year         = {2023},
  url          = {https://doi.org/10.1145/3543507.3583315},
  doi          = {10.1145/3543507.3583315},
}

@inproceedings{gao2020first,
  title={First-order methods for large-scale market equilibrium computation},
  author={Gao, Yuan and Kroer, Christian},
  booktitle={Proceedings of the 34th International Conference on Neural Information Processing Systems (Neurips)},
  volume={33},
  pages={21738--21750},
  year={2020}
}

@inproceedings{kolumbus2023asynchronous,
  title={Asynchronous proportional response dynamics: convergence in markets with adversarial scheduling},
  author={Kolumbus, Yoav and Levy, Menahem and Nisan, Noam},
  booktitle={Proceedings of the 37th International Conference on Neural Information Processing Systems (Neurips)},
  volume={36},
  pages={25409--25434},
  year={2023}
}

@article{li2024proportional,
  title={Proportional Dynamics in Linear Fisher Markets with Auto-bidding: Convergence, Incentives and Fairness},
  author={Li, Juncheng and Tang, Pingzhong},
  journal={The 20th Conference on Web and Internet Economics (WINE)},
  year={2024}
}

@inproceedings{goktas2023tatonnement,
  title={T{\^a}tonnement in homothetic fisher markets},
  author={Goktas, Denizalp and Zhao, Jiayi and Greenwald, Amy},
  booktitle={Proceedings of the 24th ACM Conference on Economics and Computation (EC)},
  pages={760--781},
  year={2023}
}

@article{cole2019balancing,
  title={Balancing the Robustness and Convergence of Tatonnement},
  author={Cole, Richard and Tao, Yixin},
  journal={arXiv preprint arXiv:1908.00844},
  year={2019}
}

@article{dvijotham2022convergence,
  title={Convergence of incentive-driven dynamics in Fisher markets},
  author={Dvijotham, Krishnamurthy and Rabani, Yuval and Schulman, Leonard J},
  journal={Games and Economic Behavior},
  volume={134},
  pages={361--375},
  year={2022},
  publisher={Elsevier}
}

@inproceedings{fleischer2008fast,
  title={A fast and simple algorithm for computing market equilibria},
  author={Fleischer, Lisa and Garg, Rahul and Kapoor, Sanjiv and Khandekar, Rohit and Saberi, Amin},
  booktitle={International Workshop on Internet and Network Economics (WINE)},
  pages={19--30},
  year={2008},
  organization={Springer}
}

@misc{samuelson1983foundations,
  title={Foundations of Economic Analysis},
  author={Samuelson, Paul},
  year={1946},
  publisher={Harvard University Press}
}

@inproceedings{garg2006price,
  title={Price roll-backs and path auctions: An approximation scheme for computing the market equilibrium},
  author={Garg, Rahul and Kapoor, Sanjiv},
  booktitle={International Workshop on Internet and Network Economics (WINE)},
  pages={225--238},
  year={2006},
  organization={Springer}
}

@inproceedings{garg2007market,
  title={Market equilibrium using auctions for a class of gross-substitute utilities},
  author={Garg, Rahul and Kapoor, Sanjiv},
  booktitle={International Workshop on Internet and Network Economics (WINE)},
  pages={356--361},
  year={2007},
  organization={Springer}
}

@article{bei2019earning,
  title={Earning and utility limits in Fisher markets},
  author={Bei, Xiaohui and Garg, Jugal and Hoefer, Martin and Mehlhorn, Kurt},
  journal={ACM Transactions on Economics and Computation},
  volume={7},
  number={2},
  pages={1--35},
  year={2019},
  publisher={ACM New York, NY, USA}
}

@article{garg2023auction,
  title={An auction algorithm for market equilibrium with weak gross substitute demands},
  author={Garg, Jugal and Husi{\'c}, Edin and V{\'e}gh, L{\'a}szl{\'o} A},
  journal={ACM Transactions on Economics and Computation},
  volume={11},
  number={3-4},
  pages={1--24},
  year={2023},
  publisher={ACM New York, NY}
}

@inproceedings{chen2009spending,
  title={Spending is not easier than trading: on the computational equivalence of Fisher and Arrow-Debreu equilibria},
  author={Chen, Xi and Teng, Shang-Hua},
  booktitle={International Symposium on Algorithms and Computation (ISAAC)},
  pages={647--656},
  year={2009},
  organization={Springer}
}

@inproceedings{garg2017settling,
  title={Settling the complexity of Leontief and PLC exchange markets under exact and approximate equilibria},
  author={Garg, Jugal and Mehta, Ruta and Vazirani, Vijay V and Yazdanbod, Sadra},
  booktitle={Proceedings of the 49th annual ACM Symposium on Theory of Computing (STOC)},
  pages={890--901},
  year={2017}
}

@article{bei2019ascending,
  title={Ascending-price algorithms for unknown markets},
  author={Bei, Xiaohui and Garg, Jugal and Hoefer, Martin},
  journal={ACM Transactions on Algorithms},
  volume={15},
  number={3},
  pages={1--33},
  year={2019},
  publisher={ACM New York, NY, USA}
}

@inproceedings{garg2004auction,
  title={An auction-based market equilibrium algorithm for the separable gross substitutability case},
  author={Garg, Rahul and Kapoor, Sanjiv and Vazirani, Vijay},
  booktitle={International Workshop on Randomization and Approximation Techniques in Computer Science},
  pages={128--138},
  year={2004},
  organization={Springer}
}

@article{cheng2024tight,
  title={Tight incentive analysis of Sybil attacks against the market equilibrium of resource exchange over general networks},
  author={Cheng, Yukun and Deng, Xiaotie and Li, Yuhao and Yan, Xiang},
  journal={Games and Economic Behavior},
  volume={148},
  pages={566--610},
  year={2024},
  publisher={Elsevier}
}

@inproceedings{CheungCole2018async,
  author       = {Yun Kuen Cheung and
                  Richard Cole},
  editor       = {Yossi Azar and
                  Hannah Bast and
                  Grzegorz Herman},
  title        = {Amortized Analysis of Asynchronous Price Dynamics},
  booktitle    = {26th Annual European Symposium on Algorithms, {ESA} 2018, August 20-22,
                  2018, Helsinki, Finland},
  series       = {LIPIcs},
  volume       = {112},
  pages        = {18:1--18:15},
  publisher    = {Schloss Dagstuhl - Leibniz-Zentrum f{\"{u}}r Informatik},
  year         = {2018},
  url          = {https://doi.org/10.4230/LIPIcs.ESA.2018.18},
  doi          = {10.4230/LIPICS.ESA.2018.18}
}

@inproceedings{birnbaum2011distributed,
  title={Distributed algorithms via gradient descent for Fisher markets},
  author={Birnbaum, Benjamin and Devanur, Nikhil R and Xiao, Lin},
  booktitle={Proceedings of the 12th ACM conference on Electronic commerce},
  pages={127--136},
  year={2011}
}

@article{shmyrev2009algorithm,
  title={An algorithm for finding equilibrium in the linear exchange model with fixed budgets},
  author={Shmyrev, Vadim I},
  journal={Journal of Applied and Industrial Mathematics},
  volume={3},
  number={4},
  pages={505--518},
  year={2009},
  publisher={Springer}
}

@article{eisenberg1959consensus,
  title={Consensus of subjective probabilities: The pari-mutuel method},
  author={Eisenberg, Edmund and Gale, David},
  journal={The Annals of Mathematical Statistics},
  volume={30},
  number={1},
  pages={165--168},
  year={1959},
  publisher={JSTOR}
}

@article{eisenberg1961aggregation,
  title={Aggregation of utility functions},
  author={Eisenberg, Edmund},
  journal={Management Science},
  volume={7},
  number={4},
  pages={337--350},
  year={1961},
  publisher={INFORMS}
}

@book{walras1896etudes,
  title={{\'E}tudes d'{\'e}conomie sociale:(th{\'e}orie de la r{\'e}oartition de la richesse sociale)},
  author={Walras, L{\'e}on},
  year={1896},
  publisher={F. Rouge}
}

@article{keller1976nested,
  title={A nested CES-type utility function and its demand and price-index functions},
  author={Keller, Wouter J},
  journal={European Economic Review},
  volume={7},
  number={2},
  pages={175--186},
  year={1976},
  publisher={Elsevier}
}

@inproceedings{chaudhury2024competitive,
  title={Competitive Equilibrium for Chores: from Dual Eisenberg-Gale to a Fast, Greedy, LP-based Algorithm},
  author={Chaudhury, Bhaskar Ray and Kroer, Christian and Mehta, Ruta and Nan, Tianlong},
  booktitle={Proceedings of the 25th ACM Conference on Economics and Computation},
  pages={40--40},
  year={2024}
}

@inproceedings{jain2005market,
  title={Market equilibria for homothetic, quasi-concave utilities and economies of scale in production},
  author={Jain, Kamal and Vazirani, Vijay V and Ye, Yinyu},
  booktitle={SODA},
  year={2005}
}

@techreport{hjertstrand2020homogeneity,
  title={Homogeneity, Returns to Scale and (Log) Concavity},
  author={Hjertstrand, Per},
  year={2020},
  institution={IFN Working Paper}
}

@inproceedings{cheung2018dynamics,
  title={Dynamics of distributed updating in fisher markets},
  author={Cheung, Yun Kuen and Cole, Richard and Tao, Yixin},
  booktitle={Proceedings of the 2018 ACM Conference on Economics and Computation},
  pages={351--368},
  year={2018}
}

@inproceedings{kroer2025computing,
  title={Computing Lindahl Equilibrium for Public Goods with and without Funding Caps},
  author={Kroer, Christian and Peters, Dominik},
  booktitle={Proceedings of the 26th ACM Conference on Economics and Computation},
  pages={129--129},
  year={2025}
}

@article{arrow1954existence,
  title={Existence of an equilibrium for a competitive economy},
  author={Arrow, Kenneth J and Debreu, Gerard},
  journal={Econometrica},
  pages={265--290},
  year={1954},
  publisher={JSTOR}
}
\appendix
\section{Technical Lemmas on Demand Properties for Gross Substitutes Utilities}
\label{sec:properties-GS}
Recall, given a price vector of goods $\bbp > 0$, and a budget $B_i$, the demand of buyer $i$ is defined as:
\[ \demand_i(\bbp, B_i) \triangleq \argmax_{\bbx_i \in \-R_{\geq 0}^m: \InAngles{\mp, \mx_i} \leq B_i} u_i(\bbx_i). \]
The proofs for the following lemmas are detailed in \cite{cheung2025proportional}. These proofs rely on the assumption that the utility functions satisfy gross substitutes and normal good properties, and are strictly increasing and strictly concave.
\begin{lemma} \label{pec:lem:tech-1}
Given an allocation $\bbx_i$, and two price vectors $\bbp > 0$ and $\bbp' > 0$ such that $\bbx_i = \demand_i(\bbp, B_i) = \demand_i(\bbp', B_i)$,
if $x_{ij} > 0$ for good $j$, then $p_j = p'_j$. 
\end{lemma}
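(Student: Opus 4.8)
The plan is to argue directly from the first-order (KKT) optimality conditions of the two demand problems, using that $u_i$ is differentiable (the PRD updates themselves invoke $\nabla u_i$), concave, and strictly increasing; no appeal to the gross-substitutes or normal-goods properties is needed for this particular lemma. I would first record two elementary facts. Since $u_i$ is strictly increasing and $\bbp,\bbp' > 0$, every demand bundle exhausts its budget — adding a tiny positive amount of any good would strictly raise utility while remaining feasible — so $\InAngles{\bbp,\bbx_i} = \InAngles{\bbp',\bbx_i} = B_i$. In particular $\bbx_i \neq \boldsymbol{0}$, and if the support $S := \{j : x_{ij} > 0\}$ is empty the statement is vacuous, so we may assume $S \neq \emptyset$.

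Next I would write the KKT conditions for $\bbx_i \in \argmax_{\bby \ge \boldsymbol{0},\, \InAngles{\bbp,\bby}\le B_i} u_i(\bby)$. Because $u_i$ is concave and all constraints are affine, these conditions are necessary and sufficient: there exist multipliers $\lambda \ge 0$ and $\boldsymbol{\mu} \ge \boldsymbol{0}$ with $\nabla u_i(\bbx_i) = \lambda\bbp - \boldsymbol{\mu}$ and $\mu_j x_{ij} = 0$ for every $j$. The multiplier $\lambda$ is in fact strictly positive: a strictly increasing — hence non-satiated — utility cannot have $\nabla u_i(\bbx_i) = \boldsymbol{0}$ at a maximizer of the concave program (otherwise $\bbx_i$ would be an unconstrained global maximizer of $u_i$), and since $u_i$ is non-decreasing $\nabla u_i(\bbx_i) \ge \boldsymbol{0}$, so some coordinate $j$ has $\partial_j u_i(\bbx_i) > 0$; then $\lambda p_j = \partial_j u_i(\bbx_i) + \mu_j > 0$ together with $p_j > 0$ forces $\lambda > 0$. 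Running the identical argument for $\bbp'$ produces multipliers $\lambda' > 0$ and $\boldsymbol{\mu}' \ge \boldsymbol{0}$ with $\nabla u_i(\bbx_i) = \lambda'\bbp' - \boldsymbol{\mu}'$.

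The key step is complementary slackness on the support. For $j \in S$ we have $\mu_j = \mu_j' = 0$, hence $\partial_j u_i(\bbx_i) = \lambda p_j$ and $\partial_j u_i(\bbx_i) = \lambda' p_j'$, so $\lambda p_j = \lambda' p_j'$ for every $j \in S$. Weighting these equalities by $x_{ij}$ and summing over $j \in S$, and using $\sum_{j \in S} p_j x_{ij} = \sum_{j \in S} p_j' x_{ij} = B_i$, gives $\lambda B_i = \lambda' B_i$; since $B_i > 0$ this yields $\lambda = \lambda'$, and therefore $p_j = p_j'$ for every $j$ with $x_{ij} > 0$, as claimed. The only genuinely delicate points are the two already flagged — that the budget constraint binds and that the budget multiplier is strictly positive, both consequences of strict monotonicity — so I do not expect a serious obstacle. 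As an alternative route one could instead deduce the conclusion from \Cref{lem::GS-NG} via a price-scaling argument (if $\bbp$ and $\bbp'$ disagreed on $S$, scaling so that $\max_k p_k/p_k' \ge 1$ and applying demand monotonicity to the goods attaining the maximal ratio), upgrading the weak inequality there to a strict one using strict concavity; but the first-order argument above is shorter and cleaner.
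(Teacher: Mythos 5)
The paper does not actually prove this lemma: it is one of the statements collected in Appendix~\ref{sec:properties-GS} whose proofs are explicitly deferred to \cite{cheung2025proportional}, so there is no in-paper argument to compare yours against. Judged on its own, your first-order proof is correct and self-contained: budget exhaustion follows from strict monotonicity and positive prices, the KKT multiplier $\lambda$ is strictly positive because a concave, strictly increasing $u_i$ cannot have a vanishing gradient at a constrained maximizer, complementary slackness gives $\partial_j u_i(\bbx_i)=\lambda p_j=\lambda' p_j'$ on the support, and the budget-weighted sum forces $\lambda=\lambda'$ and hence $p_j=p_j'$ there. Notably, your route uses only concavity and strict monotonicity and bypasses gross substitutes and normal goods entirely, which is cleaner than what the appendix's preamble suggests is assumed.

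The one point to be explicit about is differentiability. The lemma's stated hypotheses (strictly increasing, strictly concave, GS, normal goods) do not by themselves give a single-valued gradient, and with a genuine subdifferential the stationarity conditions for the two programs may select \emph{different} subgradients $\mg,\mg'\in\partial u_i(\bbx_i)$, in which case the identity $\lambda p_j=\lambda' p_j'$ on the support no longer follows and the argument stalls (the Leontief example shows the conclusion can outright fail once strictness is dropped). You flag this and justify differentiability by the fact that the PRD update \ref{prd:updating-rule} itself invokes $\nabla u_i$; that is a fair reading of the paper's working assumptions, but it would be worth stating differentiability as an explicit hypothesis rather than an inference from the algorithm. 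With that caveat recorded, the proof stands.
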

\begin{lemma} \label{pec:lem:main-technique}
     For any price vectors  $\bbp > 0$ and $\bbq > 0$, the following inequality holds:
    \[ \sum_j p_j \demandnb_{ij}(\bbp, B_i) \log \frac{p_j}{q_j}  \leq \sum_j p_j\left[\demandnb_{ij}(\bbq, B_i) - \demandnb_{ij}(\bbp, B_i)\right]. \]
    Equality occurs only if $\demand_{i}(\bbq, B_i) = \demand_{i}(\bbp, B_i)$.
\end{lemma}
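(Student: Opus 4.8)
The plan is to prove the inequality by a discrete interpolation between $\bbp$ and $\bbq$ that mirrors the adjustment procedure in the proof of \Cref{theorem:PRD:lindahl:tc}, but with the roles of allocations and prices exchanged and \Cref{lem::GS-NG} as the structural engine. Write $\mx = \demand_i(\bbp, B_i)$ and $\my = \demand_i(\bbq, B_i)$; since $u_i$ is strictly increasing the earning constraint binds, so $\InAngles{\bbp, \mx} = \InAngles{\bbq, \my} = B_i$ and the claim is equivalent to $\sum_j p_j x_j \log \frac{p_j}{q_j} \le \sum_j p_j y_j - B_i$. First I would construct a finite multiplicative path $\bbp^{(0)} = \bbq,\, \bbp^{(1)},\, \dots,\, \bbp^{(T)} = \bbp$ in which each step rescales a block of coordinates by a common factor. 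Telescoping $\log \frac{p_j}{q_j} = \sum_t \log \frac{p^{(t)}_j}{p^{(t-1)}_j}$ and $\sum_j p_j y_j - B_i = \sum_t \sum_j p_j\bigl(\demandnb_{ij}(\bbp^{(t-1)}, B_i) - \demandnb_{ij}(\bbp^{(t)}, B_i)\bigr)$, the lemma reduces to the per-step estimate
\[
\sum_j p_j x_j \log \tfrac{p^{(t)}_j}{p^{(t-1)}_j} \;\le\; \sum_j p_j\bigl(\demandnb_{ij}(\bbp^{(t-1)}, B_i) - \demandnb_{ij}(\bbp^{(t)}, B_i)\bigr).
\]

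The path tracks the coordinatewise ratio $r_j := p_j / p^{(t)}_j$ of the fixed endpoint $\bbp$ to the current iterate, starting at $r_j = p_j/q_j$ and driving every $r_j$ to $1$. Each step takes the block $\+S$ of coordinates attaining $\max_j r_j$ (when this maximum exceeds $1$) and multiplies $p^{(t)}_j$ on $\+S$ by a common factor $c>1$ until $\+S$'s ratio drops to $\max\{1,\ \text{second-largest ratio}\}$, and symmetrically for $\min_j r_j < 1$; each step strictly decreases the number of distinct ratio values, so after at most $2m$ steps the iterate equals $\bbp$. For the per-step bound in the ``raise'' case I would invoke \Cref{lem::GS-NG} (and the plain gross substitutes property) three ways: (a) comparing $\bbp^{(t-1)}$ with $\bbp^{(t)}$ shows the demand on $\+S$ does not increase; (b) off $\+S$ it does not decrease; (c) comparing $\bbp$ with $\bbp^{(t)}$ — using that after the step $\+S$ realizes the coordinatewise maximum of $p_k/p^{(t)}_k$, which is $\ge 1$ — shows $x_j \le \demandnb_{ij}(\bbp^{(t)}, B_i)$ for $j\in\+S$. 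Combining (b) with the two budget identities $\InAngles{\bbp^{(t-1)},\demand_i(\bbp^{(t-1)},B_i)} = \InAngles{\bbp^{(t)},\demand_i(\bbp^{(t)},B_i)} = B_i$ gives $\sum_{j\in\+S}p^{(t)}_j\demandnb_{ij}(\bbp^{(t)},B_i)\le\sum_{j\in\+S}p^{(t-1)}_j\demandnb_{ij}(\bbp^{(t-1)},B_i)$. Writing $s\ge 1$ for the common value of $p_j/p^{(t)}_j$ on $\+S$ (so $p_j = sp^{(t)}_j = sc\,p^{(t-1)}_j$ on $\+S$ and $p_j\le sp^{(t)}_j$ off $\+S$), the elementary bound $\log c\le c-1$ together with (a), (c) and the budget identities pins both sides of the per-step inequality between multiples of $s(c-1)\sum_{j\in\+S}p^{(t-1)}_j\demandnb_{ij}(\bbp^{(t-1)},B_i)$, which closes the step; the ``lower'' case is symmetric.

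Summing the per-step inequalities and telescoping gives the claim. For the equality statement, I would note that equality must hold at every step, and since $\log c < c-1$ for $c\neq 1$ each genuine block-rescaling must in fact be trivial, so $\bbp$ and $\bbq$ agree on every coordinate with positive demand; \Cref{pec:lem:tech-1} then yields $\demand_i(\bbp,B_i) = \demand_i(\bbq,B_i)$. I expect the main obstacle to be the per-step estimate — specifically, controlling the weights $p_jx_j$, which are tied to the path endpoint $\bbp$ rather than to the current iterate: this is exactly the place where \Cref{lem::GS-NG} must be applied between $\bbp$ and the intermediate price $\bbp^{(t)}$, and where the budget-exhaustion identities are needed to recast the monotonicity facts into telescoping form. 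A secondary obstacle is verifying that the block-rescaling procedure is well-defined (the common scaling within $\+S$ keeps $\+S$ extremal relative to the remaining coordinates) and terminates in finitely many steps.
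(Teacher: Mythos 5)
The paper does not actually prove this lemma: it is stated in Appendix~A with the remark that the proofs ``are detailed in \cite{cheung2025proportional},'' so there is no in-paper argument to match yours against. Your self-contained proof is, as far as I can verify, sound. It transplants the block-rescaling adjustment procedure from the paper's proof of \Cref{theorem:PRD:lindahl:tc} to the price side, with \Cref{lem::GS-NG} supplying the monotonicity facts; I checked the per-step estimate in the ``raise'' case and it closes exactly as you describe (LHS $\le s\log c\sum_{j\in\+S}p^{(t-1)}_j\demandnb_{ij}(\bbp^{(t-1)},B_i)$ via (c), the budget identity, and the transfer of spending onto $\+S$; RHS $\ge s(c-1)\sum_{j\in\+S}p^{(t-1)}_j\demandnb_{ij}(\bbp^{(t-1)},B_i)$ via (a), (b), and $p_j\le sp^{(t)}_j$ off $\+S$), and the ``lower'' case is indeed symmetric. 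What your route buys is a proof of the GS inequality and the total-complements inequality \eqref{eq::tc::tech::1} by one and the same mechanism, which makes the duality between \Cref{thm:prd-lindahl-gs} and \Cref{theorem:PRD:lindahl:tc} visible at the level of proofs rather than only at the level of statements.

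Two small repairs. First, the equality analysis: from $\log c<c-1$ for $c\neq 1$ you can only conclude that a nontrivial rescaling step forces $\sum_{j\in\+S}p^{(t-1)}_j\demandnb_{ij}(\bbp^{(t-1)},B_i)=0$, i.e.\ zero demand on the rescaled block --- not that the step is trivial. But in that event the budget identity together with (b) forces the entire demand vector to be unchanged across the step, so demand is constant along the whole path and $\demand_i(\bbq,B_i)=\demand_i(\bbp,B_i)$ follows directly, without the detour through \Cref{pec:lem:tech-1}. Second, a raise step whose second-largest ratio is below $1$ replaces the old maximum by the value $1$, which may be a new distinct value; the clean termination argument is that the number of distinct ratios exceeding $1$ (respectively, below $1$) strictly decreases at each raise (respectively, lower) step, giving at most $2m$ steps. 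Neither point affects correctness.
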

\begin{corollary} \label{pec:cor:main-technique}
    For any price vectors $\bbp > 0$ and $\bbq > 0$, and any $1 > \epsilon > 0$, define $\tilde{\bbq}$ as follows: $$\tilde{q}_j = \begin{cases} \epsilon p_j \text{ if } q_j < \epsilon p_j \\ \frac{1}{\epsilon} p_j \text{ if } q_j > \frac{1}{\epsilon} p_j \\ q_j \text{ o.w.} \end{cases}, $$ 
    Then, we have the following inequality:
    \begin{align*} &\sum_j p_j \demandnb_{ij}(\bbp, B_i) \log \frac{p_j}{q_j}  - \sum_j p_j(\demandnb_{ij}(\bbq, B_i) - \demandnb_{ij}(\bbp, B_i))  \\
    &\hspace{1.5in}\leq \sum_j p_j \demandnb_{ij}(\bbp, B_i) \log \frac{p_j}{\tilde{q}_j}  - \sum_j p_j(\demandnb_{ij}(\tilde{\bbq}, B_i) - \demandnb_{ij}(\bbp, B_i)).
    \end{align*}
\end{corollary}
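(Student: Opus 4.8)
The plan is to recast \Cref{pec:cor:main-technique} as a monotonicity property of the ``gap functional''
\[
  g(\bbr)\;:=\;\sum_j p_j\,\demandnb_{ij}(\bbp,B_i)\log\frac{p_j}{r_j}\;-\;\sum_j p_j\bigl(\demandnb_{ij}(\bbr,B_i)-\demandnb_{ij}(\bbp,B_i)\bigr),
\]
so that the corollary is exactly the inequality $g(\bbq)\le g(\tilde{\bbq})$. The first observations are that $\tilde{\bbq}$ is the coordinate-wise projection of $\bbq$ onto the box $\mathcal{B}=\prod_j[\epsilon p_j,\tfrac1\epsilon p_j]$, that $\bbp\in\mathcal{B}$ because $\epsilon<1$, and that \Cref{pec:lem:main-technique} already gives $g(\bbr)\le 0=g(\bbp)$ for every $\bbr>0$, with equality only when $\demand_i(\bbr,B_i)=\demand_i(\bbp,B_i)$. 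Thus the content of the corollary is the sharper statement that clamping $\bbq$ into the box around $\bbp$ never decreases $g$. Note that applying \Cref{pec:lem:main-technique} to $\bbq$ and to $\tilde{\bbq}$ separately does not suffice: the two inequalities point the same way and cannot simply be subtracted (one would need the $\tilde{\bbq}$-inequality reversed), so the argument must exploit the incremental structure of the projection.

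Accordingly, I would reduce to two monotone phases. Partition the coordinates into $S_+=\{j:q_j<\epsilon p_j\}$, $S_-=\{j:q_j>\tfrac1\epsilon p_j\}$, and the rest (which already lie in $\mathcal{B}$ and are never touched). Phase~1 raises every coordinate in $S_+$ up to $\epsilon p_j$, producing $\bbq^{+}\ge\bbq$; Phase~2 lowers every coordinate in $S_-$ down to $\tfrac1\epsilon p_j$, producing $\tilde{\bbq}\le\bbq^{+}$. Since $g(\bbq)$ and $g(\tilde{\bbq})$ bracket this trajectory, it suffices to show that $g$ is non-decreasing along each phase. Within a phase I would write the increment $g(\bbr')-g(\bbr)$ as a ``log part'' $\sum_j p_j\demandnb_{ij}(\bbp,B_i)\log(r_j/r'_j)$ plus a ``demand part'' $\InAngles{\bbp,\,\demand_i(\bbr,B_i)-\demand_i(\bbr',B_i)}$, and aim to show the demand part dominates the log part. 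The feature that makes this plausible is that every coordinate that moves is pushed toward $p_j$ and stops exactly at $\epsilon p_j$ or $\tfrac1\epsilon p_j$, so its displacement ratio is controlled by $\epsilon$; this is what permits comparing the demand at an intermediate price with the reference demand $\demand_i(\bbp,B_i)$.

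To carry out a phase I would process it as a sequence of single-price sub-steps, repricing one coordinate of $S_+$ (resp.\ $S_-$) at a time, starting with the coordinate having the most extreme relative price change. At such a sub-step the repriced good has the maximal relative price increase (resp.\ decrease), so \Cref{lem::GS-NG} pins down the direction of its own-demand response, while the gross substitutes property fixes the signs of all cross-demand responses, and the budget identity $\InAngles{\bbr,\demand_i(\bbr,B_i)}=B_i$ ties the magnitudes of these responses together. Using in addition elementary inequalities such as $\log t\le t-1$ to match a logarithmic term against a linear one, one charges the log loss at the repriced coordinate against its own demand change and shows the per-step increment of $g$ is non-negative; telescoping over Phase~1 and then Phase~2 yields $g(\bbq)\le g(\bbq^{+})\le g(\tilde{\bbq})$. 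One also checks that every intermediate price vector is strictly positive, so that \Cref{pec:lem:main-technique} and \Cref{pec:lem:tech-1} remain applicable at each sub-step; this is immediate, since all clamp targets $\epsilon p_j$ and $\tfrac1\epsilon p_j$ are positive.

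The main obstacle is exactly this ``charging'' step. Gross substitutes and normal goods give only \emph{coordinate-wise} sign information about how $\demand_i(\cdot,B_i)$ reacts to a price perturbation, whereas the demand part is the $\bbp$-weighted sum $\sum_j p_j(\demandnb_{ij}(\bbr,B_i)-\demandnb_{ij}(\bbr',B_i))$, in which the own-price response (one sign) and the cross-price responses (the opposite sign) compete, and the weights $p_j$ are not aligned with the prices $\bbr$. Making the demand part provably outweigh the log part requires combining the monotonicity of \Cref{lem::GS-NG} with the adding-up relations forced by budget exhaustion and with the regularity of demand (surjectivity and uniqueness) that underlies \Cref{pec:lem:main-technique}. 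This quantitative comparison, rather than the reduction to elementary moves, is the technical heart of the argument, and it is the part that is carried out in detail in \cite{cheung2025proportional}.
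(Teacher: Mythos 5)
First, note that the paper does not actually prove \Cref{pec:cor:main-technique}: the appendix states that the proofs of all results in that section are ``detailed in \cite{cheung2025proportional}'', so there is no in-paper argument to compare your route against. Judged on its own terms, your write-up is a proof \emph{plan} rather than a proof, and it has a genuine gap at exactly the point where the corollary has content. Your reformulation as $g(\bbq)\le g(\tilde{\bbq})$ is correct, and your observation that two applications of \Cref{pec:lem:main-technique} cannot be combined (both inequalities point the same way) is a good one. But the per-sub-step inequality you need --- that when a single coordinate $r_j$ is moved toward $p_j$ the ``demand part'' $\sum_k p_k(\demandnb_{ik}(\bbr,B_i)-\demandnb_{ik}(\bbr',B_i))$ dominates the ``log part'' $p_j\,\demandnb_{ij}(\bbp,B_i)\log(r'_j/r_j)$ --- is never established; you explicitly defer it to \cite{cheung2025proportional}. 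That inequality \emph{is} the corollary (everything else is bookkeeping), so deferring it leaves nothing proved. Gross substitutes, normal goods, and \Cref{lem::GS-NG} give only the signs of the own- and cross-demand responses, and as you yourself note these enter the demand part with opposite signs and with weights $p_k$ that match neither $\bbr$ nor $\bbr'$, so the budget identities $\InAngles{\bbr,\demand_i(\bbr,B_i)}=B_i$ do not directly control the $\bbp$-weighted sum.

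A second, related concern is that your reduction proves (if it works) a statement strictly stronger than the corollary: that $g$ is non-decreasing along \emph{every} single-coordinate move toward $p_j$, for arbitrary starting prices. You verify nothing beyond plausibility for this stronger claim (it does hold in the one-good case, where demand is $B/r$, but that case has no cross-effects at all). It is entirely possible that the actual proof in \cite{cheung2025proportional} does not go through coordinate-wise monotonicity of $g$ but instead exploits the specific structure of the clamp (the fact that every moved coordinate lands exactly on $\epsilon p_j$ or $p_j/\epsilon$, and that the untouched coordinates already lie in the box), together with \Cref{pec:lem:tech-1}--\Cref{pec:lem:tech-3}. Until the charging step is actually carried out --- or a counterexample to the stronger coordinate-wise claim is ruled out --- the proposal should be regarded as an unproven strategy, not a proof.
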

\begin{lemma}\label{pec:lem:tech-2}
    Given an allocation $\bbx_i = \demand_i(\bbp, B_i)$ and $\bbp > 0$, there exists a threshold $\epsilon > 0$ such that $\bbp' \geq \epsilon$ for any $\bbp'$ satisfying $\bbx_i = \demand_i(\bbp', B_i)$.
\end{lemma}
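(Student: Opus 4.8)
The plan is to prove \Cref{pec:lem:tech-2} by contradiction, so that the uniformity of $\epsilon$ over \emph{all} supporting prices falls out of a subsequence argument rather than needing an explicit formula. Let $S \triangleq \{\mp' \in \-R^m_{\ge 0} : \mp' > 0,\ \demand_i(\mp', B_i) = \mx_i\}$ be the set of price vectors that support the fixed allocation $\mx_i$; note $\mp \in S$, so $S \ne \emptyset$, and recall the demand is well-defined on $\{\mp' > 0\}$. It suffices to show $\inf_{\mp'\in S}\min_j p'_j > 0$, since any positive value strictly below this infimum then works as $\epsilon$. So suppose instead this infimum is $0$: there is a sequence $(\mp^{(n)})_n \subseteq S$ with $\min_j p^{(n)}_j \to 0$, and, since there are finitely many goods, after passing to a subsequence we may fix a single coordinate $j_0$ with $p^{(n)}_{j_0}\to 0$.

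First I would record the standard fact that a supporting price exhausts the budget: since $u_i$ is strictly increasing, if $\InAngles{\mp^{(n)},\mx_i} < B_i$ then $\mx_i + t\me_{j_0}$ would be affordable at $\mp^{(n)}$ for small $t>0$ and strictly preferred to $\mx_i$, contradicting $\mx_i = \demand_i(\mp^{(n)},B_i)$; hence $\InAngles{\mp^{(n)},\mx_i} = B_i$. Then I would consider the bundles $\mz^{(n)} \triangleq \frac{B_i}{B_i + p^{(n)}_{j_0}}\,(\mx_i + \me_{j_0}) \in \-R^m_{\ge 0}$, which are affordable at $\mp^{(n)}$ under budget $B_i$ because $\InAngles{\mp^{(n)}, \mx_i + \me_{j_0}} = B_i + p^{(n)}_{j_0}$. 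As $p^{(n)}_{j_0}\to 0$ we have $\mz^{(n)}\to \mx_i + \me_{j_0}$, so $u_i(\mz^{(n)}) \to u_i(\mx_i + \me_{j_0})$ by continuity, and $u_i(\mx_i + \me_{j_0}) > u_i(\mx_i)$ since $u_i$ is strictly increasing. Hence for all large $n$ the bundle $\mz^{(n)}$ is affordable at $\mp^{(n)}$ yet strictly beats $\mx_i$, contradicting $\mx_i = \demand_i(\mp^{(n)},B_i)$. This forces $\inf_{\mp'\in S}\min_j p'_j > 0$, which is the lemma.

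I expect the hard part to be essentially bookkeeping: keeping $\mz^{(n)}$ feasible while it converges to a bundle strictly preferred to $\mx_i$, and noting that this route uses only continuity, strict monotonicity, and well-definedness of the demand on $\{\mp' > 0\}$ — no gross substitutes, normal goods, strict concavity, or differentiability are needed. If one instead wants an \emph{explicit} $\epsilon$, an alternative is to split coordinates of $\mx_i$: for $j$ with $x_{ij} > 0$, \Cref{pec:lem:tech-1} forces $p'_j = p_j$ for every $\mp' \in S$, bounding these below by $\min_{j:\,x_{ij}>0}p_j > 0$; for $j$ with $x_{ij} = 0$, a swap that moves a small \emph{fixed} amount of budget from some good $k$ with $x_{ik}>0$ into good $j$ shows $p'_j$ must exceed a positive multiple of $p_k$ depending only on $\mx_i$ and $u_i$ (using strict monotonicity again to make the swap strictly helpful). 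That version gives a concrete threshold but needs the case analysis and the gross-substitutes machinery, so I would lead with the compactness argument.
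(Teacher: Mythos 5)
Your argument is correct. Note, however, that the paper does not actually prove \Cref{pec:lem:tech-2}: the appendix explicitly defers the proofs of all lemmas in that section to the cited reference \cite{cheung2025proportional}, so there is no in-paper proof to match against. On its own merits, your compactness-plus-contradiction route is sound: passing to a subsequence to fix a coordinate $j_0$ with $p^{(n)}_{j_0}\to 0$, and then exhibiting the affordable bundle $\mz^{(n)} = \frac{B_i}{B_i + p^{(n)}_{j_0}}(\mx_i + \me_{j_0})$ whose utility converges to $u_i(\mx_i + \me_{j_0}) > u_i(\mx_i)$, does contradict optimality of $\mx_i$ at $\mp^{(n)}$ for large $n$. (The budget-exhaustion step is actually not needed --- affordability $\InAngles{\mp^{(n)},\mx_i}\le B_i$ already gives $\InAngles{\mp^{(n)},\mz^{(n)}}\le B_i$ --- but it is true and harmless. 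The continuity you invoke is a standing assumption of the paper, and in any case holds along the ray $\lambda\mapsto u_i(\lambda(\mx_i+\me_{j_0}))$ by concavity.) What your approach buys is that it uses only continuity, strict monotonicity, and well-definedness of demand, with none of the gross-substitutes or normal-goods machinery the appendix lists as hypotheses; the price is that $\epsilon$ is non-constructive. Your sketched alternative (pinning $p'_j = p_j$ on the support of $\mx_i$ via \Cref{pec:lem:tech-1} and handling the off-support coordinates by a budget-swap argument) is the more likely shape of the proof in the cited reference, but since that proof is not reproduced here, your self-contained argument stands as a complete and valid proof of the statement.
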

\begin{lemma}\label{pec:lem:tech-3}
    Given an allocation $\bbx_i$ and price vector $\bbp > 0$ and $\bbp' > 0$ such that $\bbx_i = \demand_i(\bbp, B_i) = \demand_i(\bbp', B_i)$. For any $\epsilon < 1$, let $\tilde{\bbp}  = \min\{\bbp/ \epsilon, \bbp'\}$. Then, $\demand_i(\tilde{\bbp}, B_i) = \bbx_i$. 
\end{lemma}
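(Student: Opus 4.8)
The plan is to verify directly, via first-order optimality, that $\bbx_i$ is the unique utility-maximizing bundle under price $\tilde{\bbp}$ at budget $B_i$. The crucial structural input is \Cref{pec:lem:tech-1}: since $\bbx_i = \demand_i(\bbp, B_i) = \demand_i(\bbp', B_i)$ with $\bbp, \bbp' > 0$, we have $p_j = p'_j$ for every $j$ with $x_{ij} > 0$. As $\epsilon < 1$ forces $p_j/\epsilon > p_j$, this gives $\tilde{p}_j = \min\{p_j/\epsilon, p'_j\} = p_j = p'_j$ on the support $\supp(\bbx_i)$, and hence $\InAngles{\tilde{\bbp}, \bbx_i} = \InAngles{\bbp, \bbx_i} = B_i$; so $\bbx_i$ is feasible at $\tilde{\bbp}$ and exhausts the budget (which it must, since $u_i$ is strictly increasing; this also shows $\supp(\bbx_i) \neq \emptyset$).

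Next I would use the Lagrangian characterization of demand: for concave $u_i$ and any price $\bbq > 0$, we have $\bbx_i = \demand_i(\bbq, B_i)$ iff there is a multiplier $\lambda > 0$ with $\nabla_j u_i(\bbx_i) \le \lambda q_j$ for all $j$ (equality whenever $x_{ij} > 0$) and $\InAngles{\bbq, \bbx_i} = B_i$. Applying this at $\bbq = \bbp$ and at $\bbq = \bbp'$ produces multipliers $\lambda, \lambda'$; on the (nonempty) support, $\nabla_j u_i(\bbx_i) = \lambda p_j = \lambda' p'_j = \lambda' p_j$ with $p_j > 0$, so $\lambda = \lambda'$. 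Thus a single $\lambda > 0$ satisfies $\nabla_j u_i(\bbx_i) \le \lambda p_j$ and $\nabla_j u_i(\bbx_i) \le \lambda p'_j$ for all $j$, i.e.\ $\nabla_j u_i(\bbx_i) \le \lambda \min\{p_j, p'_j\}$, with equality on $\supp(\bbx_i)$.

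It then suffices to check that the same conditions hold at $\bbq = \tilde{\bbp}$ with this same $\lambda$ and $\bbx_i$. For $j \in \supp(\bbx_i)$, $\nabla_j u_i(\bbx_i) = \lambda p_j = \lambda \tilde{p}_j$. For $j \notin \supp(\bbx_i)$, $\nabla_j u_i(\bbx_i) \le \lambda p_j \le \lambda p_j/\epsilon$ (using $\epsilon < 1$) and $\nabla_j u_i(\bbx_i) \le \lambda p'_j$, hence $\nabla_j u_i(\bbx_i) \le \lambda \min\{p_j/\epsilon, p'_j\} = \lambda \tilde{p}_j$. Together with $\InAngles{\tilde{\bbp}, \bbx_i} = B_i$, the KKT conditions hold, so $\bbx_i$ is optimal at $\tilde{\bbp}$ by concavity and unique by strict concavity: $\demand_i(\tilde{\bbp}, B_i) = \bbx_i$. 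The hypothesis $\epsilon < 1$ is used exactly here, to ensure that scaling $\bbp$ up only relaxes the off-support slackness inequalities, so that $\tilde{\bbp}$ inherits them from $\bbp$ and $\bbp'$ simultaneously. I expect the only delicate point to be the identity $\lambda = \lambda'$: it is immediate when $u_i$ is differentiable at $\bbx_i$, which is the standard regularity assumption in this gross-substitutes setting, but in the merely strictly-concave case one must argue with subgradients (using \Cref{pec:lem:tech-1}, and possibly a perturbation argument) that a single element of $\partial u_i(\bbx_i)$ can serve as the scaled ``gradient'' at $\bbp$, $\bbp'$ and $\tilde{\bbp}$ at once. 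The remaining steps are routine convex analysis.
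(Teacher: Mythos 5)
The paper does not actually prove \Cref{pec:lem:tech-3}; it defers all proofs in that appendix to the cited reference \citep{cheung2025proportional}, so there is no in-paper argument to compare against. Your proof is correct and self-contained given \Cref{pec:lem:tech-1}: the observation that $\tilde{p}_j = p_j = p'_j$ on $\supp(\bbx_i)$ (so the budget identity is preserved), the identification $\lambda = \lambda'$ of the two multipliers via equality of prices on the nonempty support, and the check that $\epsilon < 1$ only relaxes the off-support slackness inequalities together constitute a complete KKT verification, with uniqueness from strict concavity. Notably, your argument never invokes gross substitutes or normal goods beyond what is packaged into \Cref{pec:lem:tech-1}. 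The one caveat you already flag yourself is real but minor: the section's standing hypotheses say only ``strictly concave,'' so in full generality one must phrase the stationarity conditions with a common supergradient rather than $\nabla u_i(\bbx_i)$; since the paper's dynamics use $\nabla_j u_i$ throughout, differentiability is evidently the intended regularity and the argument goes through as written.
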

\begin{lemma} \label{pec:lem:converge-allocation-1}
    Let $\{\bbx_i^s > 0\}_s$ be a sequence of allocations and let $\{\bbp^s\}_s$ be the corresponding prices, $\bbx_i^s = \demand_i(\bbp^s, B_i)$. If $\bbx_i^s$ converges to $\bbx_i$ when $s \rightarrow \infty$ and $\bbx_i = \demand_i(\bbp, B_i)$, then $p_j^s x^s_{ij}$ converges to $p_j x_{ij}$. This also implies $p_j^s$ converges to $p_j$ if $x_{ij} > 0$.
\end{lemma}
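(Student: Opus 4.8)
The plan is to work with the \emph{spending vector} $\mb^s$ defined by $b_j^s := p_j^s\, x_{ij}^s$. Since each $u_i$ is strictly increasing, every demand exhausts the budget, so $\sum_{j} b_j^s = B_i$ for all $s$ and $\sum_{j} p_j x_{ij} = B_i$; hence $(\mb^s)_s$ lies in the compact simplex $\{\mb \ge \boldsymbol{0} : \sum_j b_j = B_i\}$, and it suffices to prove that every subsequential limit of $\mb^s$ equals $(p_j x_{ij})_j$. Fix a subsequence along which $\mb^{s_k} \to \mb^+$, and set $S := \{j : x_{ij} > 0\}$, which is nonempty because $\langle \bbp, \bbx_i\rangle = B_i > 0$ (the bundle $\bbx_i$ exhausts the budget at $\bbp$, by strict monotonicity). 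For $j \in S$ we have $p_j^{s_k} = b_j^{s_k}/x_{ij}^{s_k} \to b_j^+/x_{ij} =: \bar p_j$, and $\bar p_j < \infty$ since $b_j^{s_k} \le B_i$ while $x_{ij}^{s_k} \to x_{ij} > 0$. Passing to a further subsequence, assume in addition that $p_j^{s_k} \to \bar p_j \in [0,+\infty]$ for every good $j$.

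The first substantive step is to show $b_j^{s_k} \to 0$ for every $j \notin S$. Suppose instead $b_{j_0}^{s_k} \to b^+ > 0$ with $j_0 \notin S$; then $p_{j_0}^{s_k} = b_{j_0}^{s_k}/x_{ij_0}^{s_k} \to +\infty$ because $x_{ij_0}^{s_k} \to 0$. Pick any $k_0 \in S$ and consider the bundle $\my^{s_k} := \bbx_i^{s_k} - x_{ij_0}^{s_k}\me_{j_0} + \tfrac{b_{j_0}^{s_k}}{p_{k_0}^{s_k}}\me_{k_0} \ge \boldsymbol{0}$, which still costs exactly $B_i$ under $\bbp^{s_k}$. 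Since $\bbx_i^{s_k} \to \bbx_i$ and $x_{ij_0}^{s_k} \to 0$, we get $\my^{s_k} \to \bbx_i + \tfrac{b^+}{\bar p_{k_0}}\me_{k_0}$, a bundle strictly larger than $\bbx_i$ in coordinate $k_0$; by strict monotonicity and continuity of $u_i$, $u_i(\my^{s_k}) \to u_i\big(\bbx_i + \tfrac{b^+}{\bar p_{k_0}}\me_{k_0}\big) > u_i(\bbx_i) = \lim_k u_i(\bbx_i^{s_k})$. Thus for large $k$ the affordable bundle $\my^{s_k}$ strictly beats $\bbx_i^{s_k}$, contradicting $\bbx_i^{s_k} \in \demand_i(\bbp^{s_k}, B_i)$.

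The second step pins down the prices on the support. Invoking \Cref{pec:lem:main-technique} (applicable since prices are strictly positive) with the pair $(\bbp^{s_k}, \bbp)$ gives
\[
\sum_{j} b_j^{s_k} \log \frac{p_j^{s_k}}{p_j} \ \le\ \langle \bbp^{s_k}, \bbx_i\rangle - B_i .
\]
In the sum over $j \notin S$, the terms with $\bar p_j < \infty$ tend to $0$ (by the first step), while the terms with $\bar p_j = +\infty$ are eventually nonnegative, so $\liminf_k \sum_{j \notin S} b_j^{s_k}\log\frac{p_j^{s_k}}{p_j} \ge 0$. Since $\langle \bbp^{s_k}, \bbx_i\rangle = \sum_{j \in S} p_j^{s_k} x_{ij} \to \sum_{j \in S} \bar p_j x_{ij}$, passing to the limit yields $\sum_{j \in S} \bar p_j x_{ij}\log(\bar p_j/p_j) \le \sum_{j \in S}\bar p_j x_{ij} - B_i$. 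Writing $w_j := p_j x_{ij} > 0$ and $t_j := \bar p_j/p_j \ge 0$ for $j \in S$ and using $\sum_{j \in S} w_j = B_i$, this is exactly $\sum_{j \in S} w_j\big(t_j\log t_j - t_j + 1\big) \le 0$; every summand is nonnegative by the elementary inequality $t\log t \ge t-1$ ($t \ge 0$), so $t_j = 1$, i.e. $\bar p_j = p_j$, for all $j \in S$. Consequently $b_j^{s_k} \to p_j x_{ij}$ for $j \in S$ and $b_j^{s_k} \to 0 = p_j x_{ij}$ for $j \notin S$, so $\mb^s \to (p_j x_{ij})_j$; and whenever $x_{ij} > 0$, $p_j^s = b_j^s/x_{ij}^s \to p_j$.

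The point demanding the most care is the behaviour of the non-support goods $j \notin S$: their supporting prices $p_j^s$ need not stay bounded, so the corresponding $\log$-terms in the main estimate are not obviously controlled. This is precisely what the first step (spendings on such goods vanish) together with the sign bookkeeping above resolves — the main-technique inequality itself forces the whole sum to remain finite, and the sign check shows those terms cannot drag the limit down. As an alternative route one could first truncate $\bbp^{s_k}$ into a fixed positive box via \Cref{pec:cor:main-technique} and \Cref{pec:lem:tech-3} and then combine upper hemicontinuity of the (unique, by strict concavity) demand with \Cref{pec:lem:tech-1}; the argument sketched above is more self-contained.
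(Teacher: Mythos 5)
The paper does not actually prove this lemma: it appears in the appendix ``Technical Lemmas on Demand Properties for Gross Substitutes Utilities,'' where the text explicitly defers all proofs to the cited reference \cite{cheung2025proportional}. So there is no in-paper argument to compare yours against; I can only assess your proof on its own terms. On that basis it is correct and well structured: reducing to subsequential limits of the spending vector on the budget simplex, killing the spending on non-support goods by a budget-reallocation/strict-monotonicity contradiction, and then forcing $\bar p_j = p_j$ on the support via \Cref{pec:lem:main-technique} combined with $t\log t \ge t-1$ is a clean and self-contained route, and the way you use the main-technique inequality to control the potentially unbounded off-support prices is exactly the right move.

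Two small points deserve one extra line each. First, in your reallocation step you write $\my^{s_k} \to \bbx_i + \tfrac{b^+}{\bar p_{k_0}}\me_{k_0}$, which presumes $\bar p_{k_0} > 0$; since at that stage you cannot rule out $\bar p_{k_0} = b^+_{k_0}/x_{ik_0} = 0$ for every $k_0 \in S$, you should add that when $\bar p_{k_0} = 0$ the coordinate $k_0$ of $\my^{s_k}$ diverges, and monotonicity lets you compare against the truncated bundle $\bbx_i^{s_k} - x_{ij_0}^{s_k}\me_{j_0} + C\me_{k_0}$ for a fixed $C>0$, which yields the same contradiction. Second, in the sign bookkeeping the terms with $\bar p_j = 0$ (whether $j \in S$ or not) are of the form $0\cdot(-\infty)$; they do vanish, but via $x_{ij}^{s_k}\bigl(p_j^{s_k}\log p_j^{s_k} - p_j^{s_k}\log p_j\bigr) \to 0$ using $t\log t \to 0$, not merely because $b_j^{s_k}\to 0$. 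Both are routine patches and do not affect the validity of the argument.
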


\section{Useful Facts}
\begin{theorem}[Theorem 1 in \citep{hjertstrand2020homogeneity}]
\label{thm: quasi-concave-2-concave}
    If a function $f: B\subseteq \-R^m_{\ge 0} \rightarrow \-R_{>0}$ is quasi-concave, non-decreasing, and $1$-homogeneous, then $f$ is concave. 
\end{theorem}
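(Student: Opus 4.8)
The plan is to use $1$-homogeneity to collapse the whole concavity inequality into a single invocation of quasi-concavity on the unit superlevel set $S := \{z \in B : f(z) \ge 1\}$. Since $f$ is quasi-concave and $B$ is convex, $S$ is convex; note also that $1$-homogeneity only makes sense when $B$ is a cone, so $B$ is in fact a convex cone. The pivotal observation is that $1$-homogeneity lets us rescale any point onto the ``unit surface'': for $x \in B$ with $a := f(x)$, which is strictly positive because $f$ maps into $\-R_{>0}$, the rescaled point $x/a$ still lies in $B$ and satisfies $f(x/a) = f(x)/a = 1$, hence $x/a \in S$.

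Given $x, y \in B$ and $\lambda \in (0,1)$ (the endpoints $\lambda \in \{0,1\}$ being trivial), set $a := f(x)$, $b := f(y)$, and $s := \lambda a + (1-\lambda) b > 0$. Then $x/a, y/b \in S$, and the convex combination
\[
w \;:=\; \frac{\lambda a}{s}\cdot\frac{x}{a} + \frac{(1-\lambda)b}{s}\cdot\frac{y}{b} \;=\; \frac{1}{s}\bigl(\lambda x + (1-\lambda) y\bigr)
\]
lies in $S$ by convexity of $S$ (the coefficients $\lambda a / s$ and $(1-\lambda) b / s$ are nonnegative and sum to $1$), so $f(w) \ge 1$. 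Applying $1$-homogeneity a second time yields
\[
f\bigl(\lambda x + (1-\lambda)y\bigr) \;=\; s\cdot f(w) \;\ge\; s \;=\; \lambda f(x) + (1-\lambda) f(y),
\]
which is precisely the concavity inequality; this finishes the proof.

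Since the core argument is a two-line normalization trick, the only thing to watch is the bookkeeping around the domain and the sign of $f$: one needs $B$ to be a convex cone so that $x/a$ and $y/b$ remain in $B$, and one needs $f > 0$ (guaranteed by the codomain $\-R_{>0}$) in order to divide by $f(x)$ and $f(y)$ when normalizing. The monotonicity (non-decreasing) hypothesis is, somewhat surprisingly, not used anywhere in this derivation; it would only enter if one tried to relax positivity to $f \ge 0$, which would require separately handling the set where $f$ vanishes.
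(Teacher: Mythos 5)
Your proof is correct. The normalization trick --- rescaling $x$ and $y$ onto the unit superlevel set $S=\{z: f(z)\ge 1\}$, using convexity of $S$ (quasi-concavity) to place $w=\frac{1}{s}(\lambda x+(1-\lambda)y)$ in $S$, and then scaling back up by $1$-homogeneity --- is the standard argument for this classical fact, and every step checks out: the coefficients $\lambda a/s$ and $(1-\lambda)b/s$ do sum to one, $s>0$ because $f>0$, and the cone/convexity requirements on $B$ are exactly the implicit hypotheses needed for the statement to parse. The comparison with the paper is a bit unusual here: the paper does \emph{not} prove this statement at all --- it imports it verbatim as Theorem 1 of \citep{hjertstrand2020homogeneity} and only supplies a proof of the extension \Cref{thm: quasi-concave-2-concave-new}, which relaxes the codomain from $\mathbb{R}_{>0}$ to $\mathbb{R}_{\ge 0}$ by handling the vanishing set separately and invoking the cited theorem on the remainder. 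So your contribution is a short, self-contained proof of the black-boxed ingredient. Your closing observation is also exactly right and dovetails with the paper's structure: monotonicity plays no role in the strictly positive case, but it is precisely what the paper's proof of \Cref{thm: quasi-concave-2-concave-new} uses (via $f(\alpha x+(1-\alpha)x')\ge f((1-\alpha)x')$, which needs $f$ non-decreasing and $x\ge 0$) to dispatch the case $f(x)=0$ when positivity is relaxed.
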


\begin{theorem}
\label{thm: quasi-concave-2-concave-new}
    If a function $f: \-R^m_{\ge 0} \rightarrow \-R_{\ge 0}$ is quasi-concave, non-decreasing, and $1$-homogeneous, then $f$ is concave.
\end{theorem}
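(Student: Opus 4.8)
The plan is to reduce the concavity of $f$ to \emph{superadditivity}: I will show that $f(\mx+\my) \ge f(\mx) + f(\my)$ for all $\mx,\my \in \-R^m_{\ge 0}$. This suffices, because $\-R^m_{\ge 0}$ is a convex cone and $f$ is $1$-homogeneous: for $\lambda \in (0,1)$ one gets $f(\lambda\mx + (1-\lambda)\my) \ge f(\lambda\mx) + f((1-\lambda)\my) = \lambda f(\mx) + (1-\lambda)f(\my)$, and the cases $\lambda \in \{0,1\}$ are trivial. So the entire task is to establish the superadditive inequality.

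For that inequality I would split into two cases. If $f(\mx) = 0$ (or symmetrically $f(\my) = 0$), then since $\mx + \my \ge \my$ and $f$ is non-decreasing, $f(\mx+\my) \ge f(\my) = f(\mx)+f(\my)$; this is the only place monotonicity is used, and it is exactly what is needed to cover the zero locus of $f$, which is the one feature distinguishing this statement from Theorem~\ref{thm: quasi-concave-2-concave} (where $f$ is assumed strictly positive throughout). In the remaining case $a := f(\mx) > 0$ and $b := f(\my) > 0$, I would normalize: by $1$-homogeneity $f(\mx/a) = f(\my/b) = 1$, and $\frac{\mx+\my}{a+b}$ is the convex combination $\frac{a}{a+b}\cdot\frac{\mx}{a} + \frac{b}{a+b}\cdot\frac{\my}{b}$ of $\mx/a$ and $\my/b$; hence quasi-concavity gives $f\big(\tfrac{\mx+\my}{a+b}\big) \ge \min\{1,1\} = 1$, and multiplying through by $a+b$ via $1$-homogeneity yields $f(\mx+\my) \ge a+b = f(\mx)+f(\my)$, as desired.

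There is no serious obstacle here; the only subtlety is that the normalization $\mx \mapsto \mx/f(\mx)$ is meaningless where $f$ vanishes, which is precisely why the zero case must be peeled off first using monotonicity. An alternative route would be to invoke Theorem~\ref{thm: quasi-concave-2-concave} directly on the restriction of $f$ to the convex cone $\{\mx \in \-R^m_{\ge 0} : f(\mx) > 0\}$ and then argue continuity up to the boundary, but that would require separately verifying that this cone is relatively open and is messier than the direct superadditivity argument above.
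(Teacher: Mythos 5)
Your proof is correct. It differs from the paper's in one substantive way: the paper handles the case where both $f(\mx)$ and $f(\mx')$ are positive by citing \Cref{thm: quasi-concave-2-concave} (Theorem~1 of \citet{hjertstrand2020homogeneity}) as a black box, whereas you give a self-contained argument by first reducing concavity to superadditivity via $1$-homogeneity and then establishing $f(\mx+\my)\ge f(\mx)+f(\my)$ directly through the normalization $\mx\mapsto \mx/f(\mx)$ and quasi-concavity at the convex combination $\frac{\mx+\my}{a+b}=\frac{a}{a+b}\cdot\frac{\mx}{a}+\frac{b}{a+b}\cdot\frac{\my}{b}$. In effect you reprove the cited external theorem inline; the case analysis on the zero locus (peeled off with monotonicity and homogeneity) is the same in both treatments. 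What your route buys is independence from the external reference and a cleaner handling of the domain issue: the paper's application of \Cref{thm: quasi-concave-2-concave} implicitly restricts $f$ to the set $\{\mx : f(\mx)>0\}$, which one should check is convex (it is, being the increasing union of the convex superlevel sets $\{f\ge\alpha\}$ for $\alpha>0$), a point your superadditivity argument sidesteps entirely. The paper's version is shorter on the page; yours is the one a reader can verify without opening another paper. Both are sound.
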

\begin{proof}
    We note that if $f$ is identically $0$, then the claim holds. Now given any $\mx, \mx' \in \-R^m_{\ge 0}$ and $\alpha \in (0,1)$, we defined $\mx''=\alpha x + (1-\alpha)x'$. If $f(\mx)= 0$, then we know $f(\mx'') \ge f((1-\alpha)\mx') = (1-\alpha) f(\mx') = \alpha f(\mx) + (1-\alpha) f(\mx')$; If $f(\mx') = 0$, the proof is same as the previous case; If both $f(\mx), f(\mx') > 0$, we can apply \Cref{thm: quasi-concave-2-concave}. This completes the proof.
\end{proof}

\end{document}